\documentclass[12pt, a4paper]{article}
\usepackage[a4paper, margin=1in]{geometry}
\usepackage[utf8]{inputenc}
\usepackage{graphicx, hyperref, array}
\usepackage[T1]{fontenc}
\usepackage{microtype, courier, xcolor}
\usepackage{amsmath, amsthm, amssymb, mathtools}
\newtheorem{theorem}{Theorem}
\newtheorem{observation}{Observation}
\newtheorem{lemma}{Lemma}
\newtheorem{corollary}{Corollary}
\newtheorem{conjecture}{Conjecture}
\newtheorem{claim}{Claim}

\title{The Parameterized Complexity of Problems on Outer \(k\)-Planar Graphs}
\author{Xiaobin Ren\thanks{Email: \href{mailto:ren.xiaobin@icloud.com}{ren.xiaobin@icloud.com}} 
\and Hans L. Bodlaender \thanks{Department of Information and Computing Sciences, Utrecht University. Email: \href{mailto:h.l.bodlaender@uu.nl}{h.l.bodlaender@uu.nl}} }
\date{}

\begin{document}
\maketitle

\begin{abstract}

A graph is \emph{outer \(k\)-planar} if it admits a straight-line drawing in which all vertices lie on a circle and every edge is crossed by at most \(k\) other edges. We study the parameterized complexity of a broad collection of graph problems on outer \(k\)-planar graphs, with \(k\) as the parameter. Many graph problems are known to be XALP-hard when parameterized by treewidth or outerplanarity, and XNLP-hard when parameterized by pathwidth. We show that only a few such problems, including \textsc{Binary CSP} and \textsc{Scattered Set}, remain intractable on outer \(k\)-planar graphs, whereas a large class of the others become fixed-parameter tractable in this setting, assuming that an outer \(k\)-planar drawing of the input graph is given. These include \textsc{List Coloring}, \textsc{Capacitated Dominating Set}, \textsc{Capacitated Vertex Cover}, \textsc{Target Outdegree Orientation}, and \textsc{Target Set Selection}, among others.

In addition to the algorithmic and complexity results, we establish several structural results. We show that outer \(k\)-planar graphs have mim-width at most \(k+2\), that graphs of cut-width at most \(k\) are outer \(2k\)-planar, and that graphs of feedback edge set number at most \(k\) are outer \(6k\)-planar. We also show that many graph parameters are incomparable with outer \(k\)-planarity, thereby clarifying its position within the graph parameter hierarchy.


\end{abstract}

\pagenumbering{arabic}

\section{Introduction}
The study of graphs that admit drawings with edge crossings, under controlled and well-structured restrictions, has long been an intriguing topic in algorithmic graph theory and the theory of graph drawing.
A fundamental question in this area is how to restrict edge crossings in a meaningful way while preserving useful combinatorial or geometric structure. One of the most natural models in this direction is that of $k$-planar graphs. A graph is called \emph{$k$-planar} if it admits a drawing in which every edge is crossed by at most $k$ other edges. The study of such graphs dates back to the 1960s, when Ringel~\cite{Ringel1965} investigated their structural properties and established the first results on the chromatic number of $1$-planar graphs. Over the past several decades, this line of research has developed into a broad and active area, now commonly studied under the name of \emph{beyond-planar} graph classes. In recent years, numerous structural, algorithmic, and complexity-theoretic results have been obtained for several beyond-planar graph classes; see, for example,~\cite{1-planar, recog-k-planar, cyclomatic, local-crossing-3, local-crossing-4, book-2}.

Among the many beyond-planar graphs that have been introduced, outer $k$-planar graph is a particularly natural one. A graph is \emph{outer $k$-planar} if it admits a straight-line convex drawing in which all vertices lie on the outer face and every edge is crossed by at most $k$ other edges. This notion combines two simple restrictions: all vertices are placed in convex position, e.g. on a circle, and the number of crossings per edge is bounded. Outer $k$-planar graphs have been studied for a long time. Already in the 1990s, Pach and T\'{o}th showed that any outer $k$-planar graph on $n$ vertices has at most $4.1\sqrt{k}\,n$ edges~\cite{pach}. For \(k\ge 5\), the constant factor was later improved to 2.46 by Aichholzer et al.~\cite{outer-k-edge}. More recent research has focused on structural and graph drawing aspects. These include bounds on degeneracy, chromatic number, separator size, and treewidth, as well as algorithmic results for recognition and related drawing problems~\cite{beyond-survery, firman_et_al, outerkrecog}. Wood and Telle~\cite{Wood2007} showed that every outer $k$-planar graph has treewidth at most $3k+11$. This upper bound was later improved by Firman et al.~\cite{firman_et_al} to $1.5k+2$, who also established a lower bound of $k+2$ on the treewidth for even values of $k$. More recently, Pyzik~\cite{pyzik} proved that every outer $k$-planar graph has treewidth at least $1.5k+0.5$ for every odd value of $k$. Regarding recognition, Kobayashi et al.~\cite{outerkrecog} recently proved that testing outer $k$-planarity is XNLP-hard by giving a reduction from the \textsc{Bandwidth} problem. For an overview of more results on outer $k$-planar graphs, especially those concerning graph drawing, we refer the reader to~\cite{outer-1, beyond-survery, outerkrecog}.

In parameterized complexity, many graph problems are known to be XALP-hard or XNLP-hard when parameterized by structural measures such as treewidth or pathwidth, which implies such problems are also W[$t$]-hard for every $t \in \mathbb{N}$; see the recent line of work by Bodlaender et al.~\cite{xnlp1,xnlp2,xalp}. For a parameter, if every graph with this parameter bounded by $k$ also has treewidth bounded by $f(k)$, for some computable function $f$, then such graph classes can be viewed as subclasses of bounded-treewidth graphs in the parameterized setting. A natural question that has been frequently studied in this context is whether such hardness results with respect to treewidth persist on restricted subclasses of bounded-treewidth graphs, or whether the complexity drops to lower levels of the W-hierarchy, or even to fixed-parameter tractability.

A well-studied example is given by graphs of outerplanarity $k$, that is, $k$-outerplanar graphs, which have treewidth at most $3k-1$~\cite{BOD1998}. Bodlaender and Szilá\'{a}gyi~\cite{outerkplanar} showed that many problems that are XALP-hard when parameterized by treewidth remain XALP-hard on planar graphs when parameterized by outerplanarity. To avoid ambiguity, we emphasize that $k$-outerplanar graphs and outer $k$-planar graphs are different notions. A graph is $k$-outerplanar if it becomes empty after $k$ iterations of removing all vertices on the outer face. In particular, $k$-outerplanar graphs are planar, whereas outer $k$-planar graphs are not necessarily planar.

Since outer $k$-planar graphs also have treewidth $O(k)$, motivated by this line of work, we investigate whether such hardness results with respect to treewidth persist on outer $k$-planar graphs, when parameterized by the crossing bound $k$. More importantly, we ask whether the geometric structure of outer $k$-planar drawings can be exploited to provide additional useful information that makes these hard problems tractable. In this thesis, we give an affirmative answer to this question. We systematically study a broad range of classical graph problems and provide the first parameterized complexity results for these problems on outer $k$-planar graphs.

\subsection{Contributions}\label{contribution}
The problems studied in this paper, together with their main complexity results, are summarized in Table~\ref{tab:xalp-overview}. All of these problems are known to be XNLP-complete when parameterized by pathwidth, and XALP-complete when parameterized by treewidth or outerplanarity~\cite{outerkplanar, xalp, xnlp1, xnlp2, xnlp-flows}. These include, in particular, \textsc{Binary CSP}, \textsc{List Coloring}, \textsc{Precoloring Extension}, \textsc{Scattered Set}, \textsc{All-or-Nothing Flow}, \textsc{Target Set Selection}, several outdegree orientation problems, as well as capacitated variants of dominating set and vertex cover, among others.

\begin{table}[ht!]
\centering
\small
\renewcommand{\arraystretch}{1.5}
\begin{tabular}{|>{\raggedright\arraybackslash}m{4.6cm}|l|l|l|l|}
\hline
&\multicolumn{1}{c|}{$\operatorname{lcr}^\circ$} & \multicolumn{1}{c|}{outerplanarity} & \multicolumn{1}{c|}{treewidth} & \multicolumn{1}{c|}{pathwidth} \\ \hline

\textsc{Binary CSP}
& XALP-c.~(\ref{binary-csp})  & XALP-c.~\cite{outerkplanar} & XALP-c.~\cite{xalp} & XNLP-c.~\cite{xnlp1} \\ \hline

\textsc{Scattered Set}
& XALP-c. (\ref{scattered-set})  & XALP-c.~\cite{outerkplanar} & XALP-c.~\cite{outerkplanar} & XNLP-c.~\cite{outerkplanar} \\ \hline

\textsc{List Coloring}
& \multicolumn{1}{c|}{FPT (\ref{lc-pre})}  & XALP-c.~\cite{outerkplanar} & XALP-c.~\cite{xalp} & XNLP-c.~\cite{xnlp1} \\ \hline

\textsc{Precoloring Extension}
& \multicolumn{1}{c|}{FPT (\ref{lc-pre})}  & XALP-c.~\cite{outerkplanar} & XALP-c.~\cite{xalp} & XNLP-c.~\cite{xnlp1} \\ \hline

\textsc{Capacitated (Red-Blue) Dominating Set}
& \multicolumn{1}{c|}{FPT (\ref{CDS})}  & XALP-c.~\cite{outerkplanar} & XALP-c.~\cite{xalp} & XNLP-c.~\cite{xnlp2} \\ \hline

\textsc{Capacitated Vertex Cover}
& \multicolumn{1}{c|}{FPT (\ref{CVC})}  & XALP-c.~\cite{outerkplanar} & XALP-c.~\cite{xalp} & XNLP-c.~\cite{xnlp2} \\ \hline

\textsc{All-or-Nothing Flow}
& \multicolumn{1}{c|}{FPT (\ref{TOO})}  & XALP-c.~\cite{outerkplanar} & XALP-c.~\cite{xalp} & XNLP-c.~\cite{xnlp-flows} \\ \hline

\textsc{Target Outdegree Orientation}
& \multicolumn{1}{c|}{FPT (\ref{TOO})}  & XALP-c.~\cite{outerkplanar} & XALP-c.~\cite{xalp} & XNLP-c.~\cite{xnlp-flows} \\ \hline

\textsc{Min./Max.\ Outdegree Orientation}
& \multicolumn{1}{c|}{FPT (\ref{TOO})}  & XALP-c.~\cite{outerkplanar} & XALP-c.~\cite{xalp} & XNLP-c.~\cite{xnlp-flows} \\ \hline

\textsc{Circulating Orientation}
& \multicolumn{1}{c|}{FPT (\ref{TOO})}  & XALP-c.~\cite{outerkplanar} & XALP-c.~\cite{xalp} & XNLP-c.~\cite{xnlp-flows} \\ \hline

$f$-\textsc{Dominating Set}
& \multicolumn{1}{c|}{FPT (\ref{f-k-dom})}  & XALP-c.~\cite{outerkplanar} & XALP-c.~\cite{outerkplanar} & XNLP-c.~\cite{outerkplanar} \\ \hline

$q$-\textsc{Dominating Set}
& \multicolumn{1}{c|}{FPT (\ref{f-k-dom})}  & XALP-c.~\cite{outerkplanar} & XALP-c.~\cite{outerkplanar} & XNLP-c.~\cite{outerkplanar} \\ \hline 

\textsc{Target Set Selection}
& \multicolumn{1}{c|}{FPT (\ref{TSS})}  & XALP-c.~\cite{outerkplanar} & XALP-c.~\cite{outerkplanar} & XNLP-c.~\cite{outerkplanar} \\ \hline
\end{tabular}
\caption{Overview of parameterized complexity results for selected graph problems considered in this paper, under different structural parameters. $\operatorname{lcr}^\circ(G)$ denotes the convex local crossing number of a graph \(G\). Previously established hardness results and their references are indicated.}
\label{tab:xalp-overview}
\end{table}

In Table~\ref{tab:xalp-overview}, we use $\operatorname{lcr}^\circ(G)$ to denote the \emph{convex local crossing number} of $G$, that is, the smallest $k$ such that $G$ is outer $k$-planar, following the standard terminology in the study of beyond-planar graphs~\cite{schaefer, firman_et_al}. On the negative side, we show that certain problems that are XALP-complete when parameterized by treewidth, such as \textsc{Binary CSP} and \textsc{Scattered Set}, remain XALP-complete on outer $k$-planar graphs, when parameterized by $k$. On the positive side, assuming that an outer $k$-planar drawing of the input graph is given, we show that many problems that are XALP-complete with respect to treewidth become fixed-parameter tractable. Our approaches are mainly based on the triangulation technique of Firman et al.~\cite{firman_et_al}, combined with standard tools such as representative sets; see, for example, \cite[Chapter~12.3]{paralg-book} and \cite[Lemma 4.2]{CHANG2026492}. The key ingredient underlying our FPT algorithms is the following lemma.

\begin{lemma}[{\cite[Lemma 6]{firman_et_al}}]\label{crossing-lemma}
For every $k \ge 0$, given an outer $k$-planar drawing of a graph $G$ with $n$ vertices, the outer cycle admits a triangulation such that each triangulation edge crosses the original graph at most $k$ times. Moreover, given the intersection graph of the edges of $G$, such a triangulation can be constructed in $O(nk)$ time.
\end{lemma}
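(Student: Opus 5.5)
We prove the triangulation statement of Lemma~\ref{crossing-lemma} by induction on the number of vertices, using a recursive ``cut off a triangle'' argument on convex polygons. Place the vertices $v_0,\dots,v_{n-1}$ in cyclic order on the circle. We prove a slightly stronger claim. Let $P$ be any sub-polygon whose sides are either outer-cycle edges or chords that cross at most $k$ edges of $G$. Then $P$ can be triangulated by chords that each cross at most $k$ edges of $G$. Here an edge of $G$ counts as crossing a chord only if it crosses it in the interior; edges of $G$ that coincide with a chord cross nothing and are harmless. Applying the claim to the whole outer polygon gives the lemma.

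\textbf{Inductive step.} Take a polygon $P=(u_1,\dots,u_m)$ with $m\ge 4$; triangles need no chords. The goal is a single new chord $xy$ of $P$ that is crossed by at most $k$ edges of $G$; it splits $P$ into two smaller polygons of the same type, and we recurse. The natural candidates are the ``ears'' $u_{i-1}u_{i+1}$. Every edge crossing the ear $u_{i-1}u_{i+1}$ has $u_i$ as an endpoint, or crosses the side $u_{i-1}u_i$ or $u_iu_{i+1}$.

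\begin{itemize}
\item If some vertex $u_i$ has no $G$-edge leaving it into the interior of $P$, we would like the edges crossing the ear to be exactly those crossing both sides $u_{i-1}u_i$ and $u_iu_{i+1}$, and hence at most $k$.
\item But an edge may cross only one side and end at a circle vertex strictly between $u_{i-1}$ and $u_i$, outside $P$. So the ear argument needs more care.
\end{itemize}

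\textbf{The more robust choice.} Take a side $ab$ of $P$ that is a chord, or any side, and the edges $F$ of $G$ crossing it; we have $|F|\le k$. Then choose the apex $c$ of the triangle on $ab$ as follows. If some edge $e\in G$ has both ends in $P$'s vertex set and is ``visible'' from $ab$, let $c$ be an endpoint of the innermost such edge. Otherwise choose $c$ so that the triangle $abc$ is crossed only by edges of $F$, which forces each of $ac$ and $bc$ to be crossed by at most $|F|\le k$ edges. This is the Firman et al.\ idea. Among all vertices $c$ of $P$ other than $a,b$, pick $c$ so that no edge of $G$ has one endpoint strictly inside the arc $a\ldots c$ of $P$ and the other strictly inside the arc $c\ldots b$, that is, no edge separates $c$ from the triangle.

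\textbf{Main obstacle.} The main obstacle is existence: showing that some apex $c$ exists such that every edge crossing $ac$ or $bc$ also crosses $ab$, or else crosses an edge of $G$ that would then be crossed too often. For this we will use that $G$ is outer $k$-planar.

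\begin{itemize}
\item If every candidate $c$ fails, pick an edge $e=xy$ of $G$ that crosses $ac$ but not $ab$, for the extremal such $c$.
\item Then $e$ itself is a legal chord with at most $k$ crossings, since it is an edge of $G$.
\item Adding $e$ as the splitting chord instead reduces $P$.
\end{itemize}

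So the rule becomes: if $P$ contains an edge of $G$ as a diagonal, split along it, since it has at most $k$ crossings. Otherwise no $G$-edge is a diagonal of $P$, and every $G$-edge meeting the interior of $P$ enters and leaves through sides. Consider any ear $u_{i-1}u_{i+1}$. An edge crossing it enters through one side and must exit $P$ through another side, so it crosses at least two sides of $P$ or has an endpoint at $u_i$. We then choose $i$ minimizing crossings via a counting argument on the at most $k$ edges crossing a fixed chord side. This counting step is where we expect the difficulty to lie.

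\textbf{Algorithm.} Given the intersection graph, all crossing counts can be maintained locally. Each split costs $O(k)$, giving $O(nk)$ time overall.
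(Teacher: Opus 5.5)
The paper does not prove this lemma; it imports it from Firman et al.~\cite{firman_et_al}. Judged on its own, your proposal has a genuine gap. The existence of a good splitting chord is never established, and the strengthened statement you induct on is false.

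Place $v_1,v_2,v_3,v_4$ in circular order. Put $m$ vertices $x_1,\dots,x_m$ between $v_1$ and $v_2$, then $y_1,\dots,y_m$ between $v_2$ and $v_3$, then $x_m',\dots,x_1'$ between $v_3$ and $v_4$, and $y_m',\dots,y_1'$ between $v_4$ and $v_1$. Take the edges $x_ix_i'$ and $y_iy_i'$. Edges within a family are nested, and every $x$-edge crosses every $y$-edge.
\begin{itemize}
\item With $m=k$ the drawing is outer $k$-planar. Each side of the quadrilateral $Q=v_1v_2v_3v_4$ is crossed by exactly $k$ edges, but each diagonal of $Q$ is crossed by $2k$. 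So $Q$ satisfies your hypothesis for every $k\ge 1$ yet has no admissible triangulation.
\item Now also add the four sides of $Q$ to $G$ and take $m=k-2$. The drawing stays outer $k$-planar, since each $x$- or $y$-edge crosses two sides and $k-2$ edges of the other family. The diagonals of $Q$ cross $2k-4>k$ edges once $k\ge 5$. Your concrete rule ``split along an edge of $G$ whenever one is a diagonal'' may legitimately pick the four sides first, and it then gets stuck at $Q$.
\end{itemize}
So chords cannot be chosen merely because they are crossed at most $k$ times, nor merely because they are edges of $G$. A proof needs a selection rule that provably never produces such a polygon, and that rule is exactly the missing content.

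The individual steps also fail on this example.
\begin{itemize}
\item No edge of $G$ is a diagonal of $Q$. The edges through $Q$ join vertices lying in the pockets behind its sides. Your fallback ``$e$ itself is a legal chord'' therefore does not apply, because $e$ need not have its endpoints at vertices of $P$.
\item Take any side of $Q$ as base $ab$ and either possible apex $c$. One of $ac,cb$ is a diagonal crossed by more edges than $ab$. Hence an apex whose triangle is entered only by edges of $F$ need not exist.
\item Your separation condition ignores edges incident to $a$ or $b$. An edge from $a$ to a vertex strictly between $c$ and $b$ crosses $cb$ but not $ab$.
\item You claim an edge crossing an ear crosses two sides of $P$ unless it ends at $u_i$. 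This is wrong: it may end at another vertex of $P$ and cross only one side.
\item The counting step you flag as the difficulty is left open.
\item The $O(nk)$ running time is asserted without an algorithm that finds each splitting chord in $O(k)$ amortized time.
\end{itemize}
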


The triangulation technique described in Lemma~\ref{crossing-lemma} was originally introduced by the authors in~\cite{firman_et_al} as an intermediate tool for proving upper bounds on the treewidth and separation number of outer $k$-planar graphs. We show that it can play a much more important role in the design of FPT algorithms for a large number of problems on outer $k$-planar graphs. Roughly speaking, the weak dual of the triangulated outer cycle forms a binary tree, where each tree node corresponds to a triangle in the triangulation. This tree provides a natural structure for bottom-up dynamic programming. More precisely, we can recursively merge subpolygons of the triangulated graph along the skeleton of the weak dual, thereby combining the vertices lying on the corresponding portions of the outer face, until the entire graph is eventually reconstructed.

Moreover, the triangulated graph has well-behaved geometric properties. It satisfies separator properties similar to those of standard tree decompositions, while Lemma~\ref{crossing-lemma} bounds the number of original graph edges crossing each triangulation edge, which in turn bounds the number of neighbors that many vertices can have outside the subpolygons in dynamic programming. This allows us to naturally adapt several XP algorithms on standard tree decompositions, not limited to those considered in this paper, into FPT algorithms on triangulated outer \(k\)-planar graphs. In Section~\ref{fpt-algorithms}, we describe in detail how to exploit Lemma~\ref{crossing-lemma} to achieve this.

Note that the XALP-hardness results established in~\cite{outerkplanar} cannot be transferred to outer $k$-planar graphs in a straightforward manner. This is because there is no general algorithm that transforms an arbitrary $k$-outerplanar graph into an outer $f(k)$-planar graph, where \(f\) is a computable function. As a counterexample, consider the complete bipartite graph $K_{2,n}$. This graph is $2$-outerplanar, but it is easy to verify that, for any placement of its vertices on the outer face, some edge must have $\Theta(n)$ crossings. Therefore, any graph containing an induced subgraph $K_{2,m}$ with $m$ not bounded by $k$ cannot be outer $k$-planar. This observation also explains why many reduction gadgets used to prove W[1]-hardness or XALP-hardness for many problems fail in the setting of outer $k$-planar graphs. For instance, all known hardness proofs for the \textsc{List Coloring} problem, including those of Fellows et al.~\cite{some-colorful}, inevitably introduce an induced subgraph of the form $K_{2,n}$ in their gadgets. This suggests that some problems that are hard in the classical parameterized setting may become tractable on outer $k$-planar graphs.

In addition to the XALP-completeness results and FPT algorithms, we also establish several structural results for outer \(k\)-planar graphs. Using Lemma~\ref{crossing-lemma}, we prove an upper bound of \(k+2\) on the mim-width of outer \(k\)-planar graphs (Theorem~\ref{mim-proof}). We show that bounded cut-width implies bounded outer \(k\)-planarity, which also yields the same conclusion for bandwidth (Theorems~\ref{cut-width-2k} and Corollary~\ref{bw-2k2}), and that bounded feedback edge set number likewise implies bounded outer \(k\)-planarity (Theorem~\ref{fes-6k}). Furthermore, we prove that many classical graph parameters are incomparable with outer \(k\)-planarity (see Table~\ref{tab:unbounded-parameters}). Together with the relationships among these well-studied graph parameters, this allows us to determine the relative position of outer \(k\)-planarity within the graph parameter hierarchy (see Figure~\ref{fig:para-hierarchy}).

\subsection{Organization}
The remainder of the paper is organized as follows. Section~\ref{section-pre} introduces the necessary preliminaries, including basic notation, definitions of several graph parameters, the complexity classes XNLP and XALP, as well as basic concepts from graph drawing and properties of outer \(k\)-planar graphs. Section~\ref{section-triangulation} briefly reviews the triangulation method of Firman et al.~\cite{firman_et_al} and presents the upper bound proof for mim-width. Section~\ref{XALP-results} contains XALP-completeness proofs for problems that remain intractable on outer \(k\)-planar graphs. Section~\ref{fpt-algorithms} introduces the main framework for designing dynamic programming algorithms on outer \(k\)-planar graphs, together with the underlying intuition, and presents all algorithms and reductions used to establish the fixed-parameter tractability of many problems. Section~\ref{sec:hierarchy} presents the structural results, mainly concerning boundedness and unboundedness relationships among various graph parameters. Finally, Section~\ref{sec:conclusion} concludes with directions for future research and a number of open problems.

\section{Preliminaries}\label{section-pre}
We assume the reader to be familiar with basic notions from algorithms, graph theory, and parameterized complexity, such as classes FPT, XP, W[1], W[2], \dots, W[P] (see \cite{DF99, DF13, flum06, paralg-book}).

Let $G$ be a graph. We write $V(G)$ and $E(G)$ for the vertex set and edge set of~$G$, respectively. For a vertex $v \in V(G)$, let $N_G(v)$ denote the set of neighbors of~$v$ in~$G$, and whenever the graph is clear from the context, we omit the subscript~$G$. The subgraph of~$G$ induced by a set of vertices $S \subseteq V$ is denoted by~$G[S]$. A vertex is called a \emph{leaf} if it has exactly one neighbor. Unless stated otherwise, all graphs considered in this paper are simple.

A pair of vertex sets $(A,B)$ is called a \emph{separation} of a graph $G$ if $A \cup B = V(G)$ and there is no edge between $A \setminus B$ and $B \setminus A$. A separation $(A,B)$ is said to be \emph{balanced} if both $|A \setminus B|$ and $|B \setminus A|$ are at most $2n/3$, where $n = |V(G)|$. For a (balanced) separation $(A,B)$, the set $A \cap B$ is called a (balanced) \emph{separator}, and $|A \cap B|$ is referred to as the \emph{order} of $(A,B)$.

We denote by $\mathbb{N}$ the set of natural numbers $\{0,1,2,\ldots\}$. For a positive integer $n$, let $[n] = \{1,2,\dots,n\}$, and for positive integers $a \le b$, we write \([a,b] = \{i \in \mathbb{N} \mid a \le i \le b \}\). Given a function $F : A \to B$ and a subset $A' \subseteq A$, we denote by $F|_{A'}$ the restriction of $F$ to $A'$, that is, the function from $A'$ to $B$ defined by $F|_{A'}(x) = F(x)$ for all $x \in A'$.

A \emph{drawing} of a graph maps each vertex to a distinct point in the plane, and each edge to a Jordan curve connecting the points corresponding to its endpoints. These curves are not allowed to pass through any vertex other than their endpoints, and any two such curves intersect in at most one point. A \emph{convex drawing} $\Gamma$ of a graph is a straight-line drawing in which all vertices are placed on distinct points of a circle, referred to as the \emph{circle of $\Gamma$}. An \emph{outer $k$-planar drawing} $\Gamma$ is a convex drawing in which every edge is crossed by at most $k$ other edges. Traversing the circle of $\Gamma$ in counterclockwise order induces a cyclic ordering of the vertices. We say that two pairs of vertices $\{v_1, w_1\}$ and $\{v_2, w_2\}$ are \emph{intertwined} in $\Gamma$ if their vertices appear in the cyclic order $(v_1, v_2, w_1, w_2)$ or $(v_2, v_1, w_2, w_1)$. Observe that two edges $\{v_1, w_1\}$ and $\{v_2, w_2\}$ cross in $\Gamma$ if and only if they have four distinct endpoints and are intertwined.

Given an outer $k$-planar drawing $\Gamma$ of a graph $G$, we define the \emph{outer cycle} of $G$ as the cycle obtained by connecting the vertices of $G$ one by one in the order in which they appear along the circle of $\Gamma$, even if some of the corresponding edges are not present in $G$.

In the definition of convex drawings, one could equivalently allow the vertices to be placed on any convex curve (instead of a circle) and the edges to be drawn as Jordan curves (instead of straight-line segments), provided that all curves lie inside the convex region and any two of them intersect in at most one point. We have the following simple observation.

\begin{observation}[\cite{firman_et_al}]
Let $G$ be a graph, let $\Gamma$ be an outer $k$-planar drawing of $G$, and let $c$ be a curve connecting two vertices $v$ and $w$ of $G$ that is entirely contained inside the circle of $\Gamma$. If $c$ crosses at most $\ell$ edges of $\Gamma$, then the straight-line segment $\overline{vw}$ also crosses at most $\ell$ edges of $\Gamma$.
\end{observation}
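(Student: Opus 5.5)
The plan is a direct crossing-counting argument. An edge $e=\{x,y\}$ of $\Gamma$ other than those incident to $v$ or $w$ crosses the segment $\overline{vw}$ exactly when $\{x,y\}$ and $\{v,w\}$ are intertwined, i.e., when $x$ and $y$ lie in different open arcs of the circle of $\Gamma$ determined by $v$ and $w$. So it suffices to show that every such edge also crosses the curve $c$. Edges incident to $v$ or $w$ do not cross $\overline{vw}$, since they share an endpoint and two straight segments meet in at most one point. Hence the edges crossing $\overline{vw}$ form a subset of the edges crossing $c$, and the bound $\ell$ transfers.

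For the key step, I fix an edge $e=\{x,y\}$ with $x$ on one arc $A_1$ and $y$ on the other arc $A_2$, both open arcs of the circle between $v$ and $w$. I will use the Jordan curve theorem. Apart from its endpoints, $c$ lies in the open disk $D$ bounded by the circle. I will assume this reading of ``entirely contained inside,'' or treat $c$ touching the circle only at $v$ and $w$ as the generic case. Then $c$ splits $\overline{D}$ into two closed regions: $R_1$, bounded by $c\cup A_1$, and $R_2$, bounded by $c\cup A_2$.

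The segment $\overline{xy}$ lies in $\overline{D}$ and meets the circle only at $x$ and $y$. It starts at $x\in A_1\subseteq R_1\setminus R_2$ and ends at $y\in R_2\setminus R_1$. By connectedness it must meet $R_1\cap R_2$, which equals $c$. It cannot meet $c$ at $v$ or $w$, because those points lie on the circle and are distinct from $x$ and $y$. So $\overline{xy}$ meets $c$ at an interior point of $c$, which is a crossing between $e$ and $c$.

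The main obstacle is justifying the splitting of the disk into two regions with the stated boundaries. I would handle it with the $\theta$-graph version of the Jordan curve theorem, applied to the circle together with the chord-like arc $c$. If $c$ is allowed to touch the circle at other points, I would first perturb it slightly into the open disk. A perturbation that avoids vertices does not increase the number of crossed edges, because it only needs to avoid touching points and can be chosen to preserve transversality. With that settled, the inequality follows: the number of edges crossing $\overline{vw}$ is at most the number of edges crossing $c$, which is at most $\ell$.
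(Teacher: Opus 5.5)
Your proposal is correct and follows the same route as the paper: the edges crossing $\overline{vw}$ are exactly those intertwined with $\{v,w\}$, and each of them must be crossed by $c$. The paper simply asserts that second fact, and you justify it with the $\theta$-graph form of the Jordan curve theorem.

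That justification can be made elementary, and the perturbation step dropped, by letting the edge do the separating instead of $c$. The line through $x$ and $y$ puts $v$ and $w$ in opposite open half-planes and meets the closed disk exactly in the chord $\overline{xy}$. So the connected curve $c$ from $v$ to $w$ must hit $\overline{xy}$, at an interior point since $c$ passes through no other vertex. This works even if $c$ touches the circle elsewhere.
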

\begin{proof}
Every edge $\{v',w'\}$ of $G$ that is intertwined with $\{v,w\}$ must be crossed by the curve $c$. Therefore, $c$ has at least one crossing with every edge that is crossed by the straight-line segment $\overline{vw}$, which implies the claim.
\end{proof}

\subsection{XNLP and XALP}

A \emph{parameterized problem} is a language $L \subseteq \Sigma^* \times \mathbb{N}$ for some finite alphabet $\Sigma$. The class XNLP consists of parameterized problems that can be solved by a nondeterministic algorithm in time $f(k)n^{O(1)}$ and space $g(k)\log n$, for some computable functions $f$ and $g$, where $k$ is the parameter and $n$ is the size of the input. The class XALP is defined analogously as the class of problems that can be solved by a nondeterministic Turing machine with an additional stack in time $f(k)n^{O(1)}$ and space $g(k)\log n$. For more background on these two classes, as well as other alternative equivalent definitions of the class XALP, we refer to a series of recent works~\cite{xnlp1, xnlp2, xalp, xnlp-flows}.

As in classical complexity theory, the notions of hardness and completeness for a class are defined via reductions. A \emph{parameterized reduction} from a problem $L_1 \subseteq \Sigma_1^* \times \mathbb{N}$ to a problem $L_2 \subseteq \Sigma_2^* \times \mathbb{N}$ is a function $f : \Sigma_1^* \times \mathbb{N} \to \Sigma_2^* \times \mathbb{N}$ such that the following conditions are satisfied:
\begin{itemize}
    \item For all $(x,k) \in \Sigma_1^* \times \mathbb{N}$, we have $(x,k) \in L_1$ if and only if $f(x,k) \in L_2$;
    \item There exists a computable function $g : \mathbb{N} \to \mathbb{N}$ such that if $f(x,k) = (y,k')$, then $k' \le g(k)$.
\end{itemize}
A \emph{fixed-parameter tractable reduction} or \emph{fpt-reduction} is a parameterized reduction for which there is a deterministic algorithm that computes \(f(x, k)\) in time \(g(k)n^{O(1)}\). If the function $f(x,k)$ can be computed in space $O(g(k) + \log n)$, then the reduction is called a \emph{parameterized logspace reduction}, or \emph{pl-reduction}. If $f(x,k)$ can be computed in time $g(k)\,n^{O(1)}$ and space $O(h(k)\log n)$, then the reduction is called a \emph{parameterized tractable logspace reduction}, or \emph{ptl-reduction}.

The classes XNLP and XALP are closed under both pl-reductions and ptl-reductions. Note that a deterministic algorithm that uses \(O(g(k) + \log n)\) space necessarily runs in \(2^{O(g(k)+\log n)}\) time,  which is FPT with respect to \(k\). If there is a pl-reduction or ptl-reduction from problem \(A\) to problem \(B\), and \(B\) is in XNLP (respectively, XALP), then \(A\) is in XNLP (respectively, XALP) as well. A problem is XNLP-hard if every problem in XNLP admits a reduction to it, and XNLP-complete if it is both XNLP-hard and in XNLP; the same applies to the class XALP.

Hardness for the classes XNLP and XALP also has implications for the space requirements of deterministic parameterized algorithms. This is closely related to the following Conjecture~\ref{spsc} by Pilipczuk and Wrochna~\cite{spsc, xnlp1}, now known as the \emph{Slice-wise Polynomial Space Conjecture} (SPSC). The conjecture is considered plausible, as all known deterministic dynamic programming algorithms for those XNLP-hard or XALP-hard problems require DP tables of size $O(n^{f(k)})$.

\begin{conjecture}[SPSC \cite{spsc, xnlp1}]\label{spsc}
If a problem is XNLP-hard or XALP-hard, then there is no algorithm that solves it in XP time while simultaneously using fixed-parameter tractable space.
\end{conjecture}

Membership in XNLP (respectively, XALP) can often be established by transforming standard dynamic programming algorithms on path decompositions (respectively, tree decompositions) into nondeterministic algorithms that use only logarithmic space. Intuitively, the logarithmic space bound is achieved by guessing the relevant table entry when needed, rather than storing the entire table. Examples of such membership proofs can be found in~\cite{xnlp1,xnlp2,xalp}. In our hardness proofs, we omit the details showing the logarithmic space bounds of the reductions. These bounds can be obtained using standard techniques, where values are recomputed on demand instead of being stored explicitly. This may increase the running time by a polynomial factor, but ensures that the space usage of the reduction remains within the required logarithmic bounds.

\subsection{Graph Parameters}
\paragraph{Treewidth.}
A \emph{tree decomposition} of a graph $G=(V,E)$ is a pair 
$\mathcal{T} = (T, \{X_t\}_{t \in V(T)})$, where $T$ is a tree and each node $t \in V(T)$ is assigned a subset $X_t \subseteq V(G)$, called a \emph{bag}, such that the following three conditions hold:
\begin{itemize}
    \item $\bigcup_{t \in V(T)} X_t = V(G)$.
    \item For every edge $uv \in E(G)$, there exists a node $t \in V(T)$ such that $\{u,v\} \subseteq X_t$.
    \item For every vertex $v \in V(G)$, the set 
    $\{ t \in V(T) : v \in X_t \}$ induces a connected subtree of $T$.
\end{itemize}
The \emph{width} of a tree decomposition $\mathcal{T}$ is 
$\max_{t \in V(T)} |X_t| - 1$. The \emph{treewidth} of $G$, denoted $\mathrm{tw}(G)$, is the minimum width over all tree decompositions of $G$. A \emph{path decomposition} of a graph $G$ is a tree decomposition $\mathcal{T} = (T, \{X_t\}_{t \in V(T)})$ in which $T$ is a path. The \emph{pathwidth} of $G$ is the minimum width over all path decompositions of $G$. Kloks~\cite{Kloks} introduced the notion of \emph{nice} tree and path decompositions. In this paper, we mainly use a generalized variant of nice tree and path decompositions. Moreover, in all proofs, whenever a problem is parameterized by treewidth, we assume that a tree decomposition of the input graph is also given as part of the input.

A \emph{generalized path decomposition} can be described via a sequence of operations on terminal graphs. A \emph{terminal graph} is a triple $(V,E,X)$, together with a binary relation $\prec \,\subset X \times X$ that forms a strict total order on the set $X$. The elements of $X$ are called \emph{terminals}. We consider the following five types of operations on terminal graphs.

\begin{itemize}
\item \textbf{Introduce.} Given a terminal graph $G=(V,E,X)$, the introduce operation adds a new isolated vertex $v$ to both $V$ and $X$, where $v$ becomes the smallest element with respect to the ordering on $X$. Formally, $G'=(V\cup\{v\},E,\{v\}\cup X)$, with $v \prec x$ for all $x\in X$.

\item \textbf{Forget.} Given a terminal graph $G=(V,E,X)$ with $X\neq\emptyset$, the largest element of $X$ is removed from the set of terminals. Formally, $G'=(V,E,X\setminus\{x\})$, with $y\prec x$ for all $y\in X\setminus\{x\}$.

\item \textbf{Add-Edge$(i)$.} Given a terminal graph $G=(V,E,X)$ with $|X|>i$, we add an edge between the $i$th and the $(i+1)$st terminal. The sets $V$ and $X$ remain unchanged, and a single edge is added to $E$.

\item \textbf{Swap$(i)$.} Given a terminal graph $G=(V,E,X)$ with $|X|>i$, the operation swaps the $i$th and the $(i+1)$st terminals in the ordering of $X$.

\item \textbf{Join.} Given two terminal graphs $G_1=(V_1,E_1,X)$ and $G_2=(V_2,E_2,X)$ that intersect only on their terminals, that is, $V_1\cap V_2=X$ and $E_1\cap E_2=\emptyset$, we construct the graph $G=(V_1\cup V_2,E_1\cup E_2,X)$. This operation fuses the two terminal graphs by identifying, for each $i$, the $i$th terminal of the first graph with the $i$th terminal of the second graph.
\end{itemize}
Note that by applying a sequence of Swap operations, the terminals can be reordered arbitrarily. It follows that a nice path decomposition of a graph $G=(V,E)$ of pathwidth at most $k$ (see~\cite{Kloks}) can be transformed into a sequence of Introduce, Forget, Add-Edge, and Swap operations that constructs the terminal graph $G=(V,E,\emptyset)$ using $O(kn)$ operations, while ensuring that every intermediate terminal graph contains at most $k+1$ terminals. Similarly, a nice tree decomposition can be transformed into a sequence of Introduce, Forget, Add-Edge, Swap, and Join operations. From such a sequence, one can recover a \emph{generalized tree decomposition}, that is, a rooted tree in which each bag consists of the current set of terminals and is labeled by one of the operation types Introduce, Forget, Add-Edge, Swap, or Join. We omit the details and one can check that these transformations can be carried out in logarithmic space.

\paragraph{Mim-width.}
A \emph{branch decomposition} of a graph $G$ is a pair $(T,\delta)$, where $T$ is a subcubic tree, i.e. a tree of maximum degree at most 3, and $\delta$ is a bijection from $V(G)$ to the leaves of $T$. Every edge $e \in E(T)$ partitions the leaves of $T$ into two sets $L_e$ and $\overline{L}_e$, corresponding to the two components of $T - e$. This induces a partition $(A_e, \overline{A}_e)$ of $V(G)$, where $\delta(A_e) = L_e$ and $\delta(\overline{A}_e) = \overline{L}_e$. Let $G[A_e,\overline{A}_e]$ denote the bipartite subgraph of $G$ consisting of all edges with one endpoint in $A_e$ and the other in $\overline{A}_e$. A matching $F \subseteq E(G)$ is \emph{induced} if there is no edge between vertices belonging to different edges of $F$. We denote by $\mathrm{cutmim}_G(A_e,\overline{A}_e)$ the maximum size of an induced matching in $G[A_e,\overline{A}_e]$. The \emph{mim-width} of a branch decomposition $(T,\delta)$, denoted by $\mathrm{mimw}_G(T,\delta)$, is the maximum value of $\mathrm{cutmim}_G(A_e,\overline{A}_e)$ over all edges $e \in E(T)$. The \emph{mim-width} of $G$, denoted by $\mathrm{mimw}(G)$, is the minimum value of $\mathrm{mimw}_G(T,\delta)$ over all branch decompositions $(T,\delta)$ of $G$.

\paragraph{Cut-width and bandwidth.}
Let \(G\) be a graph. For a linear ordering \(\pi=(v_1,\ldots,v_n)\) of \(V(G)\), the cut-width of \(\pi\) is the maximum, over all \(i\in\{1,\ldots,n-1\}\), of the number of edges with one endpoint in \(\{v_1,\ldots,v_i\}\) and the other endpoint in \(\{v_{i+1},\ldots,v_n\}\). The cut-width of \(G\), denoted by \(\operatorname{cutw}(G)\), is the minimum of this value over all linear orderings of \(V(G)\). For a linear ordering \(\pi=(v_1,\ldots,v_n)\) of \(V(G)\), the bandwidth of \(\pi\) is the maximum value of \(|i-j|\) over all edges \(v_i v_j\in E(G)\). The bandwidth of \(G\), denoted by \(\operatorname{bw}(G)\), is the minimum of this value over all linear orderings of \(V(G)\).

\paragraph{Cyclomatic number.}
For a connected graph \(G\), its \emph{cyclomatic number} or \emph{feedback edge set number} is equal to \( |E(G)|-|V(G)|+1 \). Equivalently, it is the minimum number of edges whose deletion makes the graph acyclic. For a disconnected graph, the feedback edge set number is the sum of this quantity over all connected components, or equivalently \( |E(G)|-|V(G)|+\operatorname{cc}(G) \), where \(\operatorname{cc}(G)\) is the number of connected components of \(G\).

\section{Triangulation of Outer \(k\)-Planar Graphs}\label{section-triangulation}
In this section, we briefly review the triangulation technique of Firman et al.~\cite{firman_et_al}, and show how to use it to obtain an upper bound on the mim-width of outer \(k\)-planar graphs. Lemma~\ref{crossing-lemma} states that, given an outer \(k\)-planar drawing, we first connect all vertices on the outer face in their circular order to obtain an outer cycle. The polygon formed by this outer cycle admits a triangulation with the following property: each triangulation edge, which we refer to as a \emph{triangulation link} or \emph{cut link}, is crossed by at most \(k\) edges of the original graph. Moreover, such a triangulation can be efficiently computed in FPT time with respect to the parameter \(k\). Note that some triangulation links may coincide with edges of the original graph, as shown in Figure~\ref{fig:triang-tree}.

\begin{figure}[ht!]
    \centering
    \includegraphics[width=0.9\linewidth]{figures/triangulation.pdf}
    \caption{After triangulating the outer cycle, the weak dual of this triangulation is a tree, which means we can split the drawing into a tree of triangles, each of which crosses at most \(3k\)  edges of the original graph, which are marked in orange. All triangulation links are marked in black.}
    \label{fig:triang-tree}
\end{figure}

After triangulating the polygon enclosed by the outer cycle, the weak dual of the resulting triangulation is a tree. Furthermore, splitting the drawing yields a tree structure consisting of a collection of triangles, where each triangle is crossed by at most \(3k\) edges of the original graph. If we place the three vertices of each triangle into a bag, and for every edge that crosses some triangles we add one of its endpoints to all bags corresponding to the triangles it crosses, we obtain a na\"ive tree decomposition of width \(3k + 2\). With additional refined lifting operations, the authors of~\cite{firman_et_al} showed that this can be improved to obtain a treewidth upper bound of $1.5k + 2$.

Using the same triangulation technique, the authors of~\cite{firman_et_al} also proved that every outer $k$-planar graph admits a balanced separator of size at most $k + 2$. To see this intuitively, consider a triangulation link \(\lambda_t=a_tb_t\) that separates the triangulated graph into two parts. Let \(P_t\) denote the subpolygon on one side of \(\lambda_t\). By abuse of notation, we also use \(P_t\) to denote the set of vertices contained in this subpolygon, including the endpoints \(a_t\) and \(b_t\). By Lemma~\ref{crossing-lemma}, there are at most $k + 2$ vertices in $P_t$ that have neighbors outside $P_t$, since the link $\lambda_t$ is crossed by at most $k$ edges. Let $A_t$ be the set of endpoints of these crossing edges that lie inside $P_t$, then \(|A_t| \le k\). Note that $(P_t,\, (V(G)\setminus P_t) \cup \{a_t,b_t\} \cup A_t)$ forms a separation of $G$, whose separator $\{a_t, b_t\} \cup A_t$ has size at most $k+2$. Moreover, such a separator always exists and is balanced.

\paragraph{An Upper Bound on Mim-Width}\label{mimw-bound}
It is known that any graph of treewidth $k$ has mim-width at most $k+1$~\cite{mim-width-bound}. Combining this with the current best upper bound of $1.5k+2$ on the treewidth of outer $k$-planar graphs~\cite{firman_et_al} yields a trivial bound of $1.5k+3$ on their mim-width. We show that Lemma~\ref{crossing-lemma} can be exploited to obtain a tighter bound, namely that every outer $k$-planar graph has mim-width at most $k+2$.

\begin{theorem}\label{mim-proof}
Let \(G\) be an outer \(k\)-planar graph, and let \(\Gamma\) be an outer \(k\)-planar drawing of \(G\). Assume that \(\Gamma\) admits a triangulation of the outer cycle such that every triangulation link is crossed by at most \(k\) edges of \(G\). Then \(\operatorname{mimw}(G)\le k+2\).
\end{theorem}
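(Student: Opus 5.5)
We would build a branch decomposition directly from the weak dual of the given triangulation, so that every cut of the decomposition is induced by a single triangulation link. Then we would bound the induced matching across such a cut by the $k$ edges crossing that link plus one edge each at its two endpoints, which gives $k+2$.

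\emph{Construction.} Let $v_1,\dots,v_n$ be the counterclockwise order of the vertices on the circle of $\Gamma$; we may assume $n\ge 3$. Pick any triangle $r$ of the triangulation as root. Every triangle $t\neq r$ has a unique link $\lambda_t=a_tb_t$ towards its parent. Orient $\lambda_t$ so that the subpolygon $P_t$ on the side away from the root consists of the arc running counterclockwise from $a_t$ to $b_t$. Define the arc set $S_t=P_t\setminus\{b_t\}$, the contiguous arc from $a_t$ up to the vertex just before $b_t$. We use the same definition for each side $xy$ of any triangle (oriented counterclockwise as seen from that triangle), also when $xy$ is an edge of the outer cycle; in that case $S=\{x\}$. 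If triangle $t$ has corners $a_t,c,b_t$ in this order, then $S_t$ is the disjoint union of the arc sets of the sides $a_tc$ and $cb_t$. Likewise the three sides of the root triangle partition $V(G)$ into three arc sets.

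This yields a rooted tree $T$. Triangles are internal nodes, and each outer-cycle edge $v_iv_{i+1}$ is a leaf, identified with the vertex $v_i$. Each internal non-root node has one parent and two children, and the root has three children, so $T$ is subcubic. The leaves are in bijection with $V(G)$. Hence $(T,\delta)$ is a branch decomposition, and each edge of $T$ induces the partition $(S,V(G)\setminus S)$ for the arc set $S$ of the corresponding side. If that side is an outer-cycle edge, then $|S|=1$ and the cut has mim value at most $1$.

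\emph{Bounding a cut.} Fix a link $\lambda=ab$ with arc set $S$ (from $a$ to just before $b$); note that $a\in S$ and $b\notin S$. Take any edge $uw$ of $G[S,\overline S]$ with $u\in S$, $w\notin S$, and $u,w\notin\{a,b\}$. Then $u$ lies strictly between $a$ and $b$ on the arc, and $w$ lies strictly on the other side. So $\{u,w\}$ and $\{a,b\}$ are intertwined and $uw$ crosses $\lambda$. By the assumption of the theorem, at most $k$ edges cross $\lambda$. In an induced matching $F$ of $G[S,\overline S]$, at most one edge contains $a$ and at most one contains $b$, since $F$ is a matching. Every other edge of $F$ crosses $\lambda$, so $|F|\le k+2$. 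Taking the maximum over all edges of $T$ gives $\operatorname{mimw}(G)\le k+2$.

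\emph{Main obstacle.} We expect the bookkeeping of the construction to be the delicate part: ensuring the cuts are exactly arcs bounded by a link, with each endpoint assigned to one side, and that the leaves biject with the vertices. We handle this by the convention of excluding the terminal endpoint $b_t$ from every arc set, which makes the arc sets of sibling sides disjoint and their union equal to the parent's arc set. Everything else follows directly from the crossing bound on the links.
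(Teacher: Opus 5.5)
Your proof is correct and follows essentially the paper's route. You build a branch decomposition along the weak dual of the triangulation so that every nontrivial cut is the vertex set on one side of a triangulation link $xy$ with some of $x,y$ removed, and then you bound an induced matching by the at most $k$ edges crossing $xy$ plus one edge at each of $x$ and $y$.

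The only difference is bookkeeping. You map each vertex to its counterclockwise outgoing outer-cycle edge, so the tree is just the weak dual with outer edges as leaves and every cut is a half-open arc $[a,b)$. The paper instead maps each vertex to the triangle in which it is the third vertex. That choice needs auxiliary nodes $x_t,y_t$, a special root with leaves for $u_0,v_0$, and a small case analysis of cut shapes $P\setminus Z$ with $Z\subseteq\{x,y\}$.
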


\begin{proof}
Fix such a triangulation \(H\) of the outer cycle. Let \(D\) be the weak dual of \(H\). Since \(H\) is a triangulation of a polygon, \(D\) is a binary tree. For every node $t$ of $D$, let $\triangle_t$ denote the corresponding triangle in \(H\). Choose an outer-face link \(e_0=u_0v_0\) of \(H\), let \(\triangle_r\) be the unique triangle of \(H\) incident with \(e_0\), and root \(D\) at node \(r\). For every non-root triangle \(\triangle_t\), let \(\lambda_t=a_tb_t\) be the triangulation link shared by \(\triangle_t\) and its parent triangle, and we call \(\lambda_t\) the \emph{parent link} of \(\triangle_t\). Let \(c_t\) be the \emph{third vertex} of \(\triangle_t\). We write $P_t$ for the set of vertices on one side of $\lambda_t$ contained in a subpolygon, including the endpoints $a_t$ and $b_t$. After splitting this subpolygon, the weak dual of the resulting triangles corresponds to a subtree of $D$ rooted at $t$. For convenience, we also write \(S_t\coloneq P_t\setminus\{a_t,b_t\}\). For the root node \(r\), we have \(P_r\coloneq V(G)\) and \(S_r\coloneq V(G)\setminus\{u_0,v_0\}\).

We first construct a branch decomposition \((T,\delta)\) of \(G\). Observe that every vertex of \(G\) other than \(u_0\) and \(v_0\) appears as the third vertex of a unique triangle of \(H\). Indeed, if \(v\notin\{u_0,v_0\}\), then among all triangles of \(H\) containing \(v\), there is a unique one that is closest to the root triangle \(\triangle_r\); in that triangle, \(v\) cannot belong to the parent link, hence it is the third vertex.

For every node \(t\) of \(D\), we recursively construct a tree \(T_t\) with a distinguished vertex \(x_t\), called the attachment vertex of \(T_t\), such that the leaves of \(T_t\) are in bijection with the vertex set \(S_t\). Moreover, the leaf corresponding to vertex \(c_t\) will be denoted by \(\ell_t\).

If \(t\) is a leaf of \(D\), then \(S_t=\{c_t\}\). In this case, \(T_t\) consists of the single edge \(x_t\ell_t\). If \(t\) has exactly one child \(s\), then \(T_t\) is obtained from \(T_s\) by adding a new vertex \(x_t\) adjacent to \(x_s\) and to \(\ell_t\). The leaves of \(T_t\) are then exactly the leaves of \(T_s\) together with \(\ell_t\), hence they are in bijection with \(S_s\cup\{c_t\}=S_t\). If \(t\) has exactly two children \(s_1\) and \(s_2\), then we create two new vertices \(x_t\) and \(y_t\), and we add the edges \(x_t x_{s_1}, x_t y_t, y_t \ell_t\), and \(y_t x_{s_2}\). Thus \(T_t\) is obtained by joining \(T_{s_1}\), \(T_{s_2}\), and the leaf \(\ell_t\) through the edge \(x_t y_t\); see Figure~\ref{fig:mim-width} for an example. Again the leaves of \(T_t\) are exactly the leaves corresponding to the vertex set \(S_{s_1}\cup\{c_t\}\cup S_{s_2}=S_t\).

\begin{figure}[ht!]
    \centering
    \includegraphics[width=0.8\linewidth]{figures/mim-width.pdf}
    \caption{An example illustrating the bottom-up construction of the subcubic tree $T_t$ corresponding to a non-root node $t$, by joining $T_{s_1}$, $T_{s_2}$, and the leaf $\ell_t$. The left shows the graph \(G\), and the right shows the corresponding branch decomposition. The blue and red vertices represent the leaves of the subtrees $T_{s_1}$ and $T_{s_2}$, respectively. All vertices in \(S_t\) are mapped to leaves of \(T_t\). Some links, vertices, and edges of \(G[V(G)\setminus P_t]\) are not drawn for better visualization.}
    \label{fig:mim-width}
\end{figure}

Finally, we construct \(T\) by taking \(T_r\), adding two new leaves \(\ell_{u_0}\) and \(\ell_{v_0}\), and adding one new root vertex \(z\) adjacent to \(x_r\), \(\ell_{u_0}\), and \(\ell_{v_0}\). We define \(\delta\) by \(\delta(u_0)=\ell_{u_0}, \delta(v_0)=\ell_{v_0}\), and \(\delta(c_t)=\ell_t\ \text{ for every triangle }\triangle_t\). By construction, \(T\) is a subcubic tree rooted at \(z\), every internal vertex has degree at most \(3\), every leaf corresponds to exactly one vertex of \(G\), and every vertex of \(G\) corresponds to exactly one leaf of \(T\). Hence \((T,\delta)\) is a branch decomposition of \(G\).

We now bound the \(\operatorname{cutmim}\) value of every edge of \(T\). Let \(e\in E(T)\), and let \(X_e\subseteq V(G)\) be the set corresponding to one component of \(T-e\). By construction, every such set \(X_e\) is of one of the following forms. First, \(X_e=\{u_0\}\), \(X_e=\{v_0\}\), or \(X_e=\{c_t\}\) for some triangle \(\triangle_t\). In this case, \(\operatorname{cutmim}_G(X_e,\overline{X_e})\le 1 \le k + 2\). Second, \(X_e=S_t=P_t\setminus\{a_t,b_t\}\) for some non-root triangle \(\triangle_t\), where \(a_tb_t\) is the parent link of \(\triangle_t\). This includes both the case where \(e\) is the edge \(x_t x_{s_1}\) that attaches \(T_{s_1}\) to the rest of the branch decomposition and the case where \(e = y_t x_{s_2}\) is an internal edge of the branch decomposition separating the whole subtree of a child triangle from the rest. Third, if \(t\) has two children and \(e\) is the edge \(x_ty_t\), then one side of \(T-e\) corresponds to \(\{c_t\}\cup S_{s_2}=P_{s_2}\setminus\{b_t\}\), where \(s_2\) is the child corresponding to the link \(c_tb_t\). Finally, if \(e\) is the edge \(z x_r\), then \(X_e=\{u_0,v_0\}\) and \(\operatorname{cutmim}_G(X_e,\overline{X_e}) \le 2 \le k+2\).

Therefore, apart from singleton sets and the set \(\{u_0, v_0\}\), every set \(X_e\) has the form \(X_e = P\setminus Z\), where \(P\) is the vertex set on one side of some triangulation link \(xy\), and \(Z\subseteq\{x,y\}\). Fix such a set \(X_e=P\setminus Z\), and consider an induced matching in the bipartite graph \(G[X_e,\overline{X_e}]\). Every edge of this matching is an edge of \(G\) with one endpoint in \(X_e\) and the other in \(\overline{X_e}\).

We distinguish two kinds of matching edges. Suppose first that a matching edge \(uv\) is not incident with \(x\) or \(y\). Then \(u\) lies strictly on one side of the link \(xy\), while \(v\) lies strictly on the other side. Since all vertices lie on a circle and all edges are drawn as straight lines, the endpoints of \(uv\) and \(xy\) appear in alternating order on the circle. Hence \(uv\) crosses the link \(xy\). By the choice of the triangulation, at most \(k\) edges of \(G\) cross the link \(xy\). Therefore there are at most \(k\) matching edges not incident with \(x\) or \(y\). Now consider matching edges incident with \(x\) or \(y\). Any matching can contain at most one edge incident with \(x\) and at most one edge incident with \(y\). Hence any matching in \(G[X_e,\overline{X_e}]\) contains at most two matching edges of this type.

Combining the two observations, every induced matching in \(G[X_e,\overline{X_e}]\) has size at most \(k+2\). Therefore \(\operatorname{cutmim}_G(X_e,\overline{X_e})\le k+2\). Since this holds for every edge \(e\in E(T)\), we conclude that \(\operatorname{mimw}_G(T,\delta)\le k+2\). By the definition of mim-width, \(\operatorname{mimw}(G)\le \operatorname{mimw}_G(T,\delta)\le k+2\).
\end{proof}

\section{XALP-Completeness Results}\label{XALP-results}
In this section, we show that \textsc{Binary CSP} and \textsc{Scattered Set} remain XALP-complete on outer $k$-planar graphs. As discussed in Section~\ref{contribution}, both problems are known to be XALP-complete when parameterized by treewidth, and since every outer \(k\)-planar graph has treewidth at most \(1.5k+2\)~\cite{firman_et_al}, membership in XALP for these two problems parameterized by outer \(k\)-planarity follows immediately. Therefore, we focus on establishing their XALP-hardness via reductions. The reduction gadgets used in our XALP-hardness proof for \textsc{Binary CSP} are adapted from~\cite{outerkplanar}, following the construction based on generalized tree decompositions. The XALP-hardness proof for \textsc{Scattered Set} follows a similar strategy, and we borrow from~\cite{outerkplanar} the idea of using pendant path gadgets to forbid certain vertices from being selected into the scattered set.

\subsection{Binary CSP}\label{binary-csp}
Constraint Satisfaction Problems (CSPs) are problems defined by a set of variables, domains (possible values), and constraints that limit valid combinations, aiming to find a state where all constraints are satisfied. A Binary Constraint Satisfaction Problem (Binary CSP) is a type of problem where every constraint restricts the values of at most two variables, represented as a graph where vertices are variables and edges are constraints. Formally, it is defined as follows.

\vspace{3mm}
\noindent\fbox{
\begin{minipage}{0.97\textwidth}
\textsc{Binary CSP}\\
\textbf{Input:} A graph $G = (V,E)$, a set $\mathcal{C}$, a set $C(v) \subseteq \mathcal{C}$ for each $v \in V$, and a set
$C(u,v) \subseteq \mathcal{C} \times \mathcal{C}$ for each ordered pair $(u,v) \in V \times V$ such that $uv \in E$. \\
\textbf{Question:} Is there a function $f \colon V \rightarrow \mathcal{C}$ such that for every $v \in V$, $f(v) \in C(v)$,
and for every edge $uv \in E$, we have $(f(u), f(v)) \in C(u,v)$?
\end{minipage}}
\vspace{3mm}

We call the elements of $\mathcal{C}$ \emph{colors}, the sets $C(v)$ \emph{domains} (or \emph{vertex constraints}), and the sets $C(u,v)$ \emph{edge constraints}. We emphasize that $C(u,v)$ contains ordered pairs, that is, even though the graph is undirected, the ordering of vertices matters for the edge constraints. We may assume without loss of generality that $C(u,v) \subseteq C(u) \times C(v)$ for every edge $uv \in E$.

\begin{theorem}\label{bincsp-thm}
\textsc{Binary CSP} is XALP-complete on outer \(k\)-planar graphs parameterized by \(k\). 
\end{theorem}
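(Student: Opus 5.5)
Membership in XALP is immediate from the discussion above: every outer $k$-planar graph has treewidth at most $1.5k+2$~\cite{firman_et_al}, and \textsc{Binary CSP} parameterized by treewidth lies in XALP~\cite{xalp}. We only need a tree decomposition of width $O(k)$. It can be produced in logarithmic space from the drawing using the triangulation of Lemma~\ref{crossing-lemma}: take the triangles as bags and lift one endpoint of every crossing edge. Alternatively, we can let the reduction produce the decomposition together with the instance. So the plan concentrates on hardness.

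For hardness I would reduce from \textsc{Binary CSP} parameterized by the width of a given generalized tree decomposition, which is XALP-complete~\cite{xalp,outerkplanar}. Let the input be a sequence of Introduce, Forget, Add-Edge$(i)$, Swap$(i)$ and Join operations, with at most $w$ terminals at any time. The construction mimics the gadgets of~\cite{outerkplanar}. Each terminal ``slot'' $i\in[w]$ is represented by a path of copy vertices. Each new copy is joined to the previous one by an equality constraint $C(u,v)=\{(c,c)\}$, so all vertices on one path receive the same color. The gadgets per operation are as follows.
\begin{itemize}
\item \textbf{Introduce.} Start a new copy carrying the domain of the introduced vertex.
\item \textbf{Forget.} End the corresponding path.
\item \textbf{Add-Edge$(i)$.} Put an edge carrying the original constraint between the current copies of slots $i$ and $i+1$.
\item \textbf{Swap$(i)$.} Cross the two paths via edges, each equality-constrained, between the current copy in slot $i$ and the next copy in slot $i+1$, and vice versa.
\item \textbf{Join.} Branch every slot path into two paths, one per subtree, joined at a common vertex by equality constraints.
\end{itemize}

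The key point is that we do not need planarity, only a convex drawing with $O(w)$ crossings per edge. I would place the vertices on the circle in the order of a depth-first traversal of the operation tree. Each operation node then contributes a block of at most $O(w)$ consecutive vertices. The slot paths run parallel to one another, so edges inside a block and between consecutive blocks span only $O(w)$ positions. For a Join node, the two child subtrees occupy two contiguous arcs. The copies of the $w$ slots at the join node are connected to the first copies of the right subtree. These ``long'' edges jump over the whole left subtree, so they can only be crossed by edges with exactly one endpoint inside the left subtree's arc. By construction those are just the at most $w$ long edges leaving the left subtree's boundary, plus $O(w)$ local edges near the two ends of the arc. Hence every edge is crossed $O(w)$ times, and we set $k=O(w)$. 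Correctness is routine: colorings of the new instance correspond to colorings of the original, via the equality chains. The reduction is clearly logspace computable in the sense of a ptl-reduction.

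The main obstacle is the crossing analysis at Join nodes, combined with the nesting of long edges. A subtree's boundary edges must never be crossed by the boundary edges of ancestors beyond $O(w)$ per edge. I would handle this by an invariant: for every node $t$, the arc of $t$ has at most $2w$ edges leaving it, all incident to its first and last blocks. I would then check inductively that a Join preserves this invariant, since the arcs of the children are nested inside the arc of the parent. Edges leaving a child's arc either stay inside the parent's arc, where they are counted locally, or belong to the parent's own $2w$ boundary edges. Two edges can cross only if one has an endpoint strictly inside the other's span and one outside. So every crossing of an edge $e$ is charged to the boundary of the arc that $e$ jumps over, plus local edges, which gives a bound of roughly $4w+O(1)$.
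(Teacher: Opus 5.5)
Your proposal follows essentially the same route as the paper: one copy of each terminal per bag of a generalized tree decomposition, equality constraints between copies in adjacent bags, and the original constraint at Add-Edge bags. The bag blocks are placed on the circle in pre-order, so every edge leaving a subtree's arc goes to its parent block, which bounds the crossings on each edge by $O(w)$. The only unsupported step is your claim, used for membership, that the triangulation of Lemma~\ref{crossing-lemma} (and hence a tree decomposition) can be produced in logarithmic space. It is inessential: the paper argues membership only via the treewidth bound $1.5k+2$, and if a justification is wanted you can instead guess the triangulation recursively, link by link, using the stack that XALP provides.
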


\begin{proof}
We show a reduction from \textsc{Binary CSP} parameterized by treewidth. Let $G$ be an input graph with a nice tree decomposition of width $k$. From this instance, we construct an outer \(f(k)\)-planar graph $H$. As a first step, we transform the given nice tree decomposition into a generalized tree decomposition in which every bag is of one of the following types: Introduce, Forget, Add-Edge, Swap, or Join. For each bag $X$, let $\prec_X$ denote the ordering of terminals associated with $X$. For every bag $X$ and every terminal $v \in X$, we introduce a vertex $v_X$ in $H$. We draw the vertices of $H$ in the plane as follows. For each bag $X$, the vertices corresponding to the terminals in $X$ are placed on a single horizontal line, ordered from left to right according to $\prec_X$. For each bag $X$ and its child $Y$, the vertices corresponding to $X$ are drawn above those corresponding to $Y$. Intuitively, the drawing still maintains the tree decomposition skeleton.

Let $B_X = \{ v_X \mid v \in X \} \subseteq V(H)$. For every bag $X$ of the tree decomposition of $G$, we add edges to $H$ as follows. For each vertex $v \in V(G)$, if both bag $X$ and its child $Y$ contain $v$, then we add the edge $v_X v_Y$ to $H$. Furthermore, if $X$ is an Add-Edge bag corresponding to an edge $vw \in E(G)$, then we add the edge $v_X w_X$ to $H$, as illustrated in Figure \ref{fig:binary-csp}. 

\begin{figure}[ht!]
    \centering
    \includegraphics[width=0.65\linewidth]{figures/bin-csp.pdf}
    \caption{Constructing the graph \(H\) from a generalized tree decomposition of \(G\) by adding edges between vertices in parent and child bags, as well as between adjacent vertices inside Add-Edge bags. From left to right, top to bottom, the representations of the five bags are: Introduce $u$, Forget $x$, Add-Edge $vw$, Swap $vw$, and Join.}
    \label{fig:binary-csp}
\end{figure}

We now define the constraints of the \textsc{Binary CSP} instance on $H$. For each bag $X$ and each vertex $v_X \in B_X$, we set $C(v_X) = C(v)$. For each edge $v_X v_Y$ in $H$, we define \(C(v_X, v_Y) = \{(d,d) \mid d \in C(v)\}.\) For each edge $v_X w_X$ in $H$ corresponding to an original edge $vw \in E(G)$, we set \(C(v_X, w_X) = C(v,w).\) In any valid assignment of an instance on $H$, all vertices corresponding to the same vertex of $G$ must receive the same color. Moreover, for every edge of $G$, there is a corresponding Add-Edge bag in the decomposition, where the edge constraint is enforced. Consequently, any valid assignment for $H$ induces a valid assignment for $G$, and vice versa.

It remains to show that the graph $H$ admits an outer $f(k)$-planar drawing, where $k$ is the treewidth of $G$. We rely on the fact that every tree has an outerplanar drawing, that is, given a tree, all its nodes can be placed on a circle such that no tree edge is crossed. For example, consider a pre-order traversal of the bags of the tree decomposition. Along the outer face, we reserve a contiguous interval \(I_X\) for each bag \(X\), and place all vertices in \(B_X\) within this interval clockwise, following the order $\prec_X$. The intervals themselves are arranged clockwise along the outer face according to the pre-order traversal of the tree bags. See Figure~\ref{fig:bincsp-cross} for an example.

\begin{figure}[ht!]
    \centering
    \includegraphics[width=0.9\linewidth]{figures/bincsp-cross.pdf}
    \caption{An example of edges incident with the vertices of a join bag \(X\), with treewidth \(k=2\). For each bag, a consecutive interval on the outer face is reserved for placing the at most \(k+1\) vertices contained in the bag. All intervals are arranged clockwise according to a pre-order traversal of the tree decomposition. In this example, every edge is crossed at most \(3k=6\) times.}
    \label{fig:bincsp-cross}
\end{figure}

Note that the tree decomposition is binary, which implies that a tree bag containing at most $k + 1$ vertices can be adjacent to at most three other bags. Consequently, all vertices of a single bag are connected to at most $3k+3$ other vertices in neighboring bags. The maximum number of crossings occurs on edges incident to vertices in join bags. We analyze the worst-case crossings on such edges to obtain an upper bound on crossings per edge in \(H\).

Observe that for every edge of the form \(v_{X}v_{Y}\) in $H$ connecting vertices belonging to two adjacent bags \(X\) and \(Y\), its two endpoints \(v_X\) and \(v_Y\) lie in two distinct intervals \(I_X\) and \(I_Y\) on the outer face. Each interval contains at most \(k+1\) vertices, which together are incident with at most \(3k+3\) edges. Hence, the vertices contained in the two intervals \(I_X\) and \(I_Y\) together are incident with at most \(5k+5\) distinct edges, where up to \(k+1\) of them are edges between vertices of \(B_X\) and \(B_Y\). Observe that every edge that may cross \(v_X v_Y\) must belong to this set of at most \(5k+5\) edges. Indeed, since the tree decomposition admits an outerplanar drawing, each of the at most \(k+1\) edges between vertices of \(B_X\) and \(B_Y\) cannot be crossed by any edge not contained in this set of at most \(5k+5\) edges. Finally, edges of the form $v_X w_X$ introduced at Add-Edge bags do not create crossings, since the two endpoints \(v_X\) and \(w_X\) are adjacent and lie within the same interval on the outer face. Therefore, each edge of $H$ is crossed at most $O(k)$ times.
\end{proof}

\subsection{Scattered Set}\label{scattered-set}
In this section, we consider the \textsc{Scattered Set} problem, which can be seen as a generalization of the classic \textsc{Independent Set} problem and is formally defined as follows.

\vspace{3mm}
\noindent\fbox{
\begin{minipage}{0.97\textwidth}
\textsc{Scattered Set}\\
\textbf{Input:} Graph \(G=(V,E)\), \(d \in \mathbb{N}\), \(m \in \mathbb{N}\). \\
\textbf{Question:}  Is there a set \(I \subseteq V\) of size \(m\) such that for any pair of distinct vertices \(u, v \in I\) the distance between \(u\) and \(v\) is at least \(d\)? 
\end{minipage}}
\vspace{3mm}

For any fixed $d \geq 3$, the problem parameterized by solution size \(m\) is W[1]-hard, even when restricted to bipartite graphs~\cite{Eto2014}. When $d = 2$, the problem coincides with \textsc{Independent Set}, which is known to be W[1]-complete~\cite{wh2}. For fixed $d$, the problem becomes fixed-parameter tractable when parameterized by treewidth, using standard dynamic programming techniques~\cite{scattered-set-fpt}. In this paper, we consider the setting where $d$ is part of the input and is not fixed.

\begin{theorem}
\textsc{Scattered Set} is XALP-complete on outer \(k\)-planar graphs parameterized by \(k\).
\end{theorem}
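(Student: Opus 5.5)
Membership in XALP follows as in the previous subsection: outer $k$-planar graphs have treewidth at most $1.5k+2$~\cite{firman_et_al}, and \textsc{Scattered Set} parameterized by treewidth is in XALP~\cite{outerkplanar}. The task is therefore XALP-hardness, and I would reduce from \textsc{Binary CSP} on outer $k$-planar graphs, which is XALP-hard by Theorem~\ref{bincsp-thm}. Concretely, I would start from the instance $H$ built in that proof. It is not only outer $O(k)$-planar but has a very rigid shape: bags laid out in consecutive intervals along a pre-order of a binary generalized tree decomposition, edges $v_Xv_Y$ between adjacent bags, and Add-Edge edges between circle-neighbours.

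The reduction I would aim for follows the pattern of~\cite{outerkplanar}. Each CSP vertex $v_X$ with domain $C(v)=\{1,\dots,c\}$ becomes a \emph{choice gadget}: a cycle, or a long path of length $\Theta(d)$, on which exactly one vertex can be selected, and the selected position encodes the colour. The target size $m$ is the number of gadgets, so exactly one vertex is chosen per gadget. Vertices that must never be selected are protected by pendant paths of suitable length. The counting argument is that a pendant path would otherwise admit an extra scattered vertex, or, equivalently, that selecting a forbidden vertex blocks another gadget. For each equality edge $v_Xv_Y$ and each Add-Edge constraint $C(v,w)$, I would connect the two gadgets by paths whose lengths depend on the colour pair. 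Position $i$ on one gadget and position $j$ on the other are then at distance at least $d$ exactly when $(i,j)$ is allowed. The standard trick is to take $d$ large, attach position $i$ to a connector vertex at length $L+i$ on one side and $L-j$ (or $L+(c-j)$) on the other, and add extra connectors for forbidden pairs only. One then checks that no shortcuts through other gadgets make distances too small; choosing $d$ much larger than any local detour but smaller than twice the inter-gadget spacing handles this.

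For outer $f(k)$-planarity I would draw everything within the interval $I_X$ of each bag, placing each gadget's vertices and pendant paths consecutively. Long paths are outerplanar, and subdividing edges never increases crossings per edge. The problem is that edge-constraint connectors between two gadgets with $c$ colours each number about $c^2$. That is unbounded in $k$ and would generate many mutual crossings.

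This is the main obstacle. I would try to fix it by having connectors only between consecutive gadgets on the circle: the Add-Edge pairs are already circle-adjacent, and pre-order intervals make $v_X,v_Y$ connections behave like tree edges. Between two adjacent gadgets I would arrange colours so that the connector paths form a nested, non-crossing family. This works if each gadget is drawn as a path traversed in opposite directions on the two sides, with distances encoding $i+j$ comparisons, i.e.\ monotone constraints. General relations $C(v,w)$ would first be made "monotone" by splitting each colour into several copies and using equality-style constraints through intermediate gadgets, as in the \textsc{List Coloring}-free constructions of~\cite{outerkplanar}. If this encoding fails, my fallback is to reduce from a more structured XALP-hard source problem used in~\cite{outerkplanar} for \textsc{Scattered Set} (e.g.\ a tree-chained multicoloured independent set), whose constraints are already of the "distance threshold" type. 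I would then reuse the Theorem~\ref{bincsp-thm} layout argument to bound crossings by $O(k)$.
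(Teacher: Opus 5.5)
\paragraph{Verdict.} Your membership argument matches the paper's. The hardness half, however, is not yet a proof: the step you call the main obstacle is left unresolved, and the whole difficulty sits there.

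\paragraph{The gap.} Reducing from \textsc{Binary CSP} forces you to encode arbitrary relations $C(v,w)$ on domains of unbounded size purely through distance thresholds, with only $O(k)$ crossings per edge. Take path-like choice gadgets drawn in intervals of the circle. The set of position pairs $(i,j)$ at distance less than $d$ is then essentially a union of $L_1$-balls centred at the attachment points of the connectors. Bounded crossings force these centres to be arranged almost monotonically. A general relation is not of this form. Neither are the permutation-type equalities that your ``split each colour into copies'' step would create between differently ordered copies. You give no construction that gets around this, so the ``monotonization'' is a hope rather than a step. The fallback does not help either: in tree-chained multicoloured independent set the constraints are arbitrary edges between colour classes, not distance thresholds, so the same encoding problem reappears.

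\paragraph{The missing idea.} No value encoding is needed at all. \textsc{Scattered Set} with $d$ part of the input is itself XALP-complete parameterized by treewidth~\cite{outerkplanar}, and the paper reduces from that problem. It reuses the layout of Theorem~\ref{bincsp-thm}: one copy $v_X$ per bag and edges $v_Xv_Y$ between adjacent bags. Each Add-Edge edge $v_Xw_X$ is replaced by a path of length $A=d\,|V(T)|+1$, and the new threshold is $d'=dA$. Fix one representative copy $R_u$ per vertex $u$. Any $R_u$--$R_v$ path must fully traverse at least $\mathrm{dist}_G(u,v)$ of the long paths, while moving within the subtree of copies of a single vertex costs at most $|V(T)|$. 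Hence $A\,\mathrm{dist}_G(u,v)\le \mathrm{dist}_H(R_u,R_v)\le A\,\mathrm{dist}_G(u,v)+(\mathrm{dist}_G(u,v)+1)\,|V(T)|$, which gives $\mathrm{dist}_G(u,v)\ge d$ if and only if $\mathrm{dist}_H(R_u,R_v)\ge d'$.

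Every non-representative copy and every internal vertex of a long path receives two pendant paths of length $d'-1$. A counting argument then shows that any $d'$-scattered set of size $m$ plus the number of pendant endpoints contains at least $m$ representatives. Since $v_X$ and $w_X$ are consecutive on the circle and the pendant paths are placed locally, none of the new vertices creates a crossing. The $O(k)$ bound from Theorem~\ref{bincsp-thm} therefore carries over unchanged.
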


\begin{proof}
We show a reduction from \textsc{Scattered Set} parameterized by treewidth.  Let \(G\) be an input graph with a nice tree decomposition of width \(k\). From this instance, we construct an outer \(f(k)\)-planar graph \(H\) for a function \(f\). The proof idea is similar to that of Theorem~\ref{bincsp-thm}, and again relies on the fact that every tree admits an outerplanar embedding, which allows us to construct a drawing that avoids unnecessary edge crossings as much as possible.

Our goal is to show that graph $G$ contains a $d$-scattered set of size $m$ if and only if the constructed graph $H$ contains a $d'$-scattered set of size $m'$. As before, we first construct for $G$ a generalized tree decomposition \(T\) whose bags are of type Introduce, Forget, Add-Edge, Swap, or Join. For every bag $X$ and every terminal $v \in X$, we introduce a vertex $v_X$ in $H$. Let $B_X = \{ v_X \mid v \in X \} \subseteq V(H)$. For a bag $X$, the vertices in $B_X$ are arranged in \(H\) from left to right on a single horizontal line according to the order $\prec_X$. For each bag $X$ and its child $Y$ in the tree decomposition, the vertices in $B_X$ are drawn above the vertices in $B_Y$.

For every bag $X$ of the tree decomposition of $G$, we add edges to $H$ as follows. For each vertex $v \in V(G)$, if both bag $X$ and its child $Y$ contain $v$, then we add the edge $v_X v_Y$ to $H$. Unlike the proof of Theorem~\ref{bincsp-thm}, if $X$ is an Add-Edge bag corresponding to an edge $vw \in E(G)$, we connect $v_X$ and $w_X$ by a long path gadget instead of a single edge. In addition, we attach pendant paths to certain vertices of $H$ to ensure that specific forbidden vertices cannot be selected in any $d'$-scattered set, and such pendant path gadgets are also used in~\cite{outerkplanar}. A \emph{pendant path} of length $n$ attached to a vertex $v$ is a path $P=(u_0,u_1,\dots,u_n)$ such that $u_0 = v$, each internal vertex $u_i$ for $1 \le i \le n-1$ has degree $2$, and the terminal vertex $u_n$ has degree $1$.

For each vertex $v \in V(G)$, we select one occurrence of $v$ in an arbitrary bag $X$ that introduces $v$, and designate the corresponding vertex $v_X \in V(H)$ as the representative of $v$ in $H$. We denote this representative by $R_v$. All other copies of $v$ in \(H\) are referred to as non-representative vertices. Let $|V(T)|$ denote the size of \(T\), i.e. the total number of bags. Observe that the distance between any two bags in \(T\) is at most $|V(T)|$. We define \(A = d \cdot |V(T)| + 1\) and  \(d' = d \cdot A\). If $X$ is an Add-Edge bag corresponding to an edge $vw \in E(G)$, we connect $v_X$ and $w_X$ in \(H\) by a long path of length $A$. For all non-representative vertices in graph \(H\), as well as all vertices in the long path gadgets, we attach to each of them two vertex-disjoint pendant paths of length $d' - 1$.

The intuition behind this construction is as follows. If, in $G$, we want to reach a vertex $v$ from a vertex $u$ along a shortest path \(P_G\) of length $\ell$, then we must traverse all edges on \(P_G\), say $e_1, e_2, \ldots, e_{\ell}$. Now consider starting from the representative vertex $R_u$ and attempting to reach $R_v$ along a shortest path \(P_H\) in $H$. By the properties of tree decompositions, for every vertex $v \in V(G)$, the set of bags containing $v$ induces a connected subtree of $T$. This implies that in the constructed graph $H$, all copies of $v$ also form a connected subtree \(T_v\), since the copies of $v$ in adjacent bags are connected by edges. In $H$, the vertices $R_u$ and $R_v$ lie in two distinct connected subtrees \(T_u\) and \(T_v\), and different subtrees are connected only via the long path gadgets introduced in the Add-Edge bags. Therefore, any path from the subtree containing $R_u$ to the subtree containing $R_v$ must traverse some of these long path gadgets.

\begin{figure}[ht!]
    \centering
    \includegraphics[width=0.95\linewidth]{figures/scatter-distance.pdf}
    \caption{Suppose that the shortest path from \(u\) to \(v\) in \(G\) is \(u \rightarrow x \rightarrow y \rightarrow v\). In \(H\), all copies of each vertex \(i \in V(G)\) form a connected subtree \(T_i\). Different subtrees are connected by long path gadgets. There exists a path from \(R_u\) to \(R_v\) in \(H\) that traverses the long path gadgets corresponding to the Add-Edge bags for the edges \(ux\), \(xy\), and \(yv\). Each subtree \(T_i\) can be viewed as a super-vertex, represented by a pink circle. Long path gadgets are shown in red.}
    \label{fig:scatter-distance}
\end{figure}

By making these long path gadgets sufficiently long, while still keeping the construction polynomial in size, we ensure that the distance between two vertices \(R_u\) and \(R_v\) in \(H\) is determined mainly by the number of long path gadgets traversed when moving from \(R_u\) to \(R_v\), rather than by the distances traveled inside any subtree containing all copies of a particular vertex. For example, in Figure~\ref{fig:scatter-distance}, one may think of the subtree \(T_i\) containing all copies of a vertex \(i \in V(G)\) as a super-vertex \(S_i\) in \(H\). Two such super-vertices \(S_i\) and \(S_j\) are connected by a long path gadget if and only if the vertices \(i\) and \(j\) are adjacent in the original graph \(G\). Since the distance between any two vertices inside the same subtree \(T_i\) is much smaller than the length of a long path gadget, we can scale up distances between vertices of \(G\) in the graph \(H\), while preserving the relative ordering of distances between different pairs of vertices.

Formally, consider any two vertices \(u,v \in V(G)\), and suppose that there exists a shortest path \(P_G\) of length \(\ell\) with edges \(e_1,e_2,\ldots,e_\ell\) from \(u\) to \(v\) in \(G\). We now bound the distance between the corresponding representative vertices \(R_u\) and \(R_v\) in \(H\). To obtain a lower bound, observe that any shortest path \(P_H\) from \(R_u\) to \(R_v\) must traverse at least \(\ell\) long path gadgets. Indeed, every edge \(uv \in E(G)\) corresponds uniquely to a long path gadget contained in an Add-Edge bag for edge \(uv\). If \(P_H\) traverses fewer than \(\ell\) long path gadgets when moving from \(R_u\) to \(R_v\), then this would imply the existence of a path from \(u\) to \(v\) in \(G\) using fewer than \(\ell\) edges, contradicting the assumption that \(P_G\) is a shortest path. Hence, we have \(A \cdot \ell \le \mathrm{dist}_H(R_u, R_v)\).

To obtain an upper bound for \(\mathrm{dist}_H(R_u, R_v)\), consider a path in \(H\) starting from \(R_u\), then traversing the long path gadgets corresponding to the edges \(e_1,e_2,\ldots,e_\ell \in E(G)\), and finally reaching \(R_v\). Such a path exists and traverses exactly $\ell$ long path gadgets of length $A$ introduced at Add-Edge bags. In addition, it includes the distances incurred when moving from one Add-Edge bag to the next along tree edges inside the subtree \(T_i\), for vertices \(i\) appearing on the path \(P_G\) other than \(u\) and \(v\). There are $\ell - 1$ such transitions, together with the distances from $R_u$ to the first Add-Edge bag and from the last Add-Edge bag to $R_v$. Hence, there are $\ell + 1$ such additional transitions in total, and each of them has length at most $|V(T)|$. Therefore, \(A \cdot \ell \le \mathrm{dist}_H(R_u, R_v) \le A \cdot \ell + (\ell+1)\cdot |V(T)|\). Now consider any two vertices $u, v \in V(G)$. If $\mathrm{dist}_G(u,v) \le d-1$, then \(\mathrm{dist}_H(R_u, R_v) \le (d-1)\cdot A + d\cdot |V(T)|\). By the definition of $A$, we have $d\cdot |V(T)| < A$, and hence \(\mathrm{dist}_H(R_u, R_v) < (d-1)\cdot A + A = d\cdot A = d'\). On the other hand, if $\mathrm{dist}_G(u,v) \ge d$, then \(\mathrm{dist}_H(R_u, R_v) \ge d\cdot A = d'\). Therefore, we map and scale up the distance between any two vertices $u$ and $v$ in $G$ to the distance between their corresponding representatives $R_u$ and $R_v$ in $H$, so that $\mathrm{dist}_G(u,v) \ge d$ if and only if $\mathrm{dist}_H(R_u,R_v) \ge d'$.

\begin{figure}[ht!]
    \centering
    \includegraphics[width=0.9\linewidth]{figures/scattered-set.pdf}
    \caption{In an Add-Edge bag $X$, the vertices $v_X$ and $w_X$ are connected by a long path gadget (marked in red). All non-representative vertices and the vertices on the long path gadget are each attached to two pendant paths (marked in blue). The figure on the right illustrates how these pendant paths can be drawn on the outer face without introducing additional crossings.}
    \label{fig:scattered-set}
\end{figure}

Since we attach two pendant paths of length $d'-1$ to every non-representative vertex and to every vertex in the long path gadgets in $H$, none of these forbidden vertices can belong to a $d'$-scattered set. Indeed, if such a vertex is not selected, one can instead select the two terminal vertices of its pendant paths, whereas selecting the forbidden vertex prevents selecting all vertices on both pendant paths, resulting in a strictly smaller $d'$-scattered set. Therefore, if $G$ contains a $d$-scattered set $S$ of size $m$, then the representative vertices $R_v$ for each $v \in S$, together with all terminal vertices of the pendant paths, form a $d'$-scattered set of size $m'$ in $H$, and vice versa.

It remains to show that the graph $H$ is outer $f(k)$-planar, where $k$ is the treewidth of graph \(G\). Compared with the construction in the proof of Theorem~\ref{bincsp-thm}, the newly added pendant paths do not introduce additional crossings. Indeed, for any vertex $v$ to which two pendant paths are attached, these paths can be placed on the outer face locally to the left and right of $v$ without crossing any other edges. Moreover, since the edge in every Add-Edge bag \(X\) between $v_X$ and $w_X$ does not cross any edges, extending it into a path of length $A$ can be done by placing the vertices of the path consecutively along the outer face between $v_X$ and $w_X$, as shown in Figure~\ref{fig:scattered-set}.
\end{proof}

\section{FPT Algorithms on Triangulated Graphs}\label{fpt-algorithms}
In this section, we present algorithms for several problems that are FPT on outer $k$-planar graphs, when parameterized by $k$. We begin by explaining the intuition behind these algorithms, using classic treewidth-based dynamic programming as a guiding example. As discussed in Section~\ref{section-triangulation}, after applying the triangulation method of Firman et al.~\cite{firman_et_al} to the polygon enclosed by the outer cycle, the weak dual of the resulting triangulation is a tree. Once rooted, this tree provides a skeleton for designing bottom-up dynamic programming algorithms.

Observe that every triangulation link not on the outer face partitions the polygon enclosed by the outer cycle into two subpolygons, and the tree nodes of the weak dual corresponding to all triangles contained in one side form exactly a rooted subtree of the weak dual. Consider a triangle corresponding to a non-root node of the weak dual. Its three edges partition the polygon enclosed by the outer cycle into four regions. The two subpolygons defined by the two lower edges of the triangle together with the corresponding portions of the outer cycle correspond to the two subproblems in the dynamic programming, as illustrated in Figure~\ref{fig:lc-triang}.

\begin{figure}[ht!]
    \centering
    \includegraphics[width=0.9\linewidth]{figures/lc-triang.pdf}
    \caption{The triangulation link \(\lambda_t=a_tb_t\) separates the convex polygon into two subpolygons. The lower subpolygon consists of a collection of triangles whose corresponding nodes in the weak dual form a rooted subtree of the weak dual. Vertices in such a subpolygon can represent a subproblem in the dynamic programming. Subpolygons \(P_{t_1}\) and \(P_{t_2}\) denote the two subproblems induced by the links \(a_tc_t\) and \(b_tc_t\), respectively. Edges of the original graph are shown in orange.}
    \label{fig:lc-triang}
\end{figure}

The triangulated graph shown in Figure~\ref{fig:lc-triang} shares many similarities with standard tree decompositions. Classical dynamic programming on tree decompositions is based on the following principle. Given a tree decomposition \(T\) of a graph \(G\), each bag acts as a separator of size \(O(\mathrm{tw}(G))\). More importantly, the vertices contained in any subtree of \(T\) can interact with the rest of the graph only through the vertices in the root bag of that subtree. These vertices are typically referred to as the \emph{boundary vertices} of the corresponding subproblem. Consequently, for each subproblem represented by a subtree of \(T\), it suffices to store information about the boundary vertices in the DP table. This information is enough to combine partial solutions in a bottom-up manner and eventually obtain a final solution at the root of \(T\).

Outer $k$-planar graphs exhibit a similar property. Using the same notation as in Section~\ref{section-triangulation}, for a subpolygon \(P_t\) lying on one side of a triangulation link \(\lambda_t=a_tb_t\), we also use \(P_t\) to denote the set of all vertices contained in this subpolygon, including \(a_t\) and \(b_t\). Let \(A_t\) be the set of endpoints, lying in \(P_t\), of all edges that cross \(\lambda_t\). Note that \(\{a_t,b_t\}\cup A_t\) is a separator of \(G\) of size at most \(k+2\). Within $P_t$, only the vertices in $\{a_t, b_t\} \cup A_t$ may have edges connecting to vertices outside $P_t$, and hence they serve as the boundary vertices of the subproblem; see Figure~\ref{fig:separator}.

This means that for such a subpolygon $P_t$, it is sufficient to store information only for the vertices in $\{a_t, b_t\} \cup A_t$ in the DP table in order to combine with partial solutions from the outside. By processing the rooted weak dual tree in a bottom-up manner, we gradually merge subpolygons according to the tree skeleton, combining the corresponding boundary vertices along the outer cycle. In this process, each step corresponds to merging smaller subpolygons into larger ones, eventually reconstructing the entire polygon enclosed by the outer cycle. For instance, in Figure~\ref{fig:lc-triang}, merging the subpolygons $P_{t_1}$ and $P_{t_2}$ yields the subpolygon $P_t$, and this merging process continues recursively along the rooted weak dual tree until the full graph is obtained at the root. At the root, there exists an outer-face link $\lambda_0 = a_0 b_0$ that is not crossed by any edge, and the subpolygon defined by $\lambda_0$, together with $a_0$ and $b_0$, coincides with the entire graph $G$. Therefore, the final solution can be obtained at the root as well.

\begin{figure}[ht!]
    \centering
    \includegraphics[width=0.7\linewidth]{figures/separator.pdf}
    \caption{For the subpolygon (or equivalently, subproblem) $P_t$, including the vertices $a_t$ and $b_t$, lying below a triangulation link $\lambda_t = a_t b_t$, let $A_t$ be the set of endpoints, inside $P_t$, of the edges that cross $\lambda_t$. Then $(P_t,\, (V(G)\setminus P_t) \cup \{a_t,b_t\} \cup A_t)$ forms a separation of $G$, whose separator $\{a_t,b_t\} \cup A_t$ has size at most $k+2$. The vertices in the separator are highlighted with blue circles. For each subproblem in the dynamic programming, it suffices to store information only for these at most \(k+2\) boundary vertices. Edges of the original graph are shown in orange.}
    \label{fig:separator}
\end{figure}

The running time of treewidth-based algorithms is mainly determined by the size of the DP table at each bag, or more precisely, by the amount of information stored for each vertex in the bag. If $G$ has treewidth $k$ and each vertex in the bag requires $f(k)$ states for some function \(f\), then the overall running time is typically FPT. In contrast, if each vertex needs to store $O(n)$ possible values, the running time becomes $O(n^{k+1})$, which is only XP. Outer $k$-planar graphs admit an additional stronger geometric restriction. By Lemma~\ref{crossing-lemma}, for a subpolygon $P_t$ defined by a triangulation link $\lambda_t = a_t b_t$, only the two endpoints $a_t$ and $b_t$ may have many neighbors outside $P_t$, while every other vertex in $P_t$ has at most $k$ such neighbors. Consequently, among the boundary vertices, only $a_t$ and $b_t$ may have unbounded interaction with the outside, whereas each of the remaining boundary vertices has at most $k$ neighbors outside the subpolygon \(P_t\). This is a crucial observation: in many treewidth-based dynamic programming algorithms, the amount of information stored for each boundary vertex depends on its interactions with vertices outside the current subtree. For example, in the \textsc{Capacitated Dominating Set} problem, one needs to record, for each vertex in a bag, the amount of capacity that has already been used, or equivalently, the remaining capacity. In standard treewidth-based algorithms, this value can be as large as $O(n)$, since for a vertex \(v\) in a bag $X$, the number of neighbors of \(v\) outside the subtree rooted at $X$ can be as large as $O(n)$ and is not bounded by the treewidth. Therefore, this leads to an $n^{O(k)}$ XP algorithm. In essence, a tree decomposition only guarantees a small number of boundary vertices, at most $O(\mathrm{tw}(G))$, but does not impose any restriction on how many neighbors each boundary vertex may have outside the current subtree.

However, in outer $k$-planar graphs, for every subpolygon \(P_t\), except for two endpoints $a_t$ and $b_t$ of the triangulation link \(\lambda_t\), each boundary vertex has at most $k$ neighbors outside the subproblem (subpolygon \(P_t\)), and hence can contribute at most $k$ units of capacity usage in future computations. Therefore, it suffices to record its remaining capacity up to $O(k)$, while only $a_t$ and $b_t$ require values up to $O(n)$. As a result, the DP table size of each subproblem becomes $O(n^2 \cdot k^k)$, which yields an FPT algorithm. More generally, the geometric property captured in Lemma~\ref{crossing-lemma} allows us to reduce the DP table size of boundary vertices from XP to FPT in many problems. Moreover, even in cases where this geometric property alone is insufficient, it may still be possible to achieve such a reduction using additional techniques, such as representative sets. We will see an example of this in Section~\ref{lc-pre} for the \textsc{List Coloring} problem.

On the other hand, the geometric restriction from Lemma~\ref{crossing-lemma} is not helpful for all problems. In Section~\ref{XALP-results}, we showed that \textsc{Binary CSP} and \textsc{Scattered Set} remain XALP-complete. The reason is that, even though a boundary vertex has at most $k$ neighbors outside the subpolygon, it still needs to store $O(n)$ size information. For instance, in \textsc{Binary CSP}, each vertex must maintain all possible values in its domain, which can be of size $O(n)$, in order to ensure consistency with constraints involving external vertices. For \textsc{Scattered Set}, one must store, for each boundary vertex, the minimum distance to vertices that are currently selected in the scattered set, as in classical algorithms, leading to a DP table of size $d^{O(k)}$, where $d$ is part of the input and can be as large as \(O(n)\). Consequently, the running time remains XP.


\subsection{List Coloring and Precoloring Extension}\label{lc-pre}
We present an FPT algorithm for \textsc{List Coloring}, and show that \textsc{Precoloring Extension} is also fixed-parameter tractable by reducing it to \textsc{List Coloring}. Both problems can be viewed as special cases of \textsc{Binary CSP}. In particular, an instance of \textsc{List Coloring} can be interpreted as a \textsc{Binary CSP} instance in which, for each edge $uv$, the constraint is $C(u,v) = \{(a,b) : a \neq b\}$. Similarly, an instance of \textsc{Precoloring Extension} corresponds to a \textsc{Binary CSP} instance where each vertex has a domain of size either $1$ or $|C|$. Since vertices whose lists are larger than their degree can always be colored, \textsc{List Coloring} is fixed-parameter tractable when parameterized by treewidth plus maximum list size~\cite{JANSEN1997135}. We also remark that \textsc{List Coloring} can be solved in logarithmic space on trees~\cite{lc-tree}. Formally, the problem is defined as follows.

\vspace{3mm}
\noindent\fbox{
\begin{minipage}{0.97\textwidth}
\textsc{List Coloring}\\
\textbf{Input:} A graph $G = (V,E)$, a set of colors $\mathcal{C}$, and a color list $L(v) \subseteq \mathcal{C}$ for each $v \in V$. \\
\textbf{Question:}  Is there a function $f \colon V \rightarrow \mathcal{C}$ such that for every $v \in V$, $f(v) \in L(v)$, and for every edge $uv \in E$, $f(u) \neq f(v)$?
\end{minipage}}
\vspace{3mm}

In addition to the triangulation obtained by Lemma~\ref{crossing-lemma}, we employ the representative set technique introduced by Chang et al.~\cite{CHANG2026492} to compress the number of colorings of the boundary vertices. We first describe the intuition and key idea that lead to an FPT algorithm. For the \textsc{List Coloring} problem, classical approaches, such as dynamic programming on a tree decomposition, suffer from the fact that each vertex may have a list of size $O(n)$. As a result, the number of possible colorings for a bag of size $O(\mathrm{tw}(G))$ is $n^{O(\mathrm{tw}(G))}$. This also leads to a DP table of size $n^{O(\mathrm{tw}(G))}$, which explains why such algorithms for \textsc{List Coloring} typically run in XP time.

After triangulating an outer \(k\)-planar graph, consider a subpolygon \(P_t\) defined by a triangulation link \(\lambda_t = a_t b_t\). Within \(P_t\), only the two endpoints \(a_t\) and \(b_t\) may have many edges connecting to vertices outside \(P_t\), whereas every other vertex in \(P_t\) has at most \(k\) neighbors outside \(P_t\), by Lemma~\ref{crossing-lemma}. This observation suggests that if we can compress the number of coloring configurations of the boundary vertices in \(P_t \setminus \{a_t,b_t\}\) into a function \(g(k)\) that depends only on \(k\), while explicitly keeping all possible color assignments for the two endpoints \(a_t\) and \(b_t\), then the size of the DP table for each subproblem becomes \(O(n^2 \cdot g(k))\). Consequently, this yields an FPT algorithm with respect to \(k\). The following Lemma~\ref{lemma4.2} allows us to implement this idea.

For disjoint sets \(S, T\) of vertices in graph \(G\) and coloring functions \(g : S \to \mathcal{C}\) and \(h : T \to \mathcal{C}\), we say that \((S,g)\) is \emph{compatible} with \((T,h)\) if for every edge \(vw\) with \(v \in S\) and \(w \in T\), \(g(v) \neq h(w)\). If \(S\) and \(T\) are clear from the context, we simply say that colorings \(g\) and \(h\) are compatible.

\begin{lemma}[{\cite[Lemma 4.2]{CHANG2026492}}]\label{lemma4.2}
Let \(q\) and \(t\) be positive integers. Let \(G\) be a graph and let
\(S\) and \(T\) be disjoint sets of vertices in \(G\) such that
\(|\{v \in S : N_G(v) \cap T \neq \emptyset\}| \le q\) and
\(|\{v \in T : N_G(v) \cap S \neq \emptyset\}| \le t\).
Then for every set \(\mathcal{F}\) of proper colorings of \(G[S]\),
there is a subset \(\mathcal{F}^\ast \subseteq \mathcal{F}\) of size at most
\(2^{\frac{q(q+1)}{2}} t^{q-1}(t+1)\) such that for every proper coloring \(h\) of \(G[T]\), if there is a pair
\((S,g)\) with \(g \in \mathcal{F}\) that is compatible with \((T,h)\),
then there is a pair \((S,g^\ast)\) with \(g^\ast \in \mathcal{F}^\ast\)
that is compatible with \((T,h)\), and such a set \(\mathcal{F}^\ast\) can be computed in time
\(O(|\mathcal{F}| 2^{q^2} t^{q+2})\).
\end{lemma}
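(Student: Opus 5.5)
The plan is to reduce the statement to a purely combinatorial problem on the interface and then build $\mathcal{F}^\ast$ by a bounded-depth branching (``greedy witness'') procedure. Let $S'=\{v\in S: N_G(v)\cap T\neq\emptyset\}$ and $T'=\{w\in T: N_G(w)\cap S\neq\emptyset\}$, so $|S'|\le q$ and $|T'|\le t$. Whether $(S,g)$ is compatible with $(T,h)$ depends only on $g|_{S'}$ and $h|_{T'}$. Hence we first collapse $\mathcal{F}$ to one representative per distinct restriction $g|_{S'}$. From then on we work with colorings of $S'$ only.

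The key reformulation is the following. A coloring $h$ of $T'$ induces, for every $v\in S'$, a forbidden set $B_h(v)=h(N(v)\cap T')$. Each such set is a subset of the at most $t$ colors $h(T')$. The coloring $g$ is compatible with $h$ if and only if $g(v)\notin B_h(v)$ for all $v\in S'$. We construct $\mathcal{F}^\ast$ recursively. Pick an arbitrary $g_1\in\mathcal{F}$ and put it into $\mathcal{F}^\ast$. If some $h$ is compatible with a member of $\mathcal{F}$ but not with $g_1$, then there are $v\in S'$ and $w\in N(v)\cap T'$ with $h(w)=g_1(v)=:c$. So the failure of $g_1$ is certified by a pair $(v,w)$, which gives at most $q\cdot t$ choices, though we only need $c$ and the neighborhood $N(w)\cap S'$. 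Any $g\in\mathcal{F}$ that is compatible with such an $h$ must avoid the color $c$ on all of $N(w)\cap S'$. We branch on this certificate. In each branch we restrict $\mathcal{F}$ to the colorings obeying the new constraint ``$g(u)\neq c$ for $u\in N(w)\cap S'$''. We also record the fact that $c$ is now a color used by $h$, which fixes the status of those interface edges. We then recurse and take the union of the recursively computed representative families. Correctness follows by induction: for any $h$ compatible with some $g\in\mathcal{F}$, either $g_1$ works, or $h$ falls into the branch given by its certificate. In that branch $g$ still survives, so by induction some representative of that branch is compatible with $h$.

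The main obstacle is choosing a potential function that strictly decreases along every branch and yields the bound $2^{q(q+1)/2}t^{q-1}(t+1)$. My first attempt will use, as the measure, the set of pairs $(u,c')$ with $u\in S'$ and $c'$ a ``live'' color, that is, one of the at most $q$ colors that $g_1$ puts on $S'$, together with the number of still-undetermined vertices of $T'$. Each branch either pins down one additional vertex $w\in T'$ (at most $t$ ways, plus one ``no new $T'$ vertex'' option, which accounts for the $t+1$ factor) or fixes which subset of a shrinking set of $S'$-vertices avoids the color $c$. At the $i$-th level at most $i$ vertices of $S'$ remain relevant, which gives the $2^{i}$ factors whose product is $2^{q(q+1)/2}$. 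Depth is bounded by $q$, because each level makes one more vertex of $S'$ ``settled'': its compatibility with $h$ becomes independent of the remaining choice of $g$. This gives the $t^{q-1}(t+1)$ part. If this accounting does not close exactly, I will accept a bound of the same form $2^{O(q^2)}t^{O(q)}$, which suffices for all later applications.

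The running time is straightforward: the recursion tree has size bounded by the claimed bound on $|\mathcal{F}^\ast|$, and each node filters $\mathcal{F}$ in $O(|\mathcal{F}|\cdot qt)$ time. This gives $O(|\mathcal{F}|2^{q^2}t^{q+2})$.
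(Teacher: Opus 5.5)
Your reduction to the interface sets $S'$ and $T'$ and your certificate branching are fine as far as \emph{correctness} goes. The coloring $g_1$ violates the constraint of every branch it spawns, so the recursion terminates, and your induction shows that the union of the chosen colorings is representative. The gap is the size bound, which is the whole content of the lemma.

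Your claim that each level ``settles'' a vertex of $S'$, so that the depth is at most $q$, is false. A certificate only pins down $h(w)=c$, while the vertex $v$ with $g_1(v)=c$ may have further neighbors in $T'$ whose colors are still unknown. Take $q=1$, let $v$ be adjacent to all of $T'$, and let $h$ be injective on $T'$. Then successive colorings $g_1,g_2,\dots$ can each fail because of a different $w_i$, so a single branch path has length $t$. Merging certificates with the same $c$ and $N(w)\cap S'$ does not change this.

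What your procedure does guarantee is only that the pinned vertices on a path are pairwise distinct. A repetition $w_i=w_j$ either contradicts the recorded value $h(w_j)=c_j$, or forces $g_i(v_i)=c_j$ for some $v_i\in N(w_j)\cap S'$, which the earlier constraint forbids. So the depth is at most $t$ with up to $qt$ branches per node. Counting the recursion tree then gives $(qt)^{O(t)}$, which is exponential in $t$ and not even of your fallback form $2^{O(q^2)}t^{O(q)}$.

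This is not just a weakness of the analysis. Branch on pairs $(w,c)$ and choose $g_1$ arbitrarily in each branch. In the example above with very many colors available at $v$, there are $t!$ consistent nodes at depth $t$, and the procedure can output at least $t!$ distinct colorings. The lemma promises at most $2(t+1)$. The factors $2^i$ and $t+1$ in your accounting do not correspond to anything the procedure branches on, and your running-time claim inherits the same problem.

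The missing idea is a recursion that eliminates a vertex of $S'$ rather than one of $T'$. For instance:
\begin{itemize}
\item View $g$ as the set of pairs $(u,g(u))$ with $u\in S'$. The coloring $h$ forbids exactly the pairs $(u,c)$ with $c\in h(N(u)\cap T')$, and there are at most $\sum_{u\in S'}|N(u)\cap T'|\le qt$ of them.
\item Greedily choose a maximal family $g_1,\dots,g_m\in\mathcal{F}$ that pairwise differ at every vertex of $S'$.
\item If $m>qt$, any $qt+1$ of them already form a representative set, because each forbidden pair occurs in at most one of them.
\item Otherwise, by maximality every $g\in\mathcal{F}$ agrees with some $g_i$ at some $u$. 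So $\mathcal{F}$ is covered by at most $q\cdot qt$ subfamilies $\{g\in\mathcal{F}: g(u)=c\}$. Within such a subfamily the compatibility at $u$ depends on $h$ alone, so you can recurse with one fewer free interface vertex.
\end{itemize}
Write $f(i)$ for the resulting bound with $i$ free interface vertices. You get $f(1)\le t+1$ and $f(i)\le i^2t\,f(i-1)$, hence $f(q)\le (q!)^2t^{q-1}(t+1)\le 2^{q(q+1)/2}t^{q-1}(t+1)$. The recursion has depth $q$ and each coloring can be sent to a single subfamily, so its running time is linear in $|\mathcal{F}|$ with only polynomial dependence on $q$ and $t$.
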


\begin{figure}[ht!]
    \centering
    \includegraphics[width=0.8\linewidth]{figures/rep-set.pdf}
    \caption{The endpoints of all edges crossing the triangulation link \(\lambda_t=a_tb_t\) on the two sides form disjoint sets, each of size at most \(k\). After fixing a color pair for \(a_t\) and \(b_t\), we use Lemma~\ref{lemma4.2} to compute a representative set \(\mathcal{F^\ast}\) for the set of all proper colorings of \(G[S]\) that can be extended to a proper \(L\)-coloring of \(G[P_t]\). The size of \(\mathcal{F^\ast}\) depends only on \(k\). Since there are at most \(O(n^2)\) possible color pairs for \(a_t\) and \(b_t\), the resulting DP table size of each subproblem is FPT.}
    \label{fig:rep-set}
\end{figure}

Lemma~\ref{lemma4.2} shows that whenever two vertex sets \(S\) and \(T\) are disjoint, any set \(\mathcal{F}\) of proper colorings of \(G[S]\) can be compressed into a representative set \(\mathcal{F}^{\ast}\) while preserving exactly the same compatibility with colorings of \(G[T]\). Moreover, the size of \(\mathcal{F}^{\ast}\) depends only on the number of vertices in \(S\) and \(T\) that are adjacent to vertices on the other side.

Consider the triangulation of an outer \(k\)-planar graph obtained from Lemma~\ref{crossing-lemma}. Let \(\lambda_t=a_tb_t\) be a triangulation link not on the outer face, which partitions the polygon into two sides, as illustrated in Figure~\ref{fig:rep-set}. Suppose that \(S\) and \(T\) denote the sets of vertices on the two sides that are incident with edges crossing \(\lambda_t\), and that the subpolygon \(P_t\) containing \(S\) corresponds to a subproblem of the dynamic programming. Fix a coloring of \(a_t\) and \(b_t\). Let \(\mathcal{F}\) be the set of all proper colorings of \(G[S]\) that can be extended to a proper \(L\)-coloring of \(G[P_t]\), where \(P_t\) includes the vertices \(a_t\) and \(b_t\). Using Lemma~\ref{lemma4.2}, we can compute a representative set \(\mathcal{F}^{\ast}\) for \(\mathcal{F}\) in a bottom-up manner. Since there are at most \(O(n^2)\) possible color pairs for \(a_t\) and \(b_t\), we compute at most \(O(n^2)\) such representative sets for a subproblem corresponding to \(P_t\).

The underlying idea is as follows. Suppose the input is a yes-instance, and let \(f\) be a proper \(L\)-coloring of \(G\). Restricting \(f\) to the subgraph \(G[(V(G)\setminus P_t)\cup\{a_t,b_t\}]\) yields a proper \(L\)-coloring \(f_1\). Restricting \(f\) to \(G[S]\) and \(G[T]\) yields proper \(L\)-colorings \(f_2\) and \(f_3\), respectively. Consider the color pair \((f_1(a_t),f_1(b_t))\). For this pair, we have already computed a representative set \(\mathcal{F}^{\ast}\) of all proper colorings of \(G[S]\) that can be extended to a proper \(L\)-coloring of \(G[P_t]\). By Lemma~\ref{lemma4.2}, there exists a coloring \(f_2^{\ast}\in\mathcal{F}^{\ast}\) that is compatible both with the coloring of the vertices in \(P_t\) and with the coloring \(f_3\) of \(G[T]\). Consequently, replacing \(f_2\) by \(f_2^{\ast}\) in \(f\) still yields a proper \(L\)-coloring of the entire graph \(G\). By Lemma~\ref{crossing-lemma}, both \(S\) and \(T\) have size at most \(k\). Therefore, the size of each representative set depends only on \(k\).

We now present the formal details of the algorithm and explain how these representative sets can be computed in a bottom-up fashion within FPT time, eventually yielding an FPT algorithm. Since the triangulation described in Lemma~\ref{crossing-lemma} can be computed efficiently in FPT time, in all subsequent algorithms presented in this section, in addition to an outer $k$-planar drawing of the input graph, we may assume that such a triangulation is also given as part of the input.

\begin{theorem}\label{list-coloring-thm}
Let $G$ be a connected graph together with an outer $k$-planar drawing and a triangulation \(H\) of the outer cycle such that every triangulation link is crossed by at most $k$ edges of $G$. Then \textsc{List Coloring} on $G$ is FPT when parameterized by \(k\).
\end{theorem}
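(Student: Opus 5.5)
We will run a bottom-up dynamic program over the weak dual $D$ of the triangulation $H$, rooted as in the proof of Theorem~\ref{mim-proof} at a triangle $\triangle_r$ incident with an outer-face link $e_0=u_0v_0$. For each non-root node $t$ with parent link $\lambda_t=a_tb_t$ and third vertex $c_t$, let $A_t$ be the set of endpoints in $S_t=P_t\setminus\{a_t,b_t\}$ of the edges crossing $\lambda_t$; by Lemma~\ref{crossing-lemma}, $|A_t|\le k$.

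The table entry for $t$ and a colour pair $(\alpha,\beta)\in L(a_t)\times L(b_t)$ with $\alpha\neq\beta$ whenever $a_tb_t\in E(G)$ is a family $\mathcal{F}^\ast_t(\alpha,\beta)$ of colourings of $G[A_t]$. It must satisfy two invariants. \emph{Soundness:} every member extends to a proper $L$-colouring of $G[P_t]$ that gives $a_t,b_t$ the colours $\alpha,\beta$. \emph{Representativeness:} for every proper colouring $h$ of $G[T_t]$, where $T_t$ is the set of at most $k$ endpoints of edges crossing $\lambda_t$ on the other side, if some extendable colouring of $A_t$ is compatible with $h$, then some member of $\mathcal{F}^\ast_t(\alpha,\beta)$ is. Lemma~\ref{lemma4.2} with $q=t=k$ keeps $|\mathcal{F}^\ast_t(\alpha,\beta)|\le g(k):=2^{k(k+1)/2}k^{k-1}(k+1)$. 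There are $O(n^2)$ colour pairs per node, so the table has FPT size.

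To compute node $t$ with children $s_1$ (link $a_tc_t$) and $s_2$ (link $c_tb_t$), missing children being trivial, we do the following.
\begin{itemize}
\item Guess the colour $\gamma$ of $c_t$. There are at most $n$ choices, each checked against $\alpha,\beta$ on the triangle edges.
\item Take all pairs $(g_1,g_2)$ with $g_1\in\mathcal{F}^\ast_{s_1}(\alpha,\gamma)$ and $g_2\in\mathcal{F}^\ast_{s_2}(\gamma,\beta)$, and keep those for which the combined assignment on $A_{s_1}\cup A_{s_2}\cup\{a_t,b_t,c_t\}$ is proper.
\item Restrict each surviving combination to $A_t$, which is contained in $A_{s_1}\cup A_{s_2}\cup\{c_t\}$: any edge crossing $\lambda_t$ from a vertex of $S_{s_i}$ must cross $a_tc_t$ or $c_tb_t$.
\item Apply Lemma~\ref{lemma4.2} to the resulting family.
\end{itemize}
This step costs $n\cdot g(k)^2\cdot\mathrm{poly}(k)$ per entry, so $O(n^3 g(k)^2\mathrm{poly}(k))$ per node.

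The main obstacle is the merge step. We must verify that checking compatibility only on the boundary sets suffices. Every edge between $S_{s_1}$ and $S_{s_2}$, and every edge from $S_{s_i}$ to $\{a_t,b_t,c_t\}$ not already handled inside $P_{s_i}$, has its $P_{s_i}$-side endpoint in $A_{s_i}$ or at a link endpoint. Hence soundness is preserved. For representativeness, take a global solution $f$. Its restriction to $S_{s_1}$ is compatible with the part of $f$ outside $P_{s_1}$, in particular with $f$ on $T_{s_1}$. So by Lemma~\ref{lemma4.2} some $g_1^\ast$ in the table is compatible with $f|_{T_{s_1}}$, and by soundness it extends inside $P_{s_1}$. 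Swapping it in keeps $f$ proper, and we then do the same for $s_2$. This exchange shows that the truncated families still witness every feasible outside colouring, and the representative-set reduction at $t$ preserves this by Lemma~\ref{lemma4.2}.

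One subtlety is that in Lemma~\ref{lemma4.2} the family must consist of colourings of the whole set $S$. We apply it with $S=A_t$ and $T=T_t$ inside the graph $G[A_t\cup T_t]$, where only crossing edges matter, so this is fine. At the root we guess colours for $u_0$ and $v_0$, combine the child entry of $\triangle_r$ together with its third vertex $c_r$, and answer yes iff some entry is nonempty. The overall running time is $O(n^4 g(k)^2\mathrm{poly}(k))$.
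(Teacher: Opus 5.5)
Your proposal is correct and follows essentially the same route as the paper's proof. It uses a dynamic program over the rooted weak dual, with entries indexed by a colour pair on the parent link and consisting of families of colourings of $A_t$ compressed by Lemma~\ref{lemma4.2}; these are merged by guessing the colour of the third vertex and checking properness only on $A_{s_1}\cup A_{s_2}\cup\{a_t,b_t,c_t\}$, and correctness follows from the same two-step exchange argument (left child, then right child). The only slip is in the running time: the compression step of Lemma~\ref{lemma4.2} costs $O(|\mathcal{F}|\,2^{k^2}k^{k+2})$, so the factor multiplying $n^4 g(k)^2$ is $2^{k^2}k^{k+2}$ rather than $\mathrm{poly}(k)$, though the bound is still FPT.
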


\begin{proof}
Let $D$ be the weak dual of the triangulation $H$. For every node $t$ of $D$, let $\triangle_t$ denote the corresponding triangle. Choose an outer-face link $e_0=a_r b_r$, and root $D$ at the node \(r\), which corresponds to a unique triangle \(\triangle_r\) that is incident with $e_0$. If $t$ is not the root, let $\lambda_t=a_t b_t$ be the parent link of \(\triangle_t\), i.e. the triangulation link shared by $\triangle_t$ and its parent triangle. Let \(P_t\) be defined the same as before. We write $S_t \coloneq P_t\setminus\{a_t,b_t\}$. For a proper coloring $f$ of \(G\) and a vertex set $X\subseteq V(G)$, we write $f|_X$ for the restriction of $f$ to the set $X$.

For the root triangle $\triangle_r$, we regard the outer-face link $e_0=a_r b_r$ as its parent link, and then $P_r=V(G)$. For every non-root node $t$, let $A_t$ be the set of vertices in $S_t$ that are adjacent to some vertex in $V(G)\setminus P_t$. Since every edge joining $S_t$ with $V(G)\setminus P_t$ must cross $\lambda_t$, every such edge contributes exactly one endpoint to $A_t$, and therefore $|A_t|\leq k$. Similarly, if $C_t$ denotes the set of endpoints outside $P_t$ of the edges joining $S_t$ with $V(G)\setminus P_t$, then $|C_t|\leq k$ as well.

We now define the states. Fix a node $t$ and a coloring $\tau$ of the two endpoints of $\lambda_t$, that is, a function $\tau:\{a_t,b_t\}\to\mathcal{C}$ with $\tau(a_t)\in L(a_t)$ and $\tau(b_t)\in L(b_t)$. Let $Q[t,\tau]$ be the set of all colorings $g$ on $A_t$ such that there exists a proper $L$-coloring $f$ of $G[P_t]$ satisfying $f(a_t)=\tau(a_t)$, $f(b_t)=\tau(b_t)$, and $f|_{A_t}=g$. Thus every element of $Q[t,\tau]$ is \emph{valid} in the sense that it can be extended to a proper $L$-coloring of the whole subpolygon $G[P_t]$. For every node $t$ and every $\tau$, we compute a representative set $Q^*[t,\tau]\subseteq Q[t,\tau]$ satisfying the following property:
\[
(\star_{t,\tau})\quad
\begin{minipage}[t]{0.9\linewidth}
for every proper $L$-coloring $h$ of $G[(V(G)\setminus P_t)\cup\{a_t,b_t\}]$ with $h(a_t)=\tau(a_t)$ and $h(b_t)=\tau(b_t)$, if there exists a coloring $g\in Q[t,\tau]$ compatible with $h$, then there exists a coloring $g^*\in Q^*[t,\tau]$ compatible with $h$.
\end{minipage}
\]
Here compatibility means that for every edge $uv$ with $u\in A_t$ and $v\in C_t \cup \{a_t,b_t\}$, the two colors assigned to $u$ and $v$ are distinct. Since $A_t$ and $C_t$ are disjoint and both have size at most $k$, Lemma~\ref{lemma4.2} applies directly. In particular, for every node $t$ and every $\tau$, from any set of valid colorings on $A_t$ we can compute in time $O(|Q[t,\tau]|\cdot 2^{k^2}k^{k+2})$ a representative set of size at most \(\rho(k)=2^{k(k+1)/2}k^{k-1}(k+1)\). Thus we may always assume $|Q^*[t,\tau]|\leq \rho(k)$.

\paragraph{Leaf nodes.}
We compute these sets \(Q^*[t,\tau]\) bottom-up along the rooted tree $D$. Assume first that $t$ is a leaf. Then $\triangle_t$ is a triangle with vertices $\{a_t,b_t,c_t\}$, and $P_t=\{a_t,b_t,c_t\}$. For every coloring $\tau$ of $\{a_t,b_t\}$, we enumerate all colors $\gamma\in L(c_t)$ that are compatible with $\tau$ on the edges of $G[\{a_t,b_t,c_t\}]$. Every such choice determines a proper $L$-coloring of $G[P_t]$. If $c_t\in A_t$, this yields one coloring of $A_t$; otherwise it yields the unique empty coloring. Collecting all these colorings gives $Q[t,\tau]$, and then we apply Lemma~\ref{lemma4.2} to obtain $Q^*[t,\tau]$. Since there are at most $n^2$ choices for $\tau$ and at most $n$ choices for $\gamma$, the total running time for a leaf is $O(n^3\cdot 2^{k^2}k^{k+2})$.

\paragraph{Internal nodes.}
Now suppose that $t$ is an internal node in \(D\). Let $\triangle_t=\{a_t,b_t,c_t\}$, where $\lambda_t=\{a_t,b_t\}$ is the parent link. The other two sides of the triangle are the links $\lambda_{t_1}=\{a_t,c_t\}$ and $\lambda_{t_2}=\{c_t,b_t\}$ corresponding to the children of $t$ that exist in $D$. If one of these sides lies on the outer cycle, then the corresponding child is absent. We first describe the case that both children $t_1$ and $t_2$ exist, and the one-child case is obtained by deleting all references to the missing child.

Fix a coloring $\tau$ of $\{a_t,b_t\}$. We enumerate all colors $\gamma\in L(c_t)$ that are compatible with $\tau$ on the edges incident with $c_t$ in $G[\{a_t,b_t,c_t\}]$. Such a choice determines colorings $\tau_1$ and $\tau_2$ on the two child links by $\tau_1(a_t)=\tau(a_t)$, $\tau_1(c_t)=\gamma$, $\tau_2(c_t)=\gamma$, and $\tau_2(b_t)=\tau(b_t)$. The sets $Q^*[t_1,\tau_1]$ and $Q^*[t_2,\tau_2]$ have already been computed. We consider all pairs $(g_1,g_2)\in Q^*[t_1,\tau_1]\times Q^*[t_2,\tau_2]$.

We now check whether $(\gamma,g_1,g_2)$ can be combined into a valid coloring for the merged subgraph \(G[P_t]\). The only edges of $G[P_t]$ that are not already checked inside the two child subproblems are the edges with at least one endpoint in $A_{t_1}\cup A_{t_2}\cup\{a_t,b_t,c_t\}$ and not lying entirely in a single child region. Observe that every such edge is of one of the following types: it joins a vertex of $A_{t_1}$ with a vertex of $A_{t_2}$, or it joins a vertex of $A_{t_1}$ with $b_t$, or it joins a vertex of $A_{t_2}$ with $a_t$, or it is an edge of the triangle $\{a_t,b_t,c_t\}$. We therefore accept the triple $(\gamma,g_1,g_2)$ precisely when all these local constraints are satisfied. This test can be carried out in time $O(k^2)$.

Whenever a triple $(\gamma,g_1,g_2)$ is accepted, it yields a coloring $g$ of $A_t$ as follows. If $v\in A_t$ belongs to the left child region, then the color of $v$ is taken from $g_1$; if $v\in A_t$ belongs to the right child region, then the color of $v$ is taken from $g_2$; if $v=c_t\in A_t$, then its color is $\gamma$. We insert the obtained coloring $g$ into a temporary set $Q'[t,\tau]$. After all triples have been processed, we apply Lemma~\ref{lemma4.2} to $Q'[t,\tau]$ and obtain a representative set $Q^*[t,\tau]\subseteq Q'[t,\tau]$ of size at most $\rho(k)$.

For fixed $\tau$, there are at most $n$ choices for $\gamma$ and at most $\rho(k)^2$ pairs $(g_1,g_2)$. Hence $|Q'[t,\tau]|\leq n\cdot \rho(k)^2$, and the time to construct $Q'[t,\tau]$ is $O(n\cdot \rho(k)^2\cdot k^2)$. The subsequent compression step takes time
\(O(|Q'[t,\tau]|\cdot 2^{k^2}k^{k+2})=O(n\cdot \rho(k)^2\cdot 2^{k^2}k^{k+2})\).
Since there are at most $O(n^2)$ choices for $\tau$, the total running time for one internal node is \(O(n^3\cdot \rho(k)^2\cdot 2^{k^2}k^{k+2})\). Since the weak dual \(D\) of the triangulation has $O(n)$ nodes, the whole algorithm runs in time \(O(n^4\cdot \rho(k)^2\cdot 2^{k^2}k^{k+2})\), which is of the form $f(k)\cdot n^{O(1)}$.

\paragraph{Correctness.}
It remains to prove the correctness of the algorithm. For every node $t$ and every coloring $\tau$ of $\{a_t,b_t\}$, we claim that the computed representative set $Q^*[t,\tau]$ satisfies the property $(\star_{t,\tau})$. We prove this by induction on the  subtree of $D$ rooted at the node $t$.

\begin{claim}
The set $Q^*[t,\tau]$ satisfies the property $(\star_{t,\tau})$.
\end{claim}

\begin{proof}
If $t$ is a leaf node, then $Q[t,\tau]$ is computed explicitly from all proper $L$-colorings of the triangle $\triangle_t$, and $Q^*[t,\tau]$ is obtained from $Q[t,\tau]$ by Lemma~\ref{lemma4.2}. Hence $(\star_{t,\tau})$ holds immediately.

Assume now that $t$ is an internal node and that the claim holds for its children. Fix $\tau$ and let $h$ be a proper $L$-coloring of $G[(V(G)\setminus P_t)\cup\{a_t,b_t\}]$ with $h(a_t)=\tau(a_t)$ and $h(b_t)=\tau(b_t)$. Suppose that there is a coloring $g\in Q[t,\tau]$ compatible with $h$. By the definition of $Q[t,\tau]$, there exists a proper $L$-coloring $f$ of $G[P_t]$ such that $f(a_t)=\tau(a_t)$, $f(b_t)=\tau(b_t)$, and $f|_{A_t}=g$.

Let $c_t$ be the third vertex of $\triangle_t$, and let $\gamma=f(c_t)$. Define $\tau_1$ and $\tau_2$ on the child links by $\tau_1(a_t)=\tau(a_t)$, $\tau_1(c_t)=\gamma$, $\tau_2(c_t)=\gamma$, and $\tau_2(b_t)=\tau(b_t)$. Let $f_1$ and $f_2$ be the colorings of $G[P_{t_1}]$ and $G[P_{t_2}]$ induced by $f$, i.e., $f_1 = f|_{P_{t_1}}$ and $f_2 = f|_{P_{t_2}}$. Note that the colorings of child links $\tau_1:\{a_t,c_t\}\to\mathcal{C}$ and $\tau_2:\{b_t,c_t\}\to\mathcal{C}$ are uniquely determined by $\tau$ and $\gamma$.

Consider the left child first. The coloring of the vertices in $(V(G)\setminus P_{t_1})\cup\{a_t,c_t\}$ induced by $h$ together with $f$ is a proper $L$-coloring compatible with $f_1$. By the inductive hypothesis, there exists a representative coloring $g_1^* \in Q^*[t_1,\tau_1]$ that is also compatible with this outside coloring. Since every element of $Q^*[t_1,\tau_1]$ is valid, there exists a proper $L$-coloring $f_1^*$ of $G[P_{t_1}]$ extending $g_1^*$. Replacing $f_1$ by $f_1^*$ yields a new proper $L$-coloring \(f'\) of $G[P_t]$ that is compatible with $h$.

Now consider the right child. Using the new coloring \(f'\) of \(G[P_t]\) obtained after replacing the left child, the coloring of the vertices in $(V(G)\setminus P_{t_2})\cup\{b_t,c_t\}$ induced by $h$ together with $f'$ is a proper $L$-coloring compatible with $f_2$. By the inductive hypothesis, there exists a representative coloring $g_2^*\in Q^*[t_2,\tau_2]$ compatible with this coloring as well. Since every element of $Q^*[t_2,\tau_2]$ is valid, there exists a proper $L$-coloring $f_2^*$ of \(G[P_{t_2}]\) extending $g_2^*$. Hence the combination of $f_1^*$, $f_2^*$, the colors $\tau(a_t)$, $\tau(b_t)$, $\gamma$, and the outside coloring $h$ is still a proper \(L\)-coloring for \(G\).

Therefore the triple $(\gamma,g_1^*,g_2^*)$ passes the local compatibility test performed by the algorithm, and by selecting the coloring of vertices in \(A_t\), it contributes a coloring $g'\in Q'[t,\tau]$. Moreover, $g'$ is compatible with $h$, because it can be extended to a proper $L$-coloring of $G[P_t]$ that is compatible with $h$. Finally, $Q^*[t,\tau]$ is obtained from $Q'[t,\tau]$ by Lemma~\ref{lemma4.2}, and hence there exists a coloring $g^*\in Q^*[t,\tau]$ that is also compatible with $h$. This proves $(\star_{t,\tau})$.
\end{proof}

\paragraph{The root.}
In the end, the root $r$ is handled by the triangulation link $e_0=a_r b_r$ on the outer face. Since $P_r=V(G)$, we have $A_r=\emptyset$. Hence, for every coloring $\tau$ of $\{a_r,b_r\}$, the set $Q[r,\tau]$ is either empty or consists of the unique empty coloring on $A_r$. Moreover, $Q[r,\tau]\neq\emptyset$ if and only if there exists a proper $L$-coloring of $G$ extending $\tau$. Therefore, the input is a yes-instance if and only if there exists a coloring $\tau$ of $\{a_r,b_r\}$ such that $Q^*[r,\tau]\neq\emptyset$. This completes the proof.
\end{proof}

\noindent\fbox{
\begin{minipage}{0.97\textwidth}
\textsc{Precoloring Extension}\\
\textbf{Input:}  A graph $G = (V,E)$, a set $\mathcal{C}$, a subset $W \subseteq V$, and a function $f' \colon W \rightarrow \mathcal{C}$.\\
\textbf{Question:}  Is there a function $f \colon V \rightarrow \mathcal{C}$ such that for every $v \in W$, $f(v) = f'(v)$, and for every $uv \in E$, $f(u) \neq f(v)$?
\end{minipage}}

\begin{corollary}
\textsc{Precoloring Extension} is FPT on outer \(k\)-planar graphs parameterized by \(k\).
\end{corollary}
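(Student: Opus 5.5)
The plan is to reduce \textsc{Precoloring Extension} to \textsc{List Coloring} on the same graph with the same drawing, and then invoke Theorem~\ref{list-coloring-thm}. Given an instance $(G,\mathcal{C},W,f')$ together with an outer $k$-planar drawing of $G$, we define lists by $L(v)=\{f'(v)\}$ for every $v\in W$ and $L(v)=\mathcal{C}$ for every $v\in V\setminus W$. A function $f\colon V\to\mathcal{C}$ is a proper $L$-coloring exactly when it is proper and agrees with $f'$ on $W$. Hence the two instances are equivalent, and the reduction takes polynomial time. Since the graph and its drawing are unchanged, the parameter $k$ is preserved.

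Next we make sure the hypotheses of Theorem~\ref{list-coloring-thm} are met. By Lemma~\ref{crossing-lemma}, we can compute from the given drawing, in polynomial time, a triangulation $H$ of the outer cycle in which every link is crossed by at most $k$ edges. (The intersection graph of the edges is easily obtained from the cyclic order of the vertices.)

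The theorem also assumes that $G$ is connected, so we must handle disconnected inputs. The simplest fix is to treat each connected component separately. The instance is a yes-instance if and only if every component admits a valid extension. Each component inherits an outer $k$-planar drawing by restricting the cyclic order to its vertices, and deleting vertices cannot increase crossings. So we recompute a triangulation for each component via Lemma~\ref{crossing-lemma} and run the algorithm of Theorem~\ref{list-coloring-thm} on each.

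One small concern is whether the dynamic programming of Theorem~\ref{list-coloring-thm} actually needs connectivity; it appears not to, since separators come from the drawing rather than from the graph. Even so, the componentwise argument avoids the issue entirely. The total running time is $f(k)\cdot n^{O(1)}$, with the list size at most $|\mathcal{C}|$ contributing only polynomially, which gives fixed-parameter tractability in $k$. I expect no real obstacle here: the only points needing care are the connectivity hypothesis and the fact that list sizes are unbounded. The latter is exactly what the representative-set argument of Theorem~\ref{list-coloring-thm} already handles.
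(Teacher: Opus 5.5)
Your proposal is correct and follows the paper's route: keep the graph and its drawing unchanged, give every precolored vertex the singleton list \(\{f'(v)\}\), and invoke Theorem~\ref{list-coloring-thm}. The differences are minor. The paper uses \(L(v)=\mathcal{C}\setminus\{f'(u) : u\in N_G(v)\cap W\}\) for \(v\notin W\), which is redundant given the singleton lists, so your choice \(L(v)=\mathcal{C}\) works equally well; you also explicitly handle the connectivity hypothesis and the computation of the triangulation, which the paper's proof leaves implicit.
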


\begin{proof}
We use a standard reduction from \textsc{Precoloring Extension} to \textsc{List Coloring}. Given an instance \((G, \mathcal{C}, W, f')\) of the \textsc{Precoloring Extension} problem on an outer \(k\)-planar graph, we construct an instance of \textsc{List Coloring} on a graph \(H\) as follows. We first take a copy of \(G\) and assign to each vertex \(v\) in this copy a list \(L(v)\) of admissible colors. For every vertex \(v \notin W\), we define \(L(v) = \mathcal{C} \setminus \{ f'(u) : u \in N_G(v) \cap W \}\). For every vertex \(v \in W\), we set \(L(v) = \{ f'(v) \}\).

This construction can be carried out in polynomial time and does not modify the drawing of the graph. Hence the resulting graph \(H\) is still outer \(k\)-planar. Suppose that \(G\) admits a valid precoloring extension. Then assigning to each vertex in \(H\) the corresponding color from its list yields a proper list coloring of \(H\). Conversely, since each vertex \(v \in W\) in \(H\) has a singleton list \(L(v)=\{f'(v)\}\), any proper list coloring of \(H\) necessarily assigns color \(f'(v)\) to every vertex \(v \in W\). Therefore, such a coloring directly yields a valid precoloring extension of \(G\). Since \textsc{List Coloring} on outer \(k\)-planar graphs is FPT with respect to the parameter \(k\), it follows that \textsc{Precoloring Extension} is also FPT on outer \(k\)-planar graphs parameterized by \(k\).
\end{proof}

\subsection{Capacitated (Red-Blue) Dominating Set}\label{CDS}
In this section, we consider the well-studied capacitated variant of the \textsc{Dominating Set} problem, as well as its red-blue variant. Prior to the XNLP-hardness and XALP-hardness results for pathwidth and treewidth, \textsc{Capacitated Dominating Set} was already shown to be W[1]-hard when parameterized by treewidth in~\cite{CDS-tw}. It was also proved to be W[1]-hard on planar graphs when parameterized by the solution size~\cite{CDS-planar}. Furthermore, a lower bound argument in~\cite{CDS-red-blue-lb} implies that \textsc{Capacitated Red-Blue Dominating Set} is W[1]-hard when parameterized by feedback vertex set number. We first present an FPT algorithm for \textsc{Capacitated Dominating Set} on outer \(k\)-planar graphs parameterized by \(k\), and then show how a simple modification also yields an FPT algorithm for \textsc{Capacitated Red-Blue Dominating Set}.

\vspace{3mm}
\noindent\fbox{
\begin{minipage}{0.97\textwidth}
\textsc{Capacitated Dominating Set}\\
\textbf{Input:} Graph \(G = (V, E)\), with each vertex \(v \in V\) having a positive integer capacity \(c(v)\), and an integer \(m\). \\
\textbf{Question:} Is there a set \(S \subseteq V\) of at most \(m\) vertices, and an assignment \(f : V \setminus S \to S\) mapping each vertex not in \(S\) to a neighbor in \(S\), such that each vertex \(v \in S\) has at most \(c(v)\) neighbors assigned to it?
\end{minipage}}
\vspace{3mm}


Similar to the algorithm for \textsc{List Coloring}, we design a bottom-up dynamic programming algorithm running in FPT time on the triangulated graph shown in Figure~\ref{fig:subpolygon}. We use the same notation as in the proofs of Theorem~\ref{mim-proof} and Theorem~\ref{list-coloring-thm} above. Moreover, we define the set of \emph{boundary vertices} of the subpolygon $P_t$ as $B_t = \{a_t, b_t\} \cup A_t$, where \(A_t = \{ v \in P_t \setminus \{a_t, b_t\} \mid \exists\, vw \in E \text{ with } w \notin P_t \}\) is called the \emph{interface} of $G[P_t]$. By Lemma~\ref{crossing-lemma}, the triangulation link $\lambda_t$ is crossed by at most $k$ edges of \(G\), which implies $|A_t| \le k$ and hence $|B_t| \le k+2$. See Figure~\ref{fig:subpolygon} for an example illustrating some of these definitions.

\begin{figure}[ht!]
    \centering
    \includegraphics[width=0.9\linewidth]{figures/subpolygon.pdf}
    \caption{Each triangulation link \(\lambda_t = a_t b_t\) separates the triangulated graph into two subpolygons. Let \(P_{t_1}\) and \(P_{t_2}\) denote the two subproblems induced by the links \(a_t c\) and \(b_t c\), respectively. The boundary vertices of subpolygon \(P_{t_1}\) are \(B_{t_1} = \{a_t, x_1, c\}\), and \(A_{t_1} = \{x_1\}\), while the boundary vertices of \(P_{t_2}\) are \(B_{t_2} = \{b_t, x_2, c\}\), and \(A_{t_2} = \{x_2\}\). The boundary vertices of subpolygon \(P_t\) (obtained by merging the three colored regions) are \(B_t = \{a_t, b_t, x_1, x_2\}\), and \(A_t = \{x_1, x_2\}\).}
    \label{fig:subpolygon}
\end{figure}

For every boundary vertex \(x \in B_t\), we assign one of three possible states describing how \(x\) interacts with the partial solution inside the subproblem induced by \(P_t\). First, the state \(U\) indicates that \(x \notin S\) and that the vertex \(x\) has already been dominated by a vertex within the subpolygon, that is, \(f(x) \in S \cap P_t\). Second, the state \(O\) represents the case where \(x \notin S\) and its assignment is postponed to the exterior of the current subpolygon, meaning that eventually \(f(x) \in S \setminus P_t\). Finally, the state \(S(r_c)\) indicates that \(x \in S\), and that we record an integer \(r_c\) representing the remaining capacity of \(x\) that may be used by neighbors outside \(P_t\).

The range of the remaining capacity \(r_c\) depends on the role of the boundary vertex. If \(x \in A_t\), that is, \(x\) is an interface vertex, then the exterior can assign at most \(d_t^{\mathrm{out}}(x):=|N(x)\setminus P_t|\) vertices to \(x\) via edges crossing the link \(\lambda_t\). By Lemma~\ref{crossing-lemma}, we have \(d_t^{\mathrm{out}}(x)\le k\).  Hence it suffices to record \(r_c \in \{0,1,\ldots,d_t^{\mathrm{out}}(x), k^+\}\). Here \(k^+\) is a \emph{saturated} value and means that the true remaining capacity of \(x\) is greater than \(d_t^{\mathrm{out}}(x)\). Thus \(k^+\) does not denote one fixed integer; it only says that the remaining capacity is large enough for all neighbors outside \(P_t\) that may still be assigned to \(x\). On the other hand, if \(x \in \{a_t,b_t\}\), that is, \(x\) is one of the two endpoints of the link \(\lambda_t\), then vertices outside the subpolygon \(P_t\) may connect to \(x\) without the crossing restriction. Therefore the remaining capacity must be recorded exactly, and we store \(r_c \in \{0,1,\ldots,\min(c(x),n-1)\}\).

As discussed at the beginning of this section, unlike classical treewidth-based algorithms where each vertex in a bag may store at most $O(n)$ possible values for its remaining capacity, leading to a DP table of size $n^{O(k)}$ for each subproblem, outer $k$-planar graphs admit a stronger geometric restriction, which ensures that each vertex $x \in A_t$ has at most $k$ neighbors outside $P_t$, thus its number of possible values of $r_c$ can be bounded by \(O(k)\). This observation is the key ingredient that enables the algorithm to run in FPT time. To see why this is correct, consider a vertex \(v\in A_t\) and its remaining capacity recorded in a dynamic programming state. Suppose that the remaining capacities of all vertices in \(A_t\cup\{a_t,b_t\}\setminus\{v\}\) are fixed. Then all states in which the remaining capacity of \(v\) takes a value in \(\{d_t^{\mathrm{out}}(v)+1,d_t^{\mathrm{out}}(v)+2,\ldots,n-1\}\) are equivalent and can be represented by a single value \(k^+\). The reason is that, from the perspective of vertices outside \(P_t\), these states are indistinguishable. Our goal is merely to determine whether there exists a dominating set satisfying all capacity constraints, not to keep track of the exact amount of unused capacity left at each selected vertex. Since every interface vertex \(v\in A_t\) has at most \(d_t^{\mathrm{out}}(v) \le k\) neighbors outside \(P_t\), any remaining capacity greater than \(d_t^{\mathrm{out}}(v)\) can never be distinguished by these outside neighbors of \(v\) in future computations.

For example, suppose there exists a solution \(S\), in which for a DP subproblem the vertex \(v \in A_t\) is selected into \(S\) and uses \(k'\) units of its capacity inside \(P_t\). Whether the DP state records the remaining capacity of \(v\) as \(k+5\) or \(k+1\) is irrelevant. If a remaining capacity of \(k+5\) is sufficient to satisfy the demands of all neighbors of \(v\) outside \(P_t\), then a remaining capacity of \(k+1\) is also sufficient, because \(v\) has at most \(d_t^{\mathrm{out}}(v) \le k\) such neighbors. Consequently, all capacity values larger than \(d_t^{\mathrm{out}}(v)\) are equivalent and can be represented by a single value \(k^+\). This significantly reduces the number of states that must be maintained for boundary vertices.

Next, we present the full details of the algorithm for \textsc{Capacitated Dominating Set}. This includes the definition of the DP table, the handling of triangles corresponding to leaf, internal, and root nodes in the weak dual \(D\), the procedure for merging subpolygons and partial solutions along \(D\) in a bottom-up manner, the analysis of the running time, and the correctness proof.

\paragraph{Table definition.}
For each subpolygon \(P_t\), we define a signature \(\sigma\) for node \(t\) that assigns one of three possible states to every boundary vertex \(x \in B_t\), and the set \(\Sigma_t\) contains all signatures of node \(t\). We define the DP table entry as \(Q[t,\sigma] = \min |S \cap (P_t \setminus B_t)|\). For a signature \(\sigma\), the value \(Q[t,\sigma]\) represents the minimum number of selected vertices in \(P_{t} \setminus B_{t}\), such that every vertex of \(P_{t} \setminus B_{t}\) is assigned a dominator within \(P_{t}\), the boundary vertices in \(B_t\) behave according to the states specified by \(\sigma\), and all capacity constraints are satisfied. Counting only vertices from \(P_t \setminus B_t\) is crucial: if a boundary vertex is selected into \(S\), it is not counted at this stage, since it will later be `forgotten', i.e., become an internal non-boundary vertex, in a subpolygon corresponding to an ancestor node of \(t\), where it will be counted exactly once. This guarantees that every selected vertex contributes exactly one unit to the dominating set \(S\).

The final answer can be obtained from the table entries of the root node \(r\). In the weak dual \(D\) of the triangulation \(H\), the root node \(r\) corresponds to a triangle \(\triangle_r\) that contains an outer-face link \(e_0=a_r b_r\) which is not crossed by any other edge; see Figure \ref{fig:subpolygon} for an example. The interior of \(e_0\) (including its two endpoints) contains all vertices of the graph. Therefore, the boundary vertices \(B_r\) of the root \(r\) consist only of the two endpoints of \(e_0\). Consequently, the table entry \(Q[r, \sigma]\) represents the number of vertices selected into \(S\) among all vertices of the graph except the two endpoints of \(e_0\). By enumerating all possible signatures of \(B_r\), that is, all possible state combinations of \(a_r\) and \(b_r\), we obtain candidate solutions. The minimum value among these candidates gives the size of the minimum capacitated dominating set of \(G\).

At the beginning, all table entries are initialized to \(+\infty\). This indicates that for each node \(t\), only those signatures \(\sigma\) that can be obtained from valid combinations of subproblems will eventually receive a finite value in \(Q[t,\sigma]\). We now describe how to compute DP tables for leaf, root, and internal nodes of the weak dual. We start with the case where \(t\) is a leaf node.

\paragraph{Leaf nodes.}
In the weak dual \(D\) of the triangulation, every leaf node corresponds to a triangle that is an \emph{ear} of the polygon enclosed by the outer cycle. Let the parent link of this ear be \(\lambda_t=ab\), and let \(c\) be the third vertex of the triangle. The corresponding subpolygon \(P_t\) therefore contains exactly the three vertices \(\{a,b,c\}\). In this case the boundary vertices \(B_t = \{a,b\} \cup A_t\), where \(A_t \subseteq \{c\}\). Hence the boundary contains at most three vertices. To compute the table entry for a signature \(\sigma\) (which assigns one of the states \(U\), \(O\), or \(S(r_c)\) to each of \(a\), \(b\), and possibly \(c\)), we explicitly enumerate all subsets \(S \subseteq \{a,b,c\}\) inside the subpolygon. Since \(P_t\) contains only three vertices, this yields at most a constant number of possibilities.

For each such choice of \(S\), we determine whether the vertices not in \(S\) can be assigned to dominating vertices inside the triangle. Consider a vertex \(v \notin S\). If \(v \in B_t\) and the signature specifies \(\sigma(v)=O\), then the assignment of \(v\) is postponed to the exterior of \(P_t\), and therefore no assignment needs to be made within the current subproblem. Otherwise, \(v\) must be assigned to a neighbor \(u \in S \cap N(v)\) inside the triangle, which can be chosen among the vertices \(\{a,b,c\}\). For every vertex \(u \in S\), the number of vertices assigned to \(u\) within the triangle must not exceed its capacity \(c(u)\). After processing the assignments, we compute for each boundary vertex the remaining capacity that is still available for assignments coming from outside the subpolygon. The obtained value of the remaining capacity \(r_c\) must be consistent with the state prescribed by the signature \(\sigma\). If all these conditions are satisfied, the candidate partial solution is consistent with the signature. In this case we update the table entry \(Q[t,\sigma]\) using the value \(|S \cap (P_t \setminus B_t)|\).

\paragraph{Internal nodes.}
Let $t$ be an internal node of the weak dual whose corresponding triangle is $\triangle_t(a,b,c)$, where the parent link is $\lambda_t=ab$. Suppose the two children of $t$ are $t_1$ and $t_2$, whose parent links correspond to $\lambda_{t_1}=ac$ and $\lambda_{t_2}=bc$, respectively. We first explain how to compute $Q[t,\sigma]$ from the tables of the two children. The case where $t$ has only one child will be discussed afterwards. As before, the boundary vertices of the parent problem are \(B_t = \{a,b\} \cup A_t\), where \(A_t\) contains endpoints of edges crossing the link \(ab\) that lie within \(P_t\).

For each pair of child signatures \((\sigma_1,\sigma_2) \in \Sigma_{t_1} \times \Sigma_{t_2}\), we first check their compatibility and then construct candidate parent signatures. The first compatibility constraint concerns the shared vertex \(c\). If \(c \in S\), then both children must agree that \(c \in S\), and their internal loads assigned to \(c\) must be combined.  Let the remaining capacities reported by the two children be \(r_{c_1}\) and \(r_{c_2}\). Since each child already accounts for the load inside its own region, the combined remaining capacity for \(c\) after merging the two children is \(r_{c_0} = r_{c_1} + r_{c_2} - c(c)\). If \(r_{c_0} < 0\), then the total load assigned to \(c\) exceeds its capacity, and the pair \((\sigma_1,\sigma_2)\) is discarded. Note that the vertex \(c\) was an endpoint of \(\lambda_{t_i}\) for \(i\in\{1,2\}\), but is no longer an endpoint of \(\lambda_t\). If \(c \in S \cap B_t\), then at most \(d_t^{\mathrm{out}}(c)\) edges can cross the link \(ab\) and be incident to \(c\). Hence, if \(r_{c_0}\le d_t^{\mathrm{out}}(c)\), we store the exact value \(r_{c_0}\); otherwise we store the saturated value \(k^+\).


If \(c \notin S\), the pair \(\sigma_1(c) = U\) and \(\sigma_2(c) = U\) is not allowed, because the assignment function \(f(c)\) can only choose one dominator (it cannot assign it once on each side), therefore the allowed state combinations for shared vertex \(c\) are \((U,O), (O,U), (O,O)\). Note that the child state pair \((O,O)\) of \(c\) is compatible if and only if \(c\in B_t\), in which case the state of \(c\) in the parent problem is forced to be \(O\). If \(c\notin B_t\), then the pair \((O,O)\) is infeasible, because \(c\) will never be dominated. For the child state pair \((U, O)\) or \((O, U)\), the state of \(c\) in parent problem is marked as \(U\), meaning that the shared vertex \(c\) has already been dominated by a vertex within \(P_t\).

\begin{figure}[ht!]
    \centering
    \includegraphics[width=0.7\linewidth]{figures/dp_combine.pdf}
    \caption{An example of merging internal nodes \(t_1\) and \(t_2\) to compute the DP table for node \(t\). Vertices in \(B_{t_1}\) on the outer cycle are marked in red, vertices in \(B_{t_2}\) are marked in blue, and the vertex \(c\) shared by the two subproblems is marked in green. The figure on the left shows the merging of \(t_1\) and \(t_2\), while the right one shows node \(t\) after the merge, used for the next step of further merging. Some triangulation links and edges are not drawn for better visualization.}
    \label{fig:DP-combine}
\end{figure}

The second compatibility condition concerns boundary vertices that are shared by a child and the parent.  For any vertex \(v \in B_{t_i} \cap B_t\) for \(i\in\{1,2\}\), the parent signature must inherit the states \(S(r_c)\) and \(U\) of \(v\) from the corresponding child, i.e., a vertex marked as \(U\) or \(S(r_c)\) in a child cannot be changed to another state in the parent (only selected status is inherited, but the remaining capacity is updated after the merge). However, if a vertex \(v \in B_{t_i} \cap B_t\) is marked as \(O\) in the child problems, there are two choices at this time: if it continues to remain \(O\), it means that \(v\) is not dominated within the current parent \(t\), but is postponed to the future; if it becomes \(U\), then it means that a vertex with state \(S(r_c)\) must be chosen from \(P_t \setminus P_{t_i}\) to dominate \(v\). Since \(B_{t_i}\) contains at most \(k + 2\) vertices, in this step we can enumerate all choices for the vertices in \(B_{t_i} \cap B_t\) whose state is \(O\). As shown in Figure \ref{fig:DP-combine}, vertex \(x_1\) belongs to \(B_{t_1} \cap B_t\). If its state is \(O\), then at this time we can continue to keep it as \(O\), or choose a vertex in \(P_t \setminus P_{t_1}\) whose state is \(S(r_c)\) to dominate it, for example choosing \(x_2\) if \(x_2\) is selected into \(S\).

Vertices that appear in the child boundaries but not in the parent boundary become internal at this step.  Formally we define \(W = (B_{t_1} \cup B_{t_2}) \setminus B_t\). If a vertex \(v \in W\) is selected into the dominating set in the combined solution, it becomes an internal vertex and its selection must be counted in the objective value at this step. Furthermore, if \(v \in W\) and the corresponding child signature of \(B_{t_i}\) marks \(v\) as \(O\), then \(v\) cannot remain unresolved, since it will disappear from the boundary. Therefore it must be assigned to a dominator from \(P_t\setminus P_{t_i}\) during this merge step. Let \(F_1 :=\bigcup_{i\in\{1,2\}} \{v \in W \cap B_{t_i} : \sigma_i(v) = O\}\) be the set of vertices that must be assigned a dominator in this step. In addition, some vertices in \(F_2 :=\bigcup_{i\in\{1,2\}} \{v \in B_t \cap B_{t_i} : \sigma_i(v)=O\}\) may optionally be resolved at this step.  For each subset \(R \subseteq F_2\), we interpret the vertices in \(R\) as being assigned during this merge step, while vertices in \(F_2 \setminus R\) remain in state \(O\) in the parent signature.  The set of demand vertices to be assigned at this step is therefore \(F = F_1 \cup R\).

To determine whether the demands in \(F\) can be satisfied, we enumerate, for each vertex \(v \in F\), possible dominators among vertices whose state is \(S(r_c)\). In particular, if a vertex \(v \in B_{t_i}\) has state \(O\) in the child signature, then its dominator must lie in \(P_t \setminus P_{t_i}\). This follows from the semantics of state \(O\), which indicates that the vertex is not dominated within the child problem \(t_i\) and therefore can only be dominated by a vertex outside the corresponding subpolygon \(P_{t_i}\). The set of vertices whose state is \(S(r_c)\) consists of two parts. The first part contains vertices in the set \(W\). These vertices become internal vertices at the current merge step, and therefore we no longer need to track their remaining capacity. It suffices to ensure that the number of vertices assigned to such a vertex does not exceed its available capacity. The second part consists of vertices that remain in the parent boundary \(B_t\). For each enumerated assignment of demands to supplies, we must record the remaining capacity of these boundary vertices, since this information is required for subsequent merge operations. Note that for the remaining capacity of any vertex \(v\in A_t\), if the child state stores \(k^+\), then the parent state also stores \(k^+\). Indeed, \(k^+\) means that \(v\) has enough capacity for all neighbors outside the child subpolygon \(P_{t_i}\) that may still be assigned to it; after assigning some of these vertices at the current merge, \(v\) still has enough capacity for all neighbors outside the parent subpolygon \(P_t\). For each enumerated demand--supply assignment, we verify that no vertex with state \(S(r_c)\) exceeds its available capacity. If this condition is satisfied, we record the resulting states of all vertices in \(B_t\). In this way, from the pair of child signatures \(\sigma_1\) and \(\sigma_2\) we obtain a candidate parent signature \(\sigma\), and we update the table entry \(Q[t,\sigma]\) using the value \(\min\left\{Q[t, \sigma], Q[t_1,\sigma_1] + Q[t_2,\sigma_2] + |\{v \in W : v \in S\}|\right\}\).

\paragraph{The one-child case.} We describe how to handle the special case where an internal node has only one child. As illustrated in Figure~\ref{fig:one-child}, suppose that node \(t\) has only one child \(t_2\), corresponding to the right subpolygon. In this case, we conceptually introduce a dummy left subproblem \(t_1\) that consists only of the two vertices \(a_t\) and \(c\). This dummy subproblem can be treated as a leaf node. Hence, the table \(Q[t_1,\sigma]\) can be computed using the same procedure as in the leaf initialization step. After computing \(Q[t_1,\sigma]\), we merge \(t_1\) and \(t_2\) using exactly the same merge procedure as for the general case with two children.

\begin{figure}[ht!]
    \centering
    \includegraphics[width=0.9\linewidth]{figures/one-child.pdf}
    \caption{An illustration of how to handle an internal node with only one child problem.}
    \label{fig:one-child}
\end{figure}

\paragraph{The root.}
The root \(r\) corresponds to a triangle incident to an outer-face link \(e_0 = a_0 b_0\), and \(e_0\) is not crossed by any other edge, so the interface \(A_r\) is empty.  Consequently, the boundary vertices \(B_r\) of the root consists only of the two vertices \(\{a_0,b_0\}\). At the root node, the subpolygon \(P_r\) coincides with the entire vertex set \(V\) of the graph. Since the outside region is empty, any boundary vertex marked with state \(O\) would require a dominator outside the graph, which is impossible. Therefore admissible root signatures must satisfy \(\sigma(a_0) \neq O\) and \(\sigma(b_0) \neq O\).

The final solution is obtained by enumerating all signatures on the root boundary \(B_r\) satisfying \(\sigma(a_0), \sigma(b_0) \in \{U, S(\cdot)\}\). For each such signature \(\sigma\), we take the value \(Q[r,\sigma]\) and add one for each endpoint \(v \in \{a_0,b_0\}\) with \(\sigma(v)=S(\cdot)\), and return the minimum over all such signatures. By definition, the table value \(Q[t,\sigma]\) only counts selected vertices in \(P_t \setminus B_t\), and boundary vertices are counted later when they become internal. Since the root has no parent, vertices in \(B_r\) will never be forgotten in a higher-level merge, thus any selected endpoint of \(e_0\) must be added explicitly at the end. This also ensures that the two endpoints of \(e_0\) are fully dominated. If \(\sigma(a_0)=S(r_c)\), then \(a_0\) itself is selected into the dominating set and requires no assignment. If \(\sigma(a_0)=U\), then \(a_0\notin S\) but already has a dominator inside \(P_r\). The state \(O\) is forbidden because there are no outside vertices that could dominate \(a_0\).  The same interpretation applies to \(b_0\).

\begin{lemma}[correctness]\label{lem:cds-dp-correctness}
For every node \(t\) of the rooted weak dual and every signature \(\sigma\in\Sigma_t\), after the table of \(t\) has been computed, the value \(Q[t,\sigma]\) is equal to the minimum value of \(|S\cap (P_t\setminus B_t)|\) over all partial solutions inside \(P_t\) that realize the signature \(\sigma\). Consequently, the value returned at the root is the size of a minimum capacitated dominating set of \(G\).
\end{lemma}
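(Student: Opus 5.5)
The proof is an induction over the rooted weak dual \(D\), processed bottom-up. First, I will make precise what it means for a partial solution inside \(P_t\) to \emph{realize} a signature \(\sigma\). Such a partial solution is a pair \((S_t,f_t)\) with \(S_t\subseteq P_t\) and \(f_t\) defined on \(P_t\setminus S_t\), satisfying the following:
\begin{itemize}
\item every vertex of \(P_t\setminus B_t\) that is not in \(S_t\) has \(f_t\)-image a neighbour in \(S_t\);
\item a boundary vertex \(x\) with \(\sigma(x)=U\) has a dominator in \(S_t\cap P_t\);
\item a boundary vertex with \(\sigma(x)=O\) has no assignment;
\item a boundary vertex with \(\sigma(x)=S(r_c)\) lies in \(S_t\), receives at most \(c(x)\) assignments, and has remaining capacity equal to \(r_c\). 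For \(x\in A_t\), the value \(k^+\) stands for any remaining capacity strictly larger than \(d_t^{\mathrm{out}}(x)\).
\end{itemize}

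I then prove two inequalities for every \(t\) and \(\sigma\).

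\textbf{Soundness.} If \(Q[t,\sigma]<\infty\), then some realizing partial solution has cost at most \(Q[t,\sigma]\). By induction, I take realizing solutions for \(\sigma_1\) and \(\sigma_2\) at the children and glue them together along the shared vertex \(c\). I add the demand--supply assignments that were enumerated during the merge. The compatibility checks guarantee the following. The vertex \(c\) gets exactly one dominator, via the allowed pairs \((U,O)\), \((O,U)\), and \((O,O)\) only when \(c\in B_t\). Capacities add correctly through \(r_{c_0}=r_{c_1}+r_{c_2}-c(c)\). Every vertex of \(W\) that was in state \(O\) is resolved. The cost bookkeeping is right because the vertices of \(W\cap S\) leave the boundary at exactly this step, and \(P_t\setminus B_t\) is the disjoint union of \(P_{t_1}\setminus B_{t_1}\), \(P_{t_2}\setminus B_{t_2}\) and \(W\). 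I will check this disjointness using the fact that the child interiors \(S_{t_1}\) and \(S_{t_2}\) are disjoint and meet only through \(c\).

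\textbf{Completeness.} Every realizing partial solution \((S_t,f_t)\) has cost at least \(Q[t,\sigma]\). I restrict it to \(P_{t_1}\) and \(P_{t_2}\). A boundary vertex of the child whose dominator lies outside \(P_{t_i}\) is labelled \(O\), and otherwise it is labelled \(U\) or \(S(\cdot)\) with its residual capacity. This gives child signatures \(\sigma_1\) and \(\sigma_2\) that are realized, so by induction \(Q[t_i,\sigma_i]\) is at most the child costs.

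Next I show that the merge procedure enumerates exactly this pair, the subset \(R\), and the assignments of \(F_1\cup R\) that \(f_t\) uses. Every edge of \(G[P_t]\) leaving \(P_{t_i}\) from a vertex of \(S_{t_i}\) crosses \(\lambda_{t_i}\). Hence its endpoint lies in \(A_{t_i}\subseteq B_{t_i}\), and the dominator of an \(O\)-vertex lies in \(B_{t_1}\cup B_{t_2}\cup\{a,b,c\}\), which the merge ranges over. The leaf case and the one-child case (with its dummy leaf) are finite checks.

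The main obstacle is the saturated value \(k^+\). In completeness, the child's true residual capacity \(r\) may exceed \(d^{\mathrm{out}}_{t_i}(x)\), so the child stores \(k^+\). After the merge I must show that the parent's exact residual is still represented correctly. The merge uses at most \(|N(x)\cap(P_t\setminus P_{t_i})|\) units of capacity. Since \(N(x)\setminus P_t\subseteq N(x)\setminus P_{t_i}\), we get \(d^{\mathrm{out}}_{t_i}(x)\ge|N(x)\cap(P_t\setminus P_{t_i})|+d^{\mathrm{out}}_t(x)\). So the parent residual exceeds \(d^{\mathrm{out}}_t(x)\) and \(k^+\) is correct. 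Conversely, in soundness I replace \(k^+\) by the true value, which is monotone: more residual capacity never invalidates a solution. I then handle the combination of two \(k^+\) values at \(c\). Here \(c\) is an endpoint of both child links, so its values there are exact. For \(x\in A_t\) that stays in a single child, the only issue is the bound above.

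Finally, at the root \(B_r=\{a_0,b_0\}\) and \(A_r=\emptyset\), and state \(O\) is excluded. Adding back the selected endpoints then turns realizing solutions of admissible root signatures into exactly the capacitated dominating sets of \(G\), which gives the claimed root value.
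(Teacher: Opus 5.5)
Your proposal is correct and follows essentially the same route as the paper's proof. Both argue by induction over the rooted weak dual: a brute-force leaf base case; soundness by gluing realizing child solutions along \(c\) together with the enumerated demand--supply assignment; completeness by restricting a parent solution to \(P_{t_1}\) and \(P_{t_2}\) and noting that the merge branches over the induced \(R\) and assignment; and the root handled by excluding \(O\) and adding back the selected endpoints of \(e_0\). Your extra justifications fill in steps the paper only asserts: the bound on the saturated value \(k^+\) via \(d^{\mathrm{out}}_{t_i}(x)\ge|N(x)\cap(P_t\setminus P_{t_i})|+d^{\mathrm{out}}_t(x)\), the fact that dominators of resolved \(O\)-vertices lie in \(B_{t_1}\cup B_{t_2}\), and the disjoint decomposition of \(P_t\setminus B_t\) used for the cost bookkeeping.
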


\begin{proof}
We prove the invariant by induction over the rooted weak dual, from the leaves to the root. A partial solution for a state \((t,\sigma)\) consists of a set \(S_t\subseteq P_t\) and a partial assignment \(f_t\) such that every vertex of \(P_t\setminus B_t\) is either in \(S_t\) or is assigned to a neighbor in \(S_t\cap P_t\), every boundary vertex behaves according to its state in \(\sigma\), and no selected vertex receives more assigned vertices than allowed by its capacity. More precisely, if \(\sigma(v)=U\), then \(v\notin S_t\) and \(f_t(v)\in S_t\cap P_t\); if \(\sigma(v)=O\), then \(v\notin S_t\) and the value \(f_t(v)\) is not yet defined inside \(P_t\); and if \(\sigma(v)=S(r_c)\), then \(v\in S_t\) and the state records the amount of capacity of \(v\) still available for vertices outside \(P_t\). For an interface vertex \(u\in A_t\), if \(u\in S_t\) the saturated value \(k^+\) means that the true remaining capacity of \(u\) is larger than \(d_t^{\mathrm{out}}(u)=|N(u)\setminus P_t|\). This is sufficient because at most \(d_t^{\mathrm{out}}(u)\) neighbors of \(u\) outside the current subpolygon \(P_t\) can still be assigned to \(u\) in later merge steps.

For a leaf node, the subpolygon contains only the three vertices of an ear triangle. The algorithm enumerates all possible selected subsets inside this constant-size subpolygon and all assignments of unselected vertices that are not postponed by an \(O\)-state. It then checks adjacency, the states prescribed by the signature, and the capacity constraints. Hence a finite value is inserted exactly for those signatures that are realized by a valid partial solution on the leaf, and the stored table entry value is exactly the minimum possible value of \(|S\cap(P_t\setminus B_t)|\).

\paragraph{Soundness.}
Now consider an internal node \(t\) with children \(t_1\) and \(t_2\), where the corresponding triangle is \(\triangle(a,b,c)\), the parent link is \(ab\), and the child links are \(ac\) and \(bc\). Assume by induction that the invariant holds for both children. Suppose that the algorithm combines two finite child entries \(Q[t_1,\sigma_1]\) and \(Q[t_2,\sigma_2]\), chooses a set of postponed vertices to resolve at the current merge, and enumerates a feasible assignment of these demand vertices to selected supply vertices. By the induction hypothesis, there exist valid partial solutions realizing \(\sigma_1\) and \(\sigma_2\). The compatibility test for the shared vertex \(c\) ensures that these two partial solutions do not define two different assignments for \(c\). If \(c\) is selected, the loads reported by the two children are added and the pair is rejected whenever the capacity of \(c\) would be exceeded. If \(c\) is not selected, the pair \((U,U)\) is rejected, the pairs \((U,O)\) and \((O,U)\) make \(c\) resolved in the parent, and the pair \((O,O)\) is allowed only when \(c\in B_t\), in which case the parent state of \(c\) remains \(O\).

Every other vertex that disappears from the boundary at this merge is handled as follows. If it is already in state \(U\), its assignment has been fixed inside its child subpolygon. If it is selected, it is counted at this merge, because from now on it is internal. If it is in state \(O\), then it is included among the demand vertices that must be assigned now. Moreover, if such a demand vertex \(v\) comes from the child \(t_i\), the algorithm only allows dominators in \(P_t\setminus P_{t_i}\), which is exactly the meaning of the \(O\)-state in the child. For boundary vertices that remain in \(B_t\) and are in state \(O\) in a child, the algorithm branches over the choice of either keeping them unresolved in the parent or resolving them at the current merge. The enumerated demand--supply assignment fixes exactly the latter vertices. The capacity check and the update of the remaining capacities ensure that no selected vertex exceeds its available capacity. Therefore the two child partial solutions together with the newly chosen assignments form a valid partial solution for the parent signature \(\sigma\). The value used to update \(Q[t,\sigma]\) counts the selected vertices already counted in the two children and additionally counts exactly the selected vertices that disappear from the boundary at the current merge. Hence every finite value produced by the algorithm is sound.

\paragraph{Completeness.}
Let \((S_t,f_t)\) be any valid partial solution realizing a parent signature \(\sigma\). Restrict \((S_t,f_t)\) to \(P_{t_1}\) and \(P_{t_2}\). For each child boundary vertex \(v\), define its child state as follows. If \(v\in S_t\), the child state is \(S(r_c)\), where \(r_c\) is the capacity of \(v\) remaining after the assignments whose assigned vertices lie in that child subpolygon. If \(v\notin S_t\) and \(f_t(v)\) lies inside the child subpolygon, the child state is \(U\). If \(v\notin S_t\) and \(f_t(v)\) does not lie inside the child subpolygon, the child state is \(O\). These definitions give two child signatures \(\sigma_1\) and \(\sigma_2\), and the restrictions of \((S_t,f_t)\) are valid partial solutions for them. By the induction hypothesis, the child tables contain values no larger than the numbers of selected internal vertices used by these restrictions.

The parent partial solution also determines exactly which child \(O\)-vertices are resolved at the current merge: these are precisely the vertices whose assignment under \(f_t\) lies in \(P_t\) but outside the child subpolygon in which they were marked \(O\). The algorithm branches over all such choices, so it considers this one. The assignment \(f_t\) itself gives one of the enumerated demand--supply assignments. All compatibility conditions are satisfied, including the special rule for the shared vertex \(c\), because \(f_t\) is a function and because every vertex not kept on the parent boundary is already resolved in the parent partial solution. The capacity updates made by the algorithm are exactly those induced by \(f_t\). Therefore, the algorithm creates the parent signature \(\sigma\) with a partial solution of size at most \(|S_t\cap(P_t\setminus B_t)|\).

\paragraph{The root.}
The induction establishes the claimed invariant for every node. At the root \(r\), the subpolygon \(P_r\) is the whole vertex set \(V\), and the only boundary vertices are the endpoints \(a_0,b_0\) of the chosen outer-face link. Since there is no exterior to the root subpolygon, a root signature containing state \(O\) is inadmissible. If \(a_0\) or \(b_0\) is in state \(U\), then it already has a dominator inside \(G\); if it is in state \(S(\cdot)\), then it belongs to the dominating set and must be counted explicitly, because vertices \(a_0\) and \(b_0\) can never be forgotten by an ancestor. Hence taking the minimum of \(Q[r,\sigma]\) plus one for each outer-face link endpoint in state \(S(\cdot)\), over all admissible root signatures \(\sigma\), gives exactly the minimum size of a capacitated dominating set of \(G\).
\end{proof}

\begin{theorem}
\textsc{Capacitated Dominating Set} can be solved in time \(n^5k^{O(k)}\) on outer \(k\)-planar graphs. Therefore, the problem is FPT when parameterized by outer \(k\)-planarity.
\end{theorem}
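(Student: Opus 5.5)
The plan is to combine Lemma~\ref{lem:cds-dp-correctness} with a count of the DP states and of the work done per merge. Correctness is already given by Lemma~\ref{lem:cds-dp-correctness}, so what remains is the running time. I would first invoke Lemma~\ref{crossing-lemma} to obtain the triangulation \(H\) in \(O(nk)\) time. Computing the intersection graph of the edges from the drawing takes polynomial time, and is in fact linear in the number of crossings, which is \(O(nk\sqrt{k})\) by the Pach--T\'oth edge bound. The weak dual \(D\) then has \(n-2\) nodes, and the DP performs one merge per node.

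Next I would bound \(|\Sigma_t|\). The two link endpoints \(a_t,b_t\) each have states \(U\), \(O\), or \(S(r_c)\) with \(r_c\in\{0,\dots,\min(c(x),n-1)\}\), so each has \(O(n)\) states. Each interface vertex \(x\in A_t\) has states \(U\), \(O\), or \(S(r_c)\) with \(r_c\in\{0,\dots,d^{\mathrm{out}}_t(x),k^+\}\), so at most \(k+4\) states. Since \(|A_t|\le k\), this gives \(|\Sigma_t|=O(n^2)(k+4)^k=n^2k^{O(k)}\).

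Then I would bound the cost of one merge. A naive enumeration over all pairs \((\sigma_1,\sigma_2)\) costs \(n^4k^{O(k)}\), and over \(n\) nodes this would give \(n^5k^{O(k)}\) before accounting for the demand--supply enumeration. The key observation, and the place I expect the main difficulty, is that the pairs are far fewer than this. The vertex \(c\) is shared by both children, and its \(O(n)\) capacity values combine through \(r_{c_0}=r_{c_1}+r_{c_2}-c(c)\). The vertex \(a\) appears only in \(\sigma_1\) and \(b\) only in \(\sigma_2\), and the parent's remaining capacities for \(a\) and \(b\) are determined by the child values minus the loads assigned in this step. Enumerating \(\sigma_1\), \(\sigma_2\) gives \(O(n^2)\cdot O(n^2)\) choices for the values of \(a,c\) and \(c,b\), which yields \(n^4\) per node and \(n^5\) in total. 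That matches the claimed bound, provided the rest of the merge is \(k^{O(k)}\).

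So for each pair I would enumerate:
\begin{itemize}
\item the subset \(R\subseteq F_2\), with \(2^{O(k)}\) choices;
\item the assignment of each demand vertex in \(F=F_1\cup R\) to a selected dominator among the at most \(2k+3\) vertices of \(B_{t_1}\cup B_{t_2}\), with \((2k+3)^{O(k)}=k^{O(k)}\) choices.
\end{itemize}
Here \(|F|\le 2k+3\), and demand vertices are assigned only to boundary vertices because every other vertex is already resolved inside its child. Checking capacities, updating the stored \(r_c\) values (saturating to \(k^+\) where needed), and updating \(Q[t,\sigma]\) each take \(\mathrm{poly}(k)\) time, and the one-child case via the dummy leaf costs no more. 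Leaves cost \(O(n^2)\) each.

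Summing, each internal node costs \(n^4k^{O(k)}\), and over \(O(n)\) nodes the total is \(n^5k^{O(k)}\). The root step costs \(O(n^2)\). Since \(n^5k^{O(k)}=f(k)\,n^{O(1)}\), the problem is FPT in \(k\).
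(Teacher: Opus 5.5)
Your proposal is correct and follows the paper's argument. Correctness is delegated to Lemma~\ref{lem:cds-dp-correctness}, the state count is \(|\Sigma_t|\le (n+2)^2(k+4)^k\), and each internal node enumerates the \(n^4k^{O(k)}\) pairs of child signatures, the \(2^{O(k)}\) subsets \(R\subseteq F_2\), and \(k^{O(k)}\) demand--supply assignments on the boundary, which over \(O(n)\) nodes gives \(n^5k^{O(k)}\). One small correction: your remark that the pairs are ``far fewer'' than the naive count is neither true nor used, since the plain bound \(|\Sigma_{t_1}|\cdot|\Sigma_{t_2}|=n^4k^{O(k)}\) is exactly what you end up using, as the paper does.
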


\begin{proof}
Recall that the algorithm operates on the weak dual of the triangulation obtained from the given outer $k$-planar drawing.  Each node of this tree corresponds to a triangle and each subtree represents a subproblem defined on a subpolygon $P_t$ with boundary vertices $B_t$. For each interface vertex $x \in A_t$, the number of possible states is bounded by \(2 + (k + 2) \le k + 4\), since the vertex may be labeled $U$, $O$, or $S(r_c)$ with $r_c \in \{0,1,\ldots,d_t^{\mathrm{out}}(x), k^+\}$ (values larger than $d_t^{\mathrm{out}}(x) \le k$ are represented by \(k^+\)). For the endpoints $a_t$ and $b_t$ of a triangulation link \(\lambda_t\), the remaining capacity can range up to the maximum capacity of the vertex. Hence each endpoint has at most $n$ possible values of $r_c$, together with the two states $U$ and $O$. Combining these bounds, the number of signatures stored at each node $t$ satisfies \(|\Sigma_t| \le (n+2)^2 \cdot (k+4)^{k}\). 

A leaf node of the weak dual corresponds to an ear triangle with vertices $\{a,b,c\}$.  In this case the subpolygon $P_t$ contains only these three vertices, and the boundary is \(B_t = \{a,b\} \cup A_t\) where \(A_t \subseteq \{c\}\). To compute the table values of a leaf node, we enumerate all subsets $S \subseteq \{a,b,c\}$ indicating which vertices are selected into the dominating set.  For each such choice we assign every unselected vertex $v \notin S$ to one of its neighbors in $S$, provided that capacity constraints are satisfied. Since the number of vertices in the triangle is constant, all these checks take constant time. Therefore each signature $\sigma$ can be evaluated in constant time, and the total time required to compute the table for a leaf node is \(O(|\Sigma_t|) = O(n^2 \cdot k^{O(1)})\).

Consider an internal node $t$ with children $t_1$ and $t_2$. The table $Q[t,\sigma]$ is computed by combining signatures $\sigma_1 \in \Sigma_{t_1}$ and $\sigma_2 \in \Sigma_{t_2}$. The number of candidate pairs is bounded by \(|\Sigma_{t_1}| \cdot |\Sigma_{t_2}| = O(n^4 (k+4)^{2k})\). Let \(F_2\) be the set of vertices in \(B_t\) that are labeled \(O\) in the signature of their corresponding child problem. These vertices remain on the boundary after the merge, and the algorithm decides whether they should already be dominated at the current step or deferred to higher levels.  For this purpose we branch over a subset \(R \subseteq F_2\), which represents the vertices whose assignments are resolved during the current merge, and the states of vertices in \(R\) become \(U\). Since $|F_2| \le |B_t| \le k+2$, this branching introduces a factor \(2^{O(k)}\). Let \(F = F_1 \cup R\) be the set of vertices that must be assigned a dominator at this merge step.  Every vertex in $F \cap B_{t_i}$ must be dominated by a vertex in $S \cap (P_t \setminus P_{t_i})$, where $t_i$ denotes the child. Both the demand vertices and the potential supply vertices lie on the boundary, whose size is at most $k+2$.  Therefore the number of possible dominator assignments is bounded by \(k^{O(k)}\). For each assignment we check the compatibility constraints and verify that the capacity of every selected dominator is not exceeded. These checks can be performed in $k^{O(1)}$ time. Combining all factors, the running time for processing one internal node is bounded by \(O(n^4 \cdot (k+4)^{2k} \cdot 2^{O(k)} \cdot k^{O(k)})\).

The weak dual of the triangulation contains $O(n)$ nodes. Summing the cost of all nodes, the total running time of the algorithm is \(O(n^5 \cdot (k+4)^{2k} \cdot k^{O(k)})\). The correctness follows from Lemma~\ref{lem:cds-dp-correctness}, hence \textsc{Capacitated Dominating Set} is FPT parameterized by outer \(k\)-planarity.
\end{proof}

We extend the algorithm for \textsc{Capacitated Dominating Set} (CDS) on outer \(k\)-planar graphs to the \textsc{Capacitated Red-Blue Dominating Set} (CRBDS) problem. In CRBDS, the vertex set is partitioned into red vertices \(R\) (demands) and blue vertices \(B\) (facilities), and only blue vertices can be selected into the solution \(S\subseteq B\). Each red vertex must be assigned to a selected blue neighbor, subject to capacity constraints. The problem is defined as follows.

\vspace{3mm}
\noindent\fbox{
\begin{minipage}{0.97\textwidth}
\textsc{Capacitated Red-Blue Dominating Set}\\
\textbf{Input:} Bipartite graph \(G = (R \cup B, E)\), with each blue vertex \(v \in B\) having a positive integer capacity \(c(v)\), and an integer \(m\). \\
\textbf{Question:}  Is there a set \(S \subseteq B\) of at most \(m\) blue vertices, and an assignment \(f : R \to S\) mapping each red vertex to a (blue) neighbor in \(S\), such that each vertex \(v \in S\) has at most \(c(v)\) red neighbors assigned to it?
\end{minipage}}
\vspace{3mm}

The triangulation of the outer \(k\)-planar drawing and the corresponding weak dual remain unchanged from the CDS setting. In particular, each subproblem corresponds to a subpolygon \(P_t\) with interface \(A_t\) and boundary \(B_t\), where \(|A_t| \le k\) and \(|B_t|\le k+2\), and the interaction between subproblems is confined to these boundary vertices.

The main modification lies in the DP state definition. We distinguish between red and blue boundary vertices. For each red vertex \(r\in B_t\cap R\), we use two states: \(r\) is labeled \(U\) if it is already assigned to a blue vertex within \(P_t\), and \(O\) if it remains unassigned and must be handled outside \(P_t\). For each blue vertex \(b\in B_t\cap B\), we use two types of states: \(b\) is labeled \(F\) if it has not yet been selected into the solution within \(P_t\), and no red vertex in \(P_t\) is assigned to it; or \(S(\rho)\) if \(b\in S\) and has remaining capacity \(\rho\). As in the algorithm for CDS, for interface vertices we store a value \(\rho\in\{0,1,\dots,d_t^{\mathrm{out}}(b),k^+\}\). For the endpoints of the triangulation link \(\lambda_t=a_tb_t\), however, we store their exact remaining capacities, which may be as large as \(n-1\). Once two subpolygons are merged, two of these vertices cease to be endpoints of a triangulation link, and their remaining capacities are truncated to values in \(\{0,1,\dots,d_t^{\mathrm{out}}(b),k^+\}\).

For each node \(t\) and each boundary signature \(\sigma\), the table entry \(Q[t,\sigma]\) stores the minimum number of selected blue vertices in \(P_t\setminus B_t\) that realize \(\sigma\), such that each internal red vertex from \(P_t\setminus B_t\) is assigned to a blue dominator within \(P_t\), and all capacity constraints are satisfied.

\paragraph{Leaf nodes.}
The initialization at a leaf is modified in the natural way. Let \(t\) be a leaf node whose corresponding subpolygon consists of a single triangle with vertices \(\{a,b,c\}\), where \(ab\) is the parent link. Since the triangle contains only three vertices, we may enumerate all possibilities directly. More precisely, we enumerate which blue vertices of \(\{a,b,c\}\) are selected into the partial solution. For every red vertex in the triangle, we then decide whether it is already assigned inside the triangle or left unresolved for the exterior, provided that this is consistent with its boundary signature. If a red vertex is assigned internally, it must be assigned to a selected blue neighbor. Afterwards we verify that no selected blue vertex receives more assigned red vertices than its capacity. From this information we derive the boundary signature \(\sigma\). A boundary red vertex is labeled \(U\) if it is assigned within the triangle, and \(O\) otherwise. A boundary blue vertex is labeled \(S(\rho)\) if it is selected, where \(\rho\) is its remaining capacity after serving the red vertices, and it is labeled \(F\) if it is not selected and no red vertex is assigned to it. The table entry \(Q[t,\sigma]\) is then updated with the number of selected blue vertices in \(P_t\setminus B_t\). Since the triangle has constant size, this initialization takes constant time for each signature \(\sigma\).

\paragraph{Internal nodes.}
The merge procedure for internal nodes is almost the same as in the CDS algorithm, with modifications reflecting the different roles of red and blue vertices. Let \(t\) have two children \(t_1,t_2\), and define \(W=(B_{t_1}\cup B_{t_2})\setminus B_t\). Red vertices in \(W\) labeled \(O\) must be assigned at this step, since they disappear from the boundary of the current subpolygon \(P_t\). For red vertices from \(B_{t_1}\) and \(B_{t_2}\) that remain in \(B_t\), we branch on whether to keep them as \(O\) or to assign them immediately. Formally, we enumerate a subset \(R'\) of boundary red vertices to be resolved at the current node. Since \(|B_t|\le k+2\), this branching contributes a factor \(2^{O(k)}\).

When merging two subpolygons, the compatibility condition at the shared vertex \(c\) must distinguish whether \(c\) is red or blue. Suppose first that \(c\) is a red vertex. Then in each child signature the state of \(c\) is either \(U\) or \(O\). The combination \((U,U)\) is forbidden, since a red vertex can be assigned only once. The combinations \((U,O)\), \((O,U)\), and \((O,O)\) are allowed. In the first two cases, the parent interprets \(c\) as already assigned; in the third case, the parent keeps \(c\) unresolved if \(c \in B_t\) and delays the assignment to subsequent merge steps, otherwise if \(c \notin B_t\) this pair of child signatures is discarded, since no vertex can dominate \(c\) in this case.

Suppose now that the shared vertex \(c\) is a blue vertex. Then each child signature labels \(c\) either by \(F\) or by \(S(\rho)\). The combinations \((F,F)\), \((F,S(\rho))\), and \((S(\rho),F)\) are always compatible. If both children label \(c\) by \(S(\rho_1)\) and \(S(\rho_2)\), then the two partial loads assigned to \(c\) must be combined. Since \(c\) is an endpoint of both child links \(\lambda_{t_i}\), the values \(\rho_1\) and \(\rho_2\) are stored exactly. If the capacity of \(c\) is \(c(c)\), then after merging the two child solutions the remaining capacity of \(c\) is \(\rho_0 = \rho_1 + \rho_2 - c(c) \). If \( \rho_0 < 0 \), then the total load assigned to \(c\) exceeds its capacity, and the pair of child signatures is discarded. Otherwise the parent node \(t\) labels \(c\) by \(S(\rho_0)\) if \(c \in B_t\). Since \(c\) is not an endpoint of the parent link \(\lambda_t\), if \(c \in B_t\) then \(c\) becomes an interface vertex of the subpolygon \(P_t\), and its remaining capacity can be truncated to values in \(\{0,1,\dots,d_t^{\mathrm{out}}(c),k^+\}\).

Let \(R''\) denote the set of red vertices that will be assigned at a merge step (including all forced vertices and those selected in \(R'\)). We then determine whether there exists a valid assignment of vertices in \(R''\) to blue vertices with state \(S(\rho)\) or \(F\) in the current subproblem, subject to capacity constraints, and the restriction that a red vertex originating from one child cannot be assigned to a blue vertex entirely within that same child. This assignment involves only \(O(k)\) vertices and can be solved by enumeration in \(k^{O(k)}\) time. If a feasible assignment exists, we update the states of boundary vertices accordingly: assigned red vertices become \(U\), unassigned ones remain \(O\); blue vertices used in the assignment become \(S(\rho)\) with updated remaining capacity, while unused blue vertices remain \(F\). The table entry value is updated by combining the child entry values and adding the number of newly selected blue vertices from \((B_{t_1}\cup B_{t_2})\setminus B_t\).

\paragraph{The root.}
At the root whose corresponding triangle is incident to an outer-face link, the boundary contains only two vertices. All red vertices must be assigned, hence no boundary vertex can be labeled \(O\). We enumerate all valid boundary signatures satisfying this condition and take the minimum table value, adding the contribution of selected boundary blue vertices if necessary.

\begin{corollary}
\textsc{Capacitated Red-Blue Dominating Set} can be solved in time \(n^5\cdot k^{O(k)}\) on outer \(k\)-planar graphs. Therefore, the problem is FPT when parameterized by outer \(k\)-planarity.
\end{corollary}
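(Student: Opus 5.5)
The plan is to rerun the dynamic program of the \textsc{Capacitated Dominating Set} theorem on the same rooted weak dual $D$ of the triangulation given by Lemma~\ref{crossing-lemma}, using the CRBDS modifications just described, and to verify two things: the number of signatures per node stays $O(n^2)\cdot k^{O(k)}$, and the correctness invariant of Lemma~\ref{lem:cds-dp-correctness} carries over. Fix a node $t$ with boundary $B_t=\{a_t,b_t\}\cup A_t$, where $|A_t|\le k$. A red boundary vertex has two states, $U$ or $O$. A blue boundary vertex in $A_t$ has state $F$ or $S(\rho)$ with $\rho\in\{0,\dots,d_t^{\mathrm{out}}(b),k^+\}$, which gives at most $k+3$ options. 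Each link endpoint has at most $n+1$ options. Hence $|\Sigma_t|\le (n+2)^2(k+3)^k$.

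For correctness, I would copy the soundness and completeness arguments of Lemma~\ref{lem:cds-dp-correctness}, replacing ``vertex not in $S$'' by ``red vertex'' and ``selected vertex'' by ``blue vertex with state $S(\cdot)$''.
\begin{itemize}
\item \emph{Blue shared vertex $c$.} Additivity of loads gives $\rho_0=\rho_1+\rho_2-c(c)$; this is valid because both $\rho_i$ are stored exactly, since $c$ is an endpoint of both child links.
\item \emph{Truncation to $k^+$.} This is justified exactly as before. An interface vertex of $P_t$ can receive at most $d_t^{\mathrm{out}}$ further red vertices, since every later assigned red neighbour lies outside $P_t$ and hence its edge crosses $\lambda_t$.
\item \emph{Red vertex $c$.} A red vertex is assigned once, so $(U,U)$ is rejected, and $(O,O)$ is rejected when $c\notin B_t$.
\item \emph{Newly selected blue vertices.} A blue vertex that is $F$ in both children but receives demand at the merge switches to $S$ with capacity reduced by the number of red vertices assigned to it.
\end{itemize}
Completeness follows by restricting a global solution to the child subpolygons and reading off the child states, exactly as in the CDS proof.

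The one-child case uses a dummy leaf. At the root, admissible signatures have no red vertex in state $O$, and we add $1$ for each selected endpoint of $e_0$.

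For the running time, each internal node ranges over pairs of child signatures, which is $O(n^4(k+3)^{2k})$ pairs. It also branches over the subset $R'$ of boundary red vertices to resolve, a factor $2^{O(k)}$. For each branch it enumerates assignments of the at most $k+2$ demand vertices to the at most $O(k)$ boundary blue vertices, a factor $k^{O(k)}$, with $k^{O(1)}$ checking time per assignment. Summing over the $O(n)$ nodes of $D$ gives $n^5k^{O(k)}$.

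The step I expect to need the most care is the demand--supply enumeration at a merge. Assigning a red demand vertex to a blue vertex that was in state $F$ must be legal only if that blue vertex lies in the other child or is $c$, never in the demand vertex's own child region. The reason is that red--blue edges joining $P_{t_1}$ to $P_{t_2}$ other than through $c$ must cross $\lambda_{t_1}$ or $\lambda_{t_2}$, so both endpoints are boundary vertices of the respective child. I would check this restriction first, since it is what keeps all candidate supplies within the $O(k)$ boundary vertices, as in the CDS analysis.
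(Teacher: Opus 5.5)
Your proposal is correct and follows the paper's own route. Like the paper, you rerun the \textsc{Capacitated Dominating Set} dynamic program with the red/blue state modifications, bound the signatures per node by $O(n^2k^{O(k)})$, defer correctness to the same soundness/completeness argument, and charge each merge a $2^{O(k)}$ branching factor and a $k^{O(k)}$ local assignment enumeration over the $O(n)$ nodes. Two harmless slips: demand vertices at a merge come from both child boundaries, so there can be up to $2k+3$ of them rather than $k+2$; and the link edge $ab$ crosses neither child link, though its endpoints are still boundary vertices. Neither affects the $k^{O(k)}$ bound or the locality of supplies.
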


\begin{proof}
The correctness of the algorithm follows the same argument as that for \textsc{Capacitated Dominating Set}. The number of states per subproblem is again bounded by \(O(n^2 \cdot k^{O(k)})\), and each merge step requires \(2^{O(k)}\) branching and a local assignment check on \(O(k)\) vertices. Thus the overall running time remains \(O(n^5 \cdot k^{O(k)})\). Therefore, \textsc{Capacitated Red-Blue Dominating Set} on outer \(k\)-planar graphs is FPT when parameterized by \(k\).
\end{proof}

\subsection{Capacitated Vertex Cover}\label{CVC}
Another well-studied capacitated variant of a classical graph problem is \textsc{Capacitated Vertex Cover}. In addition to the XNLP-hardness result when parameterized by pathwidth and the XALP-hardness result when parameterized by treewidth, it was previously shown by Dom et al.~\cite{CDS-tw} to be W[1]-hard when parameterized by treewidth alone, and FPT when parameterized by the combined parameter of treewidth and solution size. For recent progress on this problem, in terms of complexity results with respect to several other parameters, we refer to the work by Lampis and Vasilakis~\cite{CVC-recent}. In this section, we present an FPT algorithm for \textsc{Capacitated Vertex Cover} on outer $k$-planar graphs when parameterized by $k$.

\vspace{3mm}
\noindent\fbox{
\begin{minipage}{0.97\textwidth}
\textsc{Capacitated Vertex Cover}\\
\textbf{Input:} Graph \(G = (V, E)\), with each vertex \(v \in V\) having a positive integer capacity \(c(v)\), and an integer \(m\). \\
\textbf{Question:}  Is there a set \(S \subseteq V\) with \(|S| \le m\), and an assignment of each edge to an incident vertex in \(S\) such that no vertex \(v \in S\) has more than \(c(v)\) edges assigned to it?
\end{minipage}}
\vspace{3mm}

As before, we perform dynamic programming on the triangulated graph. We root its weak dual at an arbitrary node whose corresponding triangle is incident to an outer-face link $e_0 = a_0 b_0$, and we use the same notation as in the previous sections. For every non-root triangle $\triangle_t$, let $\lambda_t = a_t b_t$ be the triangulation link shared with its parent, and let $c_t$ be the third vertex of the triangle. For each link $\lambda_t = a_t b_t$, we define $A_t$ to be the set of vertices $x \in P_t \setminus \{a_t, b_t\}$ for which there exists an edge $xy \in E$ with $y \notin P_t$. Let $B_t = \{a_t, b_t\} \cup A_t$. Then $|A_t| \le k$ and $|B_t| \le k+2$. We also define the set of crossing edges as $X_t = \{xy \in E : x \in P_t \setminus \{a_t, b_t\},\ y \notin P_t\}$, and again $|X_t| \le k$. Edges incident to $a_t$ or $b_t$ that leave $P_t$ are not in $X_t$, since they do not cross $\lambda_t$.

The key observation is that, for a vertex \(x\in A_t\), every edge from \(x\) to the outside of \(P_t\) belongs to \(X_t\). Hence the entire future interaction of \(x\) with the outside is explicitly encoded by the states of the crossing edges incident with \(x\). In particular, there is no need to store an exact residual capacity for \(x\). It is enough to remember whether \(x\) has already been selected into the cover \(S\). By contrast, the link endpoints \(a_t\) and \(b_t\) may have many edges to vertices outside \(P_t\) that do not cross \(\lambda_t\), and therefore their exact current load must be stored explicitly.

\paragraph{Table definition.}
We now define the DP table. For each non-root node \(t\), a state consists of four parts. First, \(\sigma_a\) is the state of \(a_t\), and \(\sigma_b\) is the state of \(b_t\). Each of them is either \(F\), meaning that the current load is zero and the vertex need not yet be regarded as selected, or \(S(\ell)\), meaning that the vertex has already been selected and exactly \(\ell\) already processed edges are assigned to it. Here \(0\le \ell\le \min\{c(v),d_G(v)\}\le n-1\). Second, for every \(x\in A_t\), we store a bit \(\chi_t(x)\in\{0,1\}\), where \(\chi_t(x)=1\) means that \(x\) is selected into \(S\), and \(\chi_t(x)=0\) means that it is not selected. Third, for every crossing edge \(f=xy\in X_t\), where \(x\in P_t\setminus\{a_t,b_t\}\) is the inside endpoint and \(y\notin P_t\) is the outside endpoint, we store a value \(e_t(f)\in\{\mathrm{IN},\mathrm{OUT}\}\). The meaning is that \(\mathrm{IN}\) says that \(f\) is covered by \(x\), while \(\mathrm{OUT}\) says that \(f\) is covered by \(y\).

Let \(E_t=E(G[P_t])\cup X_t\). We define \(Q[t,\sigma_a,\sigma_b,\chi_t,e_t]\) to be the minimum value of \(|S\cap (P_t\setminus B_t)|\) over all pairs \((S,\varphi)\) such that \(S\subseteq P_t\), the mapping \(\varphi\) assigns every edge of \(E_t\) to one of its endpoints, every internal edge \(uv\in E(G[P_t])\) is assigned to a vertex in \(\{u,v\}\cap S\), every crossing edge \(xy\in X_t\) is assigned according to \(e_t\), the load of every vertex in \(S\cap P_t\) satisfies the capacity constraint, every \(x\in A_t\) satisfies \(x\in S\) if and only if \(\chi_t(x)=1\), and the states \(\sigma_a,\sigma_b\) describe exactly the current loads of \(a_t\) and \(b_t\). If no such pair \((S,\varphi)\) exists, then \(Q[t,\sigma_a,\sigma_b,\chi_t,e_t]=+\infty\). The table definition indicates vertices in \(B_t\) are not counted in the objective at node \(t\); they are counted later, at the first node where they cease to belong to the boundary vertices.

\paragraph{Leaf nodes.}
We next describe the initialization at a leaf. Let \(t\) be a leaf. Then the triangle \(\triangle_t\) is an ear triangle \((a,b,c)\), where \(\lambda_t=ab\) is the parent link, and \(P_t=\{a,b,c\}\). In this case \(A_t\subseteq\{c\}\), and \(X_t\) consists precisely of the edges from \(c\) to the outside that cross the link \(ab\). For every state \((\sigma_a,\sigma_b,\chi_t,e_t)\), we enumerate the assignments of the local edges among \(ab\), \(ac\), and \(bc\) that are present in \(G\). There are at most \(2^3\) such choices. The crossing edges are forced by \(e_t\). We compute the loads of \(a\), \(b\), and \(c\), and we check whether the endpoint states, the bit \(\chi_t(c)\) when \(c\in A_t\), and all capacity constraints are satisfied. If they are, then the corresponding table entry is set to 0 if \(c \in B_t\); if \(c \notin B_t\) and \(c \in S\) then set it to 1. Otherwise the value is set to \(+\infty\). Since the number of states at a leaf is \(O(n^2\cdot 2^{O(k)})\), the time cost for a leaf node is \(O(n^2\cdot 2^{O(k)})\).

\paragraph{Internal nodes.}
For an internal node \(t\), let its triangle \(\triangle_t\) be \((a,b,c)\), with parent link \(ab\), left child link \(ac\), and right child link \(cb\). We denote the children by \(t_L\) and \(t_R\). We merge a finite entry \(Q[t_L,\sigma^L_a,\sigma^L_c,\chi_L,\lambda_L]\) with a finite entry \(Q[t_R,\sigma^R_c,\sigma^R_b,\chi_R,\lambda_R]\). The first step is to check the compatibility of the shared vertex \(c\). If \(c\) carries loads \(\ell_c^L\) and \(\ell_c^R\) in the two child states, then the total child load is \(\ell_c^L+\ell_c^R\). If this exceeds \(c(c)\), the pair is infeasible. This is the reason why \(c\), being a link endpoint in both children, must carry exact load information in the child states.

Next we classify the edges that are first resolved at node \(t\). There are edges between the two child regions, edges from the left child region to \(b\), edges from the right child region to \(a\), and edges from \(c\) to vertices outside \(P_t\). The last type is crucial: these edges do not appear in the child states, because they share the endpoint \(c\) with the child links, and hence they first appear only at node \(t\) and cross the parent link \(ab\). Consider first an edge \(uv\) with \(u\) in the left child region and \(v\) in the right child region. Such an edge appears as a crossing edge in both children. The compatibility condition is that exactly one child labels it \(\mathrm{IN}\). Thus \((\mathrm{IN},\mathrm{OUT})\) and \((\mathrm{OUT},\mathrm{IN})\) are legal, while \((\mathrm{IN},\mathrm{IN})\) and \((\mathrm{OUT},\mathrm{OUT})\) are illegal. This decides uniquely which endpoint covers the edge. For every edge from the left child region to \(b\), if the left child labels it \(\mathrm{IN}\), then it is already covered by the left endpoint and nothing further happens at the parent. If the left child labels it \(\mathrm{OUT}\), then the edge is forced to be covered by \(b\). We let \(\Delta_b^{\mathrm{ch}}\) denote the number of such forced assignments to \(b\). Symmetrically we define \(\Delta_a^{\mathrm{ch}}\) from the right child.

We then enumerate the newly appearing crossing edges incident with \(c\). Let \(Y_c=\{cx\in E : x\notin P_t\}\). Since every such edge crosses the parent link \(ab\), \(|Y_c|\le k\). We enumerate a map \(\mu_c:Y_c\to\{\mathrm{IN},\mathrm{OUT}\}\), where \(\mathrm{IN}\) means that the edge is covered by \(c\), and \(\mathrm{OUT}\) means that it is left to the outside endpoint. Each \(\mathrm{IN}\)-edge in \(Y_c\) contributes one additional unit to the load of \(c\). Finally, we enumerate the assignments of the \emph{local} edges among \(ab\), \(ac\), and \(bc\) that first appear at node \(t\). Again this introduces only a constant factor. From these choices we derive the parent state. If \(\ell_a^L\) is the load of \(a\) inherited from the left child, then the parent load of \(a\) is \(\ell_a=\ell_a^L+\Delta_a^{\mathrm{ch}}+\Delta_a^{\mathrm{loc}}\), where \(\Delta_a^{\mathrm{loc}}\) counts the local edges assigned to \(a\). Similarly \(\ell_b=\ell_b^R+\Delta_b^{\mathrm{ch}}+\Delta_b^{\mathrm{loc}}\). For the shared vertex \(c\), the parent load is \(\ell_c=\ell_c^L+\ell_c^R+\Delta_c^{\mathrm{new}}+\Delta_c^{\mathrm{loc}}\), where \(\Delta_c^{\mathrm{new}}\) is the number of edges in \(Y_c\) labeled \(\mathrm{IN}\), and \(\Delta_c^{\mathrm{loc}}\) is the number of local edges assigned to \(c\). If any of these loads exceeds the corresponding capacity, the combination is infeasible. Otherwise the endpoint states of the parent are \(F\) when the load is zero and \(S(\ell)\) when the load is positive. For \(c\), if \(c\in A_t\), then for the parent \(\chi_t(c)=1\) if \(\ell_c>0\) and \(\chi_t(c)=0\) otherwise. If \(c\notin A_t\), then \(c\) is forgotten at this node, and if \(\ell_c>0\), then \(c\) contributes one unit to \(S\) at this node. For all other vertices of \(A_t\), their \(\chi\) values are simply inherited from the corresponding child.

The parent crossing-edge state \(e_t\) is obtained by inheriting all child crossing edges whose outside endpoint is still outside \(P_t\), and by adding the new labels from \(\mu_c\) on the edges of \(Y_c\). This yields a unique candidate parent state. We then update the table entry by taking the minimum of the current value and the sum of the two child values plus the number of selected vertices that are forgotten at node \(t\). This completes the recurrence. If one of the two children is absent and the corresponding child link lies on the outer face, we simply omit the missing child table and treat the corresponding local edge \(ac\) or \(bc\), if present in \(G\), as one of the local edges first introduced at node \(t\). The recurrence remains unchanged.

\paragraph{The root.}
At the root we use the outer-face link \(e_0=a_0b_0\). Then \(P_{e_0}=V\), hence \(X_{e_0}=\emptyset\) and \(A_{e_0}=\emptyset\). Therefore the table of the root has only states of the two endpoints \(a_0\) and \(b_0\). The final answer is obtained by taking the minimum of \(Q[r,\sigma_{a_0},\sigma_{b_0},\emptyset,\emptyset]\) over all states of the root and then adding \(1\) for each of \(a_0\) and \(b_0\) that is selected. The input is a yes-instance if and only if this minimum value is at most \(m\).

\paragraph{Correctness.}
We prove the correctness by induction on the rooted weak-dual. The induction invariant states that each table entry is exactly the minimum number of selected internal vertices among all partial solutions realizing the corresponding state. The base case is immediate from the constant-size brute-force initialization at a leaf. For the inductive step, take any feasible parent partial solution. Its restrictions to the two child regions yield feasible child states. Every edge between the two child regions is necessarily encoded by one \(\mathrm{IN}\) and one \(\mathrm{OUT}\). Every child edge to the opposite parent endpoint \(a\) or \(b\) contributes to \(\Delta_a^{\mathrm{ch}}\) or \(\Delta_b^{\mathrm{ch}}\), every new edge from \(c\) to the outside contributes to a choice of \(\mu_c\), and every local edge has a definite assignment. Hence every feasible parent solution is generated by the recurrence. Conversely, every compatible pair of child solutions together with valid choices for the new crossing edges and local edges clearly combine into a feasible parent solution, because every newly introduced edge is assigned to one of its endpoints and all capacities are checked explicitly. This proves the induction invariant.

\begin{theorem}
\textsc{Capacitated Vertex Cover} can be solved in time \(n^5 \cdot 2^{O(k)}\) on outer \(k\)-planar graphs. Therefore, the problem is FPT when parameterized by outer \(k\)-planarity.
\end{theorem}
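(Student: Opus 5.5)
The plan is to assemble the statement from the dynamic program already described in this subsection, and to supply the two missing pieces: a count of the table entries and of the work per merge, and the FPT preprocessing. First, by Lemma~\ref{crossing-lemma} we compute from the given outer $k$-planar drawing a triangulation $H$ in which every link is crossed by at most $k$ edges of $G$. We then root its weak dual $D$ at a triangle incident with an outer-face link $e_0$; note that $D$ has $O(n)$ nodes. Correctness is exactly the induction stated in the paragraph \emph{Correctness} above, so the remaining task is a careful running-time count.

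\textbf{Counting states.} For a node $t$, each of $\sigma_a,\sigma_b$ ranges over $F$ and $S(\ell)$ with $0\le\ell\le n-1$, which gives at most $n+1$ values each. The vector $\chi_t$ has at most $2^{|A_t|}\le 2^k$ values, and $e_t$ has at most $2^{|X_t|}\le 2^k$ values. Hence the table of $t$ has $O(n^2\cdot 4^k)=O(n^2 2^{O(k)})$ entries, and a leaf is filled in $O(n^2 2^{O(k)})$ time, as already noted.

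\textbf{Cost of a merge.} At an internal node we iterate over all pairs of finite child entries, of which there are $O(n^4\cdot 2^{O(k)})$. For each pair we carry out the following steps:
\begin{enumerate}
\item check the load of $c$ in time $O(1)$;
\item check the $\mathrm{IN}/\mathrm{OUT}$ consistency of edges running between the two child regions, and count $\Delta^{\mathrm{ch}}_a$ and $\Delta^{\mathrm{ch}}_b$, in time $O(k)$, since all such edges lie in $X_{t_L}\cup X_{t_R}$;
\item enumerate $\mu_c$ over $Y_c$, which gives $2^{|Y_c|}\le 2^k$ choices because every edge of $Y_c$ crosses $ab$;
\item enumerate the at most $2^3$ assignments of the local edges;
\item build the parent state in $O(k)$ time and update it by a table lookup.
\end{enumerate}
This gives $O(n^4 2^{O(k)})$ per node and $O(n^5 2^{O(k)})$ overall. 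Computing the triangulation adds only $O(nk)$ plus the time to build the intersection graph, which is polynomial, and reading off the root answer costs $O(n^2)$.

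\textbf{Main obstacle.} The delicate step is ensuring that the merge really enumerates only $2^{O(k)}$ choices beyond the child pairs, that is, that no edge is left unaccounted for and no new unbounded enumeration creeps in. For this I will verify that every edge of $E_t\setminus(E_{t_L}\cup E_{t_R})$ is one of three kinds: a local triangle edge, an edge in $Y_c$, or an edge of a child's crossing set whose outside endpoint lies in $P_t$. In the last case the endpoint is in the other child region or is $a$ or $b$, since edges from $P_{t_L}$ to $P_{t_R}\setminus\{c\}$ must cross $ac$. Edges incident with $a$ or $b$ that leave $P_t$ are deferred, which is exactly why their loads are stored exactly.

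I will also check that for $x\in A_t$ every edge to the outside lies in $X_t$, so the bit $\chi_t(x)$ together with $e_t$ suffices, and that a vertex ceasing to be boundary is counted once. This is the one point where the capacity of $x$ needs care: its outside edges are all fixed by $e_t$ before $x$ is forgotten. I will therefore additionally record the exact load of interface vertices implicitly. Concretely, the load of $x$ from outside edges is determined by the $\mathrm{IN}$ labels, so capacity can be checked when the edge is created.
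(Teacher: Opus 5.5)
Your proposal is correct and follows the paper's argument essentially verbatim. The paper likewise gets $O(n^2\cdot 4^k)$ states per node from the exact endpoint loads, the bits $\chi_t$, and the $\mathrm{IN}/\mathrm{OUT}$ labels on $X_t$; it then multiplies $O(n^4\cdot 2^{O(k)})$ child pairs by $2^{|Y_c|}\le 2^k$ labelings of $Y_c$ and $O(1)$ local-edge assignments, with $O(k)$ work each, and sums over the $O(n)$ nodes of the weak dual. Your observation about interface vertices matches the paper's recurrence. Their load is fixed and checked once, at the node where they are the third vertex $c$, via $\ell_c=\ell_c^L+\ell_c^R+\Delta_c^{\mathrm{new}}+\Delta_c^{\mathrm{loc}}$, so no load needs to be stored for $A_t$.
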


\begin{proof}
For a fixed node \(t\), the two endpoint states of link \(\lambda_t\) contribute a factor \(O(n^2)\). The \(\chi_t\) values on \(A_t\) contribute at most \(2^{|A_t|}\le 2^k\). The crossing-edge labels contribute at most \(2^{|X_t|}\le 2^k\). Thus each table has size \(O(n^2\cdot 4^k)\). At an internal node, we enumerate pairs of child table entries, hence \(O(n^4\cdot 16^k)\) combinations, and for each such pair we enumerate the labels on the at most \(k\) new edges of \(Y_c\) and a constant number of assignments of local edges. The work needed to check compatibility and build the parent state is \(O(k)\). Therefore the running time per internal node is \(O(n^4\cdot 2^{O(k)})\). Since the weak dual has \(O(n)\) nodes, the total running time is \(O(n^5\cdot 2^{O(k)})\). Therefore, \textsc{Capacitated Vertex Cover} is FPT when parameterized by outer \(k\)-planarity.
\end{proof}

\subsection{Outdegree Orientations and Flows}\label{TOO}
In this section, we study several orientation problems, including \textsc{Target Outdegree Orientation}, \textsc{Maximum Outdegree Orientation}, \textsc{Minimum Outdegree Orientation}, and \textsc{Circulating Orientation}. The \textsc{Circulating Orientation} problem is also called \textsc{Flow Orientation}. Closely related to these are flow problems such as \textsc{All-or-Nothing Flow}.

We first present an FPT algorithm for \textsc{Target Outdegree Orientation} on outer $k$-planar graphs, parameterized by $k$. Then we describe how to modify this algorithm slightly to obtain FPT algorithms for \textsc{Maximum/Minimum Outdegree Orientation}. Finally, we give reductions from  \textsc{All-or-Nothing Flow} and \textsc{Circulating Orientation} to \textsc{Target Outdegree Orientation}, and show that these two problems are FPT as well.

Given a directed graph $G = (V,E)$ with edge weights $w(e)$, the weighted outdegree of a vertex $v$ is defined as the sum of the weights of all arcs leaving $v$, that is, $\sum_{vx \in E} w(vx)$.

\vspace{3mm}
\noindent\fbox{
\begin{minipage}{0.97\textwidth}
\textsc{Target Outdegree Orientation}\\
\textbf{Input:} Undirected graph \(G = (V, E)\), a positive integer weight \(w(e)\) for each edge \(e \in E\), and for each vertex \(v \in V\), a positive integer target \(t(v)\).\\
\textbf{Question:}  Is there an orientation of \(G\), such that for each \(v\), the weighted outdegree of \(v\) equals \(t(v)\)?
\end{minipage}}
\vspace{3mm}

The \textsc{Minimum Outdegree Orientation} and \textsc{Maximum Outdegree Orientation} problems are defined analogously, except that the weighted outdegree of each vertex $v$ is required to be at least $t(v)$, respectively at most $t(v)$. In addition to being XNLP-complete when parameterized by pathwidth and XALP-complete when parameterized by treewidth, all of these three problems are known to be FPT when parameterized by tree-partition width~\cite{xnlp-flows}.

As before, we perform dynamic programming on the triangulated graph. We root the weak dual of the triangulation at a node whose corresponding triangle is incident to an outer-face link $e_0 = a_0 b_0$. We use the same notation as in the previous sections. For every non-root triangle \(\triangle_t\), let \(\lambda_t=a_tb_t\) be the triangulation link shared with its parent, and let \(c_t\) be the third vertex of the triangle. As before, we denote by \(P_t\) the set of vertices on the side of \(\lambda_t\) corresponding to the subtree rooted at \(t\), i.e. the vertex set of the subpolygon cut off by \(\lambda_t\), including \(a_t\) and \(b_t\).

For each link \(\lambda_t=a_tb_t\), we define the set of crossing edges by \(X_t=\{xy\in E : x\in P_t\setminus\{a_t,b_t\},\ y\notin P_t\}\). Since every such edge crosses the link \(\lambda_t\), we have \(|X_t|\le k\). We define the set of interface vertices by \(A_t=\{x\in P_t\setminus\{a_t,b_t\} : \exists xy\in E,\ y\notin P_t\}\). Clearly \(|A_t|\le |X_t|\le k\). Note that an edge incident with \(a_t\) or \(b_t\) and leaving \(P_t\) cannot cross the link \(\lambda_t\).

The key observation is that, for every vertex \(x\in A_t\), all edges from \(x\) to the outside of \(P_t\) belong to \(X_t\). Hence, once the directions of the edges in \(X_t\) are fixed, the entire contribution of the outside to the weighted outdegree of \(x\) is already fixed as well. Therefore, no additional residual value needs to be stored for vertices of \(A_t\). By contrast, the endpoints \(a_t\) and \(b_t\) may have many edges to vertices outside \(P_t\) that do not belong to \(X_t\), and therefore exact residual target values must be stored for them explicitly.

\paragraph{Table definition.}
For every non-root triangle \(\triangle_t\), we define a table \(Q[t,\alpha,\beta,e]\). Here \(\alpha\in \{0,1,\dots,t(a_t)\}\) denotes how much weighted outdegree vertex \(a_t\) still needs to obtain from edges outside the current subpolygon \(P_t\), and \(\beta\in \{0,1,\dots,t(b_t)\}\) is defined analogously for \(b_t\). Furthermore, \(e\) is a function on \(X_t\) with values in \(\{0,1\}\). For an edge \(xy\in X_t\), where \(x\in P_t\setminus\{a_t,b_t\}\) is the inside endpoint and \(y\notin P_t\) is the outside endpoint, the value \(e(xy)=1\) means that the edge is oriented as \(x\to y\), while \(e(xy)=0\) means that it is oriented as \(y\to x\).

We define \(Q[t,\alpha,\beta,e]=1\) if and only if there exists an orientation of the edge set \(E(G[P_t])\cup X_t\) such that every vertex \(v\in P_t\setminus\{a_t,b_t\}\) has weighted outdegree exactly \(t(v)\), vertex \(a_t\) has weighted outdegree exactly \(t(a_t)-\alpha\), vertex \(b_t\) has weighted outdegree exactly \(t(b_t)-\beta\), and every edge in \(X_t\) is oriented according to \(e\). Otherwise we set \(Q[t,\alpha,\beta,e]=0\). Since every edge incident with a vertex in \(P_t\setminus\{a_t,b_t\}\) already belongs to \(E(G[P_t])\cup X_t\), the definition implies the weighted outdegree of every vertex in \(P_t\setminus\{a_t,b_t\}\) is fully determined and must already equal its target.

\paragraph{Leaf nodes.}
We first describe the initialization for a leaf. Let \(t\) be a leaf of the rooted weak-dual. Then the triangle \(\triangle_t\) is an ear triangle with vertices \(a\), \(b\), and \(c\), where \(\lambda_t=ab\) is the parent link, and \(P_t=\{a,b,c\}\). In this case \(X_t\) consists precisely of the edges from \(c\) to vertices outside \(P_t\) that cross the link \(ab\), and \(|X_t|\le k\). To determine whether \(Q[t,\alpha,\beta,e]=1\), we only need to enumerate the directions of the \emph{local} edges among \(ab\), \(ac\), and \(bc\) that are present in \(G\). There are at most \(2^3\) such choices. The directions of edges in \(X_t\) are already forced by \(e\). For every orientation of the local edges, we compute the weighted outdegrees of \(a\), \(b\), and \(c\), and we check whether they are equal to \(t(a)-\alpha\), \(t(b)-\beta\), and \(t(c)\), respectively. If at least one local orientation satisfies these equalities, then \(Q[t,\alpha,\beta,e]=1\), and otherwise \(Q[t,\alpha,\beta,e]=0\).

\paragraph{Internal nodes.}
For an internal triangle \(\triangle_t\), let its parent link be \(\lambda_t=ab\), and let the third vertex of the triangle be \(c\). Suppose first that both children exist. Then the left child corresponds to the link \(\lambda_L=ac\), and the right child corresponds to the link \(\lambda_R=cb\). We will derive entries of the form \(Q[t,\alpha,\beta,e]\) from entries \(Q[t_L,\alpha_L,\gamma_L,e_L]\) and \(Q[t_R,\gamma_R,\beta_R,e_R]\), where \(\gamma_L\) and \(\gamma_R\) are the residual targets of the shared vertex \(c\) in the two child tables.

The edges that are first handled at node \(t\) fall into five groups. The first group consists of edges whose one endpoint lies in the strict interior of the left child region and whose other endpoint lies in the strict interior of the right child region. Every such edge appears in both \(X_{t_L}\) and \(X_{t_R}\). The second group consists of edges from the strict interior of the left child region to \(b\). The third group consists of edges from the strict interior of the right child region to \(a\). The fourth group consists of the edges \(Y_c=\{cz\in E : z\notin P_t\}\). These edges do not appear in the child tables because they share the endpoint \(c\) with the child links \(ac\) and \(cb\), and they therefore cross only the parent link \(ab\). The fifth group consists of local triangle edges that are first introduced at \(\triangle_t\), namely \(ab\), and also \(ac\) or \(bc\) when the corresponding child is absent.

We first impose consistency on the edges between the two child regions. Consider such an edge \(uv\), where \(u\) lies in the strict interior of the left child region and \(v\) lies in the strict interior of the right child region. In the left child table, the edge is viewed as an edge from an inside endpoint to an outside endpoint, while in the right child table the same edge is viewed in the opposite way. Thus the two child labels must be complementary. If we encode by \(1\) that the inside endpoint is the tail, then we require \(e_L(uv)+e_R(uv)=1\). If this condition fails for some such edge, then the two child entries are incompatible.

We next determine the residual values for the endpoints \(a\) and \(b\). The left child already realizes weighted outdegree \(t(a)-\alpha_L\) for \(a\). At node \(t\), additional outgoing contribution to \(a\) may arise from edges of the third group that are oriented as \(a\to y\), and from local edges first introduced at \(t\) that are oriented away from \(a\). Let \(\Delta_a\) denote the total weight of all such new outgoing edges of \(a\). Then the parent residual for \(a\) must be \(\alpha=\alpha_L-\Delta_a\). If \(\alpha<0\), then the contribution to \(a\) already exceeds its target, and the combination is infeasible. Symmetrically, if \(\Delta_b\) denotes the total weight of all newly introduced outgoing edges of \(b\), then the parent residual for \(b\) must be \(\beta=\beta_R-\Delta_b\), and again \(\beta\) must be non-negative.

The shared vertex \(c\) is handled by a balancing condition. In the left child table, the edges already represented there contribute weighted outdegree \(t(c)-\gamma_L\) to \(c\). In the right child table, the edges already represented there contribute weighted outdegree \(t(c)-\gamma_R\) to \(c\). Hence, after combining the child subproblems, the already realized contribution to \(c\) is \(2t(c)-\gamma_L-\gamma_R\). In order for the final weighted outdegree of \(c\) to be exactly \(t(c)\), the total weight of all newly introduced edges oriented away from \(c\) at node \(t\) must therefore equal \(\gamma_L+\gamma_R-t(c)\). These newly introduced outgoing edges of \(c\) come from the set \(Y_c\) and from the local edges first introduced at \(t\). If \(\Delta_c\) denotes the total weight of those new edges that are oriented away from \(c\), then we require \(\Delta_c=\gamma_L+\gamma_R-t(c)\). If the right-hand side is negative, then the two children already force too much weighted outdegree on \(c\), and the combination is infeasible.

The crossing-edge state \(e\) for the parent problem is obtained by inheriting from \(e_L\) and \(e_R\), where all child crossing edges whose outside endpoint remains outside \(P_t\), and by adding the new directions chosen for the edges of \(Y_c\). The edges between the two child regions do not belong to the parent crossing set \(X_t\), because both of their endpoints are already inside \(P_t\). The recurrence is now straightforward. We enumerate all finite pairs of child entries \(Q[t_L,\alpha_L,\gamma_L,e_L]\) and \(Q[t_R,\gamma_R,\beta_R,e_R]\). For each pair, we check the consistency of the edges between the child regions. We then enumerate all orientations of the edges in \(Y_c\), and all orientations of the local edges first introduced at \(t\). For each such enumeration, we compute \(\Delta_a\), \(\Delta_b\), and \(\Delta_c\), and we test whether \(\alpha=\alpha_L-\Delta_a\) and \(\beta=\beta_R-\Delta_b\) are non-negative and whether \(\Delta_c=\gamma_L+\gamma_R-t(c)\). If all conditions are satisfied, we construct the parent function \(e\) as described above and set the table entry \(Q[t,\alpha,\beta,e]=1\). If one of the two children is absent, then the same recurrence applies with the missing child omitted. In that case the missing side contributes no previously realized outdegree. Thus, for the endpoint on that side, the starting residual is simply its full target, and the corresponding local edge, if present in \(G\), is treated as one of the local edges first introduced at node \(t\). The leaf case is actually the special case in which both children are absent.

\paragraph{The root.}
At the root we use the fixed outer-face link \(e_0=a_0b_0\). Since no edge can cross an outer-face link, we have \(X_r=\emptyset\) for the root triangle \(\triangle_r\). Furthermore, \(P_r=V\), and hence the entire graph is represented in the root table. The input instance is a yes-instance if and only if \(Q[r,0,0,\emptyset]=1\). The two zero residuals express that the endpoints \(a_0\) and \(b_0\) need no further outgoing contribution from outside, because at the root there is no outside.

\paragraph{Correctness.}
We prove the correctness by induction on the rooted weak-dual. The induction invariant states that \(Q[t,\alpha,\beta,e]=1\) if and only if there exists an orientation of \(E(G[P_t])\cup X_t\) realizing the state \((\alpha,\beta,e)\) exactly as specified in the definition of the table. The base case holds because at a leaf the subproblem contains only the ear triangle and its crossing edges, and all local edge orientations are checked exhaustively. For the inductive step, consider first a feasible parent solution. Restricting it to the left and right child regions yields feasible child states. Every edge between the child regions must induce complementary labels in the two child tables. Every edge from a child region to the opposite endpoint contributes to \(\Delta_a\) or \(\Delta_b\). Every edge from \(c\) to the outside of \(P_t\) appears for the first time at the parent and contributes to \(\Delta_c\). The equalities for \(\alpha\), \(\beta\), and \(\Delta_c\) are therefore necessary. Hence every feasible parent solution is generated by the recurrence. Conversely, if the recurrence accepts a pair of child states and a set of new edge directions, then the child orientations can be combined because the shared edges between the child regions are consistent, the newly introduced edges are oriented explicitly, and the balancing equations guarantee that the weighted outdegrees of \(a\), \(b\), and \(c\) are correct. Thus every accepted combination yields a feasible parent solution. This proves the induction invariant.

\begin{theorem}
    \textsc{Target Outdegree Orientation} can be solved in time \(2^{O(k)}\cdot n^{O(1)}\) on outer \(k\)-planar graphs, assuming that all edge weights and targets are encoded in unary.
\end{theorem}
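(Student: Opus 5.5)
The algorithm is the dynamic program described just above. It runs over the rooted weak dual $D$ of the triangulation from Lemma~\ref{crossing-lemma}. Its correctness is the induction invariant already established, so what remains is the running time.

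\textbf{Table size.} For a node $t$, the residuals satisfy $\alpha\in\{0,\dots,t(a_t)\}$ and $\beta\in\{0,\dots,t(b_t)\}$. Since targets are in unary, $t(v)\le N$, where $N$ is the input size, so these contribute $O(N^2)$ combinations. The function $e$ ranges over $\{0,1\}^{X_t}$, and $|X_t|\le k$ because every edge of $X_t$ crosses $\lambda_t$ and Lemma~\ref{crossing-lemma} bounds these crossings by $k$. Hence each table has at most $O(N^2\cdot 2^k)$ entries. We store only the entries equal to $1$, which can be computed by forward propagation.

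\textbf{Work per internal node.} Pairs of child entries number at most $O(N^4\cdot 4^k)$. For each pair, the complementarity check on edges running between the two child regions takes $O(k)$ time, since such edges lie in $X_{t_L}$. Next we enumerate orientations of $Y_c$. We have $|Y_c|\le k$, because these edges cross $\lambda_t$, so this gives $2^k$ choices. We also enumerate the at most three local edges, a constant factor.

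The edges of the second and third groups need no enumeration. An edge from the strict interior of the left region to $b$ lies in $X_{t_L}$, so its orientation is already fixed by $e_L$. The same holds symmetrically for the third group. These edges therefore contribute deterministically to $\Delta_a$ and $\Delta_b$, and there are at most $k$ of them per side.

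Computing $\Delta_a,\Delta_b,\Delta_c$ and the inherited $e$ takes $O(k)$ time, since all weights are polynomial and arithmetic is cheap. Each internal node therefore costs $O(N^4\cdot 2^{O(k)})$. Leaves and one-child nodes are special cases with fewer factors.

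\textbf{Total.} $D$ has $O(n)$ nodes, so the total time is $2^{O(k)}\cdot N^{O(1)}$. Computing the triangulation beforehand takes $O(nk)$ time given the intersection graph, which is itself polynomial. The answer is then read off as $Q[r,0,0,\emptyset]$.

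The only point requiring care is that the unary encoding is what bounds $\alpha$ and $\beta$ polynomially. Without it the residual range at $a_t,b_t$ would be exponential in the input length. With binary weights the same DP would be only pseudo-polynomial, which is why the hypothesis is needed.
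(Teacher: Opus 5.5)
Your proposal is correct and follows the paper's own running-time argument. Both bound the table size by $O((t(a_t)+1)(t(b_t)+1)\cdot 2^{k})$ using unary targets and $|X_t|\le k$, bound the merge cost by the number of child-entry pairs times $2^{|Y_c|}$ times a constant, and sum over the $O(n)$ nodes of the weak dual. Your additional remarks are accurate refinements of that same argument: that the second- and third-group edges are already fixed by $e_L$ and $e_R$, and that the polynomial factor should be measured in the (unary) input size $N$ rather than in $n$.
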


\begin{proof}
For a fixed node \(t\), the values of \(\alpha\) and \(\beta\) range over \(0,\dots,t(a_t)\) and \(0,\dots,t(b_t)\), respectively, and \(e\) ranges over at most \(2^{|X_t|}\le 2^k\) possibilities. Hence the table size at \(t\) is \(O((t(a_t)+1)\cdot(t(b_t)+1)\cdot 2^k)\). Since the targets are given in unary, this is polynomial in the input size times \(2^k\). At an internal node, the recurrence enumerates pairs of child table entries, and for each pair it enumerates the at most \(2^{|Y_c|}\le 2^k\) directions of the edges in \(Y_c\) and a constant number of local edge orientations. All compatibility checks and all computations of \(\Delta_a\), \(\Delta_b\), and \(\Delta_c\) take \(O(k)\) time. Therefore each internal node can be processed in time \(2^{O(k)}\cdot n^{O(1)}\). Since the weak dual of the triangulation has \(O(n)\) nodes, the total running time is \(2^{O(k)}\cdot n^{O(1)}\). Therefore, \textsc{Target Outdegree Orientation} is FPT on outer \(k\)-planar graphs when parameterized by \(k\).
\end{proof}

\begin{corollary}
\textsc{Minimum Outdegree Orientation} is FPT on outer \(k\)-planar graphs parameterized by \(k\), assuming that all edge weights and target lower bounds are encoded in unary.
\end{corollary}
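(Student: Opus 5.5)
The plan is to reuse the dynamic program for \textsc{Target Outdegree Orientation} almost verbatim, changing only the bookkeeping for vertices whose weighted outdegree is constrained from below. A reduction to the exact-target problem by guessing exact outdegrees is not feasible, because those values are unbounded. Instead, I would redefine the table entry $Q[t,\alpha,\beta,e]$ so that $\alpha$ is the \emph{deficit}. This is the amount of weighted outdegree that $a_t$ still needs to obtain from edges outside $P_t$, truncated at $0$: $\alpha=\max\{0,\,t(a_t)-\text{current outdegree of }a_t\}$, and $\beta$ is defined analogously for $b_t$. Thus $\alpha\in\{0,\dots,t(a_t)\}$, which is polynomially many values under the unary encoding. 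The value $\alpha=0$ now means "already satisfied", and any surplus is irrelevant. For interior vertices $v\in P_t\setminus\{a_t,b_t\}$ I require weighted outdegree at least $t(v)$. This is well-defined because, as before, every edge incident with such a vertex lies in $E(G[P_t])\cup X_t$, so its outdegree is already determined by the state $e$ together with the inside orientation.

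Next I would adapt the three kinds of nodes. At a leaf, I enumerate the at most $2^3$ orientations of the local edges, with the crossing edges fixed by $e$. I then check that $c$ reaches outdegree at least $t(c)$, and set $\alpha,\beta$ to the truncated deficits of $a$ and $b$.

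At an internal node, consistency on edges between the two child regions is unchanged (complementary labels). The parent deficits become $\alpha=\max\{0,\alpha_L-\Delta_a\}$ and $\beta=\max\{0,\beta_R-\Delta_b\}$, and no infeasibility arises from overshooting.

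The shared vertex $c$ is where the argument needs care. Its outdegree from the left region is $t(c)-\gamma_L$ only when $\gamma_L>0$; otherwise it is some unknown value that is at least $t(c)$. The balancing equation is therefore replaced by the condition that $c$ is satisfied, but this must be justified. If either child reports deficit $0$, the outdegree contributed by that side alone already meets $t(c)$, so $c$ is satisfied. Otherwise both contributions are exact, and I require $(t(c)-\gamma_L)+(t(c)-\gamma_R)+\Delta_c\ge t(c)$. Since all weights are positive and contributions only add, a truncated deficit never loses information relevant to a lower bound. This monotonicity is the key point. At the root, I accept if and only if $Q[r,0,0,\emptyset]=1$.

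The main obstacle is correctness of the truncation, that is, showing that the induction invariant survives replacing exact residuals by truncated deficits. I would state the invariant as follows: $Q[t,\alpha,\beta,e]=1$ if and only if some orientation of $E(G[P_t])\cup X_t$ consistent with $e$ satisfies every interior vertex and leaves $a_t,b_t$ with truncated deficits exactly $\alpha,\beta$. The proof of the invariant then mirrors the exact-target case, using the fact that the map $x\mapsto\max\{0,x-d\}$ composes correctly under addition of nonnegative contributions. The running time analysis is identical to that of the preceding theorem: table size $O(t(a_t)\,t(b_t)\,2^k)$ and $2^{O(k)}$ enumeration per pair of child entries, giving $2^{O(k)}n^{O(1)}$ in total.
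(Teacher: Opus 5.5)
Your proposal is correct and follows the paper's proof. Both keep the \textsc{Target Outdegree Orientation} dynamic program unchanged except for the link endpoints, whose residuals become truncated deficits $\max\{0,t(v)-p(v)\}$ updated as $\max\{0,\rho-\Delta\}$; lower bounds are checked for vertices that stop being link endpoints, and the root accepts when both residuals are $0$. Your explicit case analysis for the shared vertex $c$ (a zero deficit from either child already certifies $t(c)$, which is equivalent to requiring $\Delta_c\ge\gamma_L+\gamma_R-t(c)$) makes precise a step the paper only asserts when it says the full outdegree of such a vertex is ``known''.
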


\begin{proof}
We simply modify the algorithm for \textsc{Target Outdegree Orientation}. Recall that, for a triangulation link \(\lambda=a_tb_t\), the algorithm stores exact residual values for the two endpoints \(a_t\) and \(b_t\), and stores the orientations of all crossing edges whose inner endpoint is not one of \(a_t,b_t\). For \textsc{Minimum Outdegree Orientation}, the same triangulated graph, the same subpolygons \(P_t\), and the same crossing-edge states can be used. The only change is the interpretation of the residual value of a link endpoint. For a vertex \(v\in\{a_t,b_t\}\), let \(p_{\lambda}(v)\) denote the total weight of edges already oriented out of \(v\) inside the current subpolygon \(P_t\). We store the remaining deficit \(\rho_{\lambda}(v)=\max\{0,t(v)-p_{\lambda}(v)\}\), thus \(\rho_{\lambda}(v)\) is the amount of outgoing weight that \(v\) still needs from the part outside the current subpolygon in order to satisfy the lower bound.

When two child subproblems are merged, suppose that the newly processed edges contribute total outgoing weight \(\Delta(v)\) to a link endpoint \(v\). The updated residual is then \(\rho'(v)=\max\{0,\rho(v)-\Delta(v)\}\). A transition is feasible whenever all local orientations are compatible with the crossing-edge states and the resulting residual values are in their allowed ranges. For every other boundary vertex \(x\in B_t\setminus\{a_t,b_t\}\), all edges from \(x\) to the outside of the current subpolygon are among the crossing edges of the link, and their orientations are already fixed by the crossing-edge state. Hence the full weighted outdegree of \(x\) is known at the moment when \(x\) is no longer a link endpoint, and we only need to check whether it is at least \(t(x)\).

At the root, there is no outside part left. Therefore a table entry is accepting if the final residual value of every remaining outer-face link endpoint is \(0\). The DP table size is unchanged, and the number of crossing-edge states is still \(2^{O(k)}\). Hence the running time remains \(2^{O(k)}\cdot n^{O(1)}\).
\end{proof}

\begin{corollary}
\textsc{Maximum Outdegree Orientation} is FPT on outer \(k\)-planar graphs parameterized by \(k\), assuming that all edge weights and target upper bounds are encoded in unary.
\end{corollary}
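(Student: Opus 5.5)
The plan mirrors the proof of the corollary for \textsc{Minimum Outdegree Orientation}: reuse the triangulation-based dynamic program for \textsc{Target Outdegree Orientation} and change only how the residual values at the link endpoints are interpreted. Fix the triangulation from Lemma~\ref{crossing-lemma}, root the weak dual at a triangle incident with an uncrossed outer-face link $e_0=a_0b_0$, and keep the subpolygons $P_t$, the crossing-edge sets $X_t$ (with $|X_t|\le k$), and the orientation labels $e\colon X_t\to\{0,1\}$ unchanged. For each link endpoint $v\in\{a_t,b_t\}$, let $p_t(v)$ be the total weight of the edges inside the current partial orientation that are oriented out of $v$. Instead of storing a residual target, store the remaining slack $\rho_t(v)=t(v)-p_t(v)$, which is required to lie in $\{0,1,\dots,t(v)\}$. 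A state is infeasible as soon as $p_t(v)>t(v)$. This monotonicity makes the update simpler than in the minimum case: outgoing weight only grows, so an upper-bound violation can be detected and discarded immediately, and no truncation via $\max\{0,\cdot\}$ is needed. The table entry $Q[t,\rho_a,\rho_b,e]$ records whether some orientation of $E(G[P_t])\cup X_t$ agrees with $e$, gives every vertex of $P_t\setminus\{a_t,b_t\}$ weighted outdegree at most its bound, and leaves slacks exactly $\rho_a,\rho_b$.

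For the transitions I follow the five edge groups of the target algorithm. Edges between the two child regions must receive complementary labels in the two child tables. Edges from the left region to $b$, edges from the right region to $a$, the edges in $Y_c$, and the newly introduced local triangle edges are enumerated; there are $2^{|Y_c|}\cdot O(1)\le 2^{k}\cdot O(1)$ choices. From these choices compute $\Delta_a$ and $\Delta_b$, the weight of the new edges oriented away from $a$ and $b$, and set $\rho_a=\rho_a^L-\Delta_a$ and $\rho_b=\rho_b^R-\Delta_b$, rejecting the combination if either value is negative.

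The step needing the most care is the shared vertex $c$, which stops being a link endpoint at node $t$. Its total outgoing weight must be recovered from the two child slacks: it equals $(t(c)-\rho^L_c)+(t(c)-\rho^R_c)+\Delta_c$, and the combination is accepted only if this is at most $t(c)$. Once $c$ leaves the endpoint role, every remaining edge from $c$ to the outside of $P_t$ lies in $Y_c\subseteq X_t$ and is already oriented by the parent state. Hence the full outdegree of $c$ is determined at this moment and is checked exactly once. The same argument applies to every interface vertex $x\in A_t$, as in the minimum-outdegree corollary. At the root, $X_r=\emptyset$, and the instance is a yes-instance if and only if $Q[r,\rho_a,\rho_b,\emptyset]=1$ for some nonnegative slacks.

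Correctness follows by the same induction on the weak dual as for \textsc{Target Outdegree Orientation}, with equalities replaced by inequalities. Restricting a feasible global orientation to $P_t$ yields a feasible state, and conversely every accepted combination glues together consistently. The table size is $O((t(a_t)+1)(t(b_t)+1)2^k)$, which is polynomial because weights and bounds are given in unary, so the total running time is $2^{O(k)}n^{O(1)}$.
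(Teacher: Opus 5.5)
Your proposal is correct and follows the paper's first argument. The paper stores the remaining allowance $t(v)-p(v)$ at each link endpoint, rejects any transition that makes it negative, checks the full outdegree of $c$ (and of every interface vertex) once it stops being a link endpoint, and accepts at the root for nonnegative allowances. Two small remarks: the edges from the left region to $b$ and from the right region to $a$ lie in $X_{t_L}$ and $X_{t_R}$, so their orientations are read off from $e_L,e_R$ rather than enumerated (your count $2^{|Y_c|}\cdot O(1)$ already treats them this way). The paper also gives an alternative reduction to \textsc{Minimum Outdegree Orientation} by reversing all orientations, with lower bounds $\max\{0,W(v)-t(v)\}$.
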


\begin{proof}
There are two ways to obtain the result. First, one can modify the algorithm for \textsc{Target Outdegree Orientation} directly. For a triangulation link \(\lambda=a_tb_t\), the crossing-edge state is kept exactly as before. For each link endpoint \(v\in\{a_t,b_t\}\), instead of storing a deficit, we store a remaining allowance. If \(p_{\lambda}(v)\) is the total weight of edges already oriented out of \(v\) inside the current subpolygon \(P_t\), then the stored value is \(\rho_{\lambda}(v)=t(v)-p_{\lambda}(v)\). A transition is feasible only if this value never becomes negative. When a merge introduces additional outgoing weight \(\Delta(v)\) at \(v\), the residual allowance is updated as \(\rho'(v)=\rho(v)-\Delta(v)\). For every boundary vertex \(x \in B_t \setminus \{a_t,b_t\}\), all its edges to the outside of the current subpolygon are represented among the crossing edges, so its final weighted outdegree is already known when it is no longer a link endpoint. At that moment we check that the weighted outdegree of \(x\) is at most \(t(x)\). At the root, it is enough that the remaining allowances of the link endpoints \(a_0\) and \(b_0\) are nonnegative. This gives the same type of \(2^{O(k)}\cdot n^{O(1)}\) algorithm as for \textsc{Target Outdegree Orientation}.

Second, one can also reduce \textsc{Maximum Outdegree Orientation} to \textsc{Minimum Outdegree Orientation} on the same graph. For each vertex \(v\), let \(W(v)=\sum_{e\in E,\, v\in e} w(e)\), i.e. the total weight of all edges incident with \(v\). Let \(\operatorname{out}_w(v)\) be the weighted outdegree of \(v\). An orientation satisfies \(\operatorname{out}_w(v)\le t(v)\) if and only if the reversed orientation satisfies \(\operatorname{out}_w(v)\ge W(v)-t(v)\). Hence we define a new lower bound \(t'(v)=\max\{0,W(v)-t(v)\}\) for every vertex \(v\). The graph, the edge weights, and the given outer \(k\)-planar drawing are unchanged. Therefore the FPT algorithm for \textsc{Minimum Outdegree Orientation} applies directly.
\end{proof}

Next, we show that the two problems \textsc{All-or-Nothing Flow} and \textsc{Circulating Orientation} are also FPT on outer $k$-planar graphs when parameterized by $k$.

A \emph{flow network} is a tuple $(G,s,t,c)$, where $G=(V,E)$ is a directed graph, $s,t \in V$ are two designated vertices, and $c : E \to \mathbb{Z}^+$ is a capacity function assigning a positive integer capacity to each arc. A \emph{flow} is a function $f : E \to \mathbb{N}$ such that $0 \le f(e) \le c(e)$ for every arc $e \in E$, and for every vertex $v \in V \setminus \{s,t\}$, the total incoming flow equals the total outgoing flow. The value of a flow is defined as the total flow leaving $s$ minus the total flow entering $s$. A flow is called \emph{all-or-nothing} if for every arc $e \in E$, we have $f(e) \in \{0, c(e)\}$.

We assume that there are no parallel arcs, and that for any pair of vertices $x,y \in V$, at most one of the arcs $xy$ and $yx$ is present in $E$. The \textsc{All-or-Nothing Flow} problem asks whether there exists a flow of a given value such that every edge carries either zero flow or its full capacity.

\vspace{3mm}
\noindent\fbox{
\begin{minipage}{0.97\textwidth}
\textsc{All-or-Nothing Flow}\\
\textbf{Input:} A flow network $(G,c,s,t)$ and an integer $r$. \(c(e)\) and \(r\) are encoded in unary. \\
\textbf{Question:} Is there an all-or-nothing flow from $s$ to $t$ of value exactly $r$ in $G$?
\end{minipage}}
\vspace{3mm}

We remark that the variant asking whether there exists a flow of value at least $r$ is equivalent in difficulty, since one can introduce a new source $s'$ and add an arc from $s'$ to $s$ with capacity $r$.

\begin{theorem}
\textsc{All-or-Nothing Flow} is FPT on outer \(k\)-planar graphs parameterized by \(k\).
\end{theorem}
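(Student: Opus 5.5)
We reduce \textsc{All-or-Nothing Flow} to \textsc{Target Outdegree Orientation}, keeping the given outer \(k\)-planar drawing essentially unchanged, and then invoke the \(2^{O(k)}\cdot n^{O(1)}\) algorithm from the previous theorem. The guiding observation is that each arc \(e=xy\) of capacity \(c(e)\) carries either \(0\) or \(c(e)\) flow. This binary choice can be modelled as orienting an undirected edge of weight \(c(e)\): the orientation \(x\to y\) means ``full flow'', and \(y\to x\) means ``no flow''.

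\textbf{Step 1: first-cut reduction.} Take the underlying undirected graph of \(G\) with weight \(w(xy)=c(xy)\). Orienting \(xy\) as \(x\to y\) contributes \(c(e)\) to the weighted outdegree of \(x\); orienting it as \(y\to x\) contributes \(c(e)\) to \(y\). Let \(f\) be a flow and define \(\operatorname{out}_f(v)\) and \(\operatorname{in}_f(v)\) as the flow leaving and entering \(v\). Then the weighted outdegree of \(v\) under this orientation is
\[
\operatorname{out}_f(v)+\bigl(C_{\mathrm{in}}(v)-\operatorname{in}_f(v)\bigr),
\]
where \(C_{\mathrm{in}}(v)\) is the total capacity of arcs entering \(v\). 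Flow conservation at \(v\neq s,t\) therefore becomes the exact target \(t(v)=C_{\mathrm{in}}(v)\). At \(s\) the flow value \(r\) fixes the target \(t(s)=C_{\mathrm{in}}(s)+r\), and at \(t\) the target is \(C_{\mathrm{in}}(t)-r\). The correspondence is a bijection between all-or-nothing flows of value \(r\) and orientations meeting these targets. Since capacities and \(r\) are encoded in unary, the weights and targets are in unary too.

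\textbf{Step 2: fixing positivity.} The orientation problem requires positive targets, but \(C_{\mathrm{in}}(v)\) may be \(0\) and \(t(t)\) may be nonpositive. If any computed target is negative, we answer no immediately. For a vertex \(v\) with target \(0\), we attach a pendant vertex \(v'\) via an edge of weight \(1\), increase \(t(v)\) by one, and set \(t(v')=1\). In any valid orientation \(v'\) must then orient its edge as \(v'\to v\), since its only way to reach outdegree \(1\) is through that edge. Consequently \(v\) gains nothing from the new edge, and the original constraint at \(v\) is preserved. We place \(v'\) on the circle immediately next to \(v\), so the new edge is crossed by no other edge and creates no crossings with existing edges. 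The drawing therefore stays outer \(k\)-planar. We apply the same gadget to any isolated vertices.

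\textbf{Step 3: the main obstacle.} The delicate points are the boundary bookkeeping at \(s\) and \(t\), together with the pendant gadget. One must check that the forced orientation of each pendant edge does not interfere with the constraint it is meant to protect, which is why \(t(v)\) is raised by exactly one. One must also check that the drawing modification preserves both the crossing bound and the availability of the triangulation from Lemma~\ref{crossing-lemma}, which can simply be recomputed in \(O(nk)\) time. Once these are verified, correctness follows from the bijection in Step 1. The running time is \(2^{O(k)}\cdot n^{O(1)}\), polynomial in the unary input size, so the problem is FPT parameterized by \(k\).
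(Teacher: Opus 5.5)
Your Step 1 is exactly the paper's reduction and it is correct: the underlying undirected graph with $w(e)=c(e)$, targets $t(v)=C_{\mathrm{in}}(v)$ for $v\neq s,t$, $t(s)=C_{\mathrm{in}}(s)+r$, $t(t)=C_{\mathrm{in}}(t)-r$, and the drawing left unchanged. Rejecting when a target is negative is also fine.

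The gap is Step 2: the pendant gadget breaks the equivalence you just established. With $t(v')=1$, the edge $vv'$ is forced to be oriented $v'\to v$. Its weight therefore counts toward the outdegree of $v'$, not of $v$, and as you say, $v$ gains nothing from it. But you have also raised $t(v)$ from $0$ to $1$. The new instance therefore requires $v$ to have weighted outdegree exactly $1$ on the original edges, whereas the original constraint requires $0$. Two consequences:
\begin{itemize}
\item Any isolated vertex you treat this way becomes unsatisfiable.
\item A vertex $v\neq s,t$ with no entering arcs, all of whose leaving arcs have capacity at least $2$, forces the transformed instance to be a no-instance, even when the original is a yes-instance (all those arcs carrying zero flow).
\end{itemize}
Step 3 lists exactly this as something ``one must check'', but the check fails.

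The repair is easy.
\begin{itemize}
\item The cleanest option is to delete Step 2 altogether. Nothing in the dynamic program for \textsc{Target Outdegree Orientation} uses positivity of targets: the residuals $\alpha,\beta$ range over $\{0,\dots,t(\cdot)\}$, and the balancing condition $\Delta_c=\gamma_L+\gamma_R-t(c)$ and the root test $Q[r,0,0,\emptyset]=1$ work unchanged when some targets are $0$. This is effectively what the paper does: it applies the reduction directly and invokes the algorithm.
\item If you want strictly positive targets, use a gadget that genuinely forces one unit of outdegree at $v$. Place two new vertices $v',v''$ immediately after $v$ on the circle and add the triangle $vv'v''$ with unit weights and targets $t(v')=t(v'')=1$. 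The three unit edges contribute total outdegree $3$, and $v',v''$ take exactly $1$ each, so $v$ receives exactly $1$ from the triangle in every valid orientation. A cyclic orientation of the triangle realizes this. Raising $t(v)$ by one is then correct. The triangle creates no crossings: $vv'$ and $v'v''$ join consecutive vertices, and the only vertex strictly between $v$ and $v''$ is $v'$, whose edges share an endpoint with $vv''$.
\end{itemize}
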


\begin{proof}
We use the same reduction from \textsc{All-or-Nothing Flow} to \textsc{Target Outdegree Orientation} as in~\cite{outerkplanar}. Let $(G,s,t,c)$ be a flow network, where $G=(V,E)$ is a directed outer \(k\)-planar graph, $c : E \to \mathbb{Z}^+$ is a capacity function, and $s,t \in V$ are the source and sink.

We construct an undirected graph $H=(V,F)$ by ignoring the directions of the arcs in $E$. For each edge $e \in F$, we assign a weight $w(e)$ equal to the capacity of the corresponding arc in $G$. For each vertex $v \in V \setminus \{s,t\}$, define \(t(v) = \sum_{xv \in E} c(xv)\). For the source and sink, set \(t(s) = \sum_{xs \in E} c(xs) + r\) and \(t(t) = \sum_{xt \in E} c(xt) - r\).

We show that there exists an all-or-nothing flow of value exactly $r$ in $G$ if and only if $H$ admits an orientation such that every vertex $v$ has weighted outdegree exactly $t(v)$.

\textbf{($\Rightarrow$)} Suppose there exists an all-or-nothing flow $f$ of value $r$ in $G$. We define an orientation of $H$ as follows. For each edge $\{x,y\} \in F$, if the corresponding arc $xy$ in $G$ carries flow $c(xy)$, we orient the edge from $x$ to $y$. Otherwise, we orient it from $y$ to $x$.

We verify that the weighted outdegree condition holds. Consider a vertex $v \in V \setminus \{s,t\}$. Let $\alpha$ be the total flow entering $v$. The total capacity of incoming arcs is $t(v)$, so the incoming arcs with zero flow contribute $t(v) - \alpha$ to the weighted outdegree, as they are oriented outward. The outgoing arcs carrying flow contribute $\alpha$, since they are oriented according to the direction of the flow. Hence the total weighted outdegree of $v$ equals $(t(v) - \alpha) + \alpha = t(v)$.

The cases for $s$ and $t$ are analogous, taking into account the definition of the flow value. It follows that the constructed orientation satisfies the required outdegree constraints.

\textbf{($\Leftarrow$)} Conversely, suppose $H$ has an orientation in which every vertex $v$ has weighted outdegree exactly $t(v)$. We construct a flow $f$ in $G$ by sending flow $c(xy)$ along an arc $xy$ if and only if the corresponding edge $\{x,y\}$ is oriented from $x$ to $y$ in $H$, and sending zero flow otherwise. For any vertex $v \in V \setminus \{s,t\}$, let $\beta$ be the total weight of arcs \(xv\) whose edges are directed as \(xv\). Then, the weight of the edges with an incoming arc to \(v\) which are directed out of \(v\) is \(t(v) - \beta\), so the weight of the edges \(vy\) with an outgoing arc from \(v\) that are directed as \(vy\) is \(\beta\). Therefore, both the inflow and outflow of \(v\) equal \(\beta\), so the flow conservation is satisfied. The analyses for \(s\) and \(t\) are similar. Finally, the definitions of $t(s)$ and $t(t)$ ensure that the value of the flow is exactly $r$. Therefore, $f$ is an all-or-nothing flow of value $r$ in $G$. Note that the reduction does not change the drawing of \(G\), therefore \(H\) is still outer \(k\)-planar. This completes the proof. 
\end{proof}

The \textsc{Circulating Orientation} problem is defined as follows. In addition to being XNLP-complete when parameterized by pathwidth and XALP-complete when parameterized by treewidth, earlier results show that it is NP-hard even on planar graphs~\cite{DIDIMO201972}. It was also shown in~\cite{circulating-orientation} that the problem is W[1]-hard when parameterized by treewidth.

\vspace{3mm}
\noindent\fbox{
\begin{minipage}{0.97\textwidth}
\textsc{Circulating Orientation}\\
\textbf{Input:} A graph $G = (V, E)$ together with a positive integer weight $w(e)$ for each edge $e \in E$.\\
\textbf{Question:} Does there exist an orientation of $G$ such that, for every vertex $v \in V$, the weighted outdegree of $v$ equals its weighted indegree?
\end{minipage}}

\begin{theorem}
\textsc{Circulating Orientation} is FPT on outer \(k\)-planar graphs parameterized by \(k\), provided that the edge weights are encoded in unary.
\end{theorem}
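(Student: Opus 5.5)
The plan is a parameter-preserving reduction to \textsc{Target Outdegree Orientation} on the same graph with the same drawing, followed by an application of the $2^{O(k)}\cdot n^{O(1)}$ algorithm for that problem. Given an instance $(G,w)$ of \textsc{Circulating Orientation}, let $W(v)=\sum_{e\ni v} w(e)$ be the total weight of the edges incident with $v$. In any orientation, the weighted outdegree and indegree of $v$ sum to $W(v)$. Hence $v$ is balanced if and only if its weighted outdegree is exactly $W(v)/2$.

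The steps, in order, are as follows.
\begin{itemize}
\item First compute $W(v)$ for every $v$. If some $W(v)$ is odd, answer no immediately.
\item Otherwise, set the target $t(v)=W(v)/2$.
\item Vertices with $W(v)=0$ are isolated and can be deleted. This keeps every target a positive integer, as the problem definition requires, and deleting vertices does not increase crossings.
\item The correspondence is now immediate. An orientation is circulating if and only if every vertex has weighted outdegree $t(v)$. So the two instances are equivalent, with the same orientation serving as a solution for both.
\end{itemize}

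The reduction changes neither the graph nor the drawing. So the given outer $k$-planar drawing, and the triangulation from Lemma~\ref{crossing-lemma}, remain valid, and the parameter is unchanged. Since the weights are in unary, each $W(v)$, and therefore each target, is bounded by the unary input size. Thus the targets are also polynomially bounded, which is exactly the hypothesis (unary encoding of weights and targets) under which \textsc{Target Outdegree Orientation} was shown to be solvable in time $2^{O(k)}\cdot n^{O(1)}$. The total running time is therefore $2^{O(k)}\cdot n^{O(1)}$.

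The only points needing care are the parity check and the positivity requirement on targets. Both are handled by the preprocessing above, so I expect no real obstacle. Alternatively, one could avoid the positivity issue by observing that the dynamic program for \textsc{Target Outdegree Orientation} works verbatim with target $0$.
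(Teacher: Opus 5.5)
Your proposal is correct and matches the paper's proof: reject if some $W(v)$ is odd, otherwise set $t(v)=W(v)/2$ on the unchanged graph and drawing, and invoke the unary-weight FPT algorithm for \textsc{Target Outdegree Orientation}. Your handling of isolated vertices, which keeps targets positive, is a small refinement the paper omits; since deleting them changes the vertex set on the circle, the triangulation must be recomputed via Lemma~\ref{crossing-lemma}, or you can take your alternative and allow target $0$.
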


\begin{proof}
We reduce the problem to \textsc{Target Outdegree Orientation}. Let \((G,w)\) be an instance of \textsc{Circulating Orientation}, where \(G=(V,E)\) is an undirected outer \(k\)-planar graph and \(w:E\to \mathbb N\) is the edge-weight function. For every vertex \(v\in V\), we define \(W(v)=\sum_{e\in E,\, v\in e} w(e)\), that is, \(W(v)\) is the total weight of all edges incident with \(v\).

Suppose first that \(G\) has a circulating orientation. Then for every vertex \(v\), the weighted outdegree of \(v\) equals the weighted indegree of \(v\). Since the sum of the weighted outdegree and the weighted indegree of \(v\) is exactly \(W(v)\), it follows that both of them must be equal to \(W(v)/2\). Hence \(W(v)\) must be even for every vertex \(v\). Therefore, if there exists a vertex \(v\) such that \(W(v)\) is odd, we immediately reject the instance. Otherwise, for every vertex \(v\in V\), define its target value \(t(v)=W(v)/2\). We construct an instance of \textsc{Target Outdegree Orientation} on the same graph \(G\), with the same edge-weight function \(w\), and with target function \(t\).

We claim that the two instances are equivalent. Let \(D\) be an orientation of \(G\). If \(D\) is a circulating orientation, then for every vertex \(v\), its weighted outdegree is \(W(v)/2=t(v)\). Hence \(D\) is a feasible solution for the constructed \textsc{Target Outdegree Orientation} instance. Conversely, suppose \(D\) is an orientation such that every vertex \(v\) has weighted outdegree \(t(v)=W(v)/2\). Since the weighted indegree of \(v\) is \(W(v)-t(v)=W(v)/2\), the weighted outdegree and weighted indegree of \(v\) are equal. Thus \(D\) is a circulating orientation as well.

The reduction does not modify the drawing of the graph. Hence if the input graph is outer \(k\)-planar, then the constructed \textsc{Target Outdegree Orientation} instance is also outer \(k\)-planar. Moreover, if the weights are encoded in unary, then the values \(W(v)\) and \(t(v)\) are also computable and representable in polynomial time in the input size. By the FPT algorithm for \textsc{Target Outdegree Orientation} on outer \(k\)-planar graphs, the corresponding circulating orientation instance can be solved in FPT time as well. Therefore, \textsc{Circulating Orientation} is FPT on outer \(k\)-planar graphs parameterized by \(k\). This completes the proof.
\end{proof}

\subsection{$f$-Domination and $q$-Domination}\label{f-k-dom}
In this section, we study another variant of the dominating set problem, namely \textsc{$f$-Dominating Set}. In contrast to the capacitated variants of dominating set that we considered in Section~\ref{CDS}, where each selected vertex can dominate only a bounded number of neighbors, the \textsc{$f$-Dominating Set} problem requires that vertices not in the solution must be dominated by multiple neighbors. Formally, the problem is defined as follows.

\vspace{3mm}
\noindent\fbox{
\begin{minipage}{0.97\textwidth}
\textsc{$f$-Dominating Set}\\
\textbf{Input:} A graph $G=(V,E)$, a demand function $f: V \to \mathbb{N}$, and an integer $m$.\\
\textbf{Question:} Does there exist a set $S \subseteq V$ with $|S| \le m$ such that for every vertex $v \in V$, either $v \in S$ or $|N(v) \cap S| \ge f(v)$?
\end{minipage}}
\vspace{3mm}

A special case of \textsc{$f$-Dominating Set} is the \textsc{$q$-Dominating Set} problem, where the demand is uniform, that is, $f(v)=q$ for all vertices $v \in V$. The problem is defined as follows.

\vspace{3mm}
\noindent\fbox{
\begin{minipage}{0.97\textwidth}
\textsc{$q$-Dominating Set}\\
\textbf{Input:} A graph $G=(V,E)$ and integers $\ell, q$.\\
\textbf{Question:} Does there exist a set $S \subseteq V$ with $|S| \le \ell$ such that for every vertex $v \in V$, either $v \in S$ or $|N(v) \cap S| \ge q$?
\end{minipage}}
\vspace{3mm}

The notion of a $q$-dominating set was introduced by Fink and Jacobson~\cite{n-dom} in 1985 and has been extensively studied in graph theory. For fixed $q$, the problem can be solved in time $O(q^{O(\mathrm{tw})} n)$ using dynamic programming on tree decompositions~\cite{k-dom-tw}. The more general notion of $f$-domination was later introduced by Chen and Zhou~\cite{f-dom}. It can be seen that \textsc{$f$-Dominating Set} is also FPT with respect to the combined parameter of treewidth and the maximum demand value $\max_{v \in V} f(v)$, again via dynamic programming on tree decompositions in~\cite{k-dom-tw}.

We present an FPT algorithm for \textsc{\(f\)-Dominating Set} on outer \(k\)-planar graphs, parameterized by \(k\). Since \textsc{\(q\)-Dominating Set} can be seen as a special case of \textsc{\(f\)-Dominating Set}, both problems are FPT on outer \(k\)-planar graphs.  We use the same triangulated graph and the same rooted weak dual as before, and also keep the notation introduced earlier. In particular, for a non-root triangle \(\triangle_t\), the parent link is denoted by \(\lambda_t=a_tb_t\), the corresponding subpolygon is denoted by \(P_t\), the interface vertices are denoted by \(A_t\), and the boundary is \(B_t=\{a_t,b_t\}\cup A_t\). For a vertex \(v\in B_t\), let \(d_t^{\mathrm{out}}(v)=|N(v)\setminus P_t|\). Note that \(d_t^{\mathrm{out}}(v)\leq k\) for every \(v\in A_t\), while for \(v\in\{a_t,b_t\}\) this number can be as large as \(n-1\). A vertex in the solution set \(S\) is called \emph{selected}.

\paragraph{Table definition.}
For a node \(t\) in the weak dual, a state \(\sigma\) assigns to every vertex \(v\in B_t\) one of the following types. The first type is \(\mathrm{Sel}\), meaning that \(v\) is selected. The second type is \(\mathrm{Need}(r)\), meaning that \(v\) is not selected and still needs \(r\) selected neighbors from outside \(P_t\). The value \(r\) is only allowed if \(0\leq r\leq d_t^{\mathrm{out}}(v)\). This restriction is essential: if \(r>d_t^{\mathrm{out}}(v)\), then the outside of \(P_t\) cannot provide enough selected neighbors for \(v\).

For a state \(\sigma\), we define \(Q[t,\sigma]\) to be the minimum value of \(|S_t\cap (P_t\setminus B_t)|\) over all sets \(S_t\subseteq P_t\) satisfying the following conditions. If \(\sigma(v)=\mathrm{Sel}\), then \(v\in S_t\). If \(\sigma(v)=\mathrm{Need}(r)\), then \(v\notin S_t\) and \(r=\max\{0,f(v)-|N(v)\cap S_t\cap P_t|\}\). Moreover, every vertex \(u\in P_t\setminus B_t\) must already be satisfied inside \(P_t\), that is, either \(u\in S_t\), or \(|N(u)\cap S_t\cap P_t|\geq f(u)\). If no such set \(S_t\) exists, then \(Q[t,\sigma]=+\infty\). The table entry value \(Q[t,\sigma]\) counts only selected vertices that are strictly inside \(P_t\). Selected boundary vertices are counted later, when they disappear from the boundary, or at the root if they remain on the final outer-face link \(e_0=a_0b_0\).

\paragraph{Leaf nodes.}
Let \(t\) be a leaf of the weak dual. Then the triangle corresponding to \(t\) is an ear triangle with vertices \(a_t,b_t,c_t\), and \(P_t=\{a_t,b_t,c_t\}\). For every possible state \(\sigma\) on \(B_t\), we compute \(Q[t,\sigma]\) by brute force. We enumerate all subsets \(S_t\subseteq P_t\) that agree with the state indicated by \(\sigma\). For each such set, we compute, for every \(v\in B_t\) with \(v\notin S_t\), the value \(r_v=\max\{0,f(v)-|N(v)\cap S_t\cap P_t|\}\). The set \(S_t\) realizes \(\sigma\) if \(\sigma(v)=\mathrm{Need}(r_v)\) for every non-selected boundary vertex \(v\), and if \(r_v\leq d_t^{\mathrm{out}}(v)\) for all such vertices. In addition, every vertex in \(P_t\setminus B_t\) must be satisfied inside \(P_t\). For every feasible set \(S_t\), we update \(Q[t,\sigma]\) with the value \(|S_t\cap(P_t\setminus B_t)|\). Since \(P_t\) contains only three vertices, this takes constant time for each state \(\sigma\).

\paragraph{Internal nodes.}
Let \(t\) be an internal node whose triangle is \(\triangle(a,b,c)\), and suppose that the parent link of \(t\) is \(\lambda_t=ab\). Let \(t_1\) and \(t_2\) be the two children corresponding to the links \(ac\) and \(cb\), respectively. We write \(P_i=P_{t_i}\) and \(B_i=B_{t_i}\) for \(i\in\{1,2\}\).

We combine two child states \(\sigma_1\) and \(\sigma_2\) with finite table values. The selected status must be compatible on common boundary vertices. In particular, for the shared vertex \(c\), either both child states mark \(c\) as \(\mathrm{Sel}\), or both child states mark \(c\) as a \(\mathrm{Need}\)-state. A pair in which one child marks \(c\) as selected and the other marks it as non-selected is discarded. Let \(U=B_1\cup B_2\). The set of selected vertices in \(U\) is determined by the two child states. For a vertex \(v\in B_i\) with \(\sigma_i(v)=\mathrm{Need}(r_i)\), define \(m_i(v)=|N(v)\cap U\cap (P_t\setminus P_i)\cap S_U|\), where \(S_U\) denotes the selected vertices of \(U\). Therefore, \(m_i(v)\) is the number of available selected neighbors of \(v\) that are outside the child subpolygon \(P_i\) but inside the parent subpolygon \(P_t\). This number can be computed from the two child states, because every edge from \(B_i\) to \(P_t\setminus P_i\) is represented within \(U\).

First consider a vertex \(v\in B_i\setminus\{c\}\) with \(\sigma_i(v)=\mathrm{Need}(r_i)\). Its remaining need after the merge is \(r=\max\{0,r_i-m_i(v)\}\). If \(v\in B_t\), then the parent state must assign \(\mathrm{Need}(r)\) to \(v\), and this is valid only if \(r\leq d_t^{\mathrm{out}}(v)\). If \(v\notin B_t\), then \(v\) disappears from the boundary at \(t\), and therefore we require \(r=0\). If this condition fails, the pair of child states cannot be merged. If \(v\in B_i\setminus\{c\}\) is marked as \(\mathrm{Sel}\), then it remains marked as \(\mathrm{Sel}\) in the parent state if \(v\in B_t\). If \(v\notin B_t\), then \(v\) is forgotten at \(t\), and we add one to the cost of the merge, i.e. \(v\) should be counted now.

It remains to handle the shared vertex \(c\). If both children mark \(c\) as \(\mathrm{Sel}\), then \(c\) is selected in the merged partial solution. If \(c\in B_t\), the parent state marks \(c\) as \(\mathrm{Sel}\). If \(c\notin B_t\), then \(c\) is forgotten at \(t\), and we add one to the cost of the merge. If both children mark \(c\) as non-selected, say \(\sigma_1(c)=\mathrm{Need}(r_1)\) and \(\sigma_2(c)=\mathrm{Need}(r_2)\), then the remaining need of \(c\) after merging the two child subpolygons is \(r_c=\max\{0,r_1+r_2-f(c)\}\). Indeed, if \(s_i\) is the number of selected neighbors of \(c\) inside \(P_i\), then \(r_i=\max\{0,f(c)-s_i\}\), and the number of selected neighbors of \(c\) inside \(P_t\) is \(s_1+s_2\). Hence the remaining need inside the parent is \(\max\{0,f(c)-s_1-s_2\}\), which is equal to \(\max\{0,r_1+r_2-f(c)\}\). If \(c\in B_t\), the parent state assigns \(\mathrm{Need}(r_c)\) to \(c\), and this is valid only if \(r_c\leq d_t^{\mathrm{out}}(c)\). If \(c\notin B_t\), then we require \(r_c=0\); otherwise the merge is invalid.

Whenever all these conditions are satisfied, the two child states generate a parent state \(\sigma\). We update \(Q[t,\sigma]\) with the value \(Q[t_1,\sigma_1]+Q[t_2,\sigma_2]+\mu\), where \(\mu\) is the number of newly selected vertices in \((B_1\cup B_2)\setminus B_t\). If several pairs of child states generate the same parent state, we keep the minimum value.

\paragraph{The one-child case.}
Suppose that a node \(t\) has only one child \(t_1\), and assume that the child link is \(ac\). The current triangle is \(\triangle(a,b,c)\), the parent link is \(\lambda_t=ab\), and the side \(bc\) lies on the outer face. In this case \(P_t=P_{t_1}\cup\{b\}\). The vertex \(c\) is already contained in the child boundary, since it is an endpoint of the child link \(ac\). Hence the only new vertex outside the child subpolygon is \(b\). Let \(\sigma_1\) be a finite child state. We extend \(\sigma_1\) by deciding whether \(b\) is selected. If \(b\) is selected, then the parent state marks \(b\) as \(\mathrm{Sel}\). Since \(b\in B_t\), it is not counted at this node.

Now consider a vertex \(v\in B_{t_1}\). If \(\sigma_1(v)=\mathrm{Sel}\), then \(v\) remains selected in the parent state whenever \(v\in B_t\). If \(v\notin B_t\), then \(v\) is forgotten at \(t\), and we add one to the merge cost. If \(\sigma_1(v)=\mathrm{Need}(r)\), then the only newly available selected neighbor of \(v\) in \(P_t\setminus P_{t_1}\) can be vertex \(b\). Let \(\eta_b(v)=1\) if \(b\) is selected and \(vb\in E(G)\), and let \(\eta_b(v)=0\) otherwise. The remaining need of \(v\) after passing from \(P_{t_1}\) to \(P_t\) is \(r'=\max\{0,r-\eta_b(v)\}\). If \(v\in B_t\), then the parent state assigns \(\mathrm{Need}(r')\) to \(v\). If \(v\in A_t\), we additionally require \(r'\le d_t^{\mathrm{out}}(v)\). If \(v\notin B_t\), then \(v\) is forgotten at \(t\), and therefore we require \(r'=0\); otherwise the merge of \(B_{t_1}\) and \(\{b\}\) is invalid.

It remains to determine the state of the new vertex \(b\) when \(b\) is not selected. Since \(b\) is an endpoint of the parent link, its remaining need is stored exactly. All selected neighbors of \(b\) inside \(P_t\) that can already be seen at this node are determined by the child state. Indeed, any edge from \(b\) to a vertex of \(P_{t_1}\setminus\{a,c\}\) crosses the child link \(ac\), and hence its endpoint in \(P_{t_1}\) lies in \(B_{t_1}\). Let \(s_b\) be the number of selected vertices in \(B_{t_1}\) adjacent to \(b\). If \(b\) is not selected, then the parent state assigns \(\mathrm{Need}(r_b) \) to \(b\), where \(r_b=\max\{0,f(b)-s_b\}\).

The value of the parent table \(Q[t,\sigma]\) is obtained by adding the number of newly selected vertices in \(B_{t_1}\setminus B_t\) to the child table value \(Q[t_1,\sigma_1]\). The new vertex \(b\) is not counted here, because \(b\in B_t\). The case in which the unique child corresponds to the link \(bc\) is symmetric, with the roles of \(a\) and \(b\) exchanged.

\paragraph{The root.}
Let \(\triangle_r\) be the root triangle incident with the outer-face link \(e_0=a_0b_0\). We treat \(e_0\) as the parent link of \(\triangle_r\). Then \(P_r=V(G)\) and \(B_r=\{a_0,b_0\}\). Since the outside of \(P_r\) is empty, a non-selected root-boundary vertex can only appear as \(\mathrm{Need}(0)\). The instance is accepted if there exists a root state \(\sigma\) such that every non-selected vertex of \(\{a_0,b_0\}\) is in state \(\mathrm{Need}(0)\), and \(Q[r,\sigma]+\nu\leq m\), where \(\nu\) is the number of vertices among \(a_0,b_0\) that are marked as \(\mathrm{Sel}\) by \(\sigma\).

\paragraph{Correctness.}
Let \(t\) be a node and let \(\sigma\) be a state on \(B_t\). We say a set \(S_t\subseteq P_t\) \emph{realizes} \(\sigma\) if the following conditions hold. For every \(v\in B_t\), \(\sigma(v)=\mathrm{Sel}\) if and only if \(v\in S_t\). If \(\sigma(v)=\mathrm{Need}(r)\), then \(v\notin S_t\) and \(r=\max\{0,f(v)-|N(v)\cap S_t\cap P_t|\}\). If additionally \(v\in A_t\), then \(r\le d_t^{\mathrm{out}}(v)\). Moreover, every vertex \(u\in P_t\setminus B_t\) is satisfied inside \(P_t\), that is, either \(u\in S_t\), or \(|N(u)\cap S_t\cap P_t|\ge f(u)\). Next, we prove the table entry \(Q[t,\sigma]\) obtained by our algorithm indeed stores the minimum value of \(|S_t\cap(P_t\setminus B_t)|\) over all sets \(S_t\) realizing the state \(\sigma\).

\begin{lemma}
\label{lem:fdom-table-correctness}
For every node \(t\) and every state \(\sigma\) on \(B_t\), after the table entry \(Q[t,\sigma]\) has been computed, its value is exactly the minimum of \(|S_t\cap(P_t\setminus B_t)|\) over all sets \(S_t\subseteq P_t\) realizing \(\sigma\).
\end{lemma}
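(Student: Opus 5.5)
We prove the lemma by induction over the rooted weak dual \(D\), from the leaves upward, establishing two inequalities at every node: \emph{soundness} (every finite value written into \(Q[t,\sigma]\) is witnessed by a set \(S_t\) realizing \(\sigma\) with \(|S_t\cap(P_t\setminus B_t)|\) equal to that value) and \emph{completeness} (for every \(S_t\) realizing \(\sigma\), the algorithm writes a value at most \(|S_t\cap(P_t\setminus B_t)|\)). For a leaf both directions are immediate, since the algorithm enumerates all \(S_t\subseteq\{a_t,b_t,c_t\}\) and applies exactly the conditions in the definition of realizing \(\sigma\).

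\textbf{Two-child case.} For soundness, take child witnesses \(S_1,S_2\) for \(\sigma_1,\sigma_2\) and set \(S_t=S_1\cup S_2\). Agreement on \(c\) (the only vertex in \(P_1\cap P_2\)) makes this well defined. The key structural fact is that every edge between \(P_1\setminus\{a,c\}\) and \(P_t\setminus P_1\) crosses the link \(ac\), so its inside endpoint lies in \(A_{t_1}\subseteq B_1\). The outside endpoint lies in \(P_2\) or is \(b\), and it is adjacent across \(ac\), so it lies in \(B_2\cup\{b\}\subseteq U\). The same holds for edges leaving \(a\) or \(c\) into \(P_t\setminus P_1\), because such edges end in \(P_2\setminus\{c\}\) and hence cross \(bc\) or end at \(b\). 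Consequently, the number of selected neighbours that a boundary vertex \(v\in B_i\) gains in \(P_t\setminus P_i\) is exactly \(m_i(v)\). This gives the identity \(\max\{0,f(v)-|N(v)\cap S_t\cap P_t|\}=\max\{0,r_i-m_i(v)\}\). For \(c\), the analogous identity is the formula \(\max\{0,r_1+r_2-f(c)\}\) already justified in the text, and I will additionally check that \(c\) gains no neighbours outside \(P_1\cup P_2\) within \(P_t\). Vertices of \(P_i\setminus B_i\) are satisfied by induction. Vertices leaving the boundary are satisfied because the algorithm demands \(r=0\) for them. The cost \(Q[t_1,\sigma_1]+Q[t_2,\sigma_2]+\mu\) counts each selected vertex of \(P_t\setminus B_t\) exactly once: interior vertices of the children are counted below, and forgotten boundary vertices are counted in \(\mu\). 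For a vertex in \(B_1\cap B_2\) (only \(c\), or \(a,b\) via crossing edges), I will check that it is not double counted. For completeness, restrict a realizing \(S_t\) to \(P_1\) and \(P_2\). Define \(\sigma_i\) by the realized needs. These needs respect the bound \(r\le d^{\mathrm{out}}_{t_i}(v)\) for \(v\in A_{t_i}\), because \(v\) is eventually satisfied in \(P_t\) or keeps a need of at most its outside degree, and every missing neighbour of \(v\) lies outside \(P_i\). By induction the child entries are at most the restricted costs, and the merge reproduces exactly \(\sigma\).

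\textbf{One-child case and main obstacle.} The one-child case is the same argument with \(P_2\) replaced by \(\{b\}\), using the observation in the text that every edge from \(b\) into \(P_{t_1}\setminus\{a,c\}\) crosses \(ac\). I expect the main obstacle to be the accounting argument, namely that the locally computed \(m_i(v)\), \(s_b\) and the formula for \(c\) capture all selected neighbours inside \(P_t\) with none double counted. Edges from \(a\) to \(b\), or between \(a\) and \(c\) when both are boundary vertices of both sides, need care. I would handle this by partitioning the edges incident to \(v\) within \(P_t\) according to whether they lie in \(G[P_i]\) or cross a child link. In the completeness direction I would also verify the admissibility bound \(r\le d^{\mathrm{out}}\) for the restricted child needs, which follows from the inequality \(d^{\mathrm{out}}_{t_i}(v)\ge\) the number of neighbours of \(v\) in \(P_t\setminus P_i\) plus \(d^{\mathrm{out}}_t(v)\).
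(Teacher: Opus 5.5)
Your proposal is correct and follows essentially the same route as the paper. Both argue by induction over the rooted weak dual, with brute force at the leaves, soundness via the union \(S_1\cup S_2\) of child witnesses, completeness via restricting a realizing \(S_t\) to \(P_1\) and \(P_2\), and the one-child case by analogy; your crossing argument for why \(m_i(v)\) sees every selected neighbour of \(v\in B_i\setminus\{c\}\) in \(P_t\setminus P_i\), the disjoint decomposition of \(P_t\setminus B_t\) behind the cost count, and the split of \(N(v)\setminus P_i\) into neighbours in \(P_t\setminus P_i\) and neighbours outside \(P_t\) (which gives admissibility of the induced child states, including at link endpoints) make explicit what the paper only asserts.

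There is one slip, and it is harmless. Edges from \(c\) into \(P_2\setminus\{c\}\) cannot cross \(bc\), since \(c\) is an endpoint of that link, so your locality claim fails for \(c\). Also \(B_1\cap B_2=\{c\}\), so \(a\) and \(b\) never lie on both child boundaries. Neither point matters, because \(c\) is handled separately by \(\max\{0,r_1+r_2-f(c)\}\), which only uses \(P_t=P_1\cup P_2\) and \(P_1\cap P_2=\{c\}\).
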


\begin{proof}
We prove the statement by induction on the rooted weak dual. If \(t\) is a leaf, then \(P_t\) consists of the three vertices of an ear triangle. The algorithm checks all possible choices of selected vertices in this constant-size set. For each such choice it computes exactly, for every boundary vertex, whether the vertex is selected and, if not, how many selected neighbors it still needs from outside \(P_t\). It also checks whether every non-boundary vertex is already satisfied inside \(P_t\). Therefore the table value for a leaf is exactly the desired minimum.

Now let \(t\) be an internal node with two children \(t_1\) and \(t_2\). Write \(P_i=P_{t_i}\) and \(B_i=B_{t_i}\) for \(i\in\{1,2\}\). Let the triangle of \(t\) be \(\triangle(a,b,c)\), where \(ab\) is the parent link. Suppose that the algorithm takes two child states \(\sigma_1,\sigma_2\), both with finite table values, and generates a parent state \(\sigma\). By the induction hypothesis, there are sets \(S_i\subseteq P_i\) realizing \(\sigma_i\), with \(|S_i\cap(P_i\setminus B_i)|=Q[t_i,\sigma_i]\), for \(i\in\{1,2\}\). The compatibility check guarantees that the shared vertex \(c\) is selected in both child states or in neither. Hence \(S_t=S_1\cup S_2\) is well-defined as a subset of \(P_t\).

Every vertex in \(P_i\setminus B_i\) is satisfied inside \(P_i\), and therefore also inside \(P_t\). It remains only to consider vertices that belong to a child boundary. Let \(v\in B_i\) and suppose first that \(\sigma_i(v)=\mathrm{Need}(r_i)\). The algorithm subtracts from \(r_i\) exactly the number of selected neighbors of \(v\) that lie in \(P_t\setminus P_i\). Let this number be \(m_i(v)\). Thus the remaining need of \(v\) after the merge is \(r'=\max\{0,r_i-m_i(v)\}\). If \(v\in B_t\), then the parent state records \(\mathrm{Need}(r')\). If \(v\in A_t\), the algorithm also requires \(r'\le d_t^{\mathrm{out}}(v)\). If \(v\notin B_t\), then \(v\) becomes an internal vertex of \(P_t\), and the algorithm requires \(r'=0\). Hence every such forgotten non-selected vertex is satisfied inside \(P_t\).

If \(v\in B_i\) is selected, then it remains selected in the parent state whenever \(v\in B_t\). If \(v\notin B_t\), then \(v\) becomes internal at \(t\), and the algorithm adds one to the cost. Therefore selected vertices are counted exactly when they cease to be boundary vertices.

For the shared vertex \(c\), if \(c\) is selected, the same argument applies. If \(c\) is not selected, let \(\sigma_1(c)=\mathrm{Need}(r_1)\) and \(\sigma_2(c)=\mathrm{Need}(r_2)\). Let \(s_i=|N(c)\cap S_i\cap P_i|\). Since \(P_1\cap P_2=\{c\}\) and \(c\notin S_t\), the selected neighbors of \(c\) inside \(P_t\) are exactly the selected neighbors contributed by the two child subpolygons. Hence the remaining need of \(c\) in \(P_t\) is \(\max\{0,f(c)-s_1-s_2\}\). Since \(r_i=\max\{0,f(c)-s_i\}\), this value is \(\max\{0,r_1+r_2-f(c)\}\). This is exactly the value used by the algorithm. If \(c\in B_t\), this value is recorded in the parent state and is required to be at most \(d_t^{\mathrm{out}}(c)\). If \(c\notin B_t\), the algorithm requires it to be zero.

It follows that \(S_t\) realizes the parent state \(\sigma\). The cost assigned by the algorithm is \(Q[t_1,\sigma_1]+Q[t_2,\sigma_2]\) plus the number of selected vertices in \((B_1\cup B_2)\setminus B_t\). This is precisely \(|S_t\cap(P_t\setminus B_t)|\). Thus every value written into \(Q[t,\sigma]\) corresponds to a feasible set realizing \(\sigma\). This proves the soundness. Next, we prove the completeness of the algorithm.

Conversely, let \(S_t\subseteq P_t\) be any set realizing a parent state \(\sigma\). We show that the algorithm considers child states whose combination gives a value at most \(|S_t\cap(P_t\setminus B_t)|\). For \(i\in\{1,2\}\), let \(S_i=S_t\cap P_i\). Define a state \(\sigma_i\) on \(B_i\) as follows. If \(v\in B_i\cap S_t\), set \(\sigma_i(v)=\mathrm{Sel}\). If \(v\in B_i\setminus S_t\), set \(\sigma_i(v)=\mathrm{Need}(r_i)\), where \(r_i=\max\{0,f(v)-|N(v)\cap S_t\cap P_i|\}\). This is a valid child state. Indeed, for interface vertices in \(A_{t_i}\) the value \(r_i\) cannot exceed the number of neighbors outside \(P_i\), hence \(r_i\le d_{t_i}^{\mathrm{out}}(v)\). If \(v\) is one of the two endpoints of the child link, then \(r_i\) is stored exactly.

The set \(S_i\) realizes \(\sigma_i\). Therefore, by the induction hypothesis, \(Q[t_i,\sigma_i]\le |S_i\cap(P_i\setminus B_i)|\). The two child states \(\sigma_1\) and \(\sigma_2\) are compatible, because both are induced by the same set \(S_t\). For every non-selected boundary vertex \(v\in B_i\), the algorithm subtracts exactly the selected neighbors of \(v\) that lie in \(P_t\setminus P_i\), which are precisely the selected neighbors that were not counted inside the child subpolygon. Hence the remaining need computed by the algorithm is the same as the remaining need of \(v\) in the parent state. If \(v\notin B_t\), then \(v\) is an internal vertex of \(P_t\), and since \(S_t\) realizes \(\sigma\), it is satisfied inside \(P_t\). Hence the algorithm accepts the forgetting of \(v\). The same calculation applies to the shared vertex \(c\), where the remaining need is \(\max\{0,r_1+r_2-f(c)\}\).

Therefore the algorithm generates the parent state \(\sigma\) from the induced child states \(\sigma_1\) and \(\sigma_2\). The table entry value generated is 
\( Q[t_1,\sigma_1]+Q[t_2,\sigma_2]+\mu \),
where \(\mu\) is the number of selected vertices in \((B_1\cup B_2)\setminus B_t\). By the inequalities above, this is at most
\( |S_1\cap(P_1\setminus B_1)|+|S_2\cap(P_2\setminus B_2)|+\mu \).
These three terms count exactly the selected vertices of \(S_t\cap(P_t\setminus B_t)\): selected vertices strictly inside the two child subpolygons are counted by the first two terms, and selected child-boundary vertices that become internal at \(t\) are counted by \(\mu\). Hence the algorithm computes a value for \(Q[t,\sigma]\) at most
\( |S_t\cap(P_t\setminus B_t)| \). Since \(S_t\) is an arbitrary set realizing \(\sigma\), the table value \(Q[t,\sigma]\) cannot be larger than the desired minimum. Together with the first direction, this proves the claim for nodes with two children. The one-child case is proved in the same way. If the unique child corresponds to the link \(ac\), then \(P_t=P_{t_1}\cup\{b\}\). The algorithm decides whether \(b\) is selected, updates the remaining needs of the child boundary vertices by using the possible selected neighbor \(b\), and computes the state of \(b\) itself if it is not selected. The case where the unique child corresponds to \(bc\) is symmetric. By induction, the statement holds for all nodes.
\end{proof}

\begin{theorem}
\label{thm:fdom-outer-k-planar}
\(f\)-\textsc{Dominating Set} can be solved in time \(n^5k^{O(k)}\) on outer \(k\)-planar graphs.
\end{theorem}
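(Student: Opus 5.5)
Since correctness of the table entries is already given by Lemma~\ref{lem:fdom-table-correctness}, the proof reduces to two things: checking that the root rule returns the right answer, and bounding the running time by counting states and merge work per node of the weak dual.

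\textbf{Root correctness.} At the root, \(P_r=V(G)\) and \(B_r=\{a_0,b_0\}\). Applying the lemma to \(r\), a set \(S\subseteq V\) is an \(f\)-dominating set exactly when it realizes a root state in which every non-selected endpoint is in state \(\mathrm{Need}(0)\). Its size is \(Q[r,\sigma]+\nu\), because selected boundary vertices are never counted inside the table and are added explicitly at the root. Minimizing over admissible root states therefore gives the optimum, and comparing it with \(m\) decides the instance.

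\textbf{Table size.} Fix a non-root node \(t\) and count the states on \(B_t\).
\begin{itemize}
\item Each interface vertex \(v\in A_t\) has at most \(d_t^{\mathrm{out}}(v)+2\le k+2\) states: \(\mathrm{Sel}\) or \(\mathrm{Need}(r)\) with \(0\le r\le d_t^{\mathrm{out}}(v)\le k\), the last bound by Lemma~\ref{crossing-lemma}.
\item Each link endpoint \(a_t,b_t\) has at most \(n+1\) states, since its remaining need satisfies \(r\le d_t^{\mathrm{out}}\le n-1\) (or we may cap at \(\min\{f(v),n-1\}\)).
\end{itemize}
Since \(|A_t|\le k\), this gives \(|\Sigma_t|\le (n+1)^2(k+2)^k=n^2k^{O(k)}\).

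\textbf{Work per node.} A leaf is handled by brute force over the at most \(8\) subsets of the three triangle vertices, with \(O(k)\) checking each, so it costs \(O(|\Sigma_t|\cdot k)\).

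An internal node with two children iterates over all pairs \((\sigma_1,\sigma_2)\in\Sigma_{t_1}\times\Sigma_{t_2}\), which is \(n^4k^{O(k)}\) pairs. Unlike capacitated domination, there is no extra branching: once the child states are fixed, the parent state is determined. The merge computes the values \(m_i(v)\) by scanning edges inside \(U=B_1\cup B_2\), where \(|U|\le 2k+4\), in \(O(k^2)\) time. It then updates the needs, checks the bounds \(r\le d_t^{\mathrm{out}}(v)\), and handles the shared vertex \(c\) via \(\max\{0,r_1+r_2-f(c)\}\). All of this is polynomial in \(k\).

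A one-child node costs only \(O(|\Sigma_{t_1}|\cdot k^{O(1)})\), since we additionally branch only on whether \(b\) is selected.

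The degrees \(d_t^{\mathrm{out}}(v)\) and the adjacency relations inside \(U\) can be precomputed in polynomial time.

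\textbf{Total.} Each internal node costs \(n^4k^{O(k)}\), and the weak dual has \(O(n)\) nodes, so the total is \(n^5k^{O(k)}\). The triangulation itself is supplied by Lemma~\ref{crossing-lemma} in \(O(nk)\) time given the drawing, or is assumed as input.

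\textbf{Main obstacle.} The only delicate point is avoiding a hidden extra factor in the merge step. I expect this to be fine for two reasons. First, the child-state pairs already determine the merged selected set on \(U\), so no further guessing is needed. Second, every edge from \(B_i\) into \(P_t\setminus P_i\) has both endpoints in \(U\): it either ends at \(a\), \(b\) or \(c\), or crosses a child link and so ends in the other child's interface. This is exactly what makes the \(m_i(v)\) computable from the two child states alone.
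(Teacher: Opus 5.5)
Your proposal is correct and follows the paper's proof. You bound the states per node by $n^2k^{O(k)}$ (at most $n+1$ options for each link endpoint and at most $k+2$ for each of the at most $k$ interface vertices), enumerate the $n^4k^{O(k)}$ child-state pairs with $k^{O(1)}$ local work on $B_{t_1}\cup B_{t_2}$, multiply by the $O(n)$ nodes of the weak dual, and delegate correctness to Lemma~\ref{lem:fdom-table-correctness}.

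One wording fix: the locality claim in your last paragraph holds only for $v\in B_i\setminus\{c\}$, since edges from $c$ into $P_{t_j}\setminus\{c\}$ may end at internal vertices of $P_{t_j}$. This is exactly why $c$ is handled by the $\max\{0,r_1+r_2-f(c)\}$ rule, which you already use.
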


\begin{proof}
For every node \(t\), the boundary \(B_t\) has size at most \(k+2\). For each vertex of \(B_t\setminus\{a_t,b_t\}\), there are at most \(k+2\) possible states, namely \(\mathrm{Sel}\) and \(\mathrm{Need}(r)\) with \(0\leq r\leq d_t^{\mathrm{out}}(v)\leq k\). For each parent link endpoint \(a_t,b_t\), there are at most \(n+1\) possible states. Hence the number of states at \(t\) is bounded by \(n^2k^{O(k)}\). For a node with two children, we enumerate pairs of child states. This gives at most \(n^4k^{O(k)}\) pairs. For each pair, all compatibility checks and residual updates are performed by inspecting the local set \(B_{t_1}\cup B_{t_2}\), whose size is \(O(k)\) and therefore takes \(k^{O(1)}\) time. Since the weak dual has \(O(n)\) nodes, the total running time is \(n^5k^{O(k)}\). By Lemma~\ref{lem:fdom-table-correctness}, the algorithm accepts exactly the yes-instances.
\end{proof}

\begin{corollary}
\label{cor:qdom-outer-k-planar}
For every positive integer \(q\), \(q\)-\textsc{Dominating Set} can be solved in time \(n^5k^{O(k)}\) on outer \(k\)-planar graphs.
\end{corollary}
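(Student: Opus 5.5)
This is a direct reduction to Theorem~\ref{thm:fdom-outer-k-planar}. Given an instance $(G,\ell,q)$ of $q$-\textsc{Dominating Set} together with an outer $k$-planar drawing of $G$, define the demand function $f\colon V(G)\to\mathbb{N}$ by $f(v)=q$ for every $v\in V(G)$, and set $m=\ell$.

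The two instances have the same solutions. A set $S$ with $|S|\le \ell$ is a $q$-dominating set of $G$ exactly when every $v\notin S$ satisfies $|N(v)\cap S|\ge q=f(v)$. That is precisely the condition for $S$ to be an $f$-dominating set of size at most $m$. The graph and its drawing are unchanged, so the constructed instance is still outer $k$-planar, and any precomputed triangulation from Lemma~\ref{crossing-lemma} can be reused. The reduction takes linear time.

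It remains to check that the running time does not depend on $q$, since the statement claims $n^5k^{O(k)}$ for every positive integer $q$. The state space in the algorithm of Theorem~\ref{thm:fdom-outer-k-planar} is bounded independently of the demands. For an interface vertex $v\in A_t$, a $\mathrm{Need}(r)$ state is only allowed with $0\le r\le d_t^{\mathrm{out}}(v)\le k$. For a link endpoint, $r\le \min\{f(v),n-1\}$ effectively: if $q>n-1$, any non-selected vertex of positive degree needs more neighbors than it has, and the table simply records infeasible states. Hence the number of states per node is still $n^2k^{O(k)}$. One minor point is to cap stored need values at $n$ so that they fit in the claimed table size even when $q$ is huge. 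With that cap, the bound $n^5k^{O(k)}$ carries over verbatim.

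The only step that needs any care is this uniformity in $q$, and it follows directly from how the state space was defined in Theorem~\ref{thm:fdom-outer-k-planar}.
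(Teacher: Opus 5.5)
Your argument is correct and is exactly the paper's proof: set $f(v)=q$ for every $v$, keep the budget, the graph and the drawing, and invoke Theorem~\ref{thm:fdom-outer-k-planar}. The uniformity check in $q$ is not actually needed, because that theorem's bound $n^5k^{O(k)}$ already holds for arbitrary demand functions: its table admits $\mathrm{Need}(r)$ only for $r\le d_t^{\mathrm{out}}(v)\le n-1$. If you keep your cap anyway, it must mean discarding out-of-range states rather than truncating them to $n$, since the merge rule $\max\{0,r_1+r_2-f(c)\}$ at the shared vertex $c$ needs exact need values.
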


\begin{proof}
Given an instance of \(q\)-\textsc{Dominating Set} with budget \(\ell\), define an instance of \(f\)-\textsc{Dominating Set} on the same graph by setting \(f(v)=q\) for every vertex \(v\), and by setting the solution budget to \(\ell\). A set \(S\) is feasible for the resulting \(f\)-domination instance if and only if \(S\) is a \(q\)-dominating set of the original instance. The graph and the drawing are unchanged, and hence the parameter \(k\) is unchanged. The result follows directly from Theorem~\ref{thm:fdom-outer-k-planar}.
\end{proof}

\subsection{Target Set Selection}\label{TSS}
The \textsc{Target Set Selection} problem is an NP-hard problem in graph theory and social network analysis that seeks a smallest subset of vertices, called a \emph{target set} to initially activate in a network such that, through a cascading process, all other vertices become active based on given thresholds. It models influence spreading, such as finding the best influencers for marketing.

Formally, let $G = (V,E)$ be a graph, let $t : V \to \mathbb{N}$ be a threshold function, and let $S \subseteq V$ be an initial set of active vertices called \emph{seeds}. The activation process starting from $S$ is defined as a sequence of sets $\mathrm{Active}[0] \subseteq \mathrm{Active}[1] \subseteq \cdots$, where $\mathrm{Active}[0] = S$, and for each $i \ge 1$, a vertex $v$ belongs to $\mathrm{Active}[i]$ if either $v \in \mathrm{Active}[i-1]$ or $v$ has at least $t(v)$ neighbors in $\mathrm{Active}[i-1]$. The process continues until a step $j$ is reached such that either $\mathrm{Active}[j] = V$ or $\mathrm{Active}[j] = \mathrm{Active}[j-1]$. We define $\mathrm{Active}(S) = \mathrm{Active}[j]$. The problem is defined as follows.

\vspace{3mm}
\noindent\fbox{
\begin{minipage}{0.97\textwidth}
\textsc{Target Set Selection}\\
\textbf{Input:} A graph $G = (V,E)$, a threshold function $t : V \to \mathbb{N}$, and an integer $m$.\\
\textbf{Question:} Does there exist a set $S \subseteq V$ of size at most \(m\) such that $\mathrm{Active}(S) = V$?
\end{minipage}}
\vspace{3mm}

Ben-Zwi et al.~\cite{BENZWI201187} gave an XP algorithm for this problem when parameterized by treewidth, together with a lower bound of $n^{\Omega(\sqrt{\mathrm{tw}})}$. Their result also implies W[1]-hardness.

In this section, we present an FPT algorithm for \textsc{Target Set Selection} on outer $k$-planar graphs, parameterized by $k$. Our algorithm is inspired by the dynamic programming framework of Ben-Zwi et al.~\cite{BENZWI201187} for bounded-treewidth graphs. In their algorithm, one stores for every subproblem, i.e. a subtree of the tree decomposition, both a threshold vector and an activation order for boundary vertices. The threshold information records how much help a boundary vertex still needs from outside the subtree, while the activation order prevents deadlocks at join bags.

The crucial difference in our setting is that the weak dual of the triangulation is not a nice tree decomposition, and therefore the join formula from the treewidth setting cannot be transferred verbatim. Instead, we use the same conceptual ingredients, but the merge step is replaced by a local verification procedure on a set of vertices of size at most $O(k)$.

Before describing the algorithm, we first prove the following Lemma~\ref{lem:witness-order}, which forms the theoretical foundation underlying both the algorithm of Ben-Zwi et al.~\cite{BENZWI201187} and our algorithm. A \emph{weak order} on a vertex set \(X\) is a partition of \(X\) into ordered levels. If a vertex \(u\) appears in a strictly earlier level than a vertex \(v\), we write \(u\prec_\rho v\). Given a seed set \(S\), a weak order \(\rho\) on \(V(G)\) is a \emph{witness order} for \(S\) if every vertex of \(S\) lies in level \(0\), every vertex of \(V(G)\setminus S\) lies in a later level, and every vertex \(v\in V(G)\setminus S\) has at least \(t(v)\) neighbors \(u\) with \(u\prec_\rho v\).

Given an initial seed set $S$, we define the \emph{standard activation process} as the activation sequence induced by repeatedly propagating activations from $S$. This sequence forms a weak order on all vertices according to their earliest possible activation times.

\begin{lemma}
\label{lem:witness-order}
Let \(G\) be a graph with threshold function \(t\), and let \(S\subseteq V(G)\). Then the following are equivalent.
\begin{enumerate}
    \item Starting from \(S\), the standard activation process activates all vertices of \(G\).
    \item There exists a witness order for \(S\).
\end{enumerate}
\end{lemma}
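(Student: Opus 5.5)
We prove the two implications separately. Both rest on the fact that activation is monotone: a vertex, once active, stays active, and the condition for becoming active only gets easier as the active set grows.

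\textbf{(1)\,$\Rightarrow$\,(2).} Suppose the standard activation process reaches $\mathrm{Active}[j]=V(G)$. For each vertex $v$ let $\tau(v)=\min\{i : v\in \mathrm{Active}[i]\}$, its earliest activation time. Define the weak order $\rho$ whose level $i$ is $\{v:\tau(v)=i\}$. Level $0$ is exactly $S$, and every vertex outside $S$ has $\tau(v)\ge 1$, so it lies in a later level. Now take $v\notin S$ and put $i=\tau(v)\ge 1$. Since $v\notin\mathrm{Active}[i-1]$, the definition of the process forces $v$ to have at least $t(v)$ neighbours in $\mathrm{Active}[i-1]$. Each such neighbour $u$ satisfies $\tau(u)\le i-1<\tau(v)$, hence $u\prec_\rho v$. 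So $\rho$ is a witness order.

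\textbf{(2)\,$\Rightarrow$\,(1).} Let $\rho$ be a witness order with levels $L_0=S, L_1,\dots,L_p$; empty levels can be discarded without changing $\prec_\rho$. We show by induction on $i$ that $L_0\cup\dots\cup L_i\subseteq \mathrm{Active}[i]$, provided the process has not already stopped.

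The base case $i=0$ is immediate. For the inductive step, take $v\in L_i$ with $i\ge 1$. By the witness property, $v$ has at least $t(v)$ neighbours $u$ with $u\prec_\rho v$. Each such $u$ lies in $L_0\cup\dots\cup L_{i-1}\subseteq \mathrm{Active}[i-1]$, so $v\in\mathrm{Active}[i]$.

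The one subtle point is the stopping rule: the process halts as soon as $\mathrm{Active}[j]=\mathrm{Active}[j-1]$, and we must rule out halting before $V(G)$ is reached. Suppose the process stabilises at some $\mathrm{Active}[j]\ne V(G)$. Let $v$ be a $\rho$-minimal vertex outside $\mathrm{Active}[j]$; then $v\notin S$, since $S\subseteq\mathrm{Active}[j]$. Every $u\prec_\rho v$ lies in $\mathrm{Active}[j]$ by minimality, so $v$ has at least $t(v)$ active neighbours. Hence $v\in\mathrm{Active}[j+1]$, contradicting stabilisation.

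This minimal-counterexample argument actually proves (2)\,$\Rightarrow$\,(1) on its own, without the level-by-level induction. Either way, $\mathrm{Active}(S)=V(G)$. The only real obstacle is this careful handling of the termination condition in the definition of $\mathrm{Active}(S)$; the rest is routine.
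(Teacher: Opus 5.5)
Your proof is correct and follows the paper's argument: you use the earliest activation rounds as the witness order for (1)$\Rightarrow$(2), and induction over the levels of $\rho$ (or, equivalently, a $\rho$-minimal non-active vertex) for (2)$\Rightarrow$(1). Your explicit treatment of the stopping rule $\mathrm{Active}[j]=\mathrm{Active}[j-1]$ makes rigorous what the paper covers only with the informal phrase ``eventually becomes active''.
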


\begin{proof}
Assume first that the standard activation process from \(S\) activates all vertices. For every vertex \(v\), let \(\rho(v)\) be the first round in which \(v\) becomes active. Then every vertex of \(S\) lies in round \(0\), and every non-seed vertex \(v\) has at least \(t(v)\) neighbors that became active in earlier rounds. Thus \(\rho\) is a witness order for \(S\).

Conversely, assume that there exists a witness order \(\rho\) for \(S\). We prove by induction over the levels of \(\rho\) that every vertex eventually becomes active in the standard activation process. All vertices of \(S\) lie in level \(0\), and these vertices are active initially. Suppose that all vertices in earlier levels have already become active. Let \(v\) be a non-seed vertex in the next level. Since \(\rho\) is a witness order, \(v\) has at least \(t(v)\) neighbors in earlier levels, and these neighbors are active by the induction hypothesis. Therefore \(v\) eventually becomes active. Repeating this argument for all levels proves that the standard activation process can activate all vertices.
\end{proof}

Lemma~\ref{lem:witness-order} shows that the dynamic program does not need to reconstruct the unique earliest synchronous activation sequence, that is, the exact sequence in which vertices become active as early as possible starting from $S$. Instead, it is sufficient to construct some witness order $\rho$ certifying that the chosen seed set $S$ eventually activates the entire graph. In other words, the dynamic programming only needs to find a weak order $\rho$ together with a corresponding seed set $S$ such that every vertex \(v \in V\) has at least $t(v)$ neighbors appearing before it in $\rho$. The weak order $\rho$ does not need to coincide with the earliest possible activation sequence produced by the standard activation process. Moreover, the activation sequence induced by repeatedly propagating activations from $S$ is merely one possible weak order, but not the only one.

As an example, consider the graph in Figure~\ref{fig:tss}, where the threshold $t(v)$ of each vertex is indicated, and the vertex \(a\) is chosen as the only seed. One can verify that both activation orders \(a,b,(c,d,e),f\) and \(a,b,(d,e),f,c\) are valid, since in both cases every vertex \(v\) has at least $t(v)$ neighbors activated before it. Thus, it suffices to find any such valid weak order witnessing that the initial seed set $S=\{a\}$ activates all vertices of the graph. In fact, from any valid weak order, one can rearrange vertices according to their earliest possible activation times and recover the standard activation sequence induced by $S$. For the second order, we can move $c$ to its earliest possible activation time, namely after $b$, thereby recovering the standard activation sequence.

\begin{figure}[ht!]
    \centering
    \includegraphics[width=0.38\linewidth]{figures/tss.pdf}
    \caption{The standard activation order \(a,b,(c,d,e),f\) induced by repeatedly propagating activations from $S=\{a\}$ is not the only valid activation order. Another order, \(a,b,(d,e),f,c\), is also valid. Moreover, the latter can be reordered so that each vertex appears at its earliest possible activation time, thereby recovering the former sequence. Both orders certify that the seed set $S=\{a\}$ is a feasible solution. The threshold value of each vertex is marked in blue.}
    \label{fig:tss}
\end{figure}

This observation is crucial for our algorithm. For a given seed set $S$, it is enough for the algorithm to verify the existence of some valid weak order in which every vertex \(v\) has at least $t(v)$ preceding active neighbors. Such a weak order can certify that $S$ is a candidate solution.

Next, we present the full details of the algorithm. As before, we perform dynamic programming on the triangulated graph and use the same notation as in the previous sections. For each node $t$ in the weak dual of the triangulation, we store a table indexed by triples $(Z_t,O_t,\delta_t)$ with the following meaning. The set $Z_t \subseteq B_t$ is the set of boundary vertices that are already chosen as seeds in \(S\). The symbol $O_t$ denotes a weak order on the set $B_t \setminus Z_t$. We think of the vertices in $Z_t$ as lying in level $0$, before all non-seed boundary vertices. The deficit function \(\delta_t \colon B_t \setminus Z_t \to \mathbb N\) records the remaining amount of help that each non-seed boundary vertex still needs from outside $P_t$. More precisely, for every $v \in B_t \setminus Z_t$, the value $\delta_t(v)$ is the number of neighbors outside $P_t$ that must be activated strictly earlier than $v$ in order to complete a witness order for $P_t$.

For every \(v\in A_t\), we only consider values \(\delta_t(v)\in\{0,\ldots,d_t^{\mathrm{out}}(v)\}\), where \(d_t^{\mathrm{out}}(v)=|N(v)\setminus P_t|\). Since every edge from \(v\in A_t\) to \(V(G)\setminus P_t\) crosses the triangulation link \(a_tb_t\), we have \(d_t^{\mathrm{out}}(v)\le k\). For the two link endpoints $a_t$ and $b_t$, the values must be recorded exactly, and thus may range over $\{0,\dots,n\}$, where $n = |V(G)|$.

\paragraph{Table definition.}
For a state $(Z_t,O_t,\delta_t)$, let the table entry $Q_t[Z_t,O_t,\delta_t]$ denote the minimum number of seeds in $P_t \setminus B_t$ such that there exists a set \(S_t \subseteq P_t\) with $S_t \cap B_t = Z_t$ and a witness weak order $\rho_t$ on $P_t$ satisfying the following three conditions.

\begin{enumerate}
\item The restriction of $\rho_t$ to $B_t \setminus Z_t$ is $O_t$, while every vertex of $Z_t$ lies in the first level of \(\rho_t\).

\item For every vertex $u \in P_t \setminus B_t$, either $u \in S_t$, or $u$ has at least $t(u)$ neighbors in $P_t$ that lie in levels strictly earlier than the level of $u$ in $\rho_t$.

\item For every vertex $v \in B_t \setminus Z_t$, let \(p_t(v)\) be the number of neighbors of \(v\) in \(P_t\) that appear strictly before \(v\) in \(\rho_t\). We require \(\delta_t(v)=\max\{0,t(v)-p_t(v)\}\). Moreover, if \(v\in A_t \setminus Z_t\), we require \(\delta_t(v)\le d_t^{\mathrm{out}}(v)\). This means after using all earlier neighbors available inside $P_t$, the vertex $v$ still needs exactly $\delta_t(v)$ earlier neighbors from outside $P_t$ to become active.
\end{enumerate}
We say that a pair \((S_t,\rho_t)\) \emph{realizes} a state \((Z_t,O_t,\delta_t)\) at node \(t\) if it satisfies the conditions in the definition of \(Q_t[Z_t,O_t,\delta_t]\). In particular, \(\rho_t\) is not necessarily a witness order that activates all of \(P_t\) by itself. Boundary vertices may still have positive deficit. Rather, \(\rho_t\) is a local certificate showing that all internal vertices of \(P_t\) are activated, and that each boundary vertex \(v\in B_t\setminus Z_t\) needs exactly \(\delta_t(v)\) earlier neighbors from outside \(P_t\). If no such pair $(S_t,\rho_t)$ exists, then we set $Q_t[Z_t,O_t,\delta_t]=+\infty$. The value stored by the dynamic program counts only seeds in $P_t \setminus B_t$. Boundary seeds are not counted at this stage. They are counted when they disappear from the boundary, and the two root-boundary vertices, i.e. two endpoints of the outer-face link \(e_0=a_0b_0\), are added explicitly at the end. This ensures that each seed is counted exactly once in \(S\).

\paragraph{Leaf nodes.}
Suppose that $t$ is a leaf of the weak dual. Then the triangle corresponding to $t$ is an ear triangle, and $P_t$ consists exactly of the three vertices of that triangle, say $a_t$, $b_t$, and $c_t$. In this case the table can be filled by brute force. We enumerate all subsets $S_t \subseteq \{a_t,b_t,c_t\}$ and all weak orders on $\{a_t,b_t,c_t\}\setminus S_t$. Since there are only three vertices, this takes constant time. For every choice we read off the induced boundary seed set $Z_t$, the induced weak order $O_t$ on $B_t \setminus Z_t$, and the deficit values $\delta_t(v)$ for the non-seed boundary vertices. If \(c_t \notin S_t\) and \(c_t \in B_t\), we require \(\delta_t(c_t) \le d_t^{\mathrm{out}}(c_t)\), which ensures that \(c_t\) can be activated. If every non-boundary vertex can be activated within $P_t$, then we update the table entry $Q_t[Z_t,O_t,\delta_t]$ by the value $|S_t \setminus B_t|$.

\paragraph{Internal nodes.}
Let $t$ be an internal node of the weak dual. Write the corresponding triangle of $t$ as $\triangle(a,b,c)$, and suppose that its parent link is $ab$. Then the two other sides of the triangle are $ac$ and $bc$. Each of these links either corresponds to a child of $t$ or lies on the outer face. We denote the existing children by $t_1$ and $t_2$. Their boundary vertices are denoted by \(B_{t_1}\) and \(B_{t_2}\). If only one child exists, we treat the missing child differently, which will be discussed later. Note that \(B_{t_1} \cap B_{t_2} = \{c\}\), which means the two child subpolygons are disjoint except for the shared vertex $c$, and every edge that creates a new dependency between the two child subproblems has at least one endpoint on one of the child boundaries. Therefore all new interactions created at the parent level are confined to the set \(U_t := B_{t_1} \cup B_{t_2}\), whose size is bounded by $2k+3$. This is the place where our algorithm departs from the join operations of Ben-Zwi et al.~\cite{BENZWI201187}. Since the two child boundaries are not equal, we do not combine the two child states by a closed formula. Instead we explicitly verify the merge on the small set $U_t$.

Fix two child states \((Z_1,O_1,\delta_1)\) and \((Z_2,O_2,\delta_2)\) with finite table values. We first require the seed status of the shared vertex \(c\) to be consistent: $c$ must either belong to both $Z_1$ and $Z_2$, or to neither. We then enumerate a weak order $\Pi$ on
\(U_t \setminus (Z_1 \cup Z_2)\) whose restrictions to $B_{t_1}\setminus Z_1$ and $B_{t_2}\setminus Z_2$ are exactly $O_1$ and $O_2$, respectively. More precisely, the weak orders \(\Pi\) are obtained by enumerating all interleavings of the orders \(O_1\) and \(O_2\), considering only the vertices in \(U_t \setminus (Z_1 \cup Z_2)\), since vertices in \(Z_1 \cup Z_2\) are already fixed to appear in the first level. Once such an interleaving $\Pi$ is fixed, we run a local validation process on the set $U_t$.

We now compute the state induced at the parent node. Consider first a non-seed vertex \(v\in B_{t_1}\setminus\{c\}\). Let \(H_1(v)\) be the set of vertices \(u\in U_t\setminus B_{t_1}\) such that \(uv\in E(G)\), and such that either \(u\in Z_1\cup Z_2\) or \(u\prec_{\Pi} v\). Put \(h_1(v)=|H_1(v)|\). This is the number of earlier neighbors of \(v\) that become available at the parent level but lie outside the left child subpolygon. If \(v\in B_t\), then the parent deficit of \(v\) is \(\delta_t(v)=\max\{0,\delta_1(v)-h_1(v)\}\). If \(v\in A_t\), we additionally require \(\delta_t(v)\le d_t^{\mathrm{out}}(v)\); otherwise the merge of \((Z_1,O_1,\delta_1)\) and \((Z_2,O_2,\delta_2)\) is invalid. If \(v\notin B_t\), then we require \(\delta_1(v)\le h_1(v)\); otherwise the merge is also invalid since \(v\) no longer belongs to the boundary but cannot be activated. The handling for vertices \(v\in B_{t_2}\setminus\{c\}\) is symmetric.

It remains to handle the shared vertex \(c\). If \(c\) is a seed, then it is treated as a seed in both children. If \(c\notin B_t\), it will be forgotten from the boundary and should be counted in the target set \(S\) now; if \(c\in B_t\), it remains in the parent seed set. Suppose now that \(c\) is not a seed. Then both child states contain a deficit value \(\delta\) for \(c\). Since the left and right child subpolygons are disjoint except for \(c\), the numbers of earlier neighbors of \(c\) supplied by the two children add up. Hence the remaining deficit of \(c\) at the parent is \(\delta_t(c)=\max\{0,\delta_1(c)+\delta_2(c)-t(c)\}\). If \(c\notin B_t\), this value must be zero; otherwise \(c\) cannot be activated and the merge is invalid. If \(c\in B_t\), then \(c\in A_t\), and we store this value as the parent deficit of \(c\), provided that it is at most \(d_t^{\mathrm{out}}(c)\). If the deficit value is larger than \(d_t^{\mathrm{out}}(c)\), then \(c\) cannot be activated and the merge is invalid.

The parent seed set is \(Z_t=(Z_1\cup Z_2)\cap B_t\), and the parent activation order \(O_t\) is the restriction of \(\Pi\) to \(B_t\setminus Z_t\). After merging the two child states, we obtain a candidate table entry with value
\(
Q_{t_1}[Z_1,O_1,\delta_1]+
Q_{t_2}[Z_2,O_2,\delta_2]+
|(Z_1\cup Z_2)\setminus B_t|
\).
The vertices in $(Z_1 \cup Z_2)\setminus B_t$ are exactly the seeds that disappear from the boundary at the parent node and must therefore be counted in \(S\) at this level. We update the corresponding parent table entry \(Q_t[Z_t,O_t,\delta_t]\) with the minimum value obtained over all compatible child states and all compatible interleavings $\Pi$.

\paragraph{The one-child case.}
Suppose that exactly one child exists. Without loss of generality, assume that the child corresponds to the link \(ac\), and that the side \(bc\) lies on the outer face. The other case is symmetric. Let this child be \(t_1\). The vertex \(b\) is the new triangle vertex that is not contained in the child subpolygon \(P_{t_1}\). We set \(U_t=B_{t_1}\cup\{b\}\). We enumerate whether \(b\) is a seed. If \(b\) is chosen as a seed, then \(b\) is included in the parent seed set. If \(b\) is not a seed, we enumerate its position in a weak order \(\Pi\) on \(U_t\setminus Z_1\), extending the child order \(O_1\).

For vertices in \(B_{t_1}\), all cases, including whether a vertex is a seed and whether it belongs to \(B_t\) or \(A_t\), are handled in the same way as in the two-children case described above. For every non-seed vertex \(v\in B_{t_1}\), we define \(H_1(v)\) as above, using \(U_t\) in place of \(B_{t_1}\cup B_{t_2}\), and update or check its deficit in the same way. Put \(h_1(v)=|H_1(v)|\). If \(v\in B_t\), then we set \(\delta_t(v)=\max\{0,\delta_1(v)-h_1(v)\}\). If \(v \in A_t\), we require \(\delta_t(v) \le d_t^{\mathrm{out}}(v)\). If \(v\notin B_t\), then we require \(\delta_1(v)\le h_1(v)\); otherwise the merge is invalid. If \(b\) is not a seed, its parent deficit is computed directly from the earlier neighbors of \(b\) inside \(U_t\). More precisely, let \(p(b)\) be the number of neighbors \(u\in U_t\setminus\{b\}\) such that \(u\prec_\Pi b\). We set \(\delta_t(b)=\max\{0,t(b)-p(b)\}\). The resulting state is then used to update the parent table. Since only one new vertex is introduced in addition to the child boundary, this case has the same asymptotic running time as the two-child case.

\paragraph{The root.}
Let \(\triangle_r\) be the root triangle, incident with the chosen outer-face link \(e_0=a_0b_0\). Then \(P_r=V(G)\) and \(B_r=\{a_0,b_0\}\). Since there is no outside of \(P_r\), a root state \((Z_r,O_r,\delta_r)\) is accepting if \(\delta_r(v)=0\) for every \(v\in B_r\setminus Z_r\). The optimum is the minimum value of \(Q_r[Z_r,O_r,\delta_r]+|Z_r|\) over all accepting root states. The input is a yes-instance if and only if this value is at most \(m\).

\paragraph{Correctness.}
Next, we prove the correctness of the algorithm. We show that the dynamic program, which processes the subpolygons bottom-up by merging the triangles corresponding to the nodes of the weak dual of the triangulation, is both sound and complete. Consequently, the algorithm correctly finds a solution. After that, we analyze the running time of the algorithm, thereby showing that \textsc{Target Set Selection} is FPT on outer \(k\)-planar graphs.

\begin{lemma}[soundness of the merge step]
\label{lem:soundness-merge}
Suppose that a parent state \((Z_t,O_t,\delta_t)\) is produced from child states \((Z_1,O_1,\delta_1)\) and \((Z_2,O_2,\delta_2)\) by a compatible interleaving \(\Pi\) and by the merge rules described above. If the two child states are feasible, then the parent state is also feasible.
\end{lemma}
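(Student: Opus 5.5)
The plan is to take realizing pairs for the two children and glue them into one realizing pair for the parent. By feasibility of the child states, there are pairs \((S_1,\rho_1)\) and \((S_2,\rho_2)\) realizing \((Z_1,O_1,\delta_1)\) at \(t_1\) and \((Z_2,O_2,\delta_2)\) at \(t_2\), attaining the table values. Set \(S_t=S_1\cup S_2\). This is well defined, since \(P_{t_1}\cap P_{t_2}=\{c\}\) and the compatibility check forces \(c\) to be a seed in both children or in neither. Then \(S_t\cap B_t=(Z_1\cup Z_2)\cap B_t=Z_t\). Moreover, \(|S_t\cap(P_t\setminus B_t)|\) equals the two child costs plus \(|(Z_1\cup Z_2)\setminus B_t|\), which is exactly the value the merge writes.

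The main task is to build a weak order \(\rho_t\) on \(P_t\) with three properties:
\begin{enumerate}
\item its restriction to \(P_{t_i}\) is exactly \(\rho_i\) for \(i\in\{1,2\}\), with the same ties and the same strict precedences;
\item its restriction to \(U_t\setminus(Z_1\cup Z_2)\) is exactly \(\Pi\);
\item all seeds lie in level \(0\).
\end{enumerate}
To construct it, I would let \(\preceq\) be the transitive closure of the union of the preorders induced by \(\rho_1\), \(\rho_2\) and \(\Pi\), where seeds are placed below everything. I would then show that \(\preceq\) creates no new strict relation or tie inside \(P_{t_1}\), inside \(P_{t_2}\), or on \(U_t\).

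Any chain of relations in this closure alternates between the two sides. It can only pass from \(P_{t_1}\setminus B_{t_1}\) to the other side through vertices of \(B_{t_1}\subseteq U_t\). On \(B_{t_i}\setminus Z_i\), the orders \(\rho_i\) and \(\Pi\) agree, since both restrict to \(O_i\). So every maximal segment of a chain that stays inside one child can be shortcut to a relation between its first and last boundary vertices, and that relation is already present in \(\Pi\). Hence any cycle or spurious relation would project to one in \(\Pi\), or in some \(\rho_i\), which is impossible. The quotient of \(\preceq\) then gives the desired weak order \(\rho_t\). I expect this acyclicity/consistency argument to be the main obstacle. The delicate point is that ties must be preserved exactly, because the deficits \(\delta_i\) are defined by equalities, so it does not suffice to find merely some common extension.

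With \(\rho_t\) in hand, I would verify the three conditions of the table definition. Condition~1 holds by construction. For condition~2, take \(u\in P_t\setminus B_t\). If \(u\) is internal to some \(P_{t_i}\setminus B_{t_i}\), then all neighbors of \(u\) lie in \(P_{t_i}\), and \(\rho_t\) agrees with \(\rho_i\) there, so \(u\) keeps its \(t(u)\) earlier neighbors. If instead \(u\) is a forgotten child-boundary vertex, we argue as follows.

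For \(v\in B_{t_i}\setminus\{c\}\) non-seed, the earlier neighbors of \(v\) in \(P_t\) split into two groups. Those in \(P_{t_i}\) number \(t(v)-\delta_i(v)\) when \(\delta_i(v)>0\), and at least \(t(v)\) otherwise. Those in \(P_t\setminus P_{t_i}\) are adjacent to \(v\) and hence lie in \(U_t\); this holds because every edge leaving \(P_{t_i}\) from a vertex other than \(a_{t_i}, b_{t_i}\) crosses \(\lambda_{t_i}\), and the endpoints \(a_{t_i}, b_{t_i}\) are in \(U_t\) anyway. These are exactly the \(h_i(v)\) vertices of \(H_i(v)\), by property~2 of \(\rho_t\). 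So the remaining need is \(\max\{0,\delta_i(v)-h_i(v)\}\). This gives the parent deficit when \(v\in B_t\), and the checks \(\delta_t(v)\le d^{\mathrm{out}}_t(v)\) for \(v\in A_t\) and \(\delta_i(v)\le h_i(v)\) for forgotten \(v\) yield conditions~3 and~2 respectively. For \(c\), the neighbors of \(c\) before it split disjointly between \(P_{t_1}\) and \(P_{t_2}\), giving the deficit \(\max\{0,\delta_1(c)+\delta_2(c)-t(c)\}\) exactly as in the merge rule.

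The one-child case is the same argument with \(\rho_2\) replaced by the trivial order on \(\{b\}\), placed according to \(\Pi\).
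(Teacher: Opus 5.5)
Your proposal is correct and follows the paper's argument. You glue the two children's realizing pairs along a common extension of \(\rho_1\), \(\rho_2\) and \(\Pi\), then check internal vertices, child-boundary vertices via \(h_i(v)\), and the shared vertex \(c\) using \(P_{t_1}\cap P_{t_2}=\{c\}\). Where you differ is that you justify the existence of a tie-preserving extension, which the paper only asserts; note, though, that the quotient of your transitive closure is only a partial order, so you must still take a linear extension of its classes (harmless, since \(\rho_1\), \(\rho_2\) and \(\Pi\) are each total on their domains). Also, the fact that neighbors of \(v\) in \(P_t\setminus P_{t_i}\) lie in \(U_t\) comes from the definition of \(A_{t_j}\) for the other child \(t_j\): such a neighbor is an endpoint of \(\lambda_{t_j}\) or has the neighbor \(v\) outside \(P_{t_j}\). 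It does not follow from edges leaving \(P_{t_i}\).
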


\begin{proof}
Let \(\rho_1\) and \(\rho_2\) be witness weak orders realizing the two child states. We construct a weak order \(\rho_t\) on \(P_t\) by taking an extension of the two child weak orders and the interleaving \(\Pi\) on \(U_t=B_{t_1}\cup B_{t_2}\). Such an extension exists because \(\Pi\) restricts to \(O_1\) and \(O_2\) on the two child boundaries. Inside each child subpolygon, the relative order of vertices is kept as in the corresponding child witness order.

Consider first a vertex \(u\in P_t\setminus (B_{t_1}\cup B_{t_2})\). Such a vertex is strictly internal to one child subpolygon, say \(P_{t_i}\). If \(u\) is a seed, then it is already active in the constructed witness order. Otherwise, since the child state is realized by \(\rho_i\), the vertex \(u\) has at least \(t(u)\) neighbors in \(P_{t_i}\) that appear strictly before \(u\) in \(\rho_i\). The construction of \(\rho_t\) preserves the relative order of vertices inside \(P_{t_i}\). Hence these same neighbors still appear strictly before \(u\) in \(\rho_t\). Therefore \(u\) satisfies its activation condition in the parent witness order \(\rho_t\).

Now consider a non-seed vertex \(v\in B_{t_1}\setminus\{c\}\). The child state guarantees that, inside \(P_{t_1}\), the vertex \(v\) already has enough earlier neighbors except for a deficit of \(\delta_1(v)\). By definition, \(h_1(v)\) counts the earlier neighbors of \(v\) that become available at the parent level but lie outside \(P_{t_1}\). If \(v\notin B_t\), the merge rule requires \(\delta_1(v)\le h_1(v)\), and therefore \(v\) has enough earlier neighbors inside \(P_t\). If \(v\in B_t\), the parent stores the remaining deficit \(\delta_t(v)=\max\{0,\delta_1(v)-h_1(v)\}\). If \(v \in A_t\), the merge rule additionally requires \(\delta_t(v) \le d_t^{\mathrm{out}}(v)\). Thus \(v\) satisfies the boundary condition of the parent state. The same argument applies symmetrically to vertices in \(B_{t_2}\setminus\{c\}\).

It remains to consider the shared vertex \(c\). If \(c\) is a seed, then its seed status is consistent in both children, and it is handled by the parent seed set or is forgotten from the boundary and counted in the target set \(S\). Suppose that \(c\) is not a seed. In the left child, \(c\) has at least \(t(c)-\delta_1(c)\) earlier neighbors inside \(P_{t_1}\). In the right child, it has at least \(t(c)-\delta_2(c)\) earlier neighbors inside \(P_{t_2}\). These two sets of neighbors are disjoint, since the two child subpolygons intersect only in \(c\). Hence inside \(P_t\), the vertex \(c\) has at least \(2t(c)-\delta_1(c)-\delta_2(c)\) earlier neighbors. The parent deficit \(\delta_t(c)=\max\{0,\delta_1(c)+\delta_2(c)-t(c)\}\) is exactly the remaining amount of help that \(c\) may still need from outside \(P_t\). If \(c \in B_t\), the range check \(\delta_t(c) \le d_t^{\mathrm{out}}(c)\) guarantees that the deficit can indeed be supplied by vertices outside \(P_t\). If \(c\) is forgotten at the parent, the merge rule requires this deficit value to be zero. Therefore \(c\) is also handled correctly. Thus every vertex that becomes internal at the parent satisfies its threshold inside \(P_t\), and every remaining boundary vertex has exactly the deficit recorded by \(\delta_t\). Hence \(\rho_t\) witnesses the parent state.
\end{proof}

\begin{lemma}[completeness of the merge step]
\label{lem:completeness-merge}
Let \((Z_t,O_t,\delta_t)\) be a feasible parent state of an internal node \(t\). Then there exist feasible child states \((Z_1,O_1,\delta_1)\) and \((Z_2,O_2,\delta_2)\), together with a compatible interleaving \(\Pi\), such that the merge procedure reconstructs the parent state \((Z_t,O_t,\delta_t)\).
\end{lemma}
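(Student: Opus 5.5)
The plan is to start from a pair \((S_t,\rho_t)\) that realizes the feasible parent state \((Z_t,O_t,\delta_t)\) with the minimum number of seeds in \(P_t\setminus B_t\). From this single global object we read off everything the merge procedure needs: the two child states, the interleaving \(\Pi\), and the resulting parent state. For \(i\in\{1,2\}\) we put \(S_i=S_t\cap P_{t_i}\) and \(\rho_i=\rho_t|_{P_{t_i}}\). We then define \(Z_i=S_t\cap B_{t_i}\) and \(O_i=\rho_t|_{B_{t_i}\setminus Z_i}\). For every \(v\in B_{t_i}\setminus Z_i\) we set \(\delta_i(v)=\max\{0,t(v)-p_i(v)\}\), where \(p_i(v)\) is the number of neighbours of \(v\) in \(P_{t_i}\) that come strictly before \(v\) in \(\rho_t\). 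Finally, \(\Pi\) is the restriction of \(\rho_t\) to \(U_t\setminus(Z_1\cup Z_2)\). By construction \(\Pi\) restricts to \(O_1\) and \(O_2\), so it is one of the interleavings the algorithm enumerates. The seed status of \(c\) agrees in both children because it is inherited from \(S_t\).

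The first step is to show that \((S_i,\rho_i)\) realizes \((Z_i,O_i,\delta_i)\), so these child states are feasible. Take a vertex \(u\in P_{t_i}\setminus B_{t_i}\). All of its neighbours lie in \(P_{t_i}\), since any edge leaving \(P_{t_i}\) from a non-endpoint crosses \(\lambda_{t_i}\) and would put \(u\) in \(A_{t_i}\). Hence the earlier neighbours that \(u\) has in \(\rho_t\) are all already present in \(\rho_i\), and the threshold condition transfers unchanged. The deficits match by definition. For \(v\in A_{t_i}\) the range condition \(\delta_i(v)\le d^{\mathrm{out}}_{t_i}(v)\) needs a separate argument. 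In \(\rho_t\), the vertex \(v\) is either internal to \(P_t\) and hence satisfied there, or it lies in \(B_t\) with deficit at most \(d^{\mathrm{out}}_t(v)\). In both cases the earlier neighbours of \(v\) outside \(P_{t_i}\) number at most \(|N(v)\setminus P_{t_i}|\) and make up the rest of the threshold, which gives \(t(v)-p_i(v)\le d^{\mathrm{out}}_{t_i}(v)\).

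The second step, which I expect to be the main obstacle, is to check that the merge rules applied to these child states and \(\Pi\) give exactly \((Z_t,O_t,\delta_t)\) and that no validity test fails. Take \(v\in B_{t_1}\setminus\{c\}\). Its earlier neighbours in \(P_t\setminus P_{t_1}\) must all lie in \(U_t\): an edge from \(v\) to a strict interior vertex of \(P_{t_2}\) crosses \(\lambda_{t_2}\), so its other endpoint is in \(A_{t_2}\subseteq B_{t_2}\), and the remaining candidates are \(b\) and \(c\). Because \(\Pi\) is a restriction of \(\rho_t\), this set of earlier neighbours is exactly \(H_1(v)\). This gives \(p_t(v)=p_1(v)+h_1(v)\) and \(\max\{0,\delta_1(v)-h_1(v)\}=\max\{0,t(v)-p_t(v)\}=\delta_t(v)\). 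The point needing care is the truncation at \(0\): if \(\delta_1(v)=0\) the identity holds trivially, and otherwise \(\delta_1(v)=t(v)-p_1(v)\), so both sides agree. When \(v\notin B_t\), the same calculation together with the fact that \(v\) is satisfied inside \(P_t\) gives \(\delta_1(v)\le h_1(v)\). For the shared vertex \(c\) we use \(P_{t_1}\cap P_{t_2}=\{c\}\), which gives \(p_t(c)=p_1(c)+p_2(c)\). The formula \(\max\{0,\delta_1(c)+\delta_2(c)-t(c)\}\) then needs the same truncation check. If both \(\delta_i(c)\) are positive it equals \(\max\{0,t(c)-p_t(c)\}\). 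If one of them is \(0\), that child alone already gives \(p_i(c)\ge t(c)\), so both sides are \(0\). The range checks for \(\delta_t\) on \(A_t\) hold because the parent state is feasible.

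Finally, the seed sets and orders match by construction: \(Z_t=(Z_1\cup Z_2)\cap B_t\) and \(O_t=\Pi|_{B_t\setminus Z_t}\). The one-child case is handled the same way, with \(U_t=B_{t_1}\cup\{b\}\). The only new computation there is the deficit of \(b\). Every earlier neighbour of \(b\) inside \(P_t\) lies in \(U_t\), because edges from \(b\) into the strict interior of \(P_{t_1}\) cross \(ac\). So this deficit is exactly \(\max\{0,t(b)-p(b)\}\), which is what the one-child rule computes.
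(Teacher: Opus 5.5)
Your proposal is correct and follows the same route as the paper's proof. You restrict a realizing pair $(S_t,\rho_t)$ to the two child subpolygons, take $\Pi$ as the restriction of $\rho_t$ to $U_t$, and check that the merge rules recompute $(Z_t,O_t,\delta_t)$. You also verify points the paper only asserts ``by construction'': the range bound $\delta_i(v)\le d^{\mathrm{out}}_{t_i}(v)$ on child interface vertices, the truncation at $0$ in the deficit formulas, and why every new earlier neighbour lies in $U_t$. There is one harmless slip: you list $c$ among the candidates in $P_t\setminus P_{t_1}$, but $c\in P_{t_1}$ and its contribution is already counted in $p_1(v)$.
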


\begin{proof}
Let \((S_t,\rho_t)\) be a pair realizing the parent state. Restrict \(S_t\) and \(\rho_t\) to the two child subpolygons \(P_{t_1}\) and \(P_{t_2}\). These restrictions define child seed sets, child boundary orders, and child deficit functions. Denote the resulting child states by \((Z_1,O_1,\delta_1)\) and \((Z_2,O_2,\delta_2)\). By construction, both child states are feasible.

Let \(\Pi\) be the restriction of \(\rho_t\) to \(U_t=B_{t_1}\cup B_{t_2}\). Then \(\Pi\) is a compatible interleaving of \(O_1\) and \(O_2\). For every ordinary vertex \(v\in B_{t_i}\setminus\{c\}\), the number \(h_i(v)\) computed by the merge procedure is exactly the number of earlier neighbors of \(v\) that lie in \(P_t\setminus P_{t_i}\). Therefore, if \(v\) is forgotten at the parent, the inequality \(\delta_i(v)\le h_i(v)\) holds; and if \(v\) remains in the parent boundary, the remaining deficit is exactly \(\max\{0,\delta_i(v)-h_i(v)\}\).

For the shared vertex \(c\), the earlier neighbors supplied by the two child subpolygons add up, because the two subpolygons intersect only in \(c\). If \(c\) is not a seed, the deficit induced at the parent is therefore exactly \(\max\{0,\delta_1(c)+\delta_2(c)-t(c)\}\). If \(c\) is forgotten at the parent, this value must be zero, since the parent state is feasible. Therefore the merge procedure applied to these child states and to \(\Pi\) reconstructs the parent state \((Z_t,O_t,\delta_t)\).
\end{proof}

\begin{theorem}
\label{thm:tss-outer-k}
\textsc{Target Set Selection} can be solved in time \(n^{5}\cdot k^{O(k)}\) on outer \(k\)-planar graphs. Therefore, the problem is FPT when parameterized by outer $k$-planarity.
\end{theorem}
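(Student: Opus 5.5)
The proof will combine the two merge lemmas with a count of states and transitions, in the same way as the \textsc{Capacitated Dominating Set} and $f$-\textsc{Dominating Set} theorems. We first compute the triangulation of Lemma~\ref{crossing-lemma} and root its weak dual $D$ at a triangle incident with an outer-face link $e_0=a_0b_0$. The table entries are then filled bottom-up along $D$. Correctness is established by induction over $D$, with the invariant that $Q_t[Z_t,O_t,\delta_t]$ equals the minimum number of seeds in $P_t\setminus B_t$ over all pairs $(S_t,\rho_t)$ realizing the state.

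\emph{Base case.} For leaves, exhaustive enumeration over the three vertices of the ear gives the invariant directly.

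\emph{Inductive step.} Lemma~\ref{lem:soundness-merge} shows that every value produced by a merge is witnessed by some realizing pair. Its cost is $Q_{t_1}+Q_{t_2}+|(Z_1\cup Z_2)\setminus B_t|$, which counts each seed exactly once, since $P_{t_1}\cap P_{t_2}=\{c\}$. Lemma~\ref{lem:completeness-merge} shows that every realizing pair of the parent is generated from realizing child states. The induction hypothesis then gives the upper bound on the child values, and hence the minimum is attained. The one-child case follows from the same argument with $U_t=B_{t_1}\cup\{b\}$.

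\emph{Root.} We have $P_r=V(G)$ and $B_r=\{a_0,b_0\}$, and an accepting state has zero deficits. Such a state yields a witness order for all of $V(G)$, so by Lemma~\ref{lem:witness-order} the chosen seed set activates $V(G)$. Conversely, any target set together with its standard activation order realizes an accepting root state. Adding $|Z_r|$ counts the seeds on the outer-face link.

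\emph{Running time.} Since $|B_t|\le k+2$, there are at most $2^{k+2}$ choices of $Z_t$ and at most $(k+2)^{O(k)}=k^{O(k)}$ weak orders $O_t$. The deficits range over $k+1$ values for each vertex of $A_t$ and over $n+1$ values for each of $a_t,b_t$. Hence each table has $n^2k^{O(k)}$ entries.

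At an internal node we enumerate pairs of child states, giving $n^4k^{O(k)}$ pairs. For each pair we enumerate interleavings $\Pi$ of $O_1$ and $O_2$ on at most $2k+3$ vertices. The number of weak orders on $2k+3$ elements is $k^{O(k)}$, so this costs a further factor $k^{O(k)}$. For each pair and each $\Pi$, computing the $h_i(v)$, the deficit of $c$, and the range checks takes $k^{O(1)}$ time, because all relevant edges lie inside $U_t$. This gives $n^4k^{O(k)}$ per node and $n^5k^{O(k)}$ over the $O(n)$ nodes of $D$. The triangulation itself costs only $O(nk)$ time.

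\emph{Main obstacle.} The hardest part is justifying that the bounded deficit ranges lose no solutions. For $v\in A_t$, the restriction $\delta_t(v)\le d_t^{\mathrm{out}}(v)\le k$ discards only states that can never be completed, because all outside help for $v$ comes through edges crossing $\lambda_t$. The exact value of $\delta$ must also stay meaningful when a vertex stops being an endpoint of the parent link. This happens for $c$, which moves from being a child link endpoint, where $\delta$ may be as large as $n$, to being an interface vertex. The formula $\max\{0,\delta_1(c)+\delta_2(c)-t(c)\}$ handles this correctly only because the child deficits of $c$ are stored exactly rather than truncated. I would verify this point carefully, and I would also check that a witness order restricted to a child subpolygon has deficit at most $d_{t_i}^{\mathrm{out}}$ for every vertex of $A_{t_i}$.
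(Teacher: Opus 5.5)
Your proposal is correct and follows essentially the same route as the paper: correctness by induction over the rooted weak dual, using the soundness and completeness merge lemmas together with Lemma~\ref{lem:witness-order} at the root, and the same count of $n^2k^{O(k)}$ states per node, times $n^2k^{O(k)}$ further child-state choices and $k^{O(k)}$ interleavings on $U_t$ per node, over $O(n)$ nodes. Your remarks on the one-child case, and on why the untruncated child deficits of $c$ and the range restriction $\delta_t(v)\le d_t^{\mathrm{out}}(v)$ lose no solutions, make explicit details that the paper leaves inside the two lemmas, and they are sound.
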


\begin{proof}
Correctness follows from Lemmas~\ref{lem:witness-order}, \ref{lem:soundness-merge}, and \ref{lem:completeness-merge}, together with the leaf initialization and the root evaluation described above. It remains to bound the running time. For every node $t$, the boundary \(B_t\) has size at most $k+2$. The number of possible seed sets $Z$ is therefore at most $2^{k+2}$. The number of weak orders on at most $k+2$ elements is at most $(k+2)^{k+2}$. For the deficit function, the two endpoints contribute at most $(n+1)^2$ possibilities, while each of the at most $k$ remaining boundary vertices contributes at most $k+1$ possibilities. Hence the number of states of a single node is bounded by \((n+1)^2 \cdot 2^{k+2}\cdot (k+2)^{k+2}\cdot (k+1)^{k}=n^2\cdot k^{O(k)}\).

Now consider one internal node $t$. We enumerate two child states, which contributes at most \(n^4\cdot k^{O(k)}\). For each such pair we enumerate all compatible interleavings $\Pi$ on the set $U_t$, whose size is $O(k)$, and this contributes another factor $k^{O(k)}$. For each interleaving, the merge procedure only checks adjacencies among vertices of \(U_t\), and \(|U_t|=O(k)\). Hence this step takes \(k^{O(1)}\) time. Thus one internal node is processed in time \(n^4\cdot k^{O(k)}\). The weak dual tree of the triangulation has $O(n)$ nodes. Therefore the total running time is \(n\cdot n^4\cdot k^{O(k)}=n^5\cdot k^{O(k)}\).
\end{proof}

\section{Outer \(k\)-Planarity in the Graph Parameter Hierarchy}\label{sec:hierarchy}
In Section~\ref{mimw-bound}, we proved that every outer \(k\)-planar graph has mim-width at most \(k+2\). In this section, we further explore the relationship between various graph parameters and outer \(k\)-planarity, i.e.~convex local crossing number \(\operatorname{lcr}^\circ(G)=k\). For two graph parameters \(a\) and \(b\), we say that bounded \(a\) implies bounded \(b\) if and only if every graph with parameter \(a\) at most \(k\) also satisfies \(b \le f(k)\) for some function \(f\). For example, since the treewidth of outer \(k\)-planar graphs satisfies \(\operatorname{tw}(G)\le 1.5k+2\)~\cite{firman_et_al}, we obtain that bounded outer \(k\)-planarity implies bounded treewidth. If neither bounded \(a\) implies bounded \(b\) nor bounded \(b\) implies bounded \(a\), then we say that the two parameters \(a\) and \(b\) are \emph{incomparable}.

We show that several parameters are unbounded on outer \(k\)-planar graphs by providing counterexamples. Since bounded outer \(k\)-planarity implies bounded treewidth, many other bounded graph parameters follow automatically from the parameter hierarchy. Consequently, bounded outer \(k\)-planarity also implies that numerous other parameters are bounded by some function \(f(k)\), such as clique-width, stack number, tree-independence number, and many others. Based on these boundedness and unboundedness results, we obtain several relative relationships between these parameters and outer \(k\)-planarity within the hierarchy of graph parameters\footnote{One can consult  \url{https://graphparameterhierarchy.com/} for a visualization of the hierarchy among many known graph parameters.}.

\subsection{Unbounded Parameters}
It is easy to see that many graph parameters are not bounded by any function of \(k\) on outer \(k\)-planar graphs. For example, pathwidth is unbounded, since complete binary trees have pathwidth \(O(\log n)\), while every tree admits an outerplanar drawing, that is, every tree is outer \(0\)-planar. Similarly, since a path \(P_n\) can be embedded directly on the outer face, parameters such as treedepth and vertex cover number are also unbounded.

\begin{figure}[ht!]
    \centering
    \includegraphics[width=0.55\linewidth]{figures/fvs-fan.pdf}
    \caption{A sequence of triangles can be embedded as an outerplanar graph, which implies that the feedback vertex set number is unbounded (left). Arbitrarily large fan graphs \(F_n\) can also be embedded as outerplanar graphs, leading to several other parameters being unbounded (right).}
    \label{fig:fvs-fan}
\end{figure}

For the feedback vertex set number, consider the construction in Figure~\ref{fig:fvs-fan} consisting of a sequence of pairwise disjoint triangles. Since at least one vertex from each triangle must be selected in order to destroy all cycles, the feedback vertex set number of an outer \(k\)-planar graph can be as large as \(n/3\), and therefore is not bounded either. Table~\ref{tab:unbounded-parameters} summarizes more unbounded parameters together with their corresponding counterexamples. All of these constructions admit outer \(k\)-planar drawings, while the associated parameters remain unbounded.

\begin{table}[ht!]
\centering
\renewcommand{\arraystretch}{1.2}
\begin{tabular}{|p{0.29\textwidth}|p{0.65\textwidth}|}
\hline
Graph Parameter & Counterexample \\
\hline
tree-depth~\cite{Sparsity-book} & \(P_n\) \\
\hline
vertex cover number & \(P_n\) \\
\hline
feedback vertex set number & Figure~\ref{fig:fvs-fan} \\
\hline
tree-bandwidth~\cite{treebw} & \(F_n\) \\
\hline
pathwidth & complete binary tree \\
\hline
tree-partition-width~\cite{tpw-1,tpw-2} & \(F_n\); see e.g.~\cite{Distel2024} \\
\hline
cut-width~\cite{cutwidth-1, cutwidth-2} & star graph \(S_n\), \(F_n\), or via unbounded pathwidth \\ 
\hline
bandwidth~\cite{Saxe-bw,xnlp1} & complete \(k\)-ary tree~\cite{bw-k-ary}, or via unbounded cut-width \\
\hline
tree-cut width~\cite{tree-cutwidth} & \(F_n\), or via unbounded tree-partition-width~\cite[Theorem 5.6]{compute-tpw} \\
\hline
edge-cut width~\cite{edge-cut-width} &  \(F_n\), or via unbounded tree-cut width; see~\cite[Proposition 1]{edge-cut-width} \\
\hline
\(\alpha\)-edge-crossing width~\cite{CHANG2026492} & \(F_n\), or via unbounded tree-partition-width; see~\cite[Section 5]{CHANG2026492} \\
\hline
feedback edge set number & via unbounded feedback vertex set or edge-cut width~\cite{edge-cut-width} \\
\hline
distance to outerplanar & Section~\ref{hierarchy} and Figure~\ref{fig:distance-to-outerplanar} \\
\hline
\end{tabular}
\caption{Some unbounded graph parameters on outer \(k\)-planar graphs. The corresponding counterexamples or reasons showing unboundedness are indicated. Definitions of these parameters can be found in the cited references. Many results follow directly from the parameter hierarchy.}
\label{tab:unbounded-parameters}
\end{table}

Among these unbounded parameters, many are in fact incomparable with the outer \(k\)-planarity. For example, neither bounded vertex cover number nor bounded feedback vertex set number implies bounded outer \(k\)-planarity. This means even if a graph has vertex cover number or feedback vertex set number at most \(k\), it does not necessarily admit an outer \(f(k)\)-planar drawing for any function \(f\). A counterexample is the graph \(K_{2,n}\). Its vertex cover number is only \(2\), and its feedback vertex set number is only \(1\). However, as discussed earlier, in every circular drawing of \(K_{2,n}\), some edge must be crossed \(\Theta(n)\) times. The same conclusion can be extended to several other parameters, such as pathwidth, treedepth, tree-partition width, tree-cut width, and tree-bandwidth. One can easily verify that \(K_{2,n}\) has pathwidth \(2\), treedepth \(3\), and tree-partition width, tree-cut width, and tree-bandwidth all at most \(2\).

\subsection{Cut-Width and Cyclomatic Number}
To determine more precisely the position of outer \(k\)-planarity within the parameter hierarchy, we also need to identify its parent parameters in the hierarchy. In other words, we need to determine which graph parameters satisfy the property that every graph with parameter value at most \(k\) admits an outer \(f(k)\)-planar drawing. We prove that cut-width, bandwidth, cyclomatic number (feedback edge set number), and all parameters above them, i.e., their ancestors, in the parameter hierarchy satisfy this property. Moreover, among these parameters, cut-width and cyclomatic number are two particularly natural choices of direct parent parameters that are structurally close to outer \(k\)-planarity within the hierarchy.

\begin{theorem}\label{cut-width-2k}
If \(\operatorname{cutw}(G)\le k\), then \(G\) admits an outer \(2k\)-planar drawing.
\end{theorem}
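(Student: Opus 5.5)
Place the vertices on a circle in the order given by an optimal cut-width ordering and draw all edges as straight chords. The proof then reduces to charging every crossing on a fixed edge to one of two vertex cuts of the ordering, each of which carries at most \(k\) edges.

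Fix a linear ordering \(\pi=(v_1,\ldots,v_n)\) of \(V(G)\) with cut-width at most \(k\). Place \(v_1,\ldots,v_n\) counterclockwise on a circle in this order, and draw every edge as a straight-line segment. This is a convex drawing \(\Gamma\), so it remains to bound the number of crossings per edge. For \(1\le i\le n-1\), call the set of edges with one endpoint in \(\{v_1,\ldots,v_i\}\) and the other in \(\{v_{i+1},\ldots,v_n\}\) the \(i\)-th cut. By assumption every cut contains at most \(k\) edges.

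Consider an edge \(e=v_iv_j\) with \(i<j\). By the intertwining characterisation from the preliminaries, an edge \(e'=v_av_b\) (\(a<b\)) crosses \(e\) if and only if its four endpoints are distinct and exactly one endpoint of \(e'\) lies strictly between positions \(i\) and \(j\). If \(j=i+1\), no vertex lies strictly between them, so \(e\) is uncrossed. Otherwise there are two cases.
\begin{itemize}
\item Case \(a<i<b<j\). Then \(a\le i<b\), so \(e'\) belongs to the \(i\)-th cut.
\item Case \(i<a<j<b\). Then \(a\le j-1<b\), so \(e'\) belongs to the \((j-1)\)-th cut.
\end{itemize}
Hence every edge crossing \(e\) lies in the union of the \(i\)-th and \((j-1)\)-th cuts, and this union has at most \(2k\) edges. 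In fact \(e\) itself lies in both cuts, which gives the sharper bound \(2(k-1)\). Either way, every edge of \(\Gamma\) is crossed at most \(2k\) times, so \(\Gamma\) is an outer \(2k\)-planar drawing.

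The only step needing care is the case analysis for which cut contains a crossing edge. In particular, edges sharing an endpoint with \(e\) do not cross \(e\) in a convex drawing, so they need no accounting. I do not expect a real obstacle beyond checking these index inequalities.
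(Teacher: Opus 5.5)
Your proof is correct and takes essentially the same approach as the paper: place the vertices on the circle in cut-width order, and charge each edge crossing \(v_iv_j\) either to the cut after \(v_i\) or to the cut after \(v_{j-1}\). Your additional remark, that \(e\) itself lies in both cuts and so the bound improves to \(2(k-1)\), is also valid; the paper does not state it.
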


\begin{proof}
Let \(\pi=(v_1,\ldots,v_n)\) be a linear ordering of \(V(G)\) of cut-width at most \(k\). We place the vertices of \(G\) on the outer face in the cyclic order \(v_1,\ldots,v_n\), and draw every edge as the straight chord between its two endpoints.

We show that in this drawing every edge is crossed by at most \(2k\) other edges. Fix an edge \(e=v_i v_j\), where \(i<j\). An edge \(f\) crosses \(e\) in this circular straight-line drawing if and only if the endpoints of \(f\) alternate with the endpoints of \(e\) in the circular order. Equivalently, exactly one endpoint of \(f\) lies in the interval \(\{v_{i+1},\ldots,v_{j-1}\}\), and the other endpoint lies outside the interval \(\{v_i,\ldots,v_j\}\). Edges sharing an endpoint with \(e\) are not counted as crossings.

We split the edges crossing \(e\) into two classes. First consider the edges with one endpoint in \(\{v_{i+1},\ldots,v_{j-1}\}\) and the other endpoint in \(\{v_1,\ldots,v_{i-1}\}\). Every such edge crosses the cut between \(\{v_1,\ldots,v_i\}\) and \(\{v_{i+1},\ldots,v_n\}\). Since the cut-width of \(\pi\) is at most \(k\), there are at most \(k\) such edges. Next consider the edges with one endpoint in \(\{v_{i+1},\ldots,v_{j-1}\}\) and the other endpoint in \(\{v_{j+1},\ldots,v_n\}\). Every such edge crosses the cut between \(\{v_1,\ldots,v_{j-1}\}\) and \(\{v_j,\ldots,v_n\}\). Again, since the cut-width of \(\pi\) is at most \(k\), there are at most \(k\) such edges.

These two classes contain all edges crossing \(e\). Hence \(e\) is crossed by at most \(k+k=2k\) edges. Since \(e\) is arbitrary, the drawing is outer \(2k\)-planar.
\end{proof}

\begin{corollary}\label{bw-2k2}
If \(\operatorname{bw}(G)\le k\), then \(G\) admits an outer \(2k^2\)-planar drawing.
\end{corollary}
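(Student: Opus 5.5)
The natural route is to reduce to Theorem~\ref{cut-width-2k} by bounding cut-width in terms of bandwidth. Fix a linear ordering \(\pi=(v_1,\ldots,v_n)\) of \(V(G)\) of bandwidth at most \(k\), so every edge \(v_iv_j\) satisfies \(|i-j|\le k\). We will show that the same ordering \(\pi\) has cut-width at most \(k^2\). The statement then follows directly from Theorem~\ref{cut-width-2k}: the circular drawing in the order \(v_1,\ldots,v_n\) is outer \(2k^2\)-planar.

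To bound the cut-width, fix a cut between \(\{v_1,\ldots,v_i\}\) and \(\{v_{i+1},\ldots,v_n\}\). Any edge \(v_av_b\) with \(a\le i<b\) must have \(b-a\le k\). Hence its left endpoint satisfies \(a\in\{i-k+1,\ldots,i\}\) and its right endpoint satisfies \(b\in\{i+1,\ldots,i+k\}\). We count such edges by grouping them by their left endpoint \(v_a\) with \(a=i-s\) for \(s\in\{0,\ldots,k-1\}\). The right endpoint must then lie in \(\{v_{i+1},\ldots,v_{a+k}\}\), which contains at most \(k-s\) vertices. Since \(G\) is simple, the total number of crossing edges is at most \(\sum_{s=0}^{k-1}(k-s)=k(k+1)/2\le k^2\). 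The cruder bound of \(k\) choices for each endpoint, giving at most \(k^2\) pairs, already suffices.

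None of the steps presents a real obstacle. The only care needed is to state explicitly that the drawing used is the one from the proof of Theorem~\ref{cut-width-2k} applied to the bandwidth ordering itself. Alternatively, one can simply observe that \(\operatorname{cutw}(G)\le k^2\) and invoke the theorem as stated. The sharper count \(k(k+1)/2\) would in fact give outer \(k(k+1)\)-planarity, but \(2k^2\) is what is claimed.
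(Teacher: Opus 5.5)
Your proof is correct and is essentially the paper's: bound the cut-width of the bandwidth ordering by \(k^2\), since every edge crossing a cut has its left endpoint among \(k\) vertices and its right endpoint among \(k\) vertices, and then apply Theorem~\ref{cut-width-2k}. Your sharper count \(\sum_{s=0}^{k-1}(k-s)=k(k+1)/2\) is also correct; it is a small refinement the paper does not state, and it would give outer \(k(k+1)\)-planarity.
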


\begin{proof}
Let \(\pi=(v_1,\ldots,v_n)\) be a linear ordering of \(V(G)\) of bandwidth at most \(k\). We first show that the cut-width of this ordering is at most \(k^2\).

Fix a cut between \(\{v_1,\ldots,v_i\}\) and \(\{v_{i+1},\ldots,v_n\}\). Let \(v_pv_q\) be an edge crossing this cut, with \(p\le i<q\). Since the bandwidth of \(\pi\) is at most \(k\), we have \(q-p\le k\). Therefore \(p\ge i-k+1\), so the left endpoint \(v_p\) belongs to the set \(\{v_{i-k+1},\ldots,v_i\}\), which has at most \(k\) vertices. Similarly, \(q\le i+k\), so the right endpoint \(v_q\) belongs to the set \(\{v_{i+1},\ldots,v_{i+k}\}\), which also has at most \(k\) vertices. Thus every edge crossing this cut has one endpoint in a set of at most \(k\) vertices on the left side and one endpoint in a set of at most \(k\) vertices on the right side. Since \(G\) is simple, there are at most \(k^2\) such edges.

As this holds for every cut of \(\pi\), the cut-width of \(\pi\) is at most \(k^2\). Hence \(\operatorname{cutw}(G)\le k^2\). By Theorem~\ref{cut-width-2k} above, \(G\) admits an outer \(2k^2\)-planar drawing.
\end{proof}

\begin{theorem}\label{fes-6k}
Let \(G\) be a graph whose feedback edge set number is at most \(k\). Then \(G\) admits an outer \(6k\)-planar drawing.
\end{theorem}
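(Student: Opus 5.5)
We will prove the statement by building an explicit circular vertex order. We may treat each connected component separately and place the components on consecutive arcs of the circle: chords of different components then never cross, so the crossing counts add up only within a component. Hence assume \(G\) is connected. Fix a spanning tree \(T\) of \(G\), and let \(F=E(G)\setminus E(T)\) be the set of non-tree edges. By the definition of the cyclomatic number, \(|F|\le k\).

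The first step is to compress \(T\). Let \(X\) be the set of endpoints of edges in \(F\), so \(|X|\le 2k\). Let \(T'\) be the minimal subtree of \(T\) containing \(X\). Then \(T'\) has at most \(2k\) leaves and hence at most \(2k-2\) branch vertices (vertices of degree at least \(3\)). Let \(K\) be the set consisting of \(X\) together with the branch vertices of \(T'\); then \(|K|\le 4k\). Removing \(K\) cuts \(T'\) into at most \(|K|-1\) internally disjoint \emph{segments}, each a path whose internal vertices have degree \(2\) in \(T'\). Everything else in \(T\) hangs off \(T'\) as a pendant subtree.

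The second step fixes the circular order. We take a DFS (pre-order) traversal of \(T\), which yields an outerplanar drawing of \(T\) in which no tree edges cross. The key choice is how to treat the pendant subtrees. At each vertex we visit its pendant (non-\(T'\)) subtrees \emph{before} continuing along \(T'\), so each pendant subtree occupies a contiguous interval of the circle. No edge of \(F\) has an endpoint inside such an interval, so the interval is never interleaved with a chord. Consequently, whenever a chord of \(F\) separates the circle, every pendant subtree lies entirely on one side, except possibly for the single tree edge attaching it to \(T'\).

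The third step counts crossings. Each tree edge \(e\) can be crossed only by edges of \(F\), since \(T\) is drawn outerplanar; hence \(e\) is crossed at most \(k\) times. For a non-tree chord \(f=uv\in F\), it suffices to bound the tree edges crossing it. A tree edge crosses \(f\) only if it joins the two arcs of the circle determined by \(u\) and \(v\). In the DFS order, the vertices of one arc form a union of subtrees whose boundary in \(T\) consists of edges of the tree path from \(u\) to \(v\), together with the edges leaving these subtrees. The number of such tree edges is the number of tree edges incident to the \(T'\)-path between \(u\) and \(v\) that jump across the arc boundary. Because pendant subtrees are placed contiguously, only edges of \(T'\) at the at most \(4k\) key vertices, together with the \(O(1)\) segment edges at the two ends of each of the at most \(2k\) segments meeting the path, can cross \(f\). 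Adding the at most \(k-1\) other chords of \(F\), we aim to show that \(f\) is crossed at most \(6k\) times.

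The main obstacle is this last count. A naive DFS can place a long segment of \(T'\) so that the chord \(f\) is crossed by one edge per pendant attachment along the segment, which gives an unbounded number of crossings. To avoid this, I would route every segment as a contiguous block on the circle, and orient the DFS at each branch vertex so that the subtree containing \(v\) is visited last. Then at each key vertex on the \(u\)–\(v\) path at most one tree edge leaves the arc, and each key vertex off the path contributes at most one edge. The resulting bound is roughly \(|K|+|F|\le 4k+k\). I would then tidy the constants toward \(6k\), allowing slack for the two endpoint segments.
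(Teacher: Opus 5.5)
\paragraph{A gap in the chord count.} The step that carries the theorem is bounding the number of tree edges that cross a chord $f=uv\in F$, and it is never proved. Your third step ends with ``we aim to show'', a count that is only ``roughly'' $|K|+|F|$, and a plan to ``tidy the constants''.

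The remedy in your last paragraph also does not define a drawing. Ordering the DFS at each branch vertex so that ``the subtree containing $v$'' comes last depends on the chord $f$. With up to $k$ chords, these requirements can conflict at a common branch vertex, but a single circular order has to serve all chords at once. Forcing each segment to be a contiguous block contradicts the pendant-first pre-order of your second step. If instead the pendant subtrees of internal segment vertices are moved outside the block, all their attachment edges jump over an end of the segment, and a chord ending there is crossed by every one of them. That is exactly the unbounded behaviour you wanted to avoid.

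Your diagnosis is also off. The failure you attribute to a ``naive DFS'' cannot occur under the pendant-first rule you already imposed. So no extra fix is needed, only the count.

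\paragraph{How the count goes.} Root $T$ at a vertex of $T'$ and use your pendant-first pre-order. Let $P\subseteq T'$ be the $u$--$v$ path in $T$, and let $w$ be the lowest common ancestor of $u$ and $v$.
\begin{itemize}
\item \emph{Only edges incident to $P$ can cross $f$.} A vertex not on $P$ either lies outside $T_w$, and then precedes $u$ or follows $v$, or it lies in a subtree $T_c$ where $c\notin P$ is a child of a vertex of $P$. Such a $T_c$ is a contiguous interval containing neither $u$ nor $v$, so it lies on one side of $f$.
\item \emph{At most one edge of $P$ crosses $f$.} The vertices of $P$ strictly between $u$ and $w$, and $w$ itself if $w\neq u$, are ancestors of $u$ and precede it. The vertices strictly between $w$ and $v$ lie between $u$ and $v$. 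So among the edges of $P$ and the edge from $w$ to its parent, only the edge from $w$ to its child towards $v$ can cross $f$.
\item \emph{Edges leaving $P$.} Consider $zc$ with $z\in P\setminus\{u,v\}$ and $c\notin P$ a child of $z$. If $c\notin V(T')$, only pendant subtrees of $z$ are visited between $z$ and $c$, so no endpoint of $f$ separates them and $zc$ does not cross $f$. If $c\in V(T')$, then $T_c$ contains a leaf of $T'$, which lies in $X\setminus\{u,v\}$. These subtrees are pairwise disjoint, so there are at most $|X|-2\le 2k-2$ such edges.
\end{itemize}
Thus at most $2k-1$ tree edges and at most $k-1$ other chords cross $f$. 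Every edge is therefore crossed at most $\max\{k,3k-2\}\le 6k$ times.

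\paragraph{Comparison with the paper.} Completed this way, your spanning-tree argument is a genuinely different route. The paper passes to the $2$-core and suppresses degree-$2$ paths to get a skeleton multigraph with at most $3r-3$ edges. It then places each suppressed path in a block next to an anchor, so that only the first and last edges of these paths, at most $6r-6$ in total, can be crossed at all. Your version yields the sharper bound $3k-2$.
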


\begin{proof}
We first prove the statement for connected graphs. Let \(G\) be connected and let the feedback edge set number be \(r:=|E(G)|-|V(G)|+1\). Thus \(r\le k\). If \(r=0\), then \(G\) is a tree, and hence has an outerplanar drawing. If \(r=1\), then \(G\) is a unicyclic graph with trees attached to the unique cycle. Such a graph is outerplanar as well: place the vertices of the cycle in cyclic order on the outer face, and draw each attached tree in a small interval near its attachment vertex. Thus the claim is immediate for \(r\le 1\). In the rest of the proof we assume \(r\ge 2\).

Let \(C\) be the \(2\)-core of \(G\), obtained by repeatedly deleting vertices of degree \(0\) or \(1\). The vertices and edges removed in this process form trees attached to vertices of \(C\). These trees do not contribute to the cyclomatic number. We can first draw the core and then insert all attached trees locally. Now suppress all maximal paths of \(C\) whose internal vertices have degree \(2\) in \(C\). This produces a multigraph \(H\). The \emph{skeleton} graph \(H\) may have parallel edges and loops. A loop in \(H\) corresponds to a cycle in \(C\) that starts and ends at the same branch vertex. We use the standard  convention that a loop contributes \(2\) to the degree of its incident vertex.

Suppressing degree-\(2\) paths does not change the cyclomatic number. Therefore, if \(n_H:=|V(H)|\) and \(m_H:=|E(H)|\), then \(m_H-n_H+1=r\). Since \(r\ge 2\), the core is not just a single cycle. Hence every vertex of \(H\) has degree at least \(3\), where loops count twice. By the handshaking identity for multigraphs, \(2m_H=\sum_{v\in V(H)}\deg_H(v)\ge 3n_H\). Combining this with \(m_H=n_H+r-1\), we obtain \(2(n_H+r-1)\ge 3n_H\), and hence \(n_H\le 2r-2\). It follows that \(m_H=n_H+r-1\le 3r-3\). Thus the graph \(H\) has at most \(3r-3\) edges.

Each edge \(e\) of \(H\) corresponds to a path in the core \(C\). If \(e\) has distinct endpoints \(a\) and \(b\), then this path has the form \(a,x_1,\ldots,x_q,b\), where the internal vertices \(x_1,\ldots,x_q\) have degree \(2\) in \(C\). If \(q=0\), the skeleton edge corresponds to the original edge \(ab\). If \(e\) is a loop at a vertex \(a\), then the corresponding path has the form \(a,x_1,\ldots,x_q,a\), with \(q\ge 1\).

We now define a circular order of the vertices of \(C\). First place the vertices of \(H\) on the outer face in an arbitrary circular order. For each skeleton edge \(e\), choose one endpoint of \(e\) as its anchor. If \(e\) is a loop, its unique endpoint is chosen as the anchor. For every vertex \(a\in V(H)\), reserve a small interval immediately after \(a\) in the circular order. This interval is divided into smaller consecutive blocks, one for each skeleton edge anchored at \(a\). If an edge \(e\) anchored at \(a\) corresponds to the path \(a,x_1,\ldots,x_q,b\), we put \(x_1,\ldots,x_q\) consecutively into the block of \(e\), in this order. If \(e\) is a loop at \(a\), we similarly put the vertices \(x_1,\ldots,x_q\) of the corresponding closed path \(a,x_1,\ldots,x_q,a\) consecutively into the block of \(e\). See Figure~\ref{fig:fes-2} for an illustration of how the skeleton edges in \(H\) are expanded back into the degree-2 paths of the \(2\)-core graph \(C\).

\begin{figure}[ht!]
    \centering
    \includegraphics[width=0.7\linewidth]{figures/fes-2.pdf}
    \caption{An example illustrating how three skeleton edges in \(H\) (left) are expanded back into paths in the graph \(C\) (right). The vertices of each degree-2 path are placed consecutively along an interval of the outer face, so that the internal edges \(x_i x_{i+1}\) of the path do not cross any other edge. In \(H\), at most \(3r-3\) crossing-relevant edges participate in crossings, while after expanding the paths in \(C\), at most \(6r-6\) crossing-relevant edges participate in crossings. The dotted edges are included only to indicate the positions of the edges in \(H\); they are not edges of \(C\).}
    \label{fig:fes-2}
\end{figure}

We call an edge of \(C\) \emph{crossing-relevant} if it is one of the first or last edges of such an expanded skeleton path. Thus, for a non-loop path \(a,x_1,\ldots,x_q,b\), the crossing-relevant edges are \(a x_1\) and \(x_q b\) if \(q\ge 1\), and just \(ab\) if \(q=0\). For a loop path \(a,x_1,\ldots,x_q,a\), the crossing-relevant edges are \(a x_1\) and \(x_q a\). Every skeleton edge contributes at most two crossing-relevant edges. Hence the total number of crossing-relevant edges is at most \(2m_H\le 6r-6\). All other edges of \(C\) lie inside one of the consecutive blocks. Indeed, these edges are of the form \(x_i x_{i+1}\) along an expanded skeleton path. Since the vertices \(x_1,\ldots,x_q\) were placed consecutively in the block of that path, such edges are local inside a single block. Therefore they do not alternate, in the circular order, with any edge whose endpoints are outside that block. They also do not cross edges inside the same block, because they are consecutive along a path. Hence the only edges of the core that may participate in crossings are the crossing-relevant edges. It follows that every edge of \(C\) is crossed by at most \(2m_H-1\le 6r-7\) other edges of \(C\). Non-crossing-relevant edges are crossed by no edge at all, while a crossing-relevant edge can cross only other crossing-relevant edges.

\begin{figure}[ht!]
    \centering
    \includegraphics[width=0.75\linewidth]{figures/fes-tree.pdf}
    \caption{Every tree attached to a vertex \(a \in V(C)\) can have its vertices placed consecutively in an interval of the outer face adjacent to \(a\) according to a pre-order traversal. The edges of these trees do not cross any other edges of the graph. Hence, after reconstructing \(G\) from the 2-core graph \(C\), every edge still has at most \(6r-7\) crossings.}
    \label{fig:fes-tree}
\end{figure}

It remains to insert the trees that were deleted when passing from \(G\) to the \(2\)-core. Each such tree is attached to a unique vertex \(v\) of the core \(C\). Around every core vertex \(v \in V(C)\), we reserve additional small intervals next to \(v\). For each tree attached at \(v\), root the tree at \(v\) and place its vertices in one of these intervals according to a standard depth-first, or pre-order, traversal of the rooted tree. This gives a non-crossing drawing of the attached tree inside that interval. Since all vertices of the attached tree lie in a consecutive interval adjacent to \(v\), every edge of the tree has both endpoints in that interval, except for edges incident with the root \(v\). Thus no edge of an attached tree alternates with an edge whose endpoints lie outside the interval; see Figure~\ref{fig:fes-tree} for an example. Consequently, inserting all attached trees creates no new crossings. We have therefore constructed a circular ordering of all vertices of the connected graph \(G\) such that every edge is crossed at most \(6r-7\) times if \(r\ge 2\), and no edge is crossed if \(r\le 1\). Thus, every connected graph with feedback edge set number \(r\) has an outer \(6r\)-planar drawing.

Now let \(G\) be disconnected, with connected components \(G_1,\ldots,G_s\). Let \(r_i\) be the feedback edge set number of \(G_i\). Then \(\sum_{i=1}^s r_i=r\le k\). Apply the connected construction to each component \(G_i\). Place the circular orders of the components in pairwise disjoint consecutive intervals on the common outer face. Since there are no edges between distinct components, no crossings are created between different components. Every edge of \(G_i\) is crossed at most \(6r_i\le 6k\) times. Hence the whole graph \(G\) admits an outer \(6k\)-planar drawing.
\end{proof}

\subsection{Outer \(k\)-Planarity in the Hierarchy}\label{hierarchy}
Another parameter that appears to be related to outer \(k\)-planarity is distance to outerplanar. The distance to outerplanar of a graph \(G=(V,E)\) is the minimum size of a vertex subset \(X\subseteq V\) such that \(G[V\setminus X]\) is outerplanar. Note that graphs with distance to outerplanar at most \(k\) have treewidth at most \(k+2\), since outerplanar graphs have treewidth at most 2~\cite{BOD1998}. Moreover, bounded feedback vertex set number implies bounded distance to outerplanar, since every forest is outerplanar. We show that distance to outerplanar and outer \(k\)-planarity are incomparable.

\begin{figure}[ht!]
    \centering
    \includegraphics[width=0.55\linewidth]{figures/distance.pdf}
    \caption{A sequence of connected disjoint copies of \(K_4\) is outer \(1\)-planar, while its distance to outerplanar can be \(n/4\) (left). \(K_{2,n}\) has distance to outerplanar equal to \(1\), since deleting either vertex \(a\) or \(b\) yields a star graph, but \(K_{2,n}\) is not outer \(f(k)\)-planar for any function \(f\) (right).}
    \label{fig:distance-to-outerplanar}
\end{figure}

First, outer \(k\)-planarity does not imply that the distance to outerplanar is bounded by a function of \(k\). This already fails for \(k=1\). Consider a graph \(G_t\) obtained from \(t\) vertex-disjoint copies of \(K_4\) by connecting consecutive copies with bridge edges; see Figure~\ref{fig:distance-to-outerplanar} for an example. Each copy of \(K_4\) has an outer \(1\)-planar drawing: place its four vertices on the outer face and draw the two diagonals so that they cross once. Placing the \(t\) copies in pairwise disjoint consecutive intervals on the outer face and drawing the bridge edges without crossings gives an outer \(1\)-planar drawing of \(G_t\). Hence \(G_t\) is outer \(1\)-planar for every \(t\).

On the other hand, the distance to outerplanar of \(G_t\) grows with \(t\). Since outerplanar graphs are closed under taking subgraphs and \(K_4\) is not outerplanar, any vertex set whose deletion makes \(G_t\) outerplanar must contain at least one vertex from each \(K_4\) copy. Therefore at least \(t\) vertices must be deleted. Conversely, deleting one vertex from each \(K_4\) copy turns every copy into a triangle, and the resulting graph is a tree of outerplanar blocks connected by bridges, hence outerplanar. Thus the distance to outerplanar of \(G_t\) is exactly \(t\). Since \(t\) is arbitrary, there is no function \(f\) such that every outer \(k\)-planar graph has distance to outerplanar at most \(f(k)\).

The converse direction also fails. A graph of bounded distance to outerplanar need not admit an outer \(f(k)\)-planar drawing for any function \(f\) depending only on the distance. An example is \(K_{2,n}\) (see Figure~\ref{fig:distance-to-outerplanar}). Let the part of size two be \(\{a,b\}\) and the other part be \(\{x_1,\ldots,x_n\}\). Deleting \(a\) leaves a star centered at \(b\), which is outerplanar. Hence the distance to outerplanar of \(K_{2,n}\) is at most \(1\). However, \(K_{2,n}\) does not have outer \(f(1)\)-planar drawings, since as \(n\) grows, every circular drawing of \(K_{2,n}\) contains an edge crossed by \(\Theta(n)\) other edges. Therefore bounded distance to outerplanar does not imply bounded outer \(k\)-planarity.

Based on all boundedness and unboundedness results established in this section, a natural position of outer \(k\)-planarity within the graph parameter hierarchy is as a parent of treewidth, with an arc pointing to treewidth directly, and as a child of cut-width and cyclomatic number. Figure~\ref{fig:para-hierarchy} illustrates these boundedness relationships among the corresponding graph parameters.

\begin{figure}[ht!]
    \centering
    \includegraphics[width=0.95\linewidth]{figures/para-hierarchy.pdf}
    \caption{The position of outer \(k\)-planarity within the graph parameter hierarchy, as implied by the currently known boundedness and unboundedness results. An arc from parameter \(a\) to parameter \(b\) indicates that bounded \(a\) implies bounded \(b\). Only a selection of commonly studied and related parameters, as well as their relationships are shown.}
    \label{fig:para-hierarchy}
\end{figure}

\section{Conclusion and Open Problems}\label{sec:conclusion}

We initiated a systematic study of the parameterized complexity of classical graph problems on outer \(k\)-planar graphs, with the crossing bound \(k\), equivalently the convex local crossing number \(\operatorname{lcr}^{\circ}\), as the parameter. Since outer \(k\)-planar graphs have treewidth bounded by a linear function of \(k\), one might initially expect the complexity of problems on this graph class to closely follow their complexity when parameterized by treewidth. Our results show that the situation is more subtle. Some problems, such as \textsc{Binary CSP} and \textsc{Scattered Set}, remain XALP-complete on outer \(k\)-planar graphs. On the other hand, a broad collection of problems that are XALP-complete for treewidth, such as \textsc{List Coloring}, \textsc{Capacitated (Red-Blue) Dominating Set}, \textsc{Capacitated Vertex Cover}, \textsc{Target Outdegree Orientation}, \(f\)-\textsc{Dominating Set}, and \textsc{Target Set Selection}, become FPT when an outer \(k\)-planar drawing is given.

The main algorithmic tool behind these positive results is the triangulation method of Firman et al.~\cite{firman_et_al}. Although this method was originally used to prove treewidth and separator bounds, we show that it provides a more powerful and general framework. After triangulating the polygon enclosed by the outer cycle, the weak dual gives a binary tree structure along which dynamic programming can be performed. The crucial property is that every triangulation link is crossed by at most \(k\) original graph edges. Hence, each subproblem in the dynamic program corresponds to a subpolygon that induces a separator of the original graph of size at most \(k+2\). Moreover, apart from the two endpoints of the corresponding triangulation link, every boundary vertex has at most \(k\) neighbors outside the subpolygon. This geometric restriction makes it possible, for many problems that are hard with respect to treewidth, to replace the \(n^{O(k)}\) size table arising in standard dynamic programming on tree decompositions by table of size \(f(k)n^{O(1)}\).

We also studied the position of outer \(k\)-planarity within the graph-parameter hierarchy. We proved that every outer \(k\)-planar graph has mim-width at most \(k+2\), improving over the bound obtained indirectly from treewidth. We also showed that bounded cut-width, bandwidth, and feedback edge set number implies bounded outer \(k\)-planarity. In the opposite direction, we showed that many classical graph parameters are incomparable with outer \(k\)-planarity.

Several directions remain open. First, one can study the parameterized complexity of the classical graph problems considered in this paper on other beyond-planar graph classes, again using the maximum number of crossings per edge as the parameter. A central question is whether the geometric structure of such graph classes can be exploited in a way similar to the triangulation framework used here, so that problems that are hard for classical parameters become tractable. Closely related to outer \(k\)-planar graphs are, for example, outer \(k\)-quasi-planar graphs (see~\cite{beyond-survery}), and outer min-\(k\)-planar graphs (see~\cite{min-k-1, min-k-2}). It would be interesting to explore how the geometric properties of these graph classes interact with the design of parameterized algorithms.

A second direction is to further investigate the relationship between outer \(k\)-planarity and other graph parameters. In this paper, we have seen that bounded outer \(k\)-planarity automatically implies that many other parameters are bounded by some function of \(k\). One may try to obtain explicit upper bounds for these parameters on outer \(k\)-planar graphs, or improve upper bounds that follow indirectly from known parameter hierarchy. For instance, it is known that bounded treewidth implies bounded clique-width~\cite{tw-cdw-1, tw-cdw-2}. Courcelle~\cite{cdw-crossing} studied clique-width for graphs with edge crossings, and it would be interesting to see whether these results can be used to obtain a non-trivial upper bound, for example a subexponential bound, on the clique-width of outer \(k\)-planar graphs. Another promising parameter is stretch-width, introduced by Bonnet and Duron~\cite{stretch-width} as a variant of twin-width for ordered graphs. Since an outer \(k\)-planar drawing comes with a natural circular ordering of the vertices, and since outer \(k\)-planar graphs do not contain large complete bipartite graphs as subgraphs, \cite[Lemma~7]{stretch-width} can be used to directly derive an \(O(k^3)\) upper bound on the stretch-width of outer \(k\)-planar graphs. Edge-cut width is a recently developed parameter based on edge cuts, introduced by Brand et al.~\cite{edge-cut-width}. In the parameter hierarchy studied here, bounded cyclomatic number implies bounded edge-cut width. It remains open whether bounded edge-cut width implies bounded outer \(k\)-planarity. If this is true, then in Figure~\ref{fig:para-hierarchy}, a more appropriate direct parent of outer \(k\)-planarity would be edge-cut width.

Finally, the complexity of recognizing outer \(k\)-planar graphs remains a major open problem. Kobayashi et al.~\cite{outerkrecog} gave an XP algorithm and proved that testing outer \(k\)-planarity is XNLP-hard. They also conjectured that the problem is XALP-complete. For proving XALP-hardness, one possible route is to reduce from \textsc{Tree-Partition Width}, whose computation is known to be XALP-complete~\cite{compute-tpw}. For proving membership in XALP, one may try to modify the XP algorithm of Kobayashi et al.~\cite{outerkrecog} into a nondeterministic algorithm that runs in FPT time and logarithmic space with an additional stack. Alternatively, one could start from another equivalent definition of XALP and attempt to simulate the XP algorithm directly by an alternating Turing machine; see~\cite{xalp} for several equivalent characterizations of the class XALP.

\bibliography{refs}

@book{DF13,
	author	= {Rodney G. Downey and Michael R. Fellows},
	publisher	= {Springer},
	series	= {Texts in Computer Science},
	title		= {Fundamentals of Parameterized Complexity},
	year		= {2013},
	isbn      = {978-1-4471-5558-4},
	doi       = {10.1007/978-1-4471-5559-1},
}

@book{DF99,
  title={Parameterized Complexity},
  author={Rodney G. Downey and Michael R. Fellows},
  isbn={9780387948836},
  lccn={97022882},
  series={Monographs in Computer Science},
  doi={10.1007/978-1-4612-0515-9},
  year={1999},
  publisher={Springer}
}

@book{flum06,
  title={Parameterized Complexity Theory},
  author={Jörg Flum and Martin Grohe},
  isbn={9783540299530},
  lccn={2005938662},
  series={Texts in Theoretical Computer Science. An EATCS Series},
  doi={10.1007/3-540-29953-X},
  year={2006},
  publisher={Springer}
}

@article{wh2,
title = {Fixed-parameter tractability and completeness {II}: On completeness for {W[1]}},
journal = {Theoretical Computer Science},
volume = {141},
number = {1},
pages = {109-131},
year = {1995},
issn = {0304-3975},
doi = {https://doi.org/10.1016/0304-3975(94)00097-3},
author = {Rod G. Downey and Michael R. Fellows},
}

@article{xnlp1,
author       = {Hans L. Bodlaender and
                  Carla Groenland and
                  Jesper Nederlof and
                  C{\'{e}}line M. F. Swennenhuis},
  title        = {Parameterized problems complete for nondeterministic {FPT} time and
                  logarithmic space},
  journal      = {Information and Compututation},
  volume       = {300},
  pages        = {105195},
  year         = {2024},
  nourl          = {https://doi.org/10.1016/j.ic.2024.105195},
  doi          = {10.1016/J.IC.2024.105195},
  bibsource    = {dblp computer science bibliography, https://dblp.org}
}

@InProceedings{xalp,
  author =	{Bodlaender, Hans L. and Groenland, Carla and Jacob, Hugo and Pilipczuk, Marcin and Pilipczuk, Micha{\l}},
  title =	{On the Complexity of Problems on Tree-Structured Graphs},
  booktitle =	{17th International Symposium on Parameterized and Exact Computation (IPEC 2022)},
  pages =	{6:1--6:17},
  series =	{Leibniz International Proceedings in Informatics (LIPIcs)},
  ISBN =	{978-3-95977-260-0},
  ISSN =	{1868-8969},
  year =	{2022},
  volume =	{249},
  publisher =	{Schloss Dagstuhl -- Leibniz-Zentrum f{\"u}r Informatik},
  noaddress =	{Dagstuhl, Germany},
  URN =		{urn:nbn:de:0030-drops-173626},
  doi =		{10.4230/LIPIcs.IPEC.2022.6}
}

@article{outerkplanar,
title = {{XALP}-completeness of parameterized problems on planar graphs},
journal = {Discrete Applied Mathematics},
volume = {386},
pages = {156-174},
year = {2026},
issn = {0166-218X},
doi = {https://doi.org/10.1016/j.dam.2026.01.021},
author = {Hans L. Bodlaender and Krisztina Szilágyi},
}

@book{paralg-book,
  author       = {Marek Cygan and
                  Fedor V. Fomin and
                  Lukasz Kowalik and
                  Daniel Lokshtanov and
                  D{\'{a}}niel Marx and
                  Marcin Pilipczuk and
                  Michal Pilipczuk and
                  Saket Saurabh},
  title        = {Parameterized Algorithms},
  publisher    = {Springer},
  year         = {2015},
  doi          = {10.1007/978-3-319-21275-3},
  isbn         = {978-3-319-21274-6}
}

@InProceedings{outerkrecog,
  author =	{Kobayashi, Yasuaki and Okada, Yuto and Wolff, Alexander},
  title =	{Recognizing 2-Layer and Outer $k$-Planar Graphs},
  booktitle =	{41st International Symposium on Computational Geometry (SoCG 2025)},
  pages =	{65:1--65:16},
  series =	{Leibniz International Proceedings in Informatics (LIPIcs)},
  ISBN =	{978-3-95977-370-6},
  ISSN =	{1868-8969},
  year =	{2025},
  volume =	{332},
  publisher =	{Schloss Dagstuhl -- Leibniz-Zentrum f{\"u}r Informatik},
  noaddress =	{Dagstuhl, Germany},
  doi =		{10.4230/LIPIcs.SoCG.2025.65}
}

@InProceedings{beyond-survery,
author="Chaplick, Steven
and Kryven, Myroslav
and Liotta, Giuseppe
and L{\"o}ffler, Andre
and Wolff, Alexander",
title="Beyond Outerplanarity",
booktitle="25th International Symposium on Graph Drawing and Network Visualization (GD 2018)",
year="2018",
series = {Lecture Notes in Computer Science},
volume = 10692,
publisher="Springer",
noaddress="Cham",
pages="546--559",
isbn="978-3-319-73915-1",
doi="10.1007/978-3-319-73915-1_42"
}

@InProceedings{pyzik,
  author =	{Pyzik, Rafa{\l}},
  title =	{Treewidth of Outer $k$-Planar Graphs},
  booktitle =	{33rd International Symposium on Graph Drawing and Network Visualization (GD 2025)},
  pages =	{28:1--28:16},
  series =	{Leibniz International Proceedings in Informatics (LIPIcs)},
  ISBN =	{978-3-95977-403-1},
  ISSN =	{1868-8969},
  year =	{2025},
  volume =	{357},
  publisher =	{Schloss Dagstuhl -- Leibniz-Zentrum f{\"u}r Informatik},
  noaddress =	{Dagstuhl, Germany},
  doi =		{10.4230/LIPIcs.GD.2025.28}
}

@InProceedings{firman_et_al,
  author =	{Firman, Oksana and Gutowski, Grzegorz and Kryven, Myroslav and Okada, Yuto and Wolff, Alexander},
  title =	{Bounding the Treewidth of Outer $k$-Planar Graphs via Triangulations},
  booktitle =	{32nd International Symposium on Graph Drawing and Network Visualization (GD 2024)},
  pages =	{14:1--14:17},
  series =	{Leibniz International Proceedings in Informatics (LIPIcs)},
  ISBN =	{978-3-95977-343-0},
  ISSN =	{1868-8969},
  year =	{2024},
  volume =	{320},
  publisher =	{Schloss Dagstuhl -- Leibniz-Zentrum f{\"u}r Informatik},
  noaddress =	{Dagstuhl, Germany},
  doi =		{10.4230/LIPIcs.GD.2024.14},
}

@article{outer-1,
author = {Auer, Christopher and Bachmaier, Christian and Brandenburg, Franz J. and Glei\ss{}ner, Andreas and Hanauer, Kathrin and Neuwirth, Daniel and Reislhuber, Josef},
title = {Outer 1-Planar Graphs},
year = {2016},
issue_date = {Apr 2016},
publisher = {Springer},
noaddress = {Berlin, Heidelberg},
volume = {74},
number = {4},
issn = {0178-4617},
url = {https://doi.org/10.1007/s00453-015-0002-1},
doi = {10.1007/s00453-015-0002-1},
journal = {Algorithmica},
nomonth = apr,
pages = {1293–1320},
numpages = {28}
}

@InProceedings{xnlp-flows,
author="Bodlaender, Hans L.
and Cornelissen, Gunther
and van der Wegen, Marieke",
title="Problems Hard for Treewidth but Easy for Stable Gonality",
booktitle="48th International Workshop on Graph-Theoretic Concepts  in Computer Science (WG 2022)",
year="2022",
publisher="Springer",
series = {Lecture Notes in Computer Science},
volume = 13453,
noaddress="Cham",
pages="84--97",
isbn="978-3-031-15914-5",
doi="10.1007/978-3-031-15914-5_7"
}

@article{some-colorful,
title = {On the complexity of some colorful problems parameterized by treewidth},
journal = {Information and Computation},
volume = {209},
number = {2},
pages = {143-153},
year = {2011},
issn = {0890-5401},
doi = {10.1016/j.ic.2010.11.026},
author = {Michael R. Fellows and Fedor V. Fomin and Daniel Lokshtanov and Frances Rosamond and Saket Saurabh and Stefan Szeider and Carsten Thomassen},
}

@article{spsc,
author = {Pilipczuk, Micha\l{} and Wrochna, Marcin},
title = {On Space Efficiency of Algorithms Working on Structural Decompositions of Graphs},
year = {2018},
issue_date = {December 2017},
publisher = {Association for Computing Machinery},
address = {New York, NY, USA},
volume = {9},
number = {4},
issn = {1942-3454},
doi = {10.1145/3154856},
journal = {ACM Transactions on Computation Theory},
nomonth = jan,
articleno = {18},
numpages = {36},
}

@article{local-crossing-3,
author = {Didimo, Walter and Liotta, Giuseppe and Montecchiani, Fabrizio},
title = {A Survey on Graph Drawing Beyond Planarity},
year = {2019},
issue_date = {January 2020},
publisher = {Association for Computing Machinery},
noaddress = {New York, NY, USA},
volume = {52},
number = {1},
issn = {0360-0300},
doi = {10.1145/3301281},
journal = {ACM Computing Surveys},
nomonth = feb,
articleno = {4},
numpages = {37},
}

@Article{local-crossing-4,
  author =	{Dujmovi\'{c}, Vida and Hong, Seok-Hee and Kaufmann, Michael and Pach, J\'{a}nos and F\"{o}rster, Henry},
  title =	{{Beyond-Planar Graphs: Models, Structures and Geometric Representations (Dagstuhl Seminar 24062)}},
  pages =	{71--94},
  journal =	{Dagstuhl Reports},
  ISSN =	{2192-5283},
  year =	{2024},
  volume =	{14},
  number =	{2},
  publisher =	{Schloss Dagstuhl -- Leibniz-Zentrum f{\"u}r Informatik},
  noaddress =	{Dagstuhl, Germany},
  doi =		{10.4230/DagRep.14.2.71},
}

@Article{1-planar,
author={Grigoriev, Alexander
and Bodlaender, Hans L.},
title={Algorithms for Graphs Embeddable with Few Crossings per Edge},
journal={Algorithmica},
year={2007},
nomonth={Sep},
day={01},
volume={49},
number={1},
pages={1-11},
issn={1432-0541},
doi={10.1007/s00453-007-0010-x},
url={https://doi.org/10.1007/s00453-007-0010-x}
}

@article{recog-k-planar,
title = {Testing gap $k$-planarity is {NP}-complete},
journal = {Information Processing Letters},
volume = {169},
pages = {106083},
year = {2021},
issn = {0020-0190},
doi = {https://doi.org/10.1016/j.ipl.2020.106083},
author = {John C. Urschel and Jake Wellens},
}

@article{cyclomatic, 
title={Parameterized Complexity of 1-Planarity}, 
volume={22},
DOI={10.7155/jgaa.00457},
journal={Journal of Graph Algorithms and Applications},
author={Bannister, Michael and Cabello, Sergio and Eppstein, David}, 
year={2018}, nomonth={Jan.}, pages={23–49} }

@InProceedings{book-2,
  author =	{Agrawal, Akanksha and Cabello, Sergio and Kaufmann, Michael and Saurabh, Saket and Sharma, Roohani and Uno, Yushi and Wolff, Alexander},
  title =	{Eliminating Crossings in Ordered Graphs},
  booktitle =	{19th Scandinavian Symposium and Workshops on Algorithm Theory (SWAT 2024)},
  pages =	{1:1--1:19},
  series =	{Leibniz International Proceedings in Informatics (LIPIcs)},
  ISBN =	{978-3-95977-318-8},
  ISSN =	{1868-8969},
  year =	{2024},
  volume =	{294},
  publisher =	{Schloss Dagstuhl -- Leibniz-Zentrum f{\"u}r Informatik},
  noaddress =	{Dagstuhl, Germany},
  doi =		{10.4230/LIPIcs.SWAT.2024.1},
}

@Article{pach,
author={Pach, J{\'a}nos
and T{\'o}th, G{\'e}za},
title={Graphs drawn with few crossings per edge},
journal={Combinatorica},
year={1997},
nomonth={Sep},
day={01},
volume={17},
number={3},
pages={427-439},
issn={1439-6912},
doi={10.1007/BF01215922},
}

@InProceedings{outer-k-edge,
  author =	{Aichholzer, Oswin and Obenaus, Johannes and Orthaber, Joachim and Paul, Rosna and Schnider, Patrick and Steiner, Raphael and Taubner, Tim and Vogtenhuber, Birgit},
  title =	{Edge Partitions of Complete Geometric Graphs},
  booktitle =	{38th International Symposium on Computational Geometry (SoCG 2022)}, 
  pages =	{6:1--6:16},
  series =	{Leibniz International Proceedings in Informatics (LIPIcs)},
  ISBN =	{978-3-95977-227-3},
  ISSN =	{1868-8969},
  year =	{2022},
  volume =	{224},
  publisher =	{Schloss Dagstuhl -- Leibniz-Zentrum f{\"u}r Informatik},
  noaddress =	{Dagstuhl, Germany},
  doi =		{10.4230/LIPIcs.SoCG.2022.6},
}

@article{Wood2007,
author = {Wood, David R. and Telle, Jan Arne},
journal = {The New York Journal of Mathematics},
language = {eng},
pages = {117-146},
publisher = {University at Albany, Deptartment of Mathematics and Statistics},
title = {Planar decompositions and the crossing number of graphs with an excluded minor.},
url = {https://nyjm.albany.edu/j/2007/13-8.html},
volume = {13},
year = {2007},
}

@book{Kloks,
  title={Treewidth, Computations and Approximations},
  author={Ton Kloks},
  series={Lecture Notes in Computer Science},
  year={1994},
  doi={https://doi.org/10.1007/BFb0045375},
  volume = {842},
  publisher={Springer},
}

@Article{Eto2014,
author={Eto, Hiroshi
and Guo, Fengrui
and Miyano, Eiji},
title={Distance-$d$ independent set problems for bipartite and chordal graphs},
journal={Journal of Combinatorial Optimization}, 
year={2014},
nomonth={Jan},
day={01},
volume={27},
number={1},
pages={88-99},
issn={1573-2886},
doi={10.1007/s10878-012-9594-4},
}

@article{scattered-set-fpt,
title = {Structurally parameterized $d$-scattered set},
journal = {Discrete Applied Mathematics},
volume = {308},
pages = {168-186},
year = {2022},
nonote = {Combinatorial Optimization ISCO 2018},
issn = {0166-218X},
doi = {https://doi.org/10.1016/j.dam.2020.03.052},
author = {Ioannis Katsikarelis and Michael Lampis and Vangelis Th. Paschos},
}

@article{schaefer,
  title={The Graph Crossing Number and its Variants: A Survey},
  author={Marcus Schaefer},
  journal={Electronic Journal of Combinatorics},
  year={2013},
  number={DS21},
  doi={https://doi.org/10.37236/2713}
}

@article{CHANG2026492,
title = {A new width parameter of graphs based on edge cuts: $\alpha$-edge-crossing width},
journal = {Discrete Applied Mathematics},
volume = {380},
pages = {492-510},
year = {2026},
issn = {0166-218X},
doi = {10.1016/j.dam.2025.10.056},
author = {Yeonsu Chang and {O-joung} Kwon and Myounghwan Lee},
}

@Article{Ringel1965,
author={Ringel, Gerhard},
title={Ein {S}echsfarbenproblem auf der {K}ugel},
journal={Abhandlungen aus dem Mathematischen Seminar der Universit{\"a}t Hamburg},
year={1965},
nomonth={Dec},
day={01},
volume={29},
number={1},
pages={107-117},
issn={1865-8784},
doi={10.1007/BF02996313},
}

@article{BOD1998,
title = {A partial $k$-arboretum of graphs with bounded treewidth},
journal = {Theoretical Computer Science},
volume = {209},
number = {1},
pages = {1-45},
year = {1998},
issn = {0304-3975},
doi = {https://doi.org/10.1016/S0304-3975(97)00228-4},
author = {Hans L. Bodlaender},
}

@article{JANSEN1997135,
title = {Generalized coloring for tree-like graphs},
journal = {Discrete Applied Mathematics},
volume = {75},
number = {2},
pages = {135-155},
year = {1997},
issn = {0166-218X},
doi = {https://doi.org/10.1016/S0166-218X(96)00085-6},
author = {Klaus Jansen and Petra Scheffler},
}

@InProceedings{lc-tree,
  author =	{Bodlaender, Hans L. and Groenland, Carla and Jacob, Hugo},
  title =	{List Colouring Trees in Logarithmic Space},
  booktitle =	{30th Annual European Symposium on Algorithms (ESA 2022)},
  pages =	{24:1--24:15},
  series =	{Leibniz International Proceedings in Informatics (LIPIcs)},
  ISBN =	{978-3-95977-247-1},
  ISSN =	{1868-8969},
  year =	{2022},
  volume =	{244},
  noeditor =	{Chechik, Shiri and Navarro, Gonzalo and Rotenberg, Eva and Herman, Grzegorz},
  publisher =	{Schloss Dagstuhl -- Leibniz-Zentrum f{\"u}r Informatik},
  noaddress =	{Dagstuhl, Germany},
  doi =		{10.4230/LIPIcs.ESA.2022.24}
}

@InProceedings{mim-width-bound,
  author =	{Jaffke, Lars and Kwon, O-joung and Telle, Jan Arne},
  title =	{A Unified Polynomial-Time Algorithm for Feedback Vertex Set on Graphs of Bounded Mim-Width},
  booktitle =	{35th Symposium on Theoretical Aspects of Computer Science (STACS 2018)},
  pages =	{42:1--42:14},
  series =	{Leibniz International Proceedings in Informatics (LIPIcs)},
  ISBN =	{978-3-95977-062-0},
  ISSN =	{1868-8969},
  year =	{2018},
  volume =	{96},
  noeditor =	{Niedermeier, Rolf and Vall\'{e}e, Brigitte},
  publisher =	{Schloss Dagstuhl -- Leibniz-Zentrum f{\"u}r Informatik},
  noaddress =	{Dagstuhl, Germany},
  doi =		{10.4230/LIPIcs.STACS.2018.42}
}

@InProceedings{CDS-tw,
author="Dom, Michael
and Lokshtanov, Daniel
and Saurabh, Saket
and Villanger, Yngve",
noeditor="Grohe, Martin
and Niedermeier, Rolf",
title="Capacitated Domination and Covering: A Parameterized Perspective",
booktitle="3rd International Workshop on Parameterized and Exact Computation (IWPEC 2008)",
year="2008",
publisher="Springer",
series = {Lecture Notes in Computer Science}, volume = 5018,
noaddress="Berlin, Heidelberg",
pages="78--90",
isbn="978-3-540-79723-4",
doi="10.1007/978-3-540-79723-4_9"
}

@InProceedings{CDS-planar,
author="Bodlaender, Hans L.
and Lokshtanov, Daniel
and Penninkx, Eelko",
editor="Chen, Jianer
and Fomin, Fedor V.",
title="Planar Capacitated Dominating Set Is {W[1]}-Hard",
booktitle="Parameterized and Exact Computation",
year="2009",
publisher="Springer Berlin Heidelberg",
address="Berlin, Heidelberg",
pages="50--60",
isbn="978-3-642-11269-0",
doi="10.1007/978-3-642-11269-0_4"
}

@article{CDS-red-blue-lb,
author = {Fomin, Fedor V. and Golovach, Petr A. and Lokshtanov, Daniel and Saurabh, Saket},
title = {Almost Optimal Lower Bounds for Problems Parameterized by Clique-Width},
journal = {SIAM Journal on Computing},
volume = {43},
number = {5},
pages = {1541-1563},
year = {2014},
doi = {10.1137/130910932},
}

@article{CVC-recent,
      title={Parameterized Capacitated Vertex Cover Revisited}, 
      author={Michael Lampis and Manolis Vasilakis},
      year={2026},
      journal = {arXiv}, number = {2604.18746},
      primaryClass={cs.DS},
      doi={10.48550/arXiv.2604.18746}
}

@article{DIDIMO201972,
title = {{HV}-planarity: Algorithms and complexity},
journal = {Journal of Computer and System Sciences},
volume = {99},
pages = {72-90},
year = {2019},
issn = {0022-0000},
doi = {https://doi.org/10.1016/j.jcss.2018.08.003},
author = {Walter Didimo and Giuseppe Liotta and Maurizio Patrignani},
}

@article{circulating-orientation,
author       = {Bart M. P. Jansen and
                  Liana Khazaliya and
                  Philipp Kindermann and
                  Giuseppe Liotta and
                  Fabrizio Montecchiani and
                  Kirill Simonov},
  title        = {Upward and Rectilinear Planarity are {W[1]}-Hard Parameterized by Treewidth},
  journal      = {{SIAM} Journal on Discrete Mathematics},
  volume       = {40},
  number       = {2},
  pages        = {706--742},
  year         = {2026},
  nourl          = {https://doi.org/10.1137/24m1679240},
  doi          = {10.1137/24M1679240},
  bibsource    = {dblp computer science bibliography, https://dblp.org}
}

@article{BENZWI201187,
title = {Treewidth governs the complexity of target set selection},
journal = {Discrete Optimization},
volume = {8},
number = {1},
pages = {87-96},
year = {2011},
nonote = {Parameterized Complexity of Discrete Optimization},
issn = {1572-5286},
doi = {https://doi.org/10.1016/j.disopt.2010.09.007},
author = {Oren Ben-Zwi and Danny Hermelin and Daniel Lokshtanov and Ilan Newman},
}

@inproceedings{n-dom,
author = {Fink, J. F. and Jacobson, M. S.},
title = {$n$-{D}omination in Graphs},
year = {1985},
isbn = {0471816353},
publisher = {John Wiley \& Sons},
noaddress = {USA},
booktitle = {Graph Theory with Applications to Algorithms and Computer Science},
pages = {283–300},
numpages = {18},
url={https://dl.acm.org/doi/abs/10.5555/21936.25446}
}

@article{k-dom-tw,
author = {Telle, Jan Arne},
title = {Complexity of domination-type problems in graphs},
year = {1994},
issue_date = {Spring 1994},
publisher = {Publishing Association Nordic Journal of Computing},
address = {FIN},
volume = {1},
number = {1},
issn = {1236-6064},
journal = {Nordic Journal of Computing},
nomonth = mar,
pages = {157–171},
numpages = {15},
url={https://www.cs.helsinki.fi/njc/njc1.html}
}

@article{f-dom,
title = {Upper bounds for $f$-domination number of graphs},
journal = {Discrete Mathematics},
volume = {185},
number = {1},
pages = {239-243},
year = {1998},
issn = {0012-365X},
doi = {https://doi.org/10.1016/S0012-365X(97)00204-5},
author = {Beifang Chen and Sanming Zhou},
}

@book{Sparsity-book,
series = {Algorithms and Combinatorics, 28},
publisher = {Springer},
isbn = {3-642-42776-6},
year = {2012},
title = {Sparsity : Graphs, Structures, and Algorithms},
noedition = {1st ed. 2012.},
language = {eng},
author = {Nešetřil, Jaroslav},
doi={https://doi.org/10.1007/978-3-642-27875-4}
}

@article{treebw,
      title={On a tree-based variant of bandwidth and forbidding simple topological minors}, 
      author={Hugo Jacob and William Lochet and Christophe Paul},
      year={2025},
      journal = {arXiv}, number = {2502.11674},
      archivePrefix={arXiv},
      primaryClass={cs.DM},
      doi={10.48550/arXiv.2502.11674}
}

@InProceedings{tpw-1,
author="Seese, D.",
noeditor="Budach, Lothar",
title="Tree-partite graphs and the complexity of algorithms",
booktitle="5th International Symposium on Fundamentals of Computation Theory (FCT 1985)",
series = {Lecture Notes in Computer Science}, volume = {199},
year="1985",
publisher="Springer",
noaddress="Berlin, Heidelberg",
pages="412--421",
isbn="978-3-540-39636-9",
doi="https://doi.org/10.1007/BFb0028825"
}

@article{tpw-2,
title = {Tree-partitions of infinite graphs},
journal = {Discrete Mathematics},
volume = {97},
number = {1},
pages = {203-217},
year = {1991},
issn = {0012-365X},
doi = {https://doi.org/10.1016/0012-365X(91)90436-6},
author = {R. Halin},
}

@article{cutwidth-1,
author = {Chung, Fan R. K.},
title = {On the Cutwidth and the Topological Bandwidth of a Tree},
journal = {SIAM Journal on Algebraic Discrete Methods},
volume = {6},
number = {2},
pages = {268-277},
year = {1985},
doi = {10.1137/0606026},
}

@article{cutwidth-2,
title = {Cutwidth {I}: A linear time fixed parameter algorithm},
journal = {Journal of Algorithms},
volume = {56},
number = {1},
pages = {1-24},
year = {2005},
issn = {0196-6774},
doi = {https://doi.org/10.1016/j.jalgor.2004.12.001},
author = {Dimitrios M. Thilikos and Maria Serna and Hans L. Bodlaender},
}

@article{tree-cutwidth,
title = {The structure of graphs not admitting a fixed immersion},
journal = {Journal of Combinatorial Theory, Series B},
volume = {110},
pages = {47-66},
year = {2015},
issn = {0095-8956},
doi = {https://doi.org/10.1016/j.jctb.2014.07.003},
author = {Paul Wollan},
}

@article{xnlp2,
  author       = {Hans L. Bodlaender and
                  Carla Groenland and
                  Hugo Jacob and
                  Lars Jaffke and
                  Paloma T. Lima},
  title        = {{XNLP}-Completeness for Parameterized Problems on Graphs with a Linear
                  Structure},
  journal      = {Algorithmica},
  volume       = {87},
  number       = {4},
  pages        = {465--506},
  year         = {2025},
  nourl          = {https://doi.org/10.1007/s00453-024-01274-9},
  doi          = {10.1007/S00453-024-01274-9},
  bibsource    = {dblp computer science bibliography, https://dblp.org}
}

@InProceedings{edge-cut-width,
author="Brand, Cornelius
and Ceylan, Esra
and Ganian, Robert
and Hatschka, Christian
and Korchemna, Viktoriia",
noeditor="Bekos, Michael A.
and Kaufmann, Michael",
title="Edge-Cut Width: An Algorithmically Driven Analogue of Treewidth Based on Edge Cuts",
booktitle="48th International Workshop on Graph-Theoretic Concepts  in Computer Science (WG 2022)",
year="2022",
series = {Lecture Notes in Computer Science}, volume = {13453},
publisher="Springer",
noaddress="Cham",
pages="98--113",
isbn="978-3-031-15914-5",
doi="https://doi.org/10.1007/978-3-031-15914-5_8"
}

@Inproceedings{Distel2024,
author="Distel, Marc
and Wood, David R.",
title="Tree-Partitions with Bounded Degree Trees",
booktitle="2021-2022 MATRIX Annals",
year="2024",
publisher="Springer",
series = {MATRIX Book Series}, volume = 5,
noaddress="Cham",
pages="203--212",
isbn="978-3-031-47417-0",
doi="10.1007/978-3-031-47417-0_11",
}

@article{bw-k-ary,
author = {Smithline, Lawren},
title = {Bandwidth of the complete $k$-ary tree},
year = {1995},
issue_date = {July 15, 1995},
publisher = {Elsevier Science Publishers B. V.},
address = {NLD},
volume = {142},
number = {1–3},
issn = {0012-365X},
doi = {10.1016/0012-365X(93)E0219-T},
journal = {Discrete Mathematics},
nomonth = jul,
pages = {203–212},
numpages = {10}
}

@article{compute-tpw,
    title      = {On the parameterized complexity of computing tree-partitions},
    author     = {Hans L. Bodlaender and Carla Groenland and Hugo Jacob},
    doi        = {10.46298/dmtcs.12540},
    journal    = {Discrete Mathematics \& Theoretical Computer Science},
    issn       = {1365-8050},
    volume     = {vol. 26:3},
    issuetitle = {Discrete Algorithms},
    eid        = 3,
    year       = {2025},
    nomonth      = {Feb},
}

@article{Saxe-bw,
author = {Saxe, James B.},
title = {Dynamic-Programming Algorithms for Recognizing Small-Bandwidth Graphs in Polynomial Time},
journal = {SIAM Journal on Algebraic Discrete Methods},
volume = {1},
number = {4},
pages = {363-369},
year = {1980},
doi = {10.1137/0601042},
}

@Article{min-k-1,
author={Binucci, Carla
and B{\"u}ngener, Aaron
and Di Battista, Giuseppe
and Didimo, Walter
and Dujmovi{\'{c}}, Vida
and Hong, Seok-Hee
and Kaufmann, Michael
and Liotta, Giuseppe
and Morin, Pat
and Tappini, Alessandra},
title={Min-$k$-planar Drawings of Graphs},
journal={Journal of Graph Algorithms and Applications},
year={2024},
nomonth={May},
day={17},
volume={28},
number={2},
pages={1-35},
doi={10.7155/jgaa.v28i2.2925},
}

@Article{min-k-2,
author={Campbell, Rutger
and Clinch, Katie
and Distel, Marc
and Gollin, J. Pascal
and Hendrey, Kevin
and Hickingbotham, Robert
and Huynh, Tony
and Illingworth, Freddie
and Tamitegama, Youri
and Tan, Jane
and Wood, David R.},
title={Product structure of graph classes with bounded treewidth},
journal={Combinatorics, Probability and Computing},
year={2024},
edition={2023/12/07},
publisher={Cambridge University Press},
volume={33},
number={3},
pages={351-376},
issn={0963-5483},
doi={10.1017/S0963548323000457},
}

@article{tw-cdw-1,
title = {Upper bounds to the clique width of graphs},
journal = {Discrete Applied Mathematics},
volume = {101},
number = {1},
pages = {77-114},
year = {2000},
issn = {0166-218X},
doi = {https://doi.org/10.1016/S0166-218X(99)00184-5},
author = {Bruno Courcelle and Stephan Olariu},
}

@article{tw-cdw-2,
author = {Corneil, Derek G. and Rotics, Udi},
title = {On the Relationship Between Clique-Width and Treewidth},
journal = {SIAM Journal on Computing},
volume = {34},
number = {4},
pages = {825-847},
year = {2005},
doi = {10.1137/S0097539701385351},
}

@article{cdw-crossing,
title = {On quasi-planar graphs: Clique-width and logical description},
journal = {Discrete Applied Mathematics},
volume = {278},
pages = {118-135},
year = {2020},
monote = {Eighth Workshop on Graph Classes, Optimization, and Width Parameters},
issn = {0166-218X},
doi = {10.1016/j.dam.2018.07.022},
author = {Bruno Courcelle},
}

@InProceedings{stretch-width,
  author =	{Bonnet, \'{E}douard and Duron, Julien},
  title =	{Stretch-Width},
  booktitle =	{18th International Symposium on Parameterized and Exact Computation (IPEC 2023)},
  pages =	{8:1--8:15},
  series =	{Leibniz International Proceedings in Informatics (LIPIcs)},
  ISBN =	{978-3-95977-305-8},
  ISSN =	{1868-8969},
  year =	{2023},
  volume =	{285},
  noeditor =	{Misra, Neeldhara and Wahlstr\"{o}m, Magnus},
  publisher =	{Schloss Dagstuhl -- Leibniz-Zentrum f{\"u}r Informatik},
  noaddress =	{Dagstuhl, Germany},
  URN =		{urn:nbn:de:0030-drops-194279},
  doi =		{10.4230/LIPIcs.IPEC.2023.8}
}
\bibliographystyle{plainurl}
\end{document}